\numberwithin{equation}{section}
\numberwithin{figure}{section}
\theoremstyle{plain}
\newtheorem*{thm*}{\protect\theoremname}
\theoremstyle{plain}
\newtheorem{thm}{\protect\theoremname}[section]
\theoremstyle{plain}
\newtheorem{lem}[thm]{\protect\lemmaname}
\theoremstyle{remark}
\newtheorem{rem}[thm]{\protect\remarkname}
\theoremstyle{plain}
\newtheorem{prop}[thm]{\protect\propositionname}
\theoremstyle{plain}
\newtheorem{cor}[thm]{\protect\corollaryname}
\theoremstyle{plain}
\newtheorem{conj}[thm]{\protect\conjecturename}
\newtheorem*{upd*}{Update}
\theoremstyle{plain}
\theoremstyle{plain}
\newtheorem{exa}[thm]{Example}
\theoremstyle{plain}
\newtheorem{defn}[thm]{Definition}
\theoremstyle{plain}
\newtheorem{algorithm}[thm]{Algorithm}
\theoremstyle{plain}
\newcommand{\SLE}{\mathrm{SLE}}
\newcommand{\SLEk}{\mathrm{SLE}_{\kappa}}
\newcommand{\SLEmeasure}{\mathsf{P}}
\newcommand{\PR}{\mathsf{P}}
\newcommand{\sZ}{\mathcal{Z}}
\newcommand{\sD}{\mathcal{D}}
\newcommand{\sL}{\mathcal{L}}
\newcommand{\bR}{\mathbb{R}}
\newcommand{\bRpos}{\mathbb{R}_{> 0}}
\newcommand{\bZ}{\mathbb{Z}}
\newcommand{\Z}{\bZ}
\newcommand{\bN}{\mathbb{N}}
\newcommand{\bZpos}{\mathbb{Z}_{> 0}}
\newcommand{\bZnn}{\mathbb{Z}_{\geq 0}}
\newcommand{\bQ}{\mathbb{Q}}
\newcommand{\bC}{\mathbb{C}}
\newcommand{\C}{\bC}
\newcommand{\bH}{\mathbb{H}}
\newcommand{\im}{\Im\mathfrak{m}}
\newcommand{\domain}{\Lambda}
\newcommand{\bdry}{\partial}
\newcommand{\cl}[1]{\overline{#1}}
\newcommand{\Mob}{\mu}
\newcommand{\confmap}{\phi}
\newcommand{\OO}{\mathcal{O}}
\newcommand{\oo}{\mathit{o}}
\newcommand{\ud}{\mathrm{d}}
\newcommand{\pder}[1]{\frac{\partial}{\partial#1}}
\newcommand{\pdder}[1]{\frac{\partial^{2}}{\partial#1^{2}}}
\newcommand{\pddder}[1]{\frac{\partial^{3}}{\partial#1^{3}}}
\newcommand{\pddmix}[2]{\frac{\partial^{2}}{\partial#1 \partial#2}}
\newcommand{\set}[1]{\left\{  #1\right\}  }
\newcommand{\Arch}{\mathrm{LP}}
\newcommand{\LP}{\Arch}
\newcommand{\link}[2]{\{ #1 , #2 \}}
\newcommand{\nested}{\boldsymbol{\underline{\Cap}}}
\newcommand{\unnested}{\boldsymbol{\underline{\cap\cap}}}
\newcommand{\Catalan}{\mathrm{C}}
\newcommand{\KWleq}{\stackrel{\scriptscriptstyle{()}}{\scriptstyle{\longleftarrow}}} 
\newcommand{\Mmat}{\mathscr{M}}
\newcommand{\Minv}{\mathscr{M}^{-1}}
\newcommand{\chamber}{\mathfrak{X}}
\newcommand{\PartF}{\sZ}
\newcommand{\HarmMeas}{\mathsf{H}}
\newcommand{\ExcK}{\mathsf{K}}
\newcommand{\ExcKH}{\mathcal{K}}
\newcommand{\ExcKdom}{\mathcal{K}_\domain}
\newcommand{\FominDet}{\mathbf{\Delta}}
\newcommand{\LPdet}[2]{\FominDet_{#1}^{#2}}
\newcommand{\Ampl}{\zeta}
\newcommand{\eps}{\varepsilon}
\newcommand{\Gr}{\mathcal{G}}
\renewcommand{\Vert}{\mathcal{V}}
\newcommand{\Edg}{\mathcal{E}}
\newcommand{\GrWired}{\Gr/\bdry}
\newcommand{\ein}{e_\mathrm{in}}
\newcommand{\eout}{e_\mathrm{out}}
\newcommand{\pin}{p_\mathrm{in}}
\newcommand{\pout}{p_\mathrm{out}}
\newcommand{\xin}{x_\mathrm{in}}
\newcommand{\xout}{x_\mathrm{out}}
\newcommand{\tree}{\mathcal{T}}
\newcommand{\LE}{\mathrm{LE}}
\newcommand{\SymmGrp}{\mathfrak{S}}
\newcommand{\sgn}{\mathrm{sgn}}
\newcommand{\walksfromto}[2]{\mathscr{W}(#1,#2)}
\newcommand{\walkfromto}[3]{#1 \in \walksfromto{#2}{#3}}
\newcommand{\pathfromto}[2]{#1 \rightsquigarrow #2}
\newcommand{\edgeof}[2]{{\langle #1 , #2 \rangle}}
\newcommand{\DP}{\mathrm{DP}}
\newcommand{\DPleq}{\preceq} 
\newcommand{\DPgeq}{\succeq} 
\newcommand{\BPE}{\mathrm{BPE}}
\newcommand{\BPEfont}[1]{\textup{\tt{#1}}}
\newcommand{\wedgeat}[1]{\lozenge_#1} 
\newcommand{\upwedgeat}[1]{\wedge^#1}
\newcommand{\downwedgeat}[1]{\vee_#1}
\newcommand{\wedgelift}[1]{\uparrow \wedgeat{#1}} 
\newcommand{\removewedge}[1]{\setminus \wedgeat{#1}}
\newcommand{\removeupwedge}[1]{\setminus \upwedgeat{#1}}
\newcommand{\removedownwedge}[1]{\setminus \downwedgeat{#1}}
\newcommand{\slopeat}[1]{\times_#1} 
\newcommand{\opar}{\rm{\url{(}}} 
\newcommand{\cpar}{\rm{\url{)}}}
\newcommand{\nestedtilingof}{T_0}
\newcommand{\CItilingsof}{\mathcal{C}}
\newcommand{\walk}{\alpha}
\newcommand*{\centerfloat}{%
  \parindent \z@
  \leftskip \z@ \@plus 1fil \@minus \textwidth
  \rightskip\leftskip
  \parfillskip \z@skip}
\providecommand{\corollaryname}{Corollary}
\providecommand{\lemmaname}{Lemma}
\providecommand{\propositionname}{Proposition}
\providecommand{\remarkname}{Remark}
\providecommand{\theoremname}{Theorem}
\providecommand{\conjecturename}{Conjecture}
\definecolor{kallecol}{rgb}{.75,.0,.55}
\begin{document}


\author{A.~Karrila, K.~Kytölä, and E.~Peltola}

\

\vspace{2.5cm}

\begin{center}
\LARGE \bf \scshape {Boundary correlations in planar LERW and UST 
}
\end{center}

\vspace{0.75cm}

\begin{center}
{\large \scshape Alex Karrila}\\
{\footnotesize{\tt alex.karrila@aalto.fi}}\\
{\small{Department of Mathematics and Systems Analysis}}\\
{\small{P.O. Box 11100, FI-00076 Aalto University, Finland}}\bigskip{}
\\
{\large \scshape Kalle Kyt\"ol\"a}\\
{\footnotesize{\tt kalle.kytola@aalto.fi}}\\
{\small{Department of Mathematics and Systems Analysis}}\\
{\small{P.O. Box 11100, FI-00076 Aalto University, Finland}\\
\url{https://math.aalto.fi/~kkytola/}}\bigskip{}
\\
{\large \scshape Eveliina Peltola}\\
{\footnotesize{\tt eveliina.peltola@unige.ch}}\\
{\small{Section de Math\'{e}matiques, Universit\'{e} de Gen\`{e}ve}}\\
{\small{2--4 rue du Li\`{e}vre, Case Postale 64, 1211 Gen\`{e}ve 4, Switzerland}}
\end{center}

\vspace{0.75cm}

\begin{center}
\begin{minipage}{0.85\textwidth} \footnotesize
{\scshape Abstract.}
We find explicit formulas for the probabilities of general boundary visit events for planar
loop-erased random walks, as well as 
connectivity events for branches in the uniform spanning tree. We show that both probabilities, when 
suitably renormalized, converge in the scaling
limit to conformally covariant functions which satisfy partial differential equations
of second and third order, as predicted by conformal field theory. The scaling limit
connectivity probabilities also provide formulas for the pure partition functions of
multiple $\SLE_\kappa$ at $\kappa=2$.
\end{minipage}
\end{center}



\tableofcontents

\bigskip{}

\section{\label{sec: intro}Introduction}

In this article, we consider the planar loop-erased random walk (LERW) and 
a closely related model, the planar uniform spanning tree (UST). We find
explicit expressions for probabilities of certain connectivity and boundary visit
events in these models, and prove that, in the scaling limit as the lattice spacing 
tends to zero, these observables converge to conformally covariant functions, 
which satisfy systems of second and third order partial differential equations (PDEs) 
predicted by conformal field theory (CFT)
\cite{BPZ-infinite_conformal_symmetry_of_critical_fluctiations,
BSA-degenerate_CFTs_and_explicit_expressions,
BB-SLE_CFT_and_zigzag_probabilities, BBK-multiple_SLEs,
KP-conformally_covariant_boundary_correlation_functions_with_a_quantum_group,
Dubedat-SLE_and_Virasoro_representations_localization,
Dubedat-SLE_and_Virasoro_representations_fusion, JJK-SLE_boundary_visits}. 
As a byproduct, we also obtain 
formulas for the pure partition functions of multiple 
Schramm-Loewner evolutions ($\SLE$s), and the existence of all extremal
local multiple $\SLE$s at the specific parameter value $\kappa=2$ corresponding
to these models \cite{BBK-multiple_SLEs, Dubedat-commutation, 
KP-pure_partition_functions_of_multiple_SLEs}.

We now give a brief overview of the main results of this article. The 
detailed formulations of the statements made here are given in the bulk of the 
article, as referred to below.

\subsection{Boundary visit probabilities of the loop-erased random walk}

Throughout, we let $\Gr = (\Vert,\Edg)$ be a finite connected graph with 
a distinguished non-empty subset $\bdry \Vert \subset \Vert$ of vertices, 
called boundary vertices. We assume that $\Gr$ is a planar graph embedded in a 
Jordan domain $\domain \subset \bC$ such that the boundary vertices $\bdry \Vert$ 
lie on the Jordan curve $\bdry \domain$. 
We denote by $\bdry \Edg \subset \Edg$ the set of edges 
$e = \edgeof{e^\bdry}{e^\circ}$ which connect a boundary vertex 
$e^\bdry \in \bdry \Vert$ to an interior vertex 
$e^\circ \in \Vert^\circ := \Vert \setminus \bdry \Vert$.

Our scaling limit results are valid in any setup where the random walk excursion 
kernels and their discrete derivatives on the graphs converge to the Brownian 
excursion kernels and their derivatives. To be specific, we present the results 
in the following square grid approximations (see Figure~\ref{fig: square grid approx}).
We fix a Jordan domain $\domain$ and a number of boundary points 
$p_1 , p_2 , \ldots$ and $\hat{p}_1 , \hat{p}_2 , \ldots$
on horizontal or vertical parts of the boundary $\bdry \domain$.
For any small $\delta>0$,
we let $\Gr^\delta = (\Vert^\delta , \Edg^\delta)$ be a subgraph of the square
lattice $\delta \bZ^2$ of mesh size $\delta$ naturally approximating the domain 
$\domain$ (as detailed in Section~\ref{sec: applications to UST}).
We also let $e_j^\delta \in \bdry \Edg^\delta$ be 
the boundary edge nearest to $p_j$, and let $\hat{e}_s^\delta \in \Edg^\delta$ 
be the edge at unit graph distance from the boundary nearest to $\hat{p}_s$.
By scaling limits we mean limits of our quantities of interest as $\delta \to 0$ 
in this setting.

\begin{figure}
\centering
\includegraphics[width=.55\textwidth]{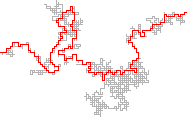}
\caption{
\label{fig: LERW}
A loop-erased random walk 
on the square grid and the underlying random walk (in grey).
}
\end{figure}

Generally, a loop-erased random walk (LERW) 
is a simple path on the graph obtained by erasing loops from 
a random walk
in their order of appearance ---
see Figure~\ref{fig: LERW} for illustration and Section~\ref{sec: application to UST} for the precise definitions.
Choose two boundary edges $\ein = \edgeof{\ein^\bdry}{\ein^\circ}$ and
$\eout = \edgeof{\eout^\bdry}{\eout^\circ}$. By the LERW $\lambda$ on $\Gr$
from $\ein$ to $\eout$ we mean the loop-erasure of a random walk
started from the vertex $\ein^\circ$ and conditioned to reach the boundary 
via the edge $\eout$. 
In the present article, we find an explicit formula for the probability of 
the event that the LERW $\lambda$ passes through given edges 
$\hat{e}_1 , \ldots , \hat{e}_{N'}$ at unit distance from the boundary,
as illustrated in Figure~\ref{fig: LERW boundary visit event}. Moreover, 
we prove that this probability converges in the scaling limit 
to a conformally covariant function $\Ampl$, which satisfies a system of 
second and third order partial differential equations, 
as has been previously predicted
based on conformal field theory. 

\begin{figure}
\centering
\begin{overpic}[width=.3\textwidth]{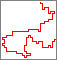}
 \put (65,105) {\Large $\ein$}
 \put (98,78) {\Large $\hat{e}_1$}
 \put (-10,35) {\Large $\hat{e}_2$}
 \put (-10,25) {\Large $\hat{e}_3$}
 \put (40,-10) {\Large $\hat{e}_4$}
 \put (98,33) {\Large $\eout$}
\end{overpic}
\bigskip
\caption{\label{fig: LERW boundary visit event}
Illustration of a loop-erased random walk passing through edges 
at unit distance from the boundary. 
We study the probabilities of such boundary visit events.
}
\end{figure}

\begin{thm*}
[Theorem~\ref{thm: scaling limit of LERW bdry visits}]
Let $\Gr^\delta$ with $\ein^\delta , \eout^\delta \in \bdry \Edg^\delta$ and
$\hat{e}_1^\delta , \ldots, \hat{e}_{N'}^\delta \in \Edg^\delta$
be a square grid approximation of a domain $\domain$ with marked boundary
points $\pin, \pout, \hat{p}_1 , \ldots, \hat{p}_{N'}$.
The probability that the LERW $\lambda^\delta$ on $\Gr^\delta$ from $\ein^\delta$
to $\eout^\delta$ 
passes through the edges $\hat{e}_1^\delta , \ldots , \hat{e}_{N'}^\delta$ 
has the following conformally covariant scaling limit:
\begin{align*}
\frac{1}{\delta^{3N'}} \, 
\PR_{\ein^\delta,\eout^\delta} \big[ \lambda^\delta 
        \text{ uses $\hat{e}_1^\delta , \ldots , \hat{e}_{N'}^\delta$} \big] 
\; \underset{\delta \to 0}{\longrightarrow} \;
\prod_{s=1}^{N'} |\confmap'(\hat{p}_s)|^3 \times
     \frac{\Ampl \big( \confmap(\pin); \confmap(\hat{p}_1), \ldots,
         \confmap(\hat{p}_{N'}); \confmap(\pout) \big)}{{\big( \confmap(\pout) - \confmap(\pin) \big)^{-2}}} ,
\end{align*}
where $\confmap \colon \domain \to \bH$ is a conformal map from the domain
$\domain$ to the upper half-plane 
$\bH = \set{z \in \bC \;|\; \im(z) > 0}$.
%
The function $\Ampl(\xin;\hat{x}_1,\ldots,\hat{x}_{N'};\xout)$  is a
positive  function of the real points 
$\xin, \hat{x}_1, \cdots, \hat{x}_{N'}, \xout$, 
satisfying the following system of linear partial differential equations: 
the $N'$ third order PDEs
\begin{align*}
\Bigg[ \, & \pddder{\hat{x}_s}
    + \frac{8}{\xin-\hat{x}_s} \pddmix{\xin\,}{\hat{x}_s} - \frac{8}{(\xin-\hat{x}_s)^2} \pder{\hat{x}_s}
    - \frac{12}{(\xin-\hat{x}_s)^2} \pder{\xin} + \frac{24}{(\xin-\hat{x}_s)^3} \\
&\quad   + \frac{8}{\xout-\hat{x}_s} \pddmix{\xout\,}{\hat{x}_s} - \frac{8}{(\xout-\hat{x}_s)^2} \pder{\hat{x}_s} 
    - \frac{12}{(\xout-\hat{x}_s)^2} \pder{\xout} + \frac{24}{(\xout-\hat{x}_s)^3} \\
&\quad   + \sum_{\substack{1 \leq t \leq N' \\ t \neq s}} \bigg( \frac{8}{\hat{x}_t-\hat{x}_s} \pddmix{\hat{x}_t}{\hat{x}_s}
            - \frac{24}{(\hat{x}_t-\hat{x}_s)^2} \pder{\hat{x}_s} 
    - \frac{12}{(\hat{x}_t-\hat{x}_s)^2} \pder{\hat{x}_t}
            + \frac{72}{(\hat{x}_t-\hat{x}_s)^3} \bigg)
\, \Bigg] \; \Ampl 
        = 0
\end{align*}
for $s=1,\ldots,N'$, as well as the second order PDE
\begin{align*}
\Bigg[ \pdder{\xin} + \frac{2}{\xout-\xin} \pder{\xout} - \frac{2}{(\xout-\xin)^2 }
    + \sum_{t=1}^{N'} \frac{2}{\hat{x}_t-\xin} \pder{\hat{x}_t} - \sum_{t=1}^{N'} \frac{6}{(\hat{x}_t-\xin)^2 }
\Bigg] \;\Ampl 
        = 0
\end{align*}
and another second order PDE obtained by interchanging the roles of $\xin$ and $\xout$ above.
\end{thm*}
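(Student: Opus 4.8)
The plan is to separate the statement into two essentially independent tasks: (i) proving the renormalized convergence together with the asserted conformal covariance, and (ii) deriving the system of partial differential equations annihilating the limit amplitude $\Ampl$. For task (i) I would combine an exact discrete formula for the boundary visit probability with the assumed convergence of the random walk excursion kernels; for task (ii) I would exploit the discrete harmonicity of those kernels, which survives passage to the limit. Positivity of $\Ampl$ will come for free, modulo an argument that the limit does not degenerate.

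First I would write the discrete probability $\PR_{\ein^\delta,\eout^\delta}\big[\lambda^\delta \text{ uses } \hat{e}_1^\delta,\ldots,\hat{e}_{N'}^\delta\big]$ as an explicit combination of the discrete excursion kernel $\ExcK$ and its discrete derivatives. Loop-erasure converts the passage of $\lambda^\delta$ through a prescribed family of edges into a determinantal (Fomin-type) expression in random walk Green's functions and excursion kernels; forcing the walk through an edge $\hat{e}_s^\delta$ at unit graph distance from $\bdry\domain$ introduces a discrete third-order transverse difference of the kernel at $\hat{p}_s$, which scales like $\delta^{3}$. This is exactly the origin of the renormalization $\delta^{-3N'}$. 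Using the assumed convergence of $\ExcK$ and of its discrete derivatives to the Brownian excursion (Poisson) kernel $\ExcKH$ on $\bH$ and its boundary derivatives, I would pass to the limit term by term; the resulting finite expression defines $\Ampl$. The covariance is then inherited from that of $\ExcKH$: the Brownian excursion kernel carries conformal weight $1$ at each of its two boundary arguments, so the three transverse derivatives at each $\hat{p}_s$ produce weight $3$ and the factor $|\confmap'(\hat{p}_s)|^3$, while in the conditional probability the endpoint weight-$1$ factors of the numerator cancel against those of the excursion-kernel normalization, the residual normalization contributing the factor $(\confmap(\pout)-\confmap(\pin))^2$. Since $\Ampl$ is a limit of nonnegative renormalized probabilities it is nonnegative, and strict positivity follows by exhibiting a lower bound that does not vanish as $\delta\to0$.

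Next I would derive the two second order PDEs. The key fact is that $\ExcK$, viewed as a function of the interior endpoint $\ein^{\circ,\delta}$ of the starting edge, is discrete harmonic away from the other marked points, being a random walk partition function. In the limit the corresponding building block is harmonic in a bulk variable $z_{\mathrm{in}} = u + \mathrm{i}v$ whose boundary limit is $\xin$, so that $\partial_u^2 \ExcKH = -\partial_v^2 \ExcKH$ there. Trading the transverse (normal) derivative $\partial_v$ for tangential ones in this way, and inserting the explicit half-plane Poisson kernel, converts harmonicity into the stated boundary operator; the coefficients $-2/(\xout-\xin)^2$ and $-6/(\hat{x}_t-\xin)^2$ precisely record the conformal weights $1$ and $3$ of the endpoint and visit fields. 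This is the level-two BPZ/Benoit--Saint-Aubin equation at $\kappa=2$ (where $h_{1,2}=1$), and the second PDE follows by the symmetry $\xin \leftrightarrow \xout$.

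The hard part will be the $N'$ third order PDEs at the visit points, which encode the level-three degeneracy of the weight-$3$ boundary visit field and have no single harmonicity delivering them directly. I would attack them by isolating the local structure of the observable near a single visit edge $\hat{e}_s^\delta$: the boundary visit amplitude is built from a second transverse derivative of a harmonic function as the bulk point approaches $\hat{p}_s$, so repeated use of $\partial_v^2 = -\partial_u^2$ on the harmonic extension lets one re-express third transverse derivatives through third tangential ones, while the mixed terms $\tfrac{8}{\xin-\hat{x}_s}\pddmix{\xin}{\hat{x}_s}$, $\tfrac{8}{\xout-\hat{x}_s}\pddmix{\xout}{\hat{x}_s}$ and $\sum_t \tfrac{8}{\hat{x}_t-\hat{x}_s}\pddmix{\hat{x}_t}{\hat{x}_s}$ arise from the interaction of this derivative with the poles of the Poisson kernel at the remaining points. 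I would verify the resulting identity by a direct computation on the explicit limit function, the delicate point being to reproduce the exact rational coefficients (the $8$, $12$, $24$, $36$, and the cubic pole terms $12/(\cdot)^3$, $36/(\cdot)^3$), which are fixed by the weights $1$ and $3$ together with the local square-lattice geometry at the boundary corner; pinning down this bookkeeping is where I expect the main effort to lie. A cleaner alternative would be to first establish a discrete third-order difference relation satisfied by $\ExcK$ at the visit edge and then pass to the limit, but correctly identifying and controlling that discrete relation is itself the principal obstacle.
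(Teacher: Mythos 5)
Your decomposition into (i) convergence plus covariance and (ii) the PDE system matches the paper's broad outline, but there is a genuine gap at the very first step, and it propagates into everything else. The exact discrete formula one actually has (via Wilson's algorithm, Fomin's formula, and the Dyck-tiling inversion) is
$\PR_{\ein^\delta,\eout^\delta}[\cdots] = \sum_{\beta \DPgeq \alpha(\omega)} \Minv_{\alpha(\omega),\beta}\, \LPdet{\beta}{\ExcK}(e_1^\delta,\ldots,e_{2N}^\delta)\,/\,\ExcK(\ein^\delta,\eout^\delta)$,
whose determinant entries are plain excursion-kernel values; no ``discrete third-order transverse difference of the kernel'' appears in it, and no such formula is available without proof. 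The difficulty is that each visit point is flanked by two boundary edges $e_{j_s}^\delta, e_{j_s+1}^\delta$ one lattice unit apart, and the entries $\ExcK(e_{j_s}^\delta,e_{j_s+1}^\delta)$ are of order $1$, not $\OO(\delta^2)$; hence individual determinants are enormously larger than the target scale $\delta^{2N+N'}$, and your plan to ``pass to the limit term by term'' fails outright. The heart of the paper's proof is precisely the cancellation machinery you are missing: combinatorial identities for the inverse Fomin type sums (the pairing of Dyck paths under wedge-lifts, the zero-replacement rule, and the cascade property) show that the order-one entries have vanishing coefficients in the \emph{full} sum and may be set to zero, after which a discrete Taylor expansion replaces one row per visit point by a tangential difference. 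This is also where the correct covariance bookkeeping comes from: the exponent $3$ at $\hat p_s$ is two collapsing weight-one kernel arguments plus \emph{one} tangential derivative, not three transverse derivatives as you assert.

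The same missing step undermines your treatment of the third-order PDEs. The paper does not derive them from harmonicity or by direct verification on the limit function; it first proves an exchange-of-limits statement (the renormalized discrete probability converges to the iterated collapse limit of the continuum partition function $\PartF_{\alpha(\omega)}$, both computed by the same replacing algorithm), and only then applies Dub\'edat's fusion argument: $\PartF_{\alpha(\omega)}$ satisfies the second-order PDEs (checked by direct computation on the determinants $\LPdet{\beta}{\ExcKH}$, not by a harmonicity heuristic), one writes a Frobenius expansion in $\eps = x_{j_s+1}-x_{j_s}$ with indicial exponents $\Delta=-2,\ \Delta'=1$, and the vanishing of the coefficients of $\eps^{\Delta'-1},\eps^{\Delta'},\eps^{\Delta'+1}$ forces the third-order operator to annihilate the leading Frobenius coefficient, i.e.\ $\Ampl_\omega$. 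Your proposed alternative --- re-expressing transverse by tangential derivatives and then reproducing the coefficients $8,12,24,36$ by hand --- is exactly the part you concede you cannot control, so as written the proposal proves neither the convergence (no cancellation argument) nor the third-order equations (no workable substitute for fusion, which itself needs the exchange of limits you have not established). Positivity, finally, is not free either: the paper gets it recursively from positivity of $\PartF_\alpha$ (proved probabilistically) together with the Frobenius structure, rather than from an unproved uniform discrete lower bound.
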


We make some remarks concerning the above result.


The scaling limit of the LERW path $\lambda^\delta$ above is a conformally
invariant random curve called the chordal $\SLEk$ with parameter $\kappa=2$,
by some of the earliest of the celebrated works showing convergence of
lattice model interfaces to $\SLE$s \cite{LSW-LERW_and_UST, Zhan-scaling_limits_of_planar_LERW},
see also \cite{CS-discrete_complex_analysis_on_isoradial,YY-Loop-erased_random_walk_and_Poisson_kernel_on_planar_graphs}.

The scaling limit of the probability that a LERW uses one particular edge
in the interior of the domain has been studied in
\cite{Kenyon-the_asymptotic_determinant_of_the_discrete_Laplacian,
Lawler-the_probability_that_planar_LERW_uses_a_given_edge},
see also \cite{KW-spanning_trees_of_graphs_on_surfaces_and_the_intensity_of_LERW}.
The most accurate results currently known \cite{BLV-scaling_limit_of_LERW_Greens_function}
state that this probability divided by $\delta^{3/4}$ 
converges in the scaling limit to a conformally covariant expression known as the
$\SLE$ Green's function at $\kappa=2$.
This has been recently used to prove the convergence of the LERW to
the chordal $\SLEk$ with $\kappa=2$ in a sense that includes 
parametrization~\cite{LV-natural_parametrization_for_SLE}.
In contrast to the probability of visiting an interior edge,
our scaling limit results concern probabilities of visits to an arbitrary number
of edges at unit distance from the boundary.

The scaling exponent $3$ appearing in the renormalization by $\delta^{3 N'}$ and
the conformal covariance factors $|\confmap'(\hat{p}_j)|^3$ of the above theorem 
is the conformal weight $h_{1,3}$ in the Kac table
for a CFT with central charge $c=-2$, i.e., $h_{1,3}=3$ is a highest
weight for a degenerate highest weight representation of the Virasoro algebra
\cite{Kac-ICM_proceedings_Helsinki, FF-representations, IK-representation_theory_of_the_Virasoro_algebra}.
Among the most remarkable predictions of CFT are
partial differential equations of order $r s$ for
the correlation functions of any primary field whose
Virasoro representation has highest weight $h_{r,s}$ and an appropriate degeneracy.
The third order PDEs in our theorem are exactly the degeneracy equations predicted by CFT.
Conformal field theory PDEs 
of higher than second order have recently appeared in the context of $\SLE$s
\cite{Dubedat-SLE_and_Virasoro_representations_fusion,
LV-Coulomb_gas_for_commuting_SLEs}, but to our knowledge, the only earlier rigorous
scaling limit result for a lattice model establishing such equations is Watts' formula for
percolation \cite{Dubedat-excursion_decompositions_and_Watts_crossing_formula,
SW-Schramms_proof_of_Watts_formula}.
Conformal field theory PDEs of second order, on the other hand, 
are well known to arise from It\^o calculus in the growth process
description of $\SLE$s,
and they have
been established for the scaling limits of various lattice models --- often as a step
towards the proof of convergence of interfaces to
$\SLE$. The second order PDEs in our theorem are of this more familiar type.

In this article we in fact treat a refinement of the boundary visit probabilities, 
in which the contribution of each possible order of visits is isolated from the rest.
The scaling limit contribution
$\Ampl_\omega(\xin;\hat{x}_1,\ldots,\hat{x}_{N'};\xout)$
of a given visit order $\omega$
has a particular asymptotic behavior as its arguments approach each other,
as was argued by considerations of CFT fusion channels in 
\cite{BB-SLE_CFT_and_zigzag_probabilities,
JJK-SLE_boundary_visits}, and as we prove in Section~\ref{sub: asymptotics of boundary visits}.
Such asymptotic behavior is believed to specify the solution to the PDEs up to 
a multiplicative constant. Admitting this,
our formulas coincide with the 
$\SLE$ boundary zig-zag amplitude prediction of 
\cite{JJK-SLE_boundary_visits, KP-pure_partition_functions_of_multiple_SLEs}.


\subsection{Connectivity probabilities of boundary branches in the uniform spanning tree}

\begin{figure}
\includegraphics[width=.3\textwidth]{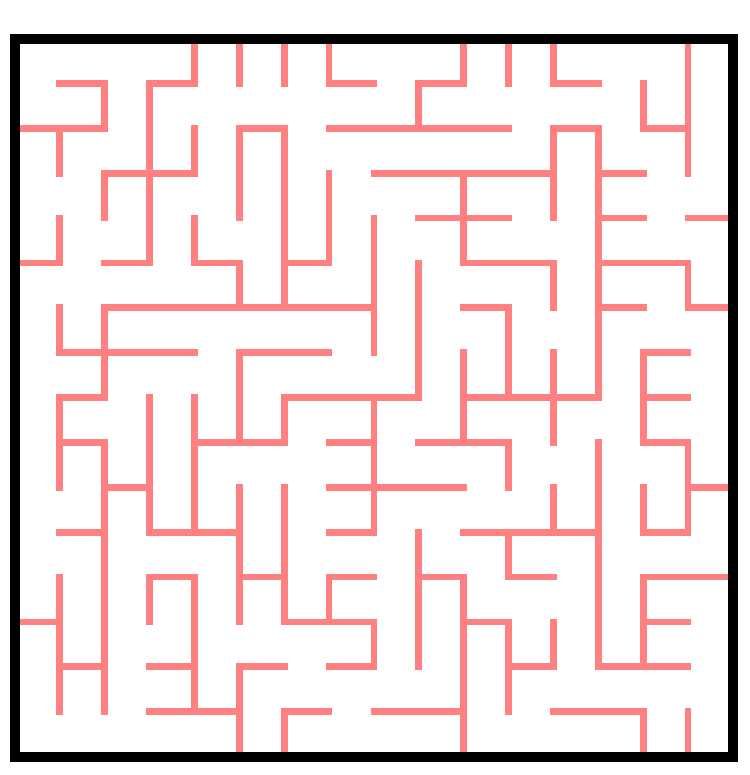} \hspace{2cm}
\includegraphics[width=.3\textwidth]{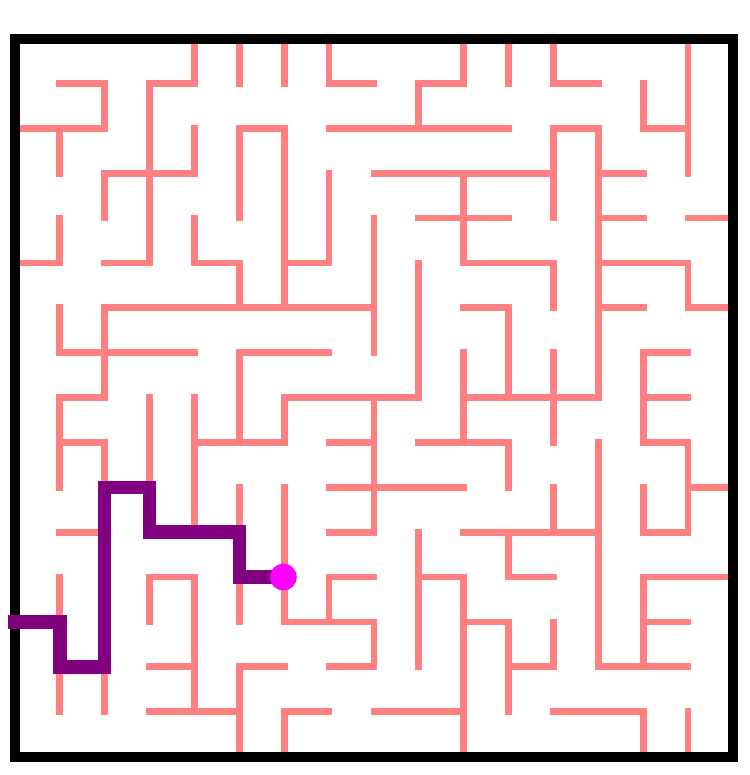}
\caption{\label{fig: UST and its boundary branch from bulk point}
A uniform spanning tree with wired boundary conditions in a $15 \times 15$ square 
grid graph, and its boundary branch from one given interior vertex.}
\end{figure}

The proof of the above theorem is based on
careful analysis of explicit formulas for probabilities of
certain connectivity events in another related model,
the uniform spanning tree on $\Gr$.
Formulas of this type 
were discovered
by Kenyon and Wilson in~\cite{KW-boundary_partitions_in_trees_and_dimers,
KW-double_dimer_pairings_and_skew_Young_diagrams}.
In this article, we also present a derivation of these formulas
based on Fomin's formulas~\cite{Fomin-LERW_and_total_positivity},
develop combinatorial tools for their detailed analysis, and
prove that the probabilities converge in the scaling limit 
to explicit functions, which satisfy the PDEs of second order predicted by CFT.

A uniform spanning tree (UST) $\tree$ on
$\Gr$ is a uniformly randomly chosen 
connected subgraph of $\Gr$ which contains all vertices $\Vert$ and has no cycles.
We impose wired boundary conditions, 
by which we mean that the boundary $\bdry \Vert$ is thought of as 
a single vertex. For any interior vertex $v \in \Vert^\circ$, there exists 
a unique path $\gamma_v$ in the tree $\tree$ from $v$ to $\bdry \Vert$,
see Figure~\ref{fig: UST and its boundary branch from bulk point}.
This path is called the boundary branch of $\tree$ from $v$, and
it is distributed like a loop-erased random
walk~\cite{Pemantle-choosing_a_spanning_tree_for_the_integer_lattice_uniformly,Wilson-generating_random_spanning_trees},
as we recall
in Section~\ref{sec: applications to UST}.
For two boundary edges 
$\ein, \eout \in \bdry \Edg$, 
we denote by $\set{\ein \rightsquigarrow \eout}$ the event 
that the boundary branch $\gamma_{\ein^\circ}$ of $\tree$ from $\ein^\circ$ 
reaches the boundary via the edge $\eout$, 
see Figure~\ref{fig: UST boundary branch}(left). 
Conditioned on this event, the boundary branch $\gamma_{\ein^\circ}$
has the distribution of a LERW from $\ein$ to $\eout$.

\begin{figure}
\centering
\begin{overpic}[width=.3\textwidth]{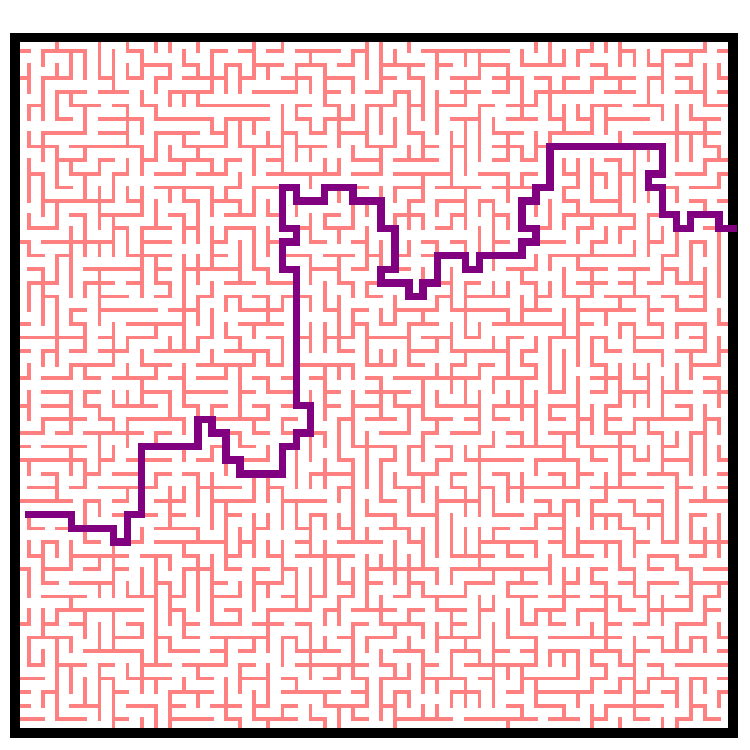}
 \put (-10,30) {\Large $\ein$}
 \put (101,67) {\Large $\eout$}
\end{overpic}
\hspace{2cm}
\centering
\begin{overpic}[width=.3\textwidth]{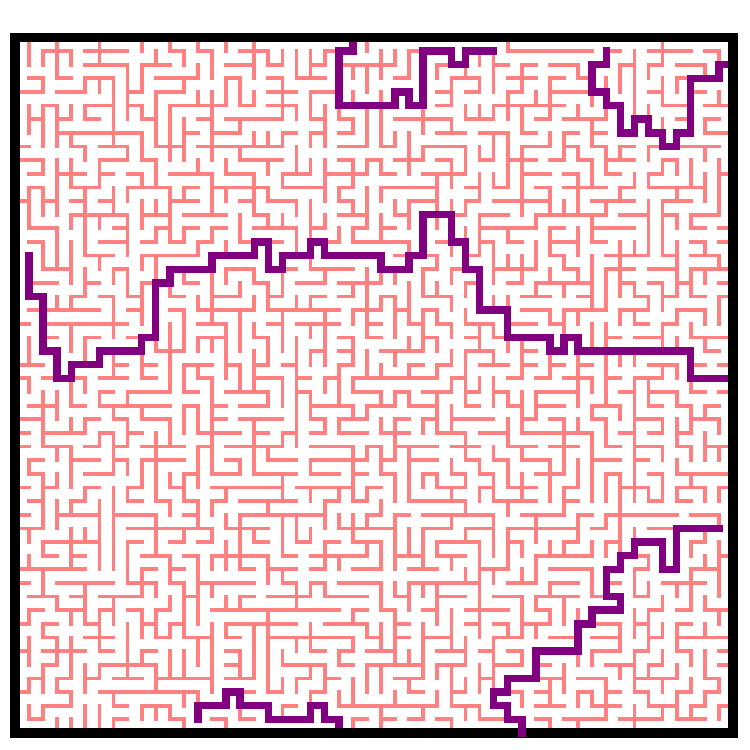}
 \put (25,-5) {\Large $e_1$}
 \put (45,-5) {\Large $e_2$}
 \put (70,-5) {\Large $e_3$}
 \put (101,27) {\Large $e_4$}
 \put (101,48) {\Large $e_5$}
 \put (101,90) {\Large $e_6$}
 \put (80,100) {\Large $e_7$}
 \put (65,100) {\Large $e_8$}
 \put (45,100) {\Large $e_9$}
 \put (-12,65) {\Large $e_{10}$}
\end{overpic}
\bigskip
\caption{\label{fig: UST boundary branch}
Uniform spanning trees with wired boundary conditions in a $50 \times 50$
square grid graph. The left figure depicts a boundary branch of the UST 
from $\ein$ to $\eout$. The right figure depicts a connectivity event 
containing several boundary branches, 
$\set{e_1 \rightsquigarrow e_2} \cap \set{e_4 \rightsquigarrow e_3} \cap
\set{e_{10} \rightsquigarrow e_5} \cap \set{e_7 \rightsquigarrow e_6}
\cap \set{e_8 \rightsquigarrow e_9}$.
}
\end{figure}

We will study probabilities of connectivity events
concerning several boundary branches, depicted in
Figure~\ref{fig: UST boundary branch}(right). More precisely, let 
$e_1 , \ldots, e_{2N} \in \bdry \Edg$ be boundary edges appearing in counterclockwise order
along the boundary.
We consider connectivity 
events $\bigcap_{\ell=1}^N \set{e_{a_\ell} \rightsquigarrow e_{b_\ell}}$,
where $a_1, \ldots, a_N, b_1, \ldots, b_N$ is some indexing of the boundary edges
by $1,2,\ldots,2N$. The possible topological connectivities of the branches are 
described by planar pair partitions 
$\alpha = \set{\set{a_1,b_1},\ldots,\set{a_N,b_N}}$ of $2N$ points, 
that tell which edges should
be connected by boundary branches in the UST, see Figure~\ref{fig: connectivity}.
The number of such possible connectivities is the Catalan number 
$\Catalan_N = \frac{1}{N+1} \binom{2N}{N}$.


\begin{figure}
\centering
\begin{overpic}[width=.35\textwidth]{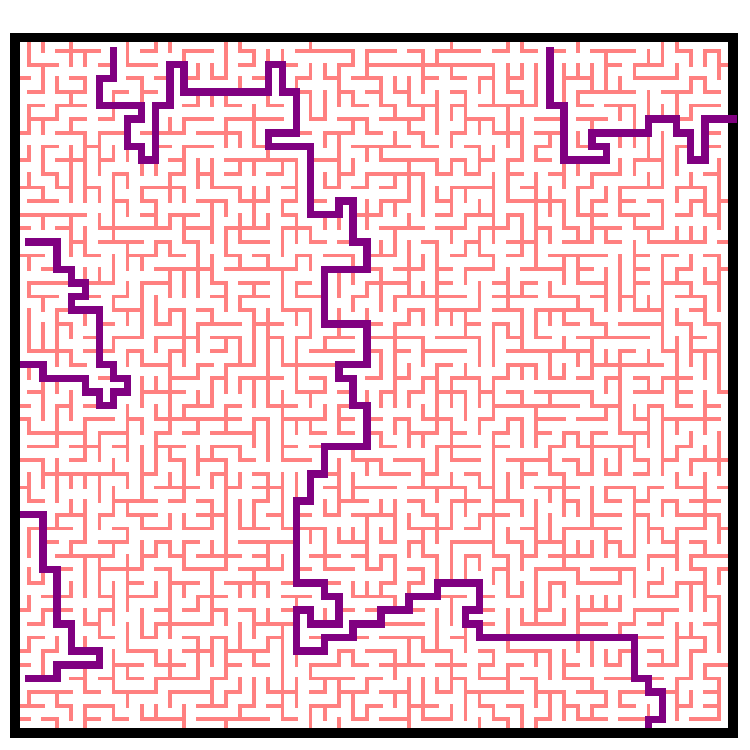}
 \put (85,-5) {\Large $e_1$}
 \put (101,83) {\Large $e_2$}
 \put (70,100) {\Large $e_3$}
 \put (15,100) {\Large $e_4$}
 \put (-8,65) {\Large $e_5$}
 \put (-8,50) {\Large $e_6$}
 \put (-8,30) {\Large $e_7$}
 \put (-8,8) {\Large $e_8$}
\end{overpic}
\hspace{.5cm}
\centering
\includegraphics[width=.4\textwidth]{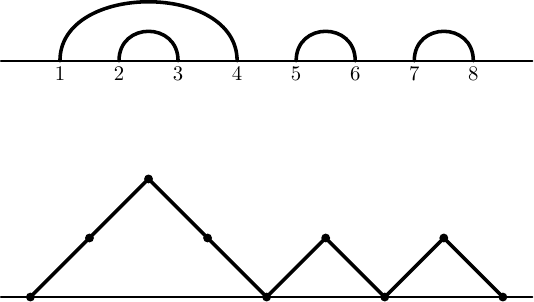}
\bigskip
\caption{\label{fig: connectivity}
The connectivity event 
$\set{e_4 \rightsquigarrow e_1} \cap \set{e_3 \rightsquigarrow e_2} \cap
\set{e_5 \rightsquigarrow e_6} \cap \set{e_8 \rightsquigarrow e_7}$
of the boundary branches in a UST, depicted in the left figure,
has the same probability that the event 
$\set{e_1 \rightsquigarrow e_4} \cap \set{e_2 \rightsquigarrow e_3} \cap
\set{e_5 \rightsquigarrow e_6} \cap \set{e_7 \rightsquigarrow e_8}$.
This connectivity event can be encoded in a planar pair partition 
$\alpha = \set{\set{1,4},\set{2,3},\set{5,6},\set{7,8}}$,
or, equivalently, a Dyck path also denoted by $\alpha$, 
depicted in the right figure.
}
\end{figure}

In the expressions for the UST connectivity probabilities, we use
determinants of matrices formed of random walk excursion kernels, 
like in Fomin's formulas for loop-erased random walks 
\cite{Fomin-LERW_and_total_positivity}.
Let $\ExcK(e_1,e_2)$ denote the excursion kernel of the symmetric random walk 
on $\Gr$ between the boundary edges $e_1$ and $e_2$, that is, the discrete 
harmonic measure of $e_2$ seen from $e_1^\circ$
(see Section~\ref{sec: applications to UST} for details).
With a suitable renormalization, the kernel $\ExcK$ converges in the scaling limit
to the Brownian excursion kernel $\ExcKdom$, in regular enough approximations 
$\Gr^\delta$ of $\domain$. 
In the upper half-plane $\bH$, 
the Brownian excursion kernel is simply a constant multiple of 
\begin{align*}
\ExcKH(x_1 , x_2) = \frac{1}{(x_2 - x_1)^2}  \qquad 
\text{for $x_1 , x_2 \in \bR$, $x_1 \neq x_2$}.
\end{align*}

For any planar pair partition $\beta$ of $2N$ points and marked boundary edges
$e_1 , \ldots, e_{2N} \in \bdry \Edg$, we define 
in Section~\ref{sec: application to UST}
a determinant $\LPdet{\beta}{\ExcK}(e_1 , \ldots, e_{2N})$ 
of an $N \times N$ matrix whose entries are the excursion kernels $\ExcK$ evaluated at pairs of marked edges determined
by $\beta$  --- see Equation~\eqref{eq: definition of LPdet in the discrete} for the precise definition.
We also define a similar determinant
$\LPdet{\beta}{\ExcKH}(x_1 , \ldots, x_{2N})$ of an $N \times N$ matrix
whose entries are instead the kernels $\ExcKH$ evaluated at pairs of points among
$x_1 < \cdots < x_{2N}$.

\begin{thm*}[Theorem~\ref{thm: the formula for UST partition functions}]
For the 
UST with wired boundary conditions, the probability of the connectivity 
described by the planar pair partition $\alpha = \set{\set{a_1,b_1},\ldots,\set{a_N,b_N}}$, 
as illustrated in Figure~\ref{fig: connectivity}, equals
\begin{align*}
\PR \Big[ \bigcap_{\ell=1}^N \set{\pathfromto{e_{a_\ell}}{e_{b_\ell}}} \Big]
= \; &    \sum_{\beta} \Minv_{\alpha,\beta} \; 
    \LPdet{\beta}{\ExcK} (e_1 , \ldots, e_{2N}) ,
\end{align*}
where the sum is over planar pair partitions $\beta$,
and, for each $\beta$, the coefficient $\Minv_{\alpha, \beta}$ is 
the number of cover-inclusive Dyck tilings of a skew shape 
between the Young diagrams associated to the two Dyck paths of $\alpha$ and $\beta$
(the precise definitions are given in Section~\ref{sec: combinatorics}).
\end{thm*}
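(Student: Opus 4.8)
\emph{Strategy.}
The plan is to turn the statement into a finite linear-algebra problem and solve it by matrix inversion. The first goal is a \emph{forward} formula expressing each determinant as a combination of the very connectivity probabilities we want, of the shape
\[
\LPdet{\beta}{\ExcK}(e_1,\ldots,e_{2N})
= \sum_{\alpha} \Mmat_{\beta,\alpha}\,
\PR\Big[\bigcap_{\ell=1}^{N}\set{\pathfromto{e_{a_\ell}}{e_{b_\ell}}}\Big] ,
\]
the sum being over planar pair partitions $\alpha$ (equivalently, Dyck paths). If the matrix $\Mmat=(\Mmat_{\beta,\alpha})$ is triangular with $1$'s on the diagonal for the partial order $\DPleq$ on Dyck paths, then it is invertible over $\bZ$, and the theorem is exactly the assertion that $\Minv=\Mmat^{-1}$ has entries equal to the cover-inclusive Dyck tiling counts of Section~\ref{sec: combinatorics}. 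Thus the work splits into (i) proving the forward formula with a uni-triangular $\Mmat$, and (ii) identifying $\Minv$ combinatorially.

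\emph{Fomin's formula.}
Under wired boundary conditions the boundary branches of the UST are loop-erased random walks (as recalled in Section~\ref{sec: applications to UST}), so the connectivity probabilities are LERW connectivity probabilities and Fomin's formula applies. I would expand $\LPdet{\beta}{\ExcK}=\sum_{\sigma}\sgn(\sigma)\prod_{\ell}\ExcK(\cdots)$ and read each product as the total mass of a family of random-walk excursions joining the boundary edges prescribed by $\beta$. Loop-erasing these excursions and invoking Fomin's sign-reversing cancellation, only the non-crossing configurations survive; since $e_1,\ldots,e_{2N}$ occur in counterclockwise order, each survivor has a well-defined topological type $\alpha$. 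The coefficient $\Mmat_{\beta,\alpha}$ then records how often loop-erasure of a non-intersecting excursion family of type $\beta$ realises the coarser pattern $\alpha$, and I expect it to equal $1$ when $\alpha\DPleq\beta$ and $0$ otherwise. Establishing this --- in particular that erasing loops can only \emph{coarsen} $\beta$ within the order $\DPleq$, and does so with multiplicity one --- is the first genuine task.

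\emph{Inversion, and the main obstacle.}
Granting a uni-triangular $\Mmat$, the inverse $\Minv$ exists, is again supported on pairs with $\alpha\DPleq\beta$, and is determined by the M\"obius-type recursion of the Dyck-path poset. The heart of the proof, and the step I expect to be hardest, is the purely combinatorial identification of $\Minv_{\alpha,\beta}$ with the (signed) number of cover-inclusive Dyck tilings of the skew shape between the Young diagrams of $\alpha$ and $\beta$; the sign is the parity of the number of tiles, which is fixed by $\alpha$ and $\beta$ through the conventions of Section~\ref{sec: combinatorics}. I would prove this identity by showing that the Dyck tiling counts obey the same recursion as the entries of $\Minv$, either through a sign-reversing involution on pairs consisting of a Dyck tiling together with one marked extra tile --- matching the alternating inclusion--exclusion that defines the inverse --- or by a generating-function computation over the poset. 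Substituting the resulting $\Minv$ into the inverted forward formula then yields the claimed expression for $\PR\big[\bigcap_{\ell}\set{\pathfromto{e_{a_\ell}}{e_{b_\ell}}}\big]$.
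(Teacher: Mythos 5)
Your overall skeleton --- a forward formula obtained from Fomin's identity together with Wilson's algorithm, followed by inversion of a unitriangular matrix whose inverse is identified combinatorially --- is exactly the paper's route, but the structure you conjecture for the forward matrix is wrong, and the error is fatal rather than cosmetic. Fomin's cancellation does not turn the surviving non-crossing terms into a sum with coefficients $+1$: each surviving permutation $\sigma$ retains its sign $\sgn(\sigma)$. Already for $N=2$, with $\unnested_2=\set{\link{1}{2},\link{3}{4}}$ and $\nested_2 = \set{\link{1}{4},\link{2}{3}}$, expanding the determinants gives
\[
\LPdet{\unnested_2}{\ExcK} = Z_{\unnested_2} - Z_{\nested_2},
\qquad
\LPdet{\nested_2}{\ExcK} = Z_{\nested_2},
\]
whereas your prediction ($\Mmat_{\beta,\alpha}=1$ whenever $\alpha \DPleq \beta$, $0$ otherwise) would give $Z_{\unnested_2}+Z_{\nested_2}$ for the first determinant and $Z_{\nested_2}+Z_{\unnested_2}$ for the second. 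Note that your direction is also reversed: the determinant of a pattern expands over connectivities \emph{above} it, because connecting the entrances of $\alpha$ to its exits by a permutation $\sigma$ produces exactly the patterns $\beta$ with $\alpha \KWleq \beta$ (Lemma~\ref{lem: link patterns and KW relation}). Moreover the support of the correct matrix is not the partial order $\DPleq$ but the \emph{non-transitive} parenthesis reversal relation $\KWleq$, whose transitive closure is $\DPleq$; for $N\geq 3$ there are pairs $\alpha \DPleq \beta$ with $\alpha \not\KWleq \beta$, and for those the coefficient is $0$. The correct forward matrix is $\Mmat_{\alpha,\beta}=\sgn(\sigma)$ when $\beta = \set{ \link{a_1}{b_{\sigma(1)}}, \ldots, \link{a_N}{b_{\sigma(N)}} }$ for some $\sigma\in\SymmGrp_N$, and $0$ otherwise (Proposition~\ref{prop: general planar Fomins formula for UST}).

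This error propagates into your step (ii) and makes it unfixable as stated. If the forward matrix were the zeta function of the poset $(\DP_N,\DPleq)$, its inverse would be the M\"obius function of that (distributive) lattice, which has negative entries and vanishes on many pairs $\alpha \DPleq \beta$; it cannot coincide with the matrix in the theorem, whose entries are non-negative and positive \emph{precisely} when $\alpha \DPleq \beta$. You have also placed the signs on the wrong side: the statement asserts that $\Minv_{\alpha,\beta}$ is a plain unsigned count of cover-inclusive Dyck tilings, while your proposal makes it a count signed by the parity of the number of tiles. In the correct argument the sign $(-1)^{|\nestedtilingof(\alpha/\beta)|}$ (the sign of the permutation $\sigma$, by Lemmas~\ref{lem: link patterns and KW relation} and~\ref{lem: nested tilings and KW relation}) lives in $\Mmat$, and it is precisely the inversion of this \emph{signed} incidence matrix of $\KWleq$ that yields the \emph{unsigned} tiling counts $\#\CItilingsof(\alpha/\beta)$, by the Kenyon--Wilson inversion (Theorem~\ref{thm: weighted KW incidence matrix inversion} with unit weights). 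To repair your proof you must therefore (a) keep the permutation signs in the Fomin expansion and show that the surviving connectivity patterns are exactly those with $\alpha \KWleq \beta$, carrying coefficient $\sgn(\sigma)$, and (b) invoke or reprove this inversion theorem; your proposed sign-reversing involution would have to target that identity, not the M\"obius recursion of the poset.
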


In fact, the numbers $\Minv_{\alpha, \beta}$ are entries of the inverse 
matrix $\Minv$ of a signed incidence matrix $\Mmat$ of a binary relation 
on Dyck paths, introduced by Kenyon and Wilson~\cite{KW-boundary_partitions_in_trees_and_dimers,
KW-double_dimer_pairings_and_skew_Young_diagrams}
and Shigechi and Zinn-Justin~\cite{SZ-path_representations_of_maximal_paraboloc_KL_polynomials}.
These matrices are given explicitly in Example~\ref{ex: signed incidence matrix of KWleq}
in Section~\ref{sec: combinatorics}.
Combinatorics related to the binary relation were furthermore recently used by 
Poncelet~\cite{Poncelet} to generalize Schramm's passage probability 
formula to multiple LERWs.

\begin{thm*}[Theorems~\ref{thm: scaling limit of partition functions}~and~\ref{thm: pure partition functions at kappa equals 2}]
Let $\Gr^\delta$ with $e_1^\delta , \ldots, e_{2N}^\delta \in \bdry \Edg^\delta$
be a square grid approximation of a domain $\domain$ with marked boundary
points $p_1 , \ldots, p_{2N}$ appearing in counterclockwise order
along the boundary. Then, the UST connectivity probability has the following 
conformally covariant scaling limit:
\begin{align*}
\frac{1}{\delta^{2N}} \, 
\PR \Big[ \bigcap_{\ell=1}^N \set{\pathfromto{e^\delta_{a_\ell}}{e^\delta_{b_\ell}}} \Big]
\; \underset{\delta \to 0}{\longrightarrow} \; \frac{1}{\pi^{N}} \times \prod_{j=1}^{2N} |\confmap'(p_j)| \times
     \PartF_\alpha \big( \confmap(p_1) , \ldots, \confmap(p_{2N}) \big) ,
\end{align*}
where $\confmap \colon \domain \to \bH$ is a conformal map, and,
for $x_1 < \cdots < x_{2N}$, the function $\PartF_\alpha$ is given explicitly by
\begin{align*}
\PartF_\alpha (x_1 , \ldots , x_{2N})
= \; & \sum_{\beta} \Minv_{\alpha,\beta} \, \LPdet{\beta}{\ExcKH} (x_1 , \ldots, x_{2N}) .
\end{align*}
%
The function $\PartF_\alpha$ is positive 
and it 
satisfies the following system of $2N$ second order PDEs: 
\begin{align*}
& \Bigg[ \pdder{x_j}
    + \sum_{i \neq j} \Big( \frac{2}{x_i-x_j} \pder{x_i} - \frac{2}{(x_i-x_j)^2} \Big) \Bigg] \;
  \PartF_\alpha = 0  \qquad \text{for all } j=1,\ldots,2N .
\end{align*}
\end{thm*}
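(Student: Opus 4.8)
The plan is to derive every assertion from the exact discrete formula of Theorem~\ref{thm: the formula for UST partition functions}, which expresses the connectivity probability as a finite linear combination $\sum_\beta \Minv_{\alpha,\beta}\,\LPdet{\beta}{\ExcK}$ of $N\times N$ determinants of the discrete excursion kernel. Since each $\LPdet{\beta}{\ExcK}$ is an $N\times N$ determinant and every one of its $N$ rows carries exactly one factor of $\ExcK$, I can pull the renormalization inside a row at a time, so that $\delta^{-2N}\LPdet{\beta}{\ExcK}=\det\big[\delta^{-2}\ExcK(\cdots)\big]$. Feeding in the separately established convergence of the renormalized excursion kernel, schematically $\delta^{-2}\ExcK(e_i^\delta,e_j^\delta)\to \tfrac1\pi\,|\confmap'(p_i)|\,|\confmap'(p_j)|\,\ExcKH(\confmap(p_i),\confmap(p_j))$, and using that a determinant is a polynomial (hence continuous) in its entries while all $p_j$ are distinct so no entry blows up, each determinant converges. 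Because every one of the $2N$ marked points occurs in exactly one of the $N$ kernel factors, the conformal derivatives assemble into $\prod_{j=1}^{2N}|\confmap'(p_j)|$ and the constants into $\pi^{-N}$; summing the finitely many $\beta$ with the fixed coefficients $\Minv_{\alpha,\beta}$ then yields both the claimed limit and the explicit formula $\PartF_\alpha=\sum_\beta\Minv_{\alpha,\beta}\LPdet{\beta}{\ExcKH}$.

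Conformal covariance, equivalently independence of the right-hand side from the choice of $\confmap$, is then immediate: the kernel $\ExcKH(x,y)=(y-x)^{-2}$ transforms with weight $1$ at each argument under Möbius maps of $\bH$, so each $\LPdet{\beta}{\ExcKH}$, and hence $\PartF_\alpha$, transforms with weight $1$ at every point, cancelling $\prod_j|\confmap'(p_j)|$. Non-negativity of $\PartF_\alpha$ is likewise immediate, being a limit of the non-negative quantities $\delta^{-2N}\PR[\cdots]$ multiplied by the positive prefactor $\pi^{-N}\prod_j|\confmap'(p_j)|$. Strict positivity needs slightly more: I would obtain it either from a uniform discrete lower bound of order $\delta^{2N}$ on the connectivity probability (realizing the $N$ prescribed branches one at a time through disjoint tubes via Wilson's algorithm), or from the merging asymptotics of Section~\ref{sub: asymptotics of boundary visits}, which exhibit a positive factorization of $\PartF_\alpha$ into lower-order pure partition functions and reduce positivity to the trivial base case $N=0$.

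For the PDE system, since the $\Minv_{\alpha,\beta}$ are constants it suffices to show that each determinant $\LPdet{\beta}{\ExcKH}$ is annihilated by every operator $\sD_j=\pdder{x_j}+\sum_{i\neq j}\big(\tfrac{2}{x_i-x_j}\pder{x_i}-\tfrac{2}{(x_i-x_j)^2}\big)$; these are exactly the level-two BPZ equations for fields of weight $h_{1,2}=1$ at $\kappa=2$ (central charge $c=-2$). The starting point is the two-point identity $\big[\pdder{x}+\tfrac{2}{y-x}\pder{y}-\tfrac{2}{(y-x)^2}\big]\,(y-x)^{-2}=0$, expressing that $\ExcKH$ is the correlator of two such degenerate fields. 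Writing a determinant as a signed sum over perfect matchings of the $2N$ points, the variable $x_j$ occurs in a single factor of each term, so that $\pdder{x_j}$ together with the diagonal contribution of the pair containing $x_j$ cancels by the two-point identity exactly as above. The essential point is that the remaining cross-terms do \emph{not} vanish term by term (indeed a single product of two-point functions is not annihilated by $\sD_j$ once $N\ge 2$), but cancel after antisymmetrization: I checked this for $N=2$, where the residual collapses to a single multiple of $\ExcKH(x_1,x_2)\,\ExcKH(x_1,x_3)\,\ExcKH(x_1,x_4)$ that cancels between the two matchings of the determinant. Conceptually this is the statement that $\LPdet{\beta}{\ExcKH}$ is a $2N$-point correlator of the symplectic-fermion theory at $c=-2$, evaluated by Wick contraction, and correlators of genuine degenerate fields satisfy the null-vector equations.

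The main obstacle is precisely this cross-term cancellation for general $N$: one must show that, after applying the two-point identity, the matchings organized by the cofactor structure of the determinant telescope to zero. I would handle it by first invoking the global conformal Ward identities (translation, dilation, special conformal), which every product of two-point functions of weight-$1$ fields satisfies automatically and which let me eliminate one derivative via $\pder{x_j}=-\sum_{i\neq j}\pder{x_i}$, thereby rewriting the action of $\sD_j$ on $\LPdet{\beta}{\ExcKH}$ as a manifestly antisymmetric expression; alternatively, an induction on $N$ that peels off the row and column carrying $x_j$ and uses the determinant (Pfaffian/Wick) recursion makes the fermionic cancellation explicit. Either route reduces the verification to bookkeeping that is routine once the antisymmetric structure is exploited, with the $N=2$ computation serving as the template.
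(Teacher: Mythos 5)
Your proposal matches the paper's own proof in all essentials. The scaling limit is obtained, exactly as in the paper, by combining the discrete formula of Theorem~\ref{thm: the formula for UST partition functions} with the convergence of the renormalized excursion kernel (Lemma~\ref{lem: scaling limit of discrete excursion kernels}) and multilinearity of the determinants; the M\"obius covariance comes from the weight-one transformation rule of $\ExcKH$, which is the paper's Lemma~\ref{lem: Mobius covariance for determinants}; and your first positivity option --- non-negativity from the limit plus a uniform lower bound of order $\delta^{2N}$ produced by running Wilson's algorithm through $N$ disjoint subdomains --- is precisely the argument given in Section~\ref{sec: proof of pure partition function properties}.

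On the PDE system your mechanism is also the paper's (Lemma~\ref{lem: PDEs for determinants}): by linearity it suffices to treat each $\LPdet{\beta}{\ExcKH}$, single products of two-point kernels are \emph{not} annihilated, and the cancellation only occurs after antisymmetrization over matchings; your $N=2$ computation is exactly the right template. The one place you diverge is the completion for general $N$: you propose either global conformal Ward identities or an induction on cofactor expansions, and neither is carried out. The paper's finishing move is more direct and is what makes the bookkeeping genuinely routine: writing $A_m = x_{a_m}$, $B_\ell = x_{b_\ell}$, a computation shows that for each permutation $\sigma$ and each $k \neq 1$ the residual of $\sD_1$ on $\prod_m \ExcKH(A_m,B_{\sigma(m)})$ equals $-2\,\ExcKH(A_k,A_1)$ times the same product with $A_k$ replaced by $A_1$ in its slot, so that summing over $\sigma$ with signs produces, for each fixed $k$, the determinant of a matrix whose first and $k$:th rows coincide --- hence zero. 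Without this repeated-row identity your two routes each need nontrivial extra work: the three global Ward identities are far too few relations to yield the $2N$ equations, and the induction does not close naively, because the operator $\sD^{(j)}$ restricted to a minor is not the PDE operator of the smaller system (the inductive hypothesis on the minors is not what the cross-terms require). So the proposal is correct in strategy, but the general-$N$ step you label routine is only routine once one sees the determinant-with-equal-rows cancellation, which is the actual content of the paper's lemma.
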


%




\subsection{Steps of the proof(s)}
The proof of Theorem~\ref{thm: the formula for UST partition functions},
concerning the connectivity probabilities in the discrete UST model,
consists of the relatively well known Wilson's algorithm and Fomin's formulas
combined with combinatorics of Dyck tilings.
Then, Theorem~\ref{thm: scaling limit of partition functions} 
about the scaling limits of the connectivity probabilities
follows from 
scaling limit results for random walk excursion kernels, and
the PDEs in Theorem~\ref{thm: pure partition functions at kappa equals 2}
are directly verified from the explicit formulas.

The proof of our main theorem about the boundary visit probabilities of LERW (Theorem~\ref{thm: scaling limit of LERW bdry visits}) 
uses the expressions from Theorems~\ref{thm: the formula for UST partition functions} 
and~\ref{thm: scaling limit of partition functions}.
Namely, we first observe that the (order-refined) boundary visit probability can be written as
\[ \PR_{\ein,\eout} \big[ \gamma \text{ uses $\hat{e}_1 , \ldots , \hat{e}_{N'}$ in this order} \big]
=  \sum_{\beta} \Minv_{\alpha, \beta} \frac{ \LPdet{\beta}{\ExcK}(e_1 , \ldots, e_{2N}) }{\ExcK(\ein,\eout)} 
, \]
for a suitably chosen connectivity $\alpha$ and boundary edges 
$e_1, \ldots, e_{2N}$, with $N = N'+1$.
The key steps then are to show cancellations in the leading terms of 
the explicit determinantal formulas in the $\delta \to 0$ limit, 
and an exchange of limits property for the subleading terms, 
detailed in Section~\ref{appendix: boundary visits}.
Our proofs of both of these properties rely heavily on combinatorics of
cover-inclusive Dyck tilings.
The cancellations and exchange of limits eventually allow us also to
establish the asserted third order partial 
differential equations by a fusion argument
similar to the recent work of 
Dub\'edat~\cite{Dubedat-SLE_and_Virasoro_representations_fusion}.


%

\subsection{Multiple $\SLE_2$}

Collections of several branches of the UST 
are expected to converge to multiple $\SLEk$ curves at $\kappa = 2$
\cite{BBK-multiple_SLEs, Dubedat-Euler_integrals, Dubedat-commutation, 
LK-configurational_measure, KW-boundary_partitions_in_trees_and_dimers,
KP-pure_partition_functions_of_multiple_SLEs}, as stated in more detail
in Conjecture~\ref{conj: multiple branches of UST converge to multiple SLE2}.%
\footnote{See also the update below the conjecture for recent progress.}
In general, a multiple $\SLEk$, for $\kappa > 0$, is a process of random 
conformally invariant curves in a simply connected planar domain, 
connecting marked boundary points $p_1,\ldots,p_{2N}$ 
pairwise without crossing. The possible topological connectivities of the curves can be 
described by planar pair partitions $\alpha$ similarly as in the discrete case
(see Figure~\ref{fig: connectivity}).
To construct a (local) multiple $\SLEk$, one uses as an input a positive function
$\PartF(x_1, \ldots, x_{2N})$, defined for $x_1 < \cdots < x_{2N}$, called 
a multiple $\SLE$ partition function. This function is subject to the
requirements of positivity, M\"obius covariance~\eqref{eq: COV for multiple SLEs},
and $2N$ 
partial differential equations
of second order~\eqref{eq: PDE for multiple SLEs},
recalled explicitly in Section~\ref{sec: applications to SLEs}.

Consider a chosen connectivity pattern $\alpha$. It has been argued in
\cite{BBK-multiple_SLEs, KP-pure_partition_functions_of_multiple_SLEs}
that the multiple $\SLEk$ in which the random curves form this deterministic 
connectivity $\alpha$ has a particular partition function 
$\PartF_\alpha^{(\kappa)}$ determined by certain asymptotic boundary conditions
\eqref{eq: ASY for multiple SLEs}, given in
Section~\ref{sec: applications to SLEs}.
In~\cite{KP-pure_partition_functions_of_multiple_SLEs}, solutions 
$\PartF_\alpha^{(\kappa)}$ 
satisfying the required covariance, PDEs and asymptotics
are constructed for the generic parameter values 
$\kappa \in (0,8) \setminus \bQ$. 
These functions are called the multiple $\SLEk$ pure partition functions.

In the present article, we show that the pure partition functions of the 
local multiple $\SLE_2$
can be obtained as the scaling limits of the probabilities
of the connectivity events 
$\bigcap_{\ell=1}^N \set{\pathfromto{e_{a_\ell}}{e_{b_\ell}}}$ 
of the UST boundary branches.\footnote{This is the first proof of the existence 
of \textit{positive} pure partition functions of multiple $\SLE$s,
implying in particular the existence of the extremal (local)
multiple $\SLE$s. 
After the present work, the existence of positive pure partition functions 
for~$\kappa \in (0,6]$
has been proven by $\SLE$ 
methods~\cite{PH-Global_multiple_SLEs_and_pure_partition_functions, 
Wu-HyperSLE}, which however do not yield explicit formulas.
}
Importantly, we can conclude the existence of local multiple $\SLEk$ at $\kappa = 2$,
with the pure partition functions~$\PartF_\alpha$.

\begin{thm*}[Theorems~\ref{thm: pure partition functions at kappa equals 2} and~\ref{thm: existence of local multiple SLE2s}]
For any planar pair partition $\alpha$, the function 
$\PartF_\alpha = \sum_{\beta} \Minv_{\alpha,\beta} \, \LPdet{\beta}{\ExcKH}$
is a positive, M\"obius covariant solution to the $2N$ second order PDEs
required from the multiple $\SLE_2$ partition functions.
In particular, there exists a local multiple $\SLE_2$ 
with partition function $\PartF_\alpha$.
\end{thm*}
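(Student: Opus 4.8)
The plan is to establish, for the explicit function $\PartF_\alpha = \sum_\beta \Minv_{\alpha,\beta}\,\LPdet{\beta}{\ExcKH}$, the three defining properties of a multiple $\SLEkappa{2}$ partition function — positivity, Möbius covariance~\eqref{eq: COV for multiple SLEs}, and the system of $2N$ second order PDEs~\eqref{eq: PDE for multiple SLEs} — and then to invoke the general construction of local multiple $\SLE$s. Positivity and the PDEs are precisely the content of Theorem~\ref{thm: pure partition functions at kappa equals 2}, so I would simply quote them. The two remaining tasks are to verify the covariance and to pass from a partition function to an actual curve process.

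For Möbius covariance I would argue directly from the determinantal formula. The only building block is the Brownian excursion kernel $\ExcKH(x_1,x_2) = (x_2-x_1)^{-2}$, and for any $\Mob$ preserving $\bH$ one has the elementary identity $(\Mob(x_2)-\Mob(x_1))^2 = \Mob'(x_1)\,\Mob'(x_2)\,(x_2-x_1)^2$, hence
\[
\ExcKH\big(\Mob(x_1),\Mob(x_2)\big) = \frac{1}{\Mob'(x_1)\,\Mob'(x_2)}\,\ExcKH(x_1,x_2) .
\]
In each matrix defining $\LPdet{\beta}{\ExcKH}$ the rows and columns are labelled by a splitting of the $2N$ marked points into two sets of $N$, so that in the Leibniz expansion of the determinant every term is a product of $N$ excursion kernels in which each of $x_1,\ldots,x_{2N}$ occurs exactly once as an argument. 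Consequently every term, and hence each $\LPdet{\beta}{\ExcKH}$, transforms with the common scalar $\prod_{j=1}^{2N}\Mob'(x_j)^{-1}$; since the coefficients $\Minv_{\alpha,\beta}$ are constants, so does $\PartF_\alpha$. Because $\Mob'(x_j) > 0$ for real arguments of an automorphism of $\bH$, this is exactly the covariance~\eqref{eq: COV for multiple SLEs} with conformal weight $h_{1,2} = 1$ at $\kappa = 2$.

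With positivity, covariance, and the second order PDEs in hand, the existence of a local multiple $\SLEkappa{2}$ with partition function $\PartF_\alpha$ follows from the general theory of \cite{BBK-multiple_SLEs, Dubedat-commutation, KP-pure_partition_functions_of_multiple_SLEs}: a positive, Möbius covariant solution of the system~\eqref{eq: PDE for multiple SLEs} is precisely the input needed to weight the driving Brownian motions so that growing any one of the $2N$ curves produces a consistent, commuting family of Loewner evolutions, the PDEs supplying the local martingale property behind the Girsanov reweighting. Here I would check that the normalization of $\PartF_\alpha$ and the exponent $h_{1,2}=1$ match the conventions of that construction, and then apply it.

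The hard analytic inputs are genuinely upstream: the positivity of $\PartF_\alpha$ (far from evident from the formula, since the $\Minv_{\alpha,\beta}$ carry signs) and the second order PDEs are the substantial facts, and both are delivered by Theorem~\ref{thm: pure partition functions at kappa equals 2}. Within the present argument the only delicate points are confirming that every determinant term carries the identical covariance weight $\prod_j \Mob'(x_j)^{-1}$ — which rests on the precise row/column labelling in the definition of $\LPdet{\beta}{\ExcKH}$ — and verifying that the hypotheses of the cited existence theorem are met verbatim, in particular that the required covariance exponent is indeed $h_{1,2}=1$.
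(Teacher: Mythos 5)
Your covariance computation and your final existence step are both sound and coincide with the paper's own argument: the identity $\frac{\Mob(z)-\Mob(w)}{z-w}=\sqrt{\Mob'(z)}\,\sqrt{\Mob'(w)}$ applied to the entries of $\LPdet{\beta}{\ExcKH}$, together with multilinearity of the determinant, is exactly Lemma~\ref{lem: Mobius covariance for determinants}, and the passage from (positivity, covariance, PDEs) to a local multiple $\SLEkappa{2}$ via the classification of local multiple $\SLE$s is exactly how the paper proves Theorem~\ref{thm: existence of local multiple SLE2s}, citing \cite[Theorem~A.4(a)]{KP-pure_partition_functions_of_multiple_SLEs}.

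The genuine gap is that the two substantial claims --- the $2N$ second order PDEs and the positivity of $\PartF_\alpha$ --- are not proved but merely quoted from Theorem~\ref{thm: pure partition functions at kappa equals 2}, which is (together with the existence assertion) precisely the statement under proof; this is circular. The paper supplies these as follows. For the PDEs (Lemma~\ref{lem: PDEs for determinants}), linearity again reduces the claim to the determinants: applying $\sD_1$ to each Leibniz term of $\LPdet{\beta}{\ExcKH}$ and simplifying with two partial-fraction identities, one finds that $\sD_1 \, \LPdet{\beta}{\ExcKH}$ equals $-2\sum_{k\ge 2}(x_{a_k}-x_{a_1})^{-2}$ times the determinant of a kernel matrix whose first and $k$:th rows coincide, hence zero. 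For positivity, no argument from the explicit formula is given at all (and note, incidentally, that the coefficients $\Minv_{\alpha,\beta}$ do \emph{not} carry signs --- by Theorem~\ref{thm: weighted KW incidence matrix inversion} they are non-negative counts of cover-inclusive Dyck tilings; the signs sit inside the determinants). Instead the paper goes through the discrete model: by Theorem~\ref{thm: scaling limit of partition functions}, $\PartF_\alpha$ is a positive constant times the $\delta\to 0$ limit of the non-negative probabilities $\delta^{-2N}Z^{\Gr^\delta}_\alpha$, hence $\PartF_\alpha\ge 0$; strict positivity then follows from a uniform-in-$\delta$ lower bound produced by running Wilson's algorithm in $N$ disjoint subdomains $\domain_\ell\subset\domain$, one joining each pair of marked points $p_{a_\ell},p_{b_\ell}$, each single-LERW connection probability divided by $\delta^2$ having a positive limit (alternatively, the paper notes, one may use ellipticity of the PDEs and a maximum principle). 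Without these two arguments your proposal establishes only the covariance and the reduction of existence to the classification theorem, i.e., the easy part of the statement.
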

In fact, 
for each $\alpha \in \LP_N$, 
the function $\PartF_\alpha$ is the unique solution which satisfies the asymptotic boundary 
conditions~\eqref{eq: ASY for multiple SLEs},
when solutions with at most power-law growth are considered, 
see Section~\ref{sec: applications to SLEs}.

\subsection{Beyond the present work: conformal blocks and $q$-deformations}

Many of the delicate properties
of the scaling limits of connectivity
probabilities of UST branches and boundary visit probabilities of
LERWs 
rely heavily on remarkable combinatorial structures.
Some key aspects of the combinatorics are captured by 
a certain
binary relation on planar pair partitions, 
and Fomin's formulas lie at the very heart of the applications to the
UST 
and LERW. 
Fomin's formulas certainly seem specific to the loop-erased random walks,
and one could thus be lead to suspect that the remarkable combinatorics in our scaling
limit results might also be specific to $\SLE$s at $\kappa=2$ and 
conformal field theories at the corresponding central charge $c=-2$. 
Somewhat surprisingly, however, it turns out that
versions of the combinatorial structures in fact persist at generic $\kappa$ and $c$.
The generic parameter analogues of the results relate
the multiple $\SLE$ pure partition functions $\PartF_\alpha^{(\kappa)}$
not to determinant functions $\LPdet{\alpha}{\ExcKH}$
as in Fomin's formulas, but rather to the conformal block
functions 
of conformal field theories. 
Some combinatorial enumerations will have to be replaced by appropriate
$q$-analogues, where the deformation parameter $q$ depends on $\kappa$.
This generalization is the topic of a companion paper~\cite{KKP-companion}.

\subsection{Organization of the article}
The roles of the remaining four sections can be roughly summarized as follows.
The main probabilistic content is in
Sections~\ref{sec: application to UST}--\ref{appendix: boundary visits}.
Section~\ref{sec: combinatorics} has a crucial but auxiliary role for the main content:
the other sections repeatedly rely on the combinatorial results there.
A~recommended approach is to read Sections~\ref{sec: application to UST}--\ref{appendix: boundary visits},
while consulting Section~\ref{sec: combinatorics} as it is needed.


More precisely, the combinatorics of Section~\ref{sec: combinatorics}
will be employed in the rest of the article as follows. The results from
Sections~\ref{subsec: Combinatorial objects and bijections}--\ref{subsec: Dyck tilings}
will be needed in Section~\ref{sec: partition functions}
and the follow-up work~\cite{KKP-companion}. 
The results in Sections~\ref{subsec: Wedges, slopes, and link removals}--\ref{subsec: cascades}
are mainly for the purpose of the follow-up \cite{KKP-companion}.
The results in Sections~\ref{subsec: Wedges, slopes, and link removals}
and~\ref{subsec: inverse Fomin sums} are used
in Sections~\ref{subsub: partition function asymptotics},
\ref{sec: determinant Taylor expansions}, and~\ref{sub: asymptotics of boundary visits}.

Section~\ref{sec: application to UST} constitutes the derivation 
of the explicit formulas for the connectivity and boundary visit probabilities,
as well as the convergence proof for 
the connectivity probabilities in the scaling limit.
Sections~\ref{sec: graphs in planar domain}--\ref{sec: UST definitions}
introduce the uniform spanning tree with wired boundary
conditions and its boundary branches, and contain elementary observations
about their connectivity and boundary visit events.
Sections~\ref{sec: LERW and UST} and~\ref{sec: Fomin} review two well-known
but essential tools for the uniform spanning tree and loop-erased random walk:
Wilson's algorithm and Fomin's formula.
In Section~\ref{sec: partition functions}, our solution to the 
connectivity and boundary visit probabilities is presented for the discrete models.
Scaling limits are addressed in Section~\ref{sec: scaling limits}.
Some generalizations of our main results, in particular pertaining to the
uniform spanning tree with free boundary conditions, are discussed in
Section~\ref{sec: generalizations}.

The topic of Section~\ref{sec: applications to SLEs} is the relation of the
scaling limit connectivity probabilities to multiple $\SLE$ processes.
Multiple~$\SLEk$ is introduced in Section~\ref{sec: multiple SLEs}.
In Section~\ref{sec: proof of pure partition function properties}, we prove the second order partial
differential equations, positivity, and asymptotics for the functions $\PartF_\alpha$, and
as a consequence get that each of these functions can be used to construct a local multiple~$\SLEk$
at $\kappa = 2$.

Section~\ref{appendix: boundary visits} contains the proof 
of the scaling limit results for the boundary visit probabilities.
In Section~\ref{sec: determinant Taylor expansions}, we prove
the existence of the scaling limit and obtain a formula for it
which allows us to interchange two limits.
Section~\ref{sec: third order PDEs} contains a fusion argument 
by which the third order partial differential equations can be proven
after the two limits have been exchanged.
Section~\ref{sub: proof of bdry visit thm} then summarizes the proof of
the main scaling limit result for the boundary visit probabilities.
We finish in Section~\ref{sub: asymptotics of boundary visits} by proving
asymptotics properties for the limit of boundary visit probabilities.

\bigskip
\textbf{Acknowledgments:}
We thank Christian Hagendorf for useful discussions,
and in particular for drawing our attention to the results of
\cite{KW-boundary_partitions_in_trees_and_dimers,
KW-double_dimer_pairings_and_skew_Young_diagrams,
KW-spanning_trees_of_graphs_on_surfaces_and_the_intensity_of_LERW}.
We also thank Dmitry Chelkak, Steven Flores, Christophe Garban, Konstantin Izyurov, Richard Kenyon,
Marcin Lis, Wei Qian, David Radnell, Fredrik Viklund, David Wilson, and Hao Wu
for interesting and helpful discussions.

A.K. and K.K. are supported by the Academy of Finland project
``Algebraic structures and random geometry of stochastic lattice models''.
During this work, E.P. was supported by Vilho, Yrj\"o and Kalle V\"ais\"al\"a Foundation
and later by the ERC AG COMPASP, the NCCR SwissMAP, and the Swiss NSF.


\bigskip{}

%
%
%

\section{\label{sec: combinatorics}Combinatorics of link patterns and the parenthesis reversal relation}

In this section, we introduce combinatorial objects and present needed results about them. 

The first sections \ref{subsec: Combinatorial objects and bijections} 
and \ref{subsec: Partial order and Kenyon-Wilson relation} 
introduce basic definitions and tools.
The first fundamental combinatorial result is 
Theorem~\ref{thm: weighted KW incidence matrix inversion}
in Section~\ref{subsec: Dyck tilings}, which gives a formula for the
inverse of a weighted incidence matrix of a certain binary relation. 
This formula will be instrumental in Section~\ref{sec: applications to UST}
when solving for 
the probabilities of connectivity events
for multiple branches of the uniform spanning tree, as well as the
probabilities of boundary visit events for the loop-erased random walk.
It will also be used in the companion paper~\cite{KKP-companion} 
to obtain a change of basis matrix from the pure partition functions of multiple $\SLE$s
to the basis of conformal blocks of CFT. For the latter purpose, we present the
theorem in a generality which allows for weighted incidence matrices.


In the rest of this section, we give some further combinatorial tools that are needed 
in analyzing the asymptotics and scaling limits of the probabilistic observables of interest.
These pertain to what we call cascade properties, which describe the behavior of
multiple $\SLE$s when reducing the number of curves.
Section~\ref{subsec: Wedges, slopes, and link removals} introduces the basic
notations and definitions.
In Section~\ref{subsec: cascades}, we show that cascade properties uniquely
determine certain weighted incidence matrices, such as those encountered
in~\cite{KKP-companion}.
In Section~\ref{subsec: inverse Fomin sums}, 
we develop concrete combinatorial tools which will be relied on in
all further analysis of the UST connectivity and LERW boundary visit pattern probabilities.

\subsection{\label{subsec: Combinatorial objects and bijections}Combinatorial objects and bijections}

The \textit{Catalan numbers} $\Catalan_N = \frac{1}{N + 1} \binom{2N}{N}$ are 
typically introduced as the number of different planar pair partitions of $2N$ points.
Planar pair partitions naturally appear as different connectivity 
patterns or configurations of interfaces in various planar random models,
and in this context they are usually called link patterns. 
It is convenient to identify planar pair partitions
with yet other families of combinatorial objects, since each interpretation makes some of our
combinatorial considerations easier or more transparent. We will
work interchangeably with the following three families,
illustrated in Figure~\ref{fig: bijections}.

\textit{Link patterns} (planar pair partitions) with $N$ links
are partitions
\begin{align}\label{eq: link pattern}
\alpha = \{ \link{a_1}{b_1},\ldots, \link{a_N}{b_N} \} 
\end{align}
of the set $\set{1, \ldots, 2N}$ into $N$ pairs, called \textit{links},
such that the points $a_\ell$ and $b_\ell$, for $\ell=1,\ldots,N$, can be connected by non-intersecting
paths in the upper half-plane, see Figure~\ref{fig: connectivity}. 
The family of all link patterns with
$N$ links is denoted by $\LP_N$. 
By convention, we set $\LP_0 := \set{\emptyset}$.

\textit{Dyck paths} are non-negative walks $w$ of $2N$ up- or down-steps 
starting and ending at zero,
\begin{align*}
\DP_N := \set{ w: \{ 0,...,2N \} \rightarrow \Z_{\ge 0} \; \big| \; w(0) = w(2N) = 0, 
\text{ and } | w(j) - w(j-1) | = 1 \text{ for all } j } .
\end{align*}

\textit{Balanced parenthesis expressions}
are sequences of $2N$ parentheses ``{\opar}'' and ``{\cpar}'', balanced in 
the conventional sense of parentheses. More precisely, given a sequence 
$\alpha$ of $2N$ parentheses, denote by $o_j(\alpha)$ the number of 
opening parentheses {\opar} among the $j$ first parentheses from the left, 
and by $c_j(\alpha)$ the number of closing parentheses {\cpar}. 
Then, $\alpha$ is a balanced parenthesis expression if and only if 
$c_j (\alpha) \le o_j (\alpha)$ for all $j$ and 
$c_{2N} (\alpha) = o_{2N} (\alpha) = N$.
This is equivalent to $j \mapsto o_j(\alpha) - c_j (\alpha)$ being a Dyck path.
The family of all balanced parenthesis expressions with
$N$ pairs of parentheses is denoted by $\BPE_N$,
and it is in bijective correspondence with $\DP_N$.
By a slight abuse of notation, we thus identify a balanced parenthesis
expression $\alpha \in \BPE_N$ with the Dyck path also denoted by $\alpha \in \DP_N$,
\begin{align*}
\alpha(j) = o_j(\alpha) - c_j (\alpha) .
\end{align*}

In a balanced parenthesis expression $\alpha$, an opening parenthesis~{\opar}
at position $j$ and a closing parenthesis~{\cpar} at position $i$, with $j < i$,
are said to be a \textit{matching pair} if the subexpression $\BPEfont{Y}$ consisting of the
parentheses at $j+1 , \ldots, i-1$ is also a balanced parenthesis expression,
so that $\alpha = \BPEfont{X(Y)Z}$, where 
$\BPEfont{X}$ and $\BPEfont{Z}$ are (not necessarily balanced) sequences of parentheses.
In terms of the Dyck path $\alpha$, 
the $j$:th and $i$:th steps of $\alpha$ are the
opposite slopes at equal height of a single mountain silhouette, as illustrated in
Figure~\ref{fig: bijections}. There are $N$ 
matching pairs in a balanced parenthesis expression $\alpha \in \BPE_N$, and these
determine a link pattern. Thus the sets $\BPE_N$ and $\LP_N$ are in bijection,
and by a slight abuse of notation we again interpret $\alpha$ interchangeably
as either a balanced parenthesis expression or a link pattern.

\begin{figure}
\includegraphics[width = 0.2\textwidth]{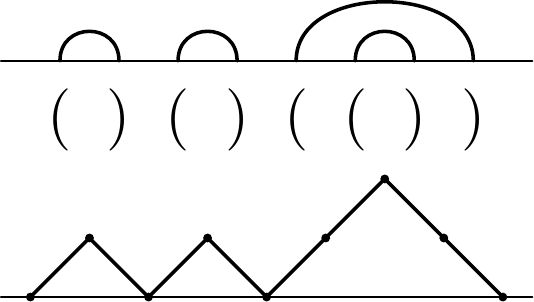} \qquad \qquad \qquad
\includegraphics[width = 0.2\textwidth]{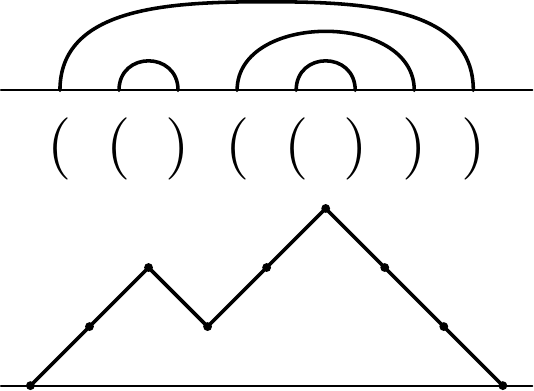} \qquad \qquad \qquad
\includegraphics[width = 0.2\textwidth]{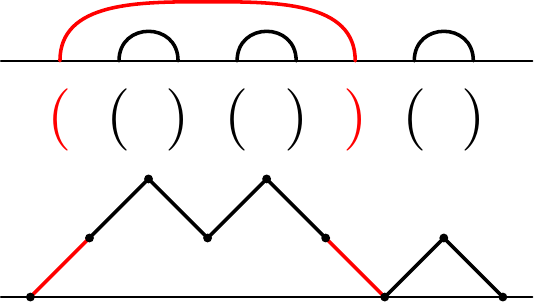}
\caption{\label{fig: bijections}
Illustration of the bijections between $\LP_4$, $\BPE_4$, and $\DP_4$,
and the correspondence of links, matching pairs of parentheses, and opposite slopes.}
\end{figure}

\begin{figure}
\includegraphics[width = 0.2\textwidth]{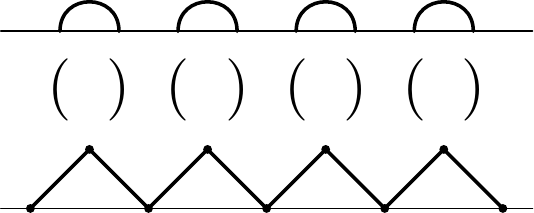} \qquad \qquad \qquad \qquad
\includegraphics[width = 0.2\textwidth]{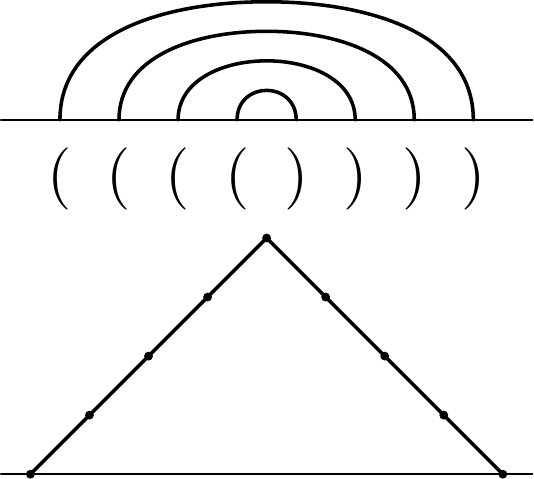}
\caption{\label{fig: minimal and maximal} 
The minimal (left) and maximal (right) elements {$\unnested_N$} and {$\nested_N$}. }
\end{figure}

\begin{figure}
\begin{displaymath}
\xymatrixcolsep{3.5pc}
\xymatrixrowsep{3.5pc}
\xymatrix{
        & \begin{minipage}{4cm} \begin{center} \includegraphics[width=0.6\textwidth]{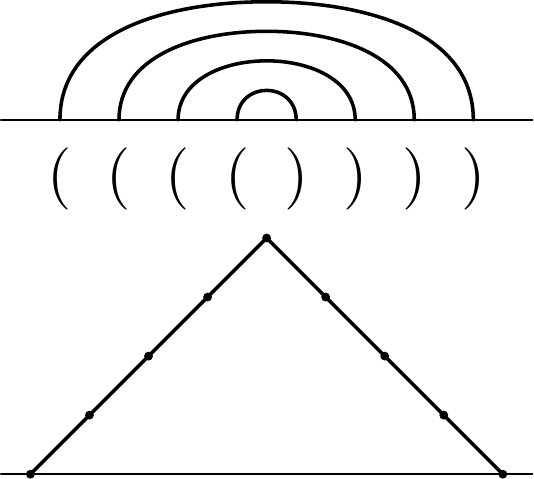} \end{center} \end{minipage}\ar[d] \\
        & \hspace{-0mm}\begin{minipage}{4cm} \begin{center}  \includegraphics[width=0.6\textwidth]{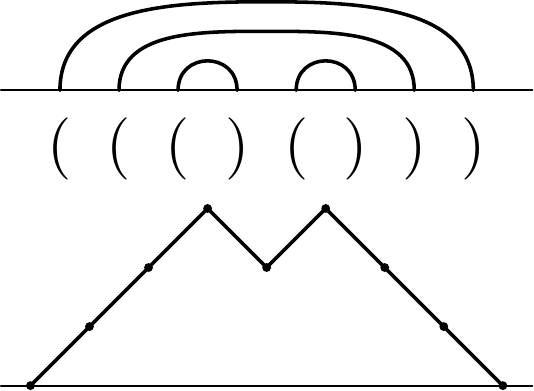} \end{center} \end{minipage} \ar[dl]\ar[dr] & \\
        \hspace{-0mm}\begin{minipage}{4cm} \begin{center}  \includegraphics[width=0.6\textwidth]{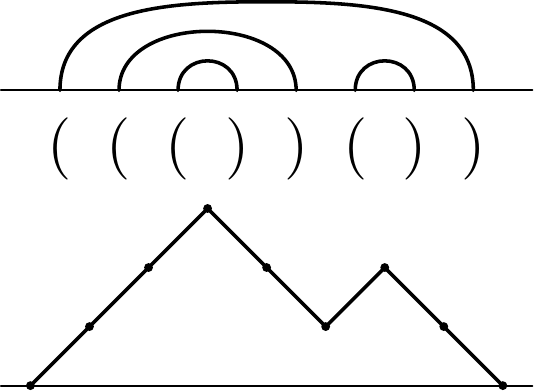} \end{center} \end{minipage}\ar[d]\ar[dr] & & 
        \hspace{-0mm}\begin{minipage}{4cm} \begin{center}  \includegraphics[width=0.6\textwidth]{bijections2.pdf} \end{center} \end{minipage}\ar[dl]\ar[d]\\
        \hspace{-0mm}\begin{minipage}{4cm} \begin{center}  \includegraphics[width=0.6\textwidth]{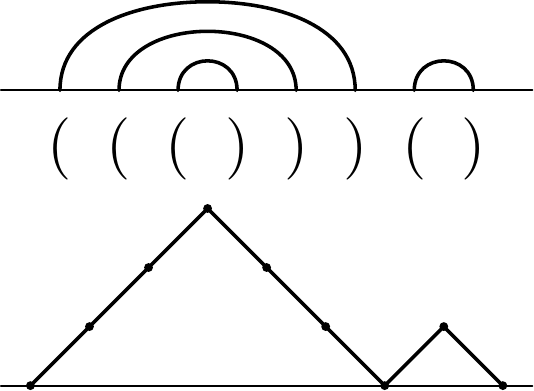} \end{center} \end{minipage}\ar[d] & 
        \hspace{-0mm}\begin{minipage}{4cm} \begin{center}  \includegraphics[width=0.6\textwidth]{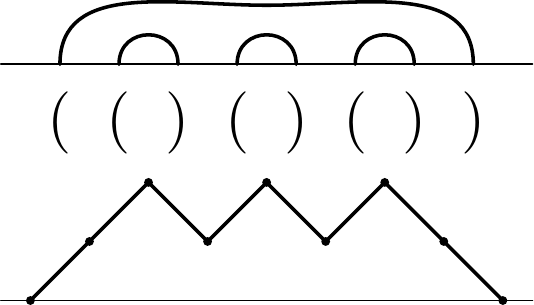} \end{center} \end{minipage}\ar[dl]\ar[d]\ar[dr] & 
        \hspace{-0mm}\begin{minipage}{4cm} \begin{center}  \includegraphics[width=0.6\textwidth]{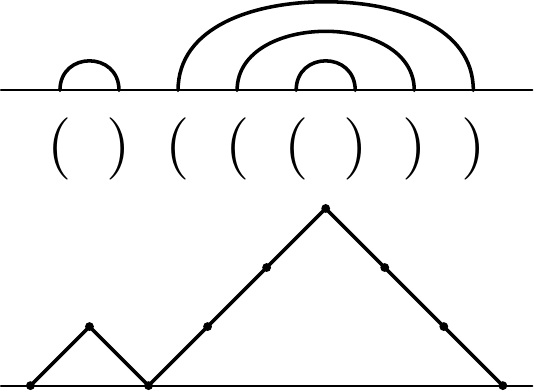} \end{center} \end{minipage}\ar[d] \\
        \hspace{-0mm}\begin{minipage}{4cm} \begin{center}  \includegraphics[width=0.6\textwidth]{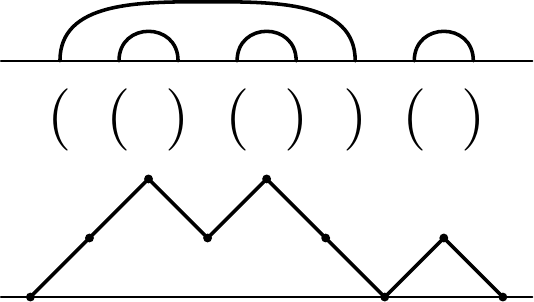}\end{center} \end{minipage}\ar[d]\ar[dr] & 
        \hspace{-0mm}\begin{minipage}{4cm} \begin{center}  \includegraphics[width=0.6\textwidth]{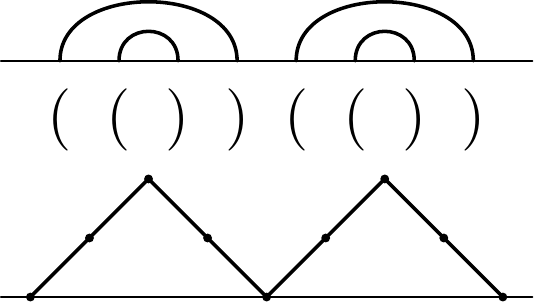} \end{center} \end{minipage}\ar[dl]\ar[dr] & 
        \hspace{-0mm}\begin{minipage}{4cm} \begin{center} \includegraphics[width=0.6\textwidth]{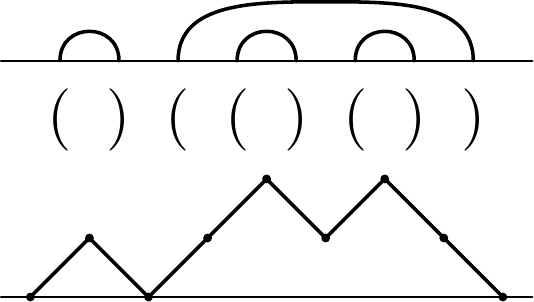} \end{center} \end{minipage}\ar[dl]\ar[d] \\
        \hspace{-0mm}\begin{minipage}{4cm} \begin{center}  \includegraphics[width=0.6\textwidth]{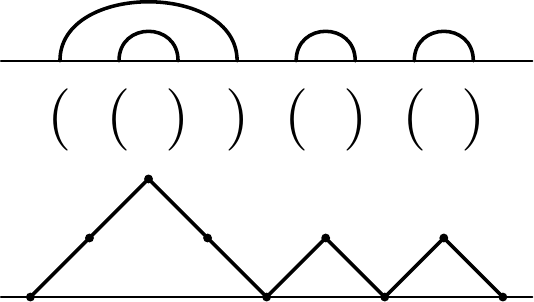} \end{center} \end{minipage}\ar[dr] & 
        \hspace{-0mm}\begin{minipage}{4cm} \begin{center}  \includegraphics[width=0.6\textwidth]{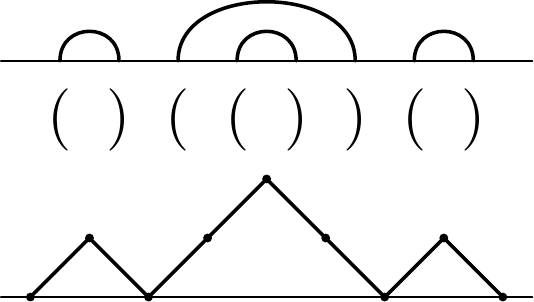} \end{center} \end{minipage}\ar[d] & 
        \hspace{-0mm}\begin{minipage}{4cm}\begin{center}  \includegraphics[width=0.6\textwidth]{bijections1.pdf} \end{center} \end{minipage}\ar[dl] \\
        & \hspace{-0mm}\begin{minipage}{4cm} \begin{center} \includegraphics[width=0.6\textwidth]{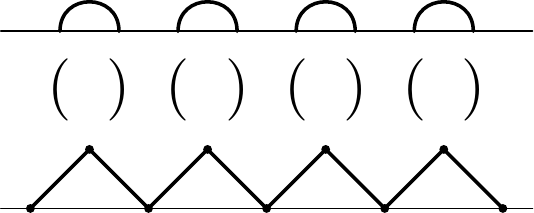} \end{center} \end{minipage} &
    }
\end{displaymath}
\caption{\label{fig: poset of link patterns}%
The partially ordered set $\LP_4$ of link patterns with four links, and
the corresponding Dyck paths and balanced parenthesis expressions.}
\end{figure}

Via the bijections above, we identify the three sets $\LP_N$, $\DP_N$, and $\BPE_N$,
and an element of them will be denoted by the same symbol.
With the identifications, the $j$:th index of $\alpha \in \LP_N$
is a left (resp. right) endpoint of a link, if and only if
the $j$:th parenthesis of $\alpha \in \BPE_N$ is 
an opening (resp. closing) parenthesis, if and only if
the $j$:th step of $\alpha \in \DP_N$ is an up-step (resp. down-step).
The Dyck path $\alpha(j) = o_j(\alpha) - c_j(\alpha)$ tells how many pairs of parentheses (or links) 
remain open after reading the first $j$ parentheses (or link endpoints) from the left 
in $\alpha$. This measures how nested the link pattern or parenthesis expression 
$\alpha$ is at $j$.

\subsection{\label{subsec: Partial order and Kenyon-Wilson relation}Partial order and the parenthesis reversal relation}

\begin{defn}
A partial order $\DPleq$ on the set $\DP_N$ of $2N$-step Dyck paths is defined 
by setting $\alpha \DPleq \beta$ if and only if $\alpha(j) \le \beta(j)$ 
for all $0 \le j \le 2N$.
\end{defn}
This also naturally defines a partial order on the sets of link patterns $\LP_N$ 
and parenthesis expressions $\BPE_N$, where we have $\alpha \preceq \beta$ 
if and only if $\beta$ is more nested than $\alpha$ at every position $j$.
This partial order has unique minimal and maximal elements, denoted by 
$\unnested_N$ and $\nested_N$, respectively, and illustrated in
Figure~\ref{fig: minimal and maximal}.
Figure~\ref{fig: poset of link patterns} illustrates the partial order $\DPleq$.


The following relation was introduced 
in~\cite{KW-double_dimer_pairings_and_skew_Young_diagrams}
and~\cite{SZ-path_representations_of_maximal_paraboloc_KL_polynomials}.

\begin{defn}
The \textit{parenthesis reversal relation} $\KWleq$ on the set $\BPE_N$
is defined by setting $\alpha \KWleq \beta$ if and only if 
$\alpha$ can be obtained from $\beta$ by choosing a subset $B$ of parentheses 
in $\beta$ such that the matching pair of each parenthesis in the set 
$B$ also belongs to $B$, and then reversing all the parentheses in $B$.
\end{defn}
Note that the relation $\alpha \KWleq \beta$ implies $\alpha \DPleq \beta$, 
since the reversal of a matching pair shifts the opening parenthesis to the right.
In fact, the relation $\DPleq$ is the transitive closure of the non-transitive
relation~$\KWleq$, see~\cite{KW-double_dimer_pairings_and_skew_Young_diagrams}.
We also use the binary relation $\KWleq$ on the sets
$\LP_N$ and $\DP_N$, and in Lemmas~\ref{lem: link patterns and KW relation}
and~\ref{lem: nested tilings and KW relation} 
below, we 
characterize the relation in terms of link patterns and Dyck paths, respectively.

It is often preferable to reverse matching pairs of parentheses
one at a time, and with the following convention for the order of reversals.
Suppose that $\alpha \KWleq \beta$, and let $B$ be the collection of the reversed
matching pairs of $\beta$. By the \textit{nested chain of reversals}
from $\beta$ to $\alpha$ we mean the sequence of intermediate steps
$\beta = \beta_0, \beta_1, \ldots, \beta_m = \alpha$, where $\beta_n$
is obtained from $\beta_{n-1}$ by reversing the matching pair in $B$
whose opening parenthesis is the $n$:th from the left.
The term nested refers to the following property of the sequence 
$\beta_0, \beta_1, \ldots, \beta_m$:
if any two matching pairs to be reversed are nested, one inside the other,
then the reversal of the outer is performed before the inner.
\begin{exa}
\label{ex: bracket reversals}
We have the following parenthesis reversal relation
\[ \BPEfont{(())(())()()()} \; \KWleq \; \BPEfont{(([[]]))([()])} , \]
where we have emphasized the subset $B$ of reversed matching pairs by square brackets.
Starting from the latter balanced parenthesis expression $\BPEfont{(((())))((()))}$
and reversing one pair at a time with the above convention,
each reversal turns out to yield a parenthesis reversal relation as follows
\[ \BPEfont{(())(())()()()} \; \KWleq \; \BPEfont{(())(())([()])}
    \; \KWleq \; \BPEfont{(()[]())((()))} \; \KWleq \; \BPEfont{(([()]))((()))} . \]
\end{exa}
The following lemma asserts that the example above featured a general phenomenon.

\begin{lem}
\label{lem: bracket reversals one by one}
Let $\alpha \KWleq \beta$, and let $\beta = \beta_0, \beta_1, \ldots, \beta_m = \alpha$ be 
the nested chain of reversals from $\beta$ to $\alpha$.
Then we have $\beta_n \in \BPE_N$ for all $n = 0,\ldots,m$,
and the following parenthesis reversal relations hold:
\[ \alpha = \beta_m \KWleq \beta_{m-1} \KWleq \; \cdots \; \KWleq \beta_1 \KWleq \beta_0 = \beta . \]
Moreover, for all $n$, we have the relation $\alpha \KWleq \beta_n$ with the
nested chain $\beta_n, \beta_{n+1}, \ldots, \beta_m = \alpha$.
\end{lem}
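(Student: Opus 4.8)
The plan is to carry out everything on the level of the height functions of the Dyck paths and to exploit the fact that the matching pairs of a balanced expression form a laminar family (any two are nested or disjoint). Write $B$ for the matching-closed set of parentheses of $\beta$ that are reversed to obtain $\alpha$; it is a disjoint union of $m$ matching pairs of $\beta$, which I list as $p_1,\ldots,p_m$ in increasing order of the position $j_l$ of the opening parenthesis of $p_l$ (with $i_l$ the position of its closing parenthesis). By construction $\beta_n$ is obtained from $\beta$ by reversing $p_1,\ldots,p_n$. The single computation underlying the whole argument is that reversing a matching pair $(j,i)$ flips the up-step at $j$ and the down-step at $i$, so on the level of heights it subtracts $2$ from $\beta(k)$ exactly for $j \le k \le i-1$ and leaves all other values unchanged. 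Because the pairs are laminar these effects are additive, giving $\beta_n(k) = \beta(k) - 2\,\nu_n(k)$, where $\nu_n(k)$ counts those $p_l$ with $l \le n$ and $j_l \le k < i_l$.

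First I would establish that every $\beta_n$ lies in $\BPE_N$. Since $\{p_1,\ldots,p_n\} \subseteq B$, we have $\nu_n(k) \le \nu_m(k)$ for every $k$, whence $\beta_n(k) = \beta(k) - 2\nu_n(k) \ge \beta(k) - 2\nu_m(k) = \alpha(k) \ge 0$, the last inequality holding because $\alpha \in \BPE_N$ is already a non-negative path. The boundary values $\beta_n(0) = \beta_n(2N) = 0$ and the $\pm 1$ step property are immediate from the cover-count formula, so $\beta_n \in \DP_N = \BPE_N$.

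Next I would verify the one-step relations $\beta_n \KWleq \beta_{n-1}$. Fix $n$ and put $(j,i) := (j_n,i_n)$. Every earlier pair $p_l$ with $l<n$ has $j_l < j$, so by laminarity it either contains $(j,i)$ or lies entirely to its left; in the first case it lowers the height by $2$ uniformly on the whole window $\{j-1,\ldots,i\}$, and in the second case it does not touch this window. Hence on $\{j-1,\ldots,i\}$ the path $\beta_{n-1}$ is a constant downward translate of $\beta$, which preserves the defining inequalities of a matching pair, namely $\beta_{n-1}(j-1)=\beta_{n-1}(i)$ together with $\beta_{n-1}(k)>\beta_{n-1}(j-1)$ for $j\le k\le i-1$. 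Thus $(j,i)$ is still a matching pair of $\beta_{n-1}$, and reversing this single (hence trivially matching-closed) pair produces $\beta_n$, which is exactly the assertion $\beta_n \KWleq \beta_{n-1}$. The same window argument, applied now to a pair $p_l$ with $l>n$ against the reversed pairs $p_1,\ldots,p_n$ (all of which have opening position $< j_l$ and are therefore outer or disjoint relative to $p_l$), shows that each such $p_l$ is a matching pair of $\beta_n$; as these are full pairs, the set of their positions is matching-closed in $\beta_n$, reversing it yields $\alpha$, and so $\alpha \KWleq \beta_n$. Because $p_{n+1},\ldots,p_m$ already appear in increasing order of opening position, reversing them one at a time in this order reproduces $\beta_{n+1},\ldots,\beta_m$, which is by definition the nested chain of $\alpha \KWleq \beta_n$.

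The crux of the argument, and the step I expect to require the most care, is the matching-pair claim in the previous paragraph: one must be sure that a partially reversed word still sees $(j_n,i_n)$, respectively each $p_l$ with $l>n$, as a genuine matching pair. The point that makes this painless is precisely the laminar dichotomy: relative to a pair that has not yet been reversed, every already-reversed pair is either outer or disjoint, so the already-performed reversals act on the relevant position window as a single rigid vertical shift of the height profile. Since a vertical shift leaves the relative shape intact, the matching-pair inequalities survive verbatim, and all remaining assertions reduce to the monotonicity comparison with $\alpha$ in the first step together with straightforward bookkeeping of the reversal order.
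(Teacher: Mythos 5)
Your proof is correct and follows essentially the same route as the paper's: balancedness of the intermediate steps is obtained by comparing $\beta_n$ against $\alpha$ (your height inequality $\beta_n(k) \ge \alpha(k) \ge 0$ is precisely the paper's count inequality $o_k(\beta_n) \ge o_k(\alpha) \ge c_k(\alpha) \ge c_k(\beta_n)$ in Dyck-path language), and the relations $\beta_n \KWleq \beta_{n-1}$ and $\alpha \KWleq \beta_n$ follow from the fact that pairs not yet reversed remain matching pairs of the intermediate expressions. The only difference is one of detail: where the paper asserts this last fact directly from the chosen order of reversals, you justify it explicitly via the laminarity of matching pairs and the constant-vertical-shift argument on the window $\{j-1,\ldots,i\}$, which is a welcome filling-in of the step the paper leaves implicit.
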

\begin{proof}
Since $\alpha$ is a balanced parenthesis expression, we have 
$o_j(\alpha) \ge c_j(\alpha)$ for all $j$.
Because a reversal of a matching pair always shifts the opening parenthesis to the right, 
we see that
\begin{align*}
o_j(\beta_n) \ge o_j(\alpha) \ge c_j(\alpha) \ge c_j(\beta_n)
\qquad \text{for all $j$ and $n$, and} \qquad 
c_{2N} (\beta_n) = o_{2N} (\beta_n) = N 
\qquad \text{for all $n$}.
\end{align*}
This shows that the intermediate steps $\beta_n$ are balanced parenthesis expressions.
By the chosen order of reversals in a nested chain, each matching pair of $\beta$ to be reversed
remains matching in the intermediate steps $\beta_n$ until that pair is reversed.
This implies the relations $\beta_{n+1} \KWleq \beta_n$ and $\alpha \KWleq \beta_n$.
The reversals in the subchain $\beta_n, \beta_{n+1}, \ldots, \beta_m = \alpha$ are still ordered
by their opening parentheses from the left.
\end{proof}


\begin{figure}
\includegraphics[width = 0.24\textwidth]{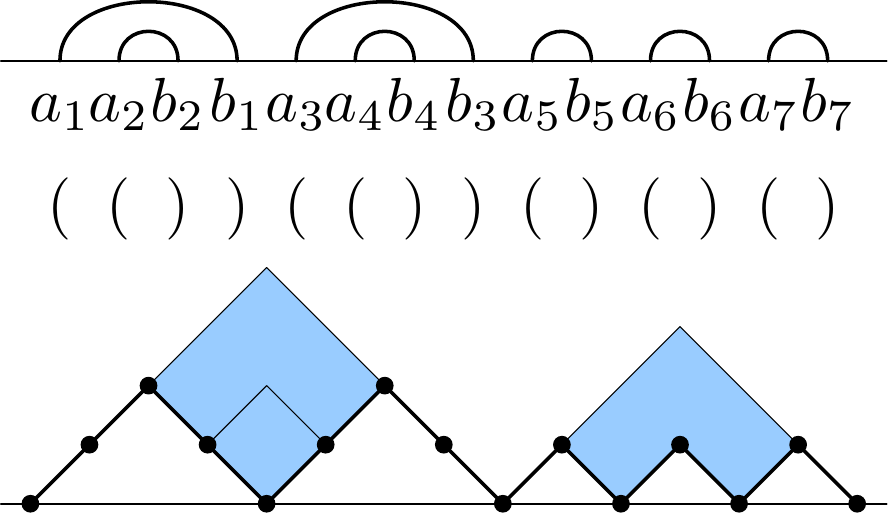}
\includegraphics[width = 0.24\textwidth]{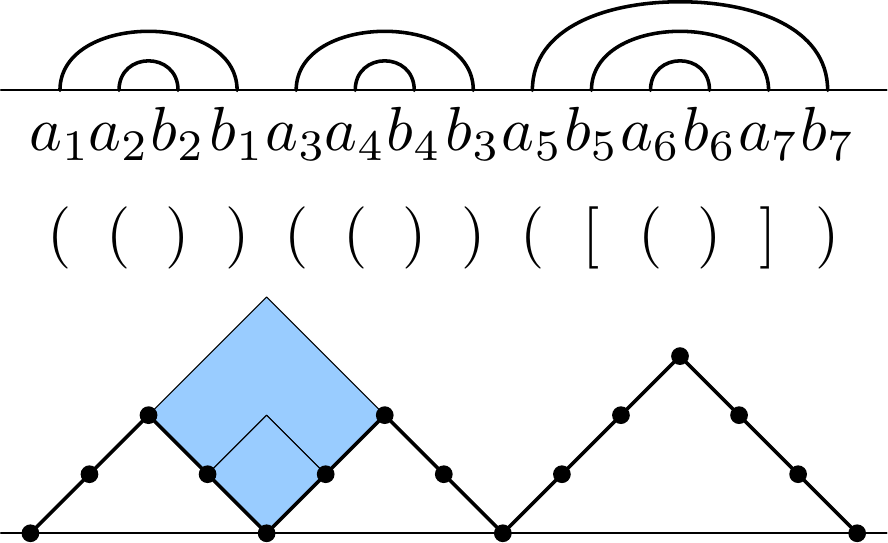}
\includegraphics[width = 0.24\textwidth]{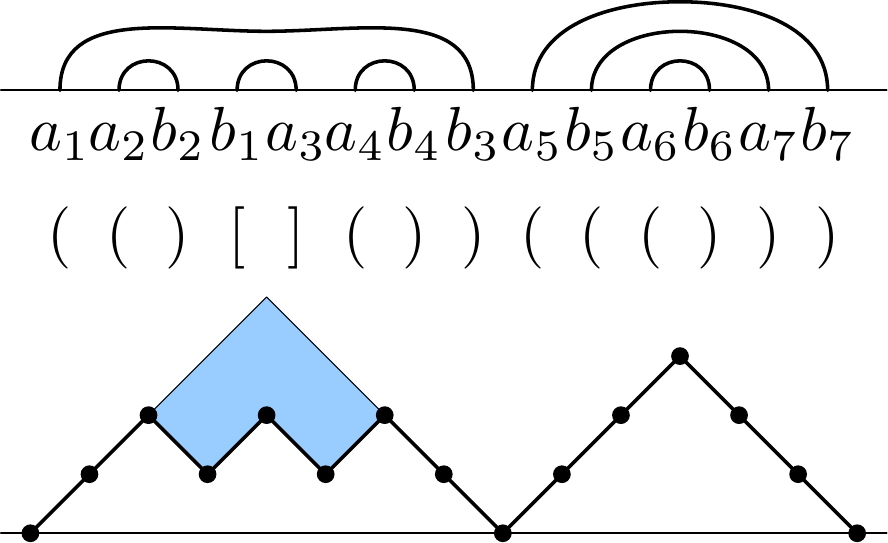}
\includegraphics[width = 0.24\textwidth]{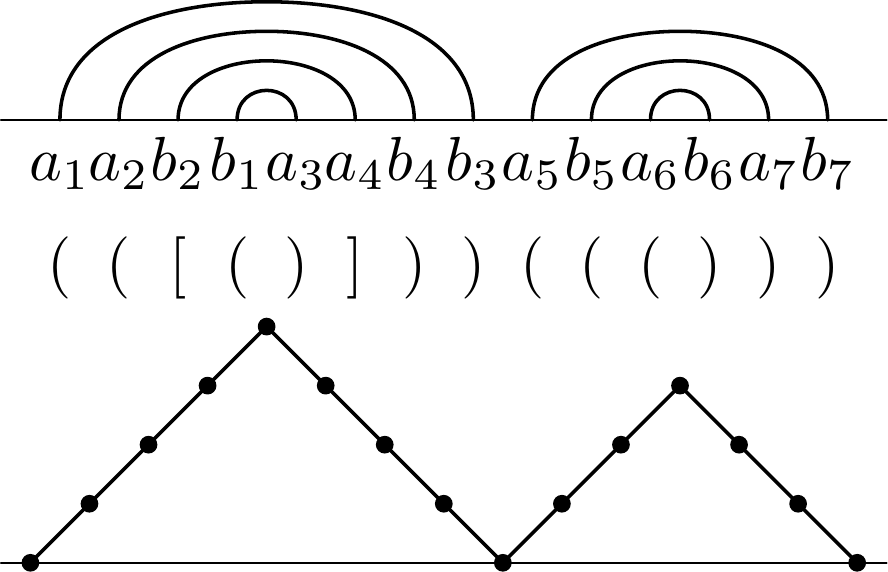}
\caption{\label{fig: KW} 
The nested chain of Example \ref{ex: bracket reversals}, with the interpretations 
in terms of link patterns and Dyck paths, given by Lemmas \ref{lem: link patterns 
and KW relation} and \ref{lem: nested tilings and KW relation}, respectively.
}
\end{figure}


We next characterize the  parenthesis reversal relation in terms of (oriented) link
patterns, depicted in Figure~\ref{fig: KW}. This characterization will be crucial
in Section~\ref{sec: applications to UST} for
recovering the uniform spanning tree connectivity probabilities from Fomin's formulas.

We frequently need to refine the link patterns with choices of orientation. 
Recall from~\eqref{eq: link pattern} that a link pattern $\alpha$ is an unordered
collection of unordered pairs $\alpha = \set{\link{a_1}{b_1} , \ldots , \link{a_N}{b_N}}$. 
An ordered collection of ordered pairs 
$\big( (a_\ell , b_\ell) \big)_{\ell=1}^N$ is called an \textit{orientation} of $\alpha$.
The points $a_\ell$ are then called \textit{entrances} and $b_\ell$ \textit{exits}. 
As a standard reference orientation of a link pattern $\alpha$, we will use
the \textit{left-to-right orientation},
defined
by the conditions $a_\ell < b_\ell$ for all $\ell$, and $a_1 < \ldots < a_N$.

\begin{lem}\label{lem: link patterns and KW relation}
Let $((a_\ell, b_\ell))_{\ell = 1}^N$ be the left-to-right orientation of a link 
pattern $\alpha \in \LP_N$. The following statements are equivalent.
\begin{description}
\item[(a)] A link pattern $ \beta\in \LP_N$ connects every entrance of
$((a_\ell, b_\ell))_{\ell = 1}^N$ 
to an exit, that is, there exists a permutation 
$\sigma \in \SymmGrp_N$ such that 
$\beta = \set{ \link{a_1}{b_{\sigma(1)}}, \ldots, \link{a_N}{b_{\sigma(N)}} }$.
\item[(b)] We have $\alpha \KWleq \beta$.
\end{description}
Moreover, we then have $\sgn(\sigma) = (-1)^m$, where $m$ is the number 
of matching pairs of parentheses reversed in the nested chain from $\beta$ 
to $\alpha$.
\end{lem}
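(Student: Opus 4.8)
The plan is to prove the sign formula by reducing to the nested chain of single reversals and showing that each individual reversal flips the sign of the connectivity permutation.

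First I would set up connectivity permutations along the chain. Let $\beta = \beta_0, \beta_1, \ldots, \beta_m = \alpha$ be the nested chain of reversals from $\beta$ to $\alpha$. By Lemma~\ref{lem: bracket reversals one by one}, every $\beta_n$ lies in $\BPE_N$ and satisfies $\alpha \KWleq \beta_n$; hence, by the already-established equivalence of (a) and (b), each $\beta_n$ connects every entrance $a_\ell$ to some exit, and so determines a permutation $\sigma_n \in \SymmGrp_N$ with $\beta_n = \set{\link{a_1}{b_{\sigma_n(1)}}, \ldots, \link{a_N}{b_{\sigma_n(N)}}}$. At the two ends of the chain we have $\sigma_0 = \sigma$ (the orientation realized by $\beta$) and $\sigma_m = \id$ (since $\beta_m = \alpha$ joins $a_\ell$ to $b_\ell$ in its left-to-right orientation). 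It then suffices to show that $\sigma_n = \sigma_{n-1} \circ \tau_n$ for a transposition $\tau_n$ at every step, for then $\sgn(\sigma) = \sgn(\sigma_0) = (-1)^m \sgn(\sigma_m) = (-1)^m$.

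The heart of the argument is the local effect of a single reversal. Passing from $\beta_{n-1}$ to $\beta_n$ reverses one matching pair $\set{p,q}$ with $p<q$, which by Lemma~\ref{lem: bracket reversals one by one} is still a matching pair of $\beta_{n-1}$. I would write $\beta_{n-1} = \BPEfont{X(Y)Z}$ with the reversed pair at positions $p,q$ and $\BPEfont{Y}$ balanced, so that $\beta_n = \BPEfont{X)Y(Z}$. Since each $\beta_n$ is balanced, the block $\BPEfont{X}$ must contain at least one unmatched opening parenthesis; letting $\set{p', q'}$ be the innermost link of $\beta_{n-1}$ enclosing $\set{p,q}$ (so that $p' < p < q < q'$), a direct parenthesis-matching computation shows that the reversal changes exactly the two links $\set{p', q'}$ and $\set{p,q}$ into $\set{p', p}$ and $\set{q,q'}$, leaving every other link of $\beta_{n-1}$ intact. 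This reconnection of the four points $p'<p<q<q'$ from the nested pairing to the side-by-side pairing is the key geometric input.

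Finally I would translate this reconnection into a transposition of $\sigma$. Because every link of $\beta_{n-1}$ and of $\beta_n$ joins an entrance to an exit, a short parity check on the entrance/exit labels of $p', p, q, q'$ forces that, among these four points, the two entrances are some $a_i$ and $a_{i'}$ and the two exits are $b_{\sigma_{n-1}(i)}$ and $b_{\sigma_{n-1}(i')}$, with the reconnection exchanging precisely the images of $i$ and $i'$; thus $\sigma_n = \sigma_{n-1} \circ (i\ i')$, which completes the inductive step and hence the sign formula. I expect the main obstacle to be exactly the bookkeeping of this last paragraph together with the reconnection claim of the previous one: one must verify, using only the parenthesis-matching rule and the fact (from the equivalence) that no link ever joins two entrances or two exits, that precisely two links are affected and that their reconnection swaps two values of $\sigma$ rather than producing a longer cycle or a more intricate rearrangement.
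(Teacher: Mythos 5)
There is a genuine gap: you prove only the ``Moreover'' clause, while the principal assertion of the lemma --- the equivalence of (a) and (b) --- is taken as input rather than proven. Your very first step invokes ``the already-established equivalence'' to endow each intermediate expression $\beta_n$ of the nested chain with a connectivity permutation $\sigma_n$, and your final parity check again uses ``the fact (from the equivalence) that no link ever joins two entrances or two exits.'' Nothing in the proposal establishes either implication. The direction (a)$\Rightarrow$(b) is not even reachable by your method, since your whole setup presupposes that a nested chain from $\beta$ to $\alpha$ exists, which is precisely the assertion $\alpha \KWleq \beta$; the paper proves this direction by a separate construction (reverse exactly those matching pairs of $\beta$ whose opening parenthesis sits at an exit of $\alpha$, and check that this yields $\alpha$). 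The direction (b)$\Rightarrow$(a) is exactly what makes your permutations $\sigma_n$ well defined, so assuming it makes the argument circular as a proof of the full lemma.

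The gap also cannot be closed by the ``short parity check'' you flag as the main obstacle. Suppose you tried to prove (b)$\Rightarrow$(a) with your own machinery, by induction from the base case $\beta_m = \alpha$ backwards along the chain: knowing only that $\beta_n$ joins entrances to exits, the labelling ($p'$ entrance, $p$ exit, $q$ exit, $q'$ entrance) is perfectly consistent with the side-by-side links $\link{p'}{p}$ and $\link{q}{q'}$ of $\beta_n$ each being entrance--exit links, and yet the reconnection would then produce the entrance--entrance link $\link{p'}{q'}$ in $\beta_{n-1}$. Parity alone cannot exclude this; what excludes it --- and what the paper's induction uses --- is the nested-chain ordering guaranteed by Lemma~\ref{lem: bracket reversals one by one}: the pairs $\link{p'}{p}$ and $\link{q}{q'}$ of $\beta_n$ are never reversed in the remaining subchain to $\alpha$, hence the parentheses at $p'$ and $q$ remain opening parentheses in $\alpha$ itself, so these two points are entrances. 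For the record, the parts you do prove are sound and coincide with the paper's own argument: the reconnection claim (a single reversal replaces the nested pair $\link{p'}{q'}$, $\link{p}{q}$ by the side-by-side pair $\link{p'}{p}$, $\link{q}{q'}$, fixing all other links, the enclosing link existing because $\beta_n$ stays balanced) is exactly the content of Figure~\ref{fig: swap}, and, granted the equivalence, your transposition count gives $\sgn(\sigma) = (-1)^m$ just as in the paper. To repair the proposal you would need to add both implications: (a)$\Rightarrow$(b) by the explicit choice of the reversal set, and (b)$\Rightarrow$(a) by replacing your parity check with the unreversed-pair argument above.
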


\begin{proof}
To prove that (a) implies (b), let $\beta \in \LP_N$ be a link pattern connecting 
entrances of the left-to-right oriented $\alpha$ to exits.
Let $B$ consist of those matching pairs 
of parentheses in $\beta$ whose opening parenthesis {\opar} corresponds to a left link endpoint
labeled as an exit $b_{\sigma(\ell)}$. Since $\beta$ connects entrances to exits, all 
closing parentheses {\cpar} in $B$ correspond to entrances $a_\ell$.
Reversing the parentheses in $B$, we obtain the balanced parenthesis expression of 
$\alpha$, so $\alpha \KWleq \beta$. This shows that (a) implies (b).


To prove that (b) implies (a), we show that the links of $\beta_n$ connect entrances $(a_\ell)_{\ell=1}^N$ of 
$\alpha$ to exits $(b_{\ell})_{\ell = 1}^N$ of $\alpha$ in
any intermediate step $\beta_n$ of the nested chain $\beta = \beta_0, \beta_1, \ldots, \beta_m = \alpha$
of reversals from $\beta$ to $\alpha$.
Recall from Lemma~\ref{lem: bracket reversals one by one} that
the nested chain has subchains of the form 
$\alpha = \beta_m \KWleq \beta_{m-1} \KWleq \cdots \KWleq \beta_{m-k}$.
We perform an induction on the length $k$ of the subchain.
In the base case $k=0$, each entrance $a_\ell$ connects to the corresponding exit $b_\ell$,
since $\beta_m=\alpha$.
We then assume that in $\beta_{m-k}$, entrances of $\alpha$ connect to exits, and we
show that $\beta_{m-k-1}$ also satisfies this property.
Since $\beta_{m-k} \KWleq \beta_{m-k-1}$,
we can write $\beta_{m-k-1} = \cdots \BPEfont{(X(Y)Z)} \cdots$ and 
$\beta_{m-k} = \cdots \BPEfont{(X)Y(Z)} \cdots$, where the parentheses written out explicitly 
denote matching pairs, 
and $\BPEfont{X,Y,Z}$, and the ellipses denote parenthesis expressions that are identical in 
$\beta_{m-k-1}$ and $\beta_{m-k}$; see also Figure~\ref{fig: swap}.
By the induction assumption, it suffices to show that the matching parentheses written
explicitly in $\beta_{m-k-1} = \cdots \BPEfont{(X(Y)Z)} \cdots$ connect entrances of $\alpha$ to exits.
Lemma~\ref{lem: bracket reversals one by one} also guarantees that
\[ \alpha = \beta_m \KWleq \cdots \KWleq 
     \underbrace{\beta_{m-k}}_{\cdots \BPEfont{(X)Y(Z)} \cdots} \KWleq
     \underbrace{\beta_{m-k-1}}_{\cdots \BPEfont{(X(Y)Z)} \cdots}
\]
is a nested chain of reversals.
By the order of reversals in the nested chain from $\beta_{m-k-1} = \cdots \BPEfont{(X(Y)Z)} \cdots$
to $\alpha$,
the parentheses written out explicitly in $\beta_{m-k} = \cdots \BPEfont{(X)Y(Z)} \cdots$
are not reversed in the subchain\linebreak[4]%
$\beta_{m-k}, \beta_{m-k+1}, \ldots, \beta_m = \alpha$.
Therefore, these two matching pairs of parentheses of $\beta_{m-k}$ correspond to links
$\link{a_\ell}{b_\ell}$ and $\link{a_s}{b_r}$ as in Figure~\ref{fig: swap}(left), so the corresponding
matching pairs of $\beta_{m-k-1}$ 
also connect entrances of $\alpha$ to exits, as desired --- see Figure~\ref{fig: swap}(right).
This finishes the induction step, and proves that (b) implies (a).
\begin{figure}[h!]
\includegraphics[width = 0.3\textwidth]{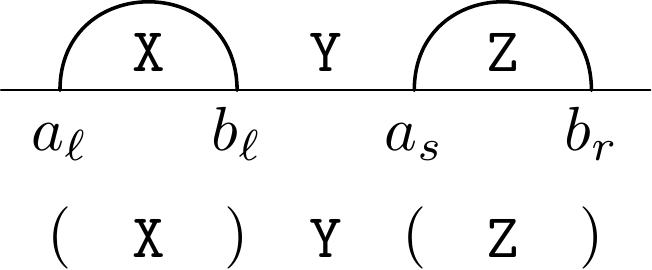} \hspace{2cm}
\includegraphics[width = 0.3\textwidth]{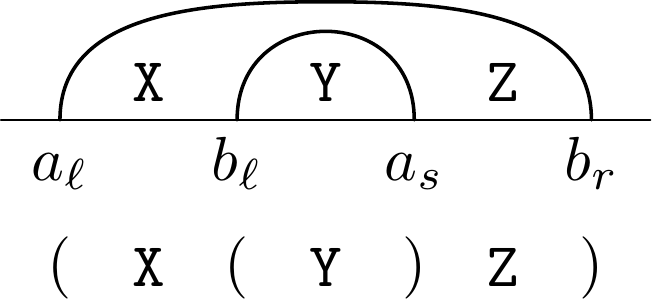}
\caption{\label{fig: swap} 
The balanced subexpressions and sub-link patterns of $\beta_{m-k}$ (left) 
and $\beta_{m-k-1}$ (right) of the proof of 
Lemma~\ref{lem: link patterns and KW relation}.}
\end{figure}

The last assertion follows by noticing that each reversal 
in the nested chain from $\beta$ to $\alpha$ corresponds to a transposition exchanging two exits of 
$\alpha$, see Figure~\ref{fig: swap}.
The permutation $\sigma$ is a composition of $m$ such transpositions and we have
$\sgn (\sigma) = (-1)^m$. This concludes the proof.

\end{proof}

\subsection{\label{subsec: Dyck tilings}Dyck tilings and inversion of weighted incidence matrices}

Dyck paths only take two kinds of steps, given by the 
vectors $(1,1)$  and $(1,-1)$ --- the paths live on the tilted square lattice generated
by these vectors. In particular, the area between any two Dyck paths $\alpha \DPleq \beta$ 
is a union of the atomic squares
of this lattice that lie between 
the highest and lowest Dyck paths $\nested_N$ and $\unnested_N$, 
illustrated in Figure~\ref{fig: DT}(left). 
The squares between $\alpha$ and $\beta$ with 
$\alpha \DPleq \beta$ form a \textit{skew Young diagram}, denoted by 
$\alpha / \beta$.

We consider tilings of skew Young diagrams by so called Dyck tiles. 
A \textit{Dyck tile} is a nonempty union of atomic squares, where the midpoints of the squares 
form a shifted Dyck path (possibly a zero-step path).
A \textit{Dyck tiling} $T$ of a skew Young diagram $\alpha / \beta$ is a
collection of non-overlapping tiles, whose union is the diagram: 
$\bigcup T = \alpha / \beta$. 
Figure~\ref{fig: DT} depicts some Dyck tiles and a Dyck tiling.


\begin{figure}[h!]
\includegraphics[width = 0.15\textwidth]{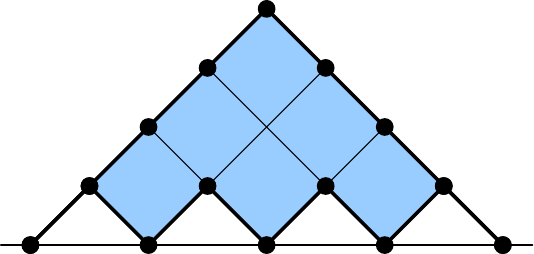} \qquad \qquad
\includegraphics[width = 0.35\textwidth]{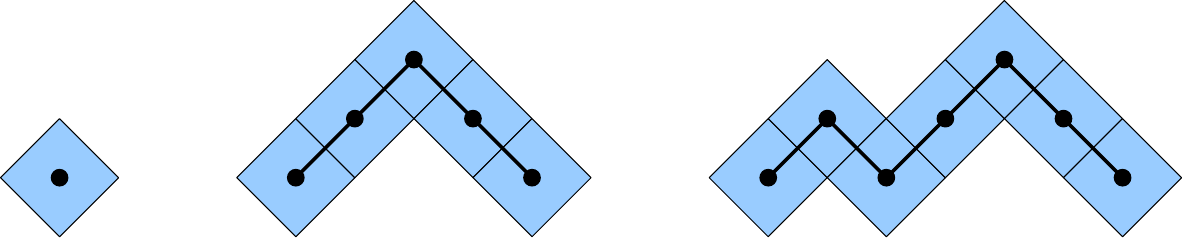} \qquad \qquad
\includegraphics[width = 0.25\textwidth]{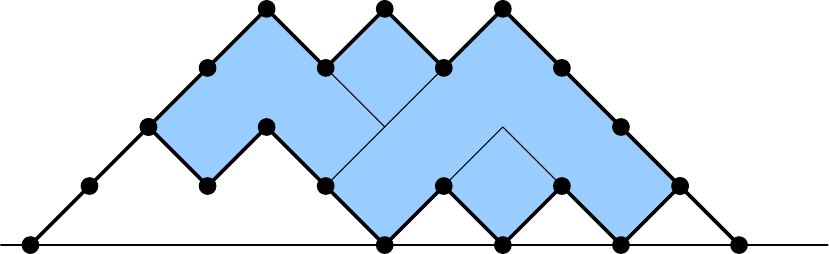}
\caption{\label{fig: DT} 
From the left: the atomic squares of $\DP_4$, three different Dyck tile shapes, 
and a Dyck tiling of a skew Young diagram.}
\end{figure}

In our applications, skew Young diagrams, Dyck tilings, and Dyck tiles have 
a shape and placement.
By skew shapes we mean the shift equivalence classes of skew Young diagrams.
Similarly, the shape of a Dyck tile $t$
is the underlying Dyck path whose bottom left position is at $(0,0)$.
The placement of $t$ is the applied shift,
i.e., the integer coordinates $(x_t,h_t)$ of the bottom left position of $t$.
We need four notions related to the horizontal and vertical placement
of Dyck tiles.
If the coordinates of the bottom left and bottom right positions of $t$
are $(x_t,h_t)$ and $(x'_t,h_t)$, then we say that the \textit{height}
of $t$ is $h_t \in \bZpos$, the \textit{horizontal extent} of $t$ is
the closed interval $[x_t , x'_t] \subset \bR$,
and the \textit{shadow} of $t$ is the open interval $(x_t-1 , x'_t +1) \subset \bR$,
see Figure~\ref{fig: extent shadow and height}.
A Dyck tile $t_1$ is said to \textit{cover} a Dyck tile $t_2$
if $t_1$ contains an atomic square which is an upward vertical
translation of some atomic square of $t_2$.
\begin{figure}[h!]
\includegraphics[width = 0.4\textwidth]{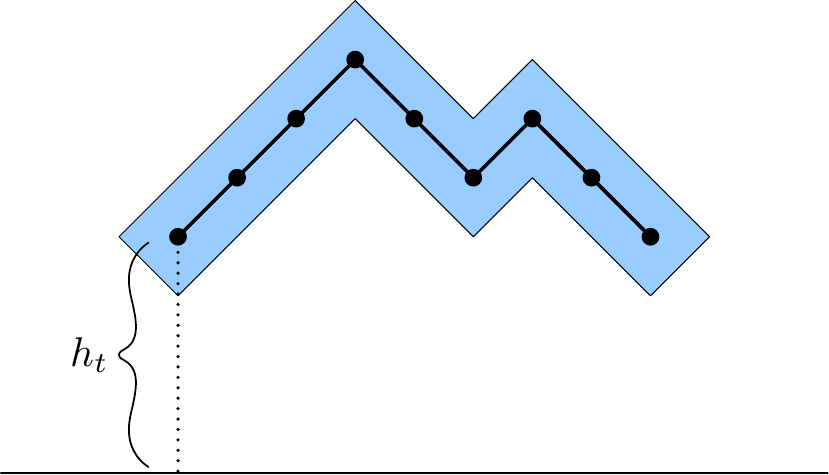} \hspace{1cm}
\includegraphics[width = 0.4\textwidth]{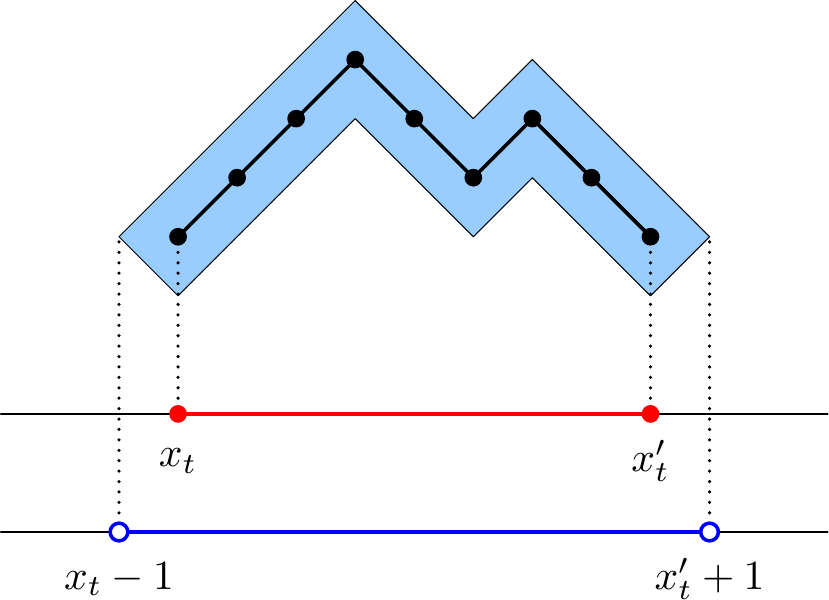}
\caption{\label{fig: extent shadow and height} The vertical position of a
Dyck tile $t$ is described by the integer height $h_t$.
The horizontal extent $[x_t , x'_t]$ (in red) and shadow $(x_t-1 , x'_t +1)$
(in blue) are intervals that describe the horizontal position.}
\end{figure}

We will use two special types of Dyck tilings,
\textit{nested Dyck tilings} (Definition~\ref{def: nested Dyck tiling})
and \textit{cover-inclusive Dyck tilings} (Definition~\ref{def: cover inclusive Dyck tiling}).

\begin{defn}\label{def: nested Dyck tiling}
A Dyck tiling $T$ 
is 
nested 
if the shadows of any two distinct tiles
of $T$ are either disjoint or one contained in the other,
and in the latter case the tile with the larger shadow covers the other.
\end{defn}
\begin{figure}
\includegraphics[width = 0.6\textwidth]{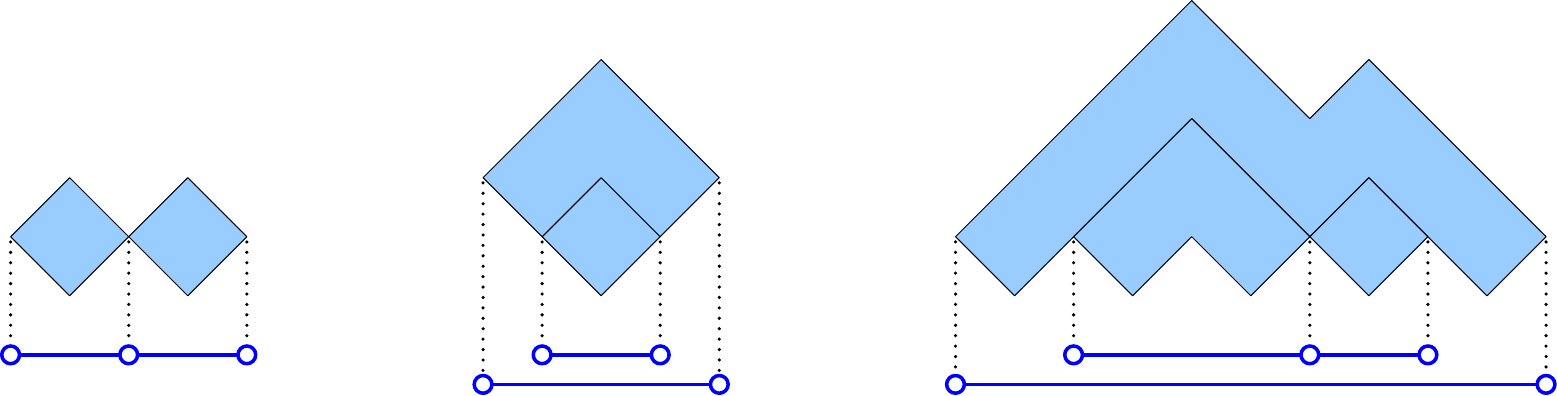}
\caption{\label{fig: NDT} Nested Dyck tilings of a skew Young diagram,
with shadows illustrated.}
\end{figure}
\begin{figure}
\includegraphics[width = 0.4\textwidth]{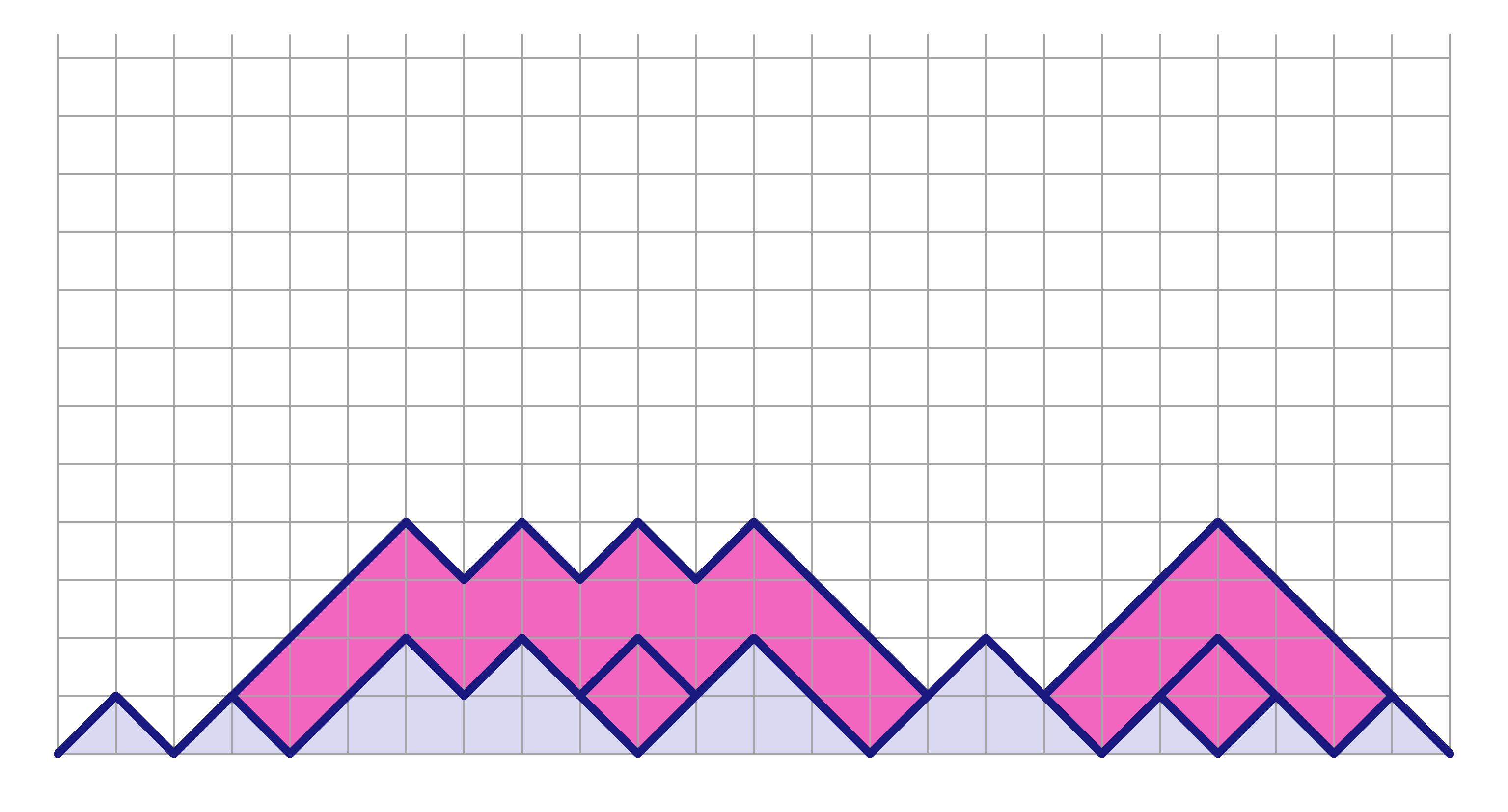} \quad
\includegraphics[width = 0.4\textwidth]{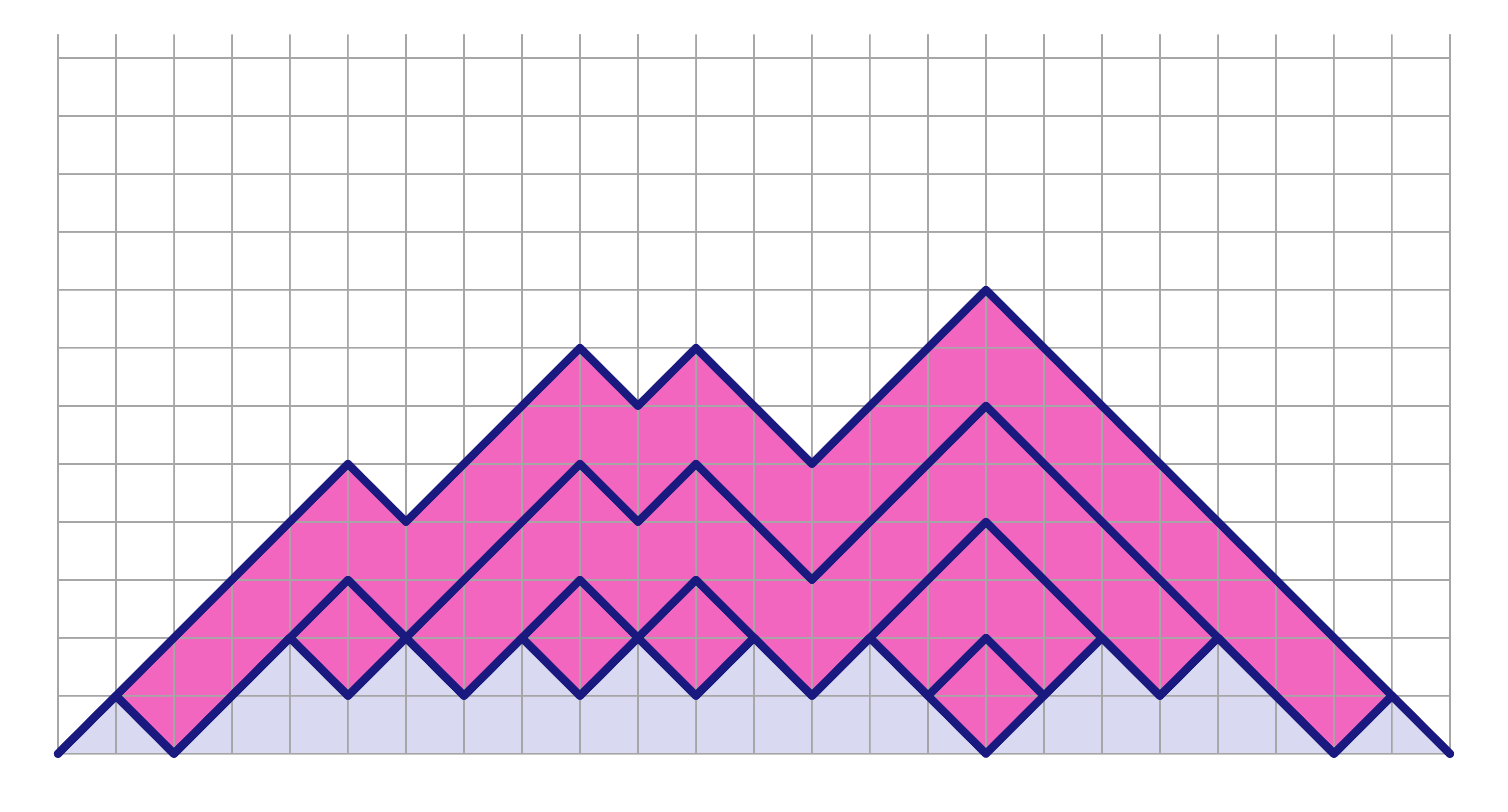}
\caption{\label{fig: NDT examples} Nested Dyck tilings of skew Young diagrams.}
\end{figure}

Nested Dyck tilings can always be described
as follows, see Figures~\ref{fig: NDT} and~\ref{fig: NDT examples} for illustration.
The top layer of each connected component of the
skew shape $\alpha / \beta$ must form a single tile in a nested tiling of $\alpha/\beta$
(if any exists), since breaking the top layer to more than one tile
would lead to non-disjoint shadows without the containment property.
Recursively, after removing these top layer tiles, the new top layers of the
remaining components form again single tiles. This shows first of all that there is
at most one nested Dyck tiling of any given skew Young diagram~$\alpha / \beta$,
which we then denote by $\nestedtilingof (\alpha / \beta)$.
Moreover, in a nested Dyck tiling, the containment of the shadows is always strict,
since the unique leftmost and rightmost atomic squares of a component are
contained in its top layer.
The following lemma shows that the existence of a nested tiling characterizes the
parenthesis reversal relation for Dyck paths. 
See also Figure~\ref{fig: KW}.
\begin{figure}
\includegraphics[width = 0.3\textwidth]{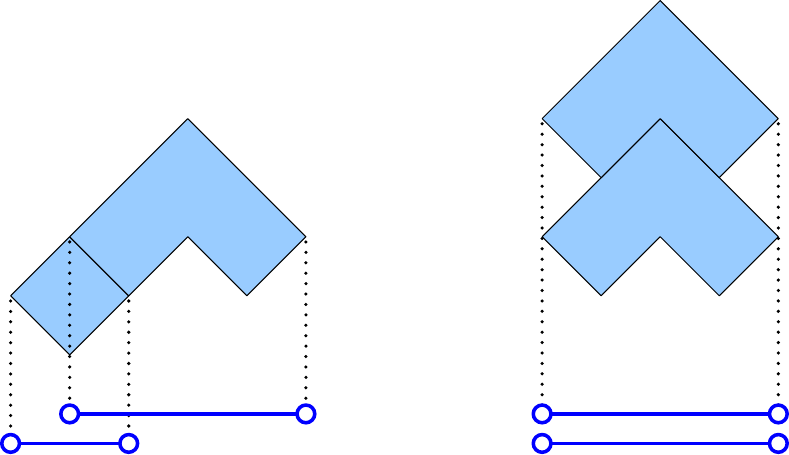}
\caption{\label{fig: NDT2} A non-nested tiling and a non-skew shape,
with shadows illustrated.}
\end{figure}
\begin{lem}\label{lem: nested tilings and KW relation}
Let $\alpha, \beta \in \DP_N$. The following statements are equivalent.
\begin{description}
\item[(a)] We have $\alpha \DPleq \beta$ and
the skew Young diagram $\alpha/\beta$ admits a 
nested Dyck tiling. 
\item[(b)] We have $\alpha \KWleq \beta$.
\end{description}
Moreover, in this case the number 
of matching pairs of parentheses reversed
in the nested chain from $\beta$ to $\alpha$ is the number of tiles in the
unique nested Dyck tiling $\nestedtilingof (\alpha / \beta)$.
\end{lem}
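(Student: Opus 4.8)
The plan is to build a dictionary between the two operations at the heart of statements (a) and (b): reversing a single matching pair of a Dyck path on one side, and peeling off a single Dyck tile on the other. Once this local correspondence is in place, both implications and the enumeration follow by combining it with the tools already available, namely the one-by-one decomposition of a parenthesis reversal relation into a nested chain (Lemma~\ref{lem: bracket reversals one by one}), together with the fact established above that a skew Young diagram admits \emph{at most one} nested Dyck tiling $\nestedtilingof(\alpha/\beta)$, obtained by recursively stripping the top layer of each connected component.

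First I would prove the single-step lemma: if $\gamma' \in \DP_N$, a matching pair of $\gamma'$ occupies the string positions $j < i$, and reversing just that pair yields a Dyck path $\gamma$, then $\gamma \DPleq \gamma'$ and the skew shape $\gamma/\gamma'$ consists of a single Dyck tile. The computation is short: reversing the pair turns the up-step $j$ into a down-step and the down-step $i$ into an up-step while leaving the balanced block between them unchanged, so $\gamma(k) = \gamma'(k) - 2$ for $j \le k \le i-1$ and $\gamma = \gamma'$ otherwise. Since one atomic square spans two height units at a fixed horizontal coordinate, a uniform drop by $2$ carves out a region exactly one square thick, whose midpoints trace the arch's silhouette; this is precisely a Dyck tile, whose shadow is the open interval spanned by the reversed pair.

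For the implication (b)$\,\Rightarrow\,$(a) I would take the nested chain $\beta = \beta_0, \beta_1, \ldots, \beta_m = \alpha$ of Lemma~\ref{lem: bracket reversals one by one}, in which each step reverses one matching pair and every $\beta_n$ is a balanced expression. By the single-step lemma each reversal removes one Dyck tile, and the $m$ tiles so obtained partition $\alpha/\beta$. It then remains to check that this tiling is nested. Because two matching pairs of $\beta$ never cross, their shadows are either disjoint or one contained in the other; and because the nested chain reverses outer pairs before inner ones, a tile coming from an outer arch is removed earlier, hence lies higher and over a wider shadow than a tile from an arch nested inside it, so the former covers the latter. This is exactly the nested condition of Definition~\ref{def: nested Dyck tiling}. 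Uniqueness of the nested tiling then identifies it with $\nestedtilingof(\alpha/\beta)$ and yields the tile count $m$.

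For (a)$\,\Rightarrow\,$(b) I would run the recursion the other way. Stripping the tiles of $\nestedtilingof(\alpha/\beta)$ from the top downwards, each top-layer tile sits directly beneath the current upper path and, by the single-step lemma read in reverse, corresponds to reversing a matching pair of that path, producing the next path down; this gives a sequence of single reversals $\beta = \gamma_0 \to \gamma_1 \to \cdots \to \gamma_{m'} = \alpha$. Reversing an outer arch does not move the parentheses nested inside it, so every arch reversed along the way is in fact a matching pair of the original $\beta$; hence the set $B$ of parentheses flipped in total is a union of $\beta$-matching pairs, i.e. closed under matching, and reversing $B$ yields $\alpha$. This is precisely $\alpha \KWleq \beta$, with the number of reversed pairs equal to the number $m'$ of tiles. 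The main obstacle I anticipate is making the single-step correspondence fully rigorous in the tilted-lattice geometry---identifying the one-square-thick removed region with a bona fide Dyck tile and pinning down its shadow---and then verifying that the covering relation between tiles matches the outer-before-inner ordering of reversals, which is what upgrades a mere Dyck tiling to a nested one. A secondary subtlety, in (a)$\,\Rightarrow\,$(b), is confirming that the arches peeled off at deeper levels genuinely lift to matching pairs of the \emph{original} $\beta$ and not merely of the intermediate paths, which is where the non-crossing structure of arches and the nested order of stripping must be used carefully.
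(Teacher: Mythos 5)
Your proposal is correct and follows essentially the same route as the paper's proof: the implication (b)$\Rightarrow$(a) via the nested chain of reversals with each single reversal carving out one Dyck tile (nestedness coming from the outer-before-inner order), and (a)$\Rightarrow$(b) by reading the tiles of $\nestedtilingof(\alpha/\beta)$ as matching pairs of $\beta$ whose reversal produces $\alpha$. The paper states these steps very tersely, whereas you spell out the single-reversal/single-tile dictionary, the shadow containment, and the lift of deeper arches to matching pairs of the original $\beta$ — all details the paper leaves implicit.
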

\begin{proof}
To prove that (b) implies (a), 
we assume that $\alpha \KWleq \beta$ and we
consider the nested chain of reversals $\beta = \beta_0 , \beta_1 , \ldots, \beta_m = \alpha$
of matching pairs of parentheses from $\beta$ to $\alpha$. The area between the
consecutive intermediate steps $\beta_{n} \KWleq \beta_{n-1}$ forms a Dyck tile, and these tiles form
a Dyck tiling of $\alpha/\beta$. The tiling is nested because 
if any two matching pairs of parentheses to be reversed are one inside the other,
then the reversal of the outer is performed before the inner.

To prove that (a) implies (b), consider the nested Dyck tiling $\nestedtilingof (\alpha / \beta)$
of the skew Young diagram $\alpha/\beta$. The endpoints of each tile $t \in \nestedtilingof (\alpha/\beta)$
correspond to a matching pair of parentheses in the balanced parenthesis expression~$\beta$.
The reversal of these matching pairs of $\beta$ produces~$\alpha$.

The remaining part of the statement is clear.
\end{proof}

In a nested Dyck tiling, wide tiles are on the top. 
Conversely, in a \textit{cover-inclusive Dyck tiling}, 
wide tiles are on the bottom.

\begin{defn}\label{def: cover inclusive Dyck tiling}
A Dyck tiling $T$ 
is cover-inclusive 
if for any two distinct tiles of $T$, 
either the horizontal extents 
are disjoint, or the tile that covers the other
has horizontal extent contained in the horizontal extent of the other.
\end{defn}
The property of being cover-inclusive is illustrated in 
Figures~\ref{fig: CIDT} and~\ref{fig: CIDT examples}. 
Any skew Young diagram admits a cover-inclusive Dyck tiling
at least with atomic square tiles. We denote by $\CItilingsof (\alpha / \beta)$ the family of all 
cover-inclusive tilings of $\alpha / \beta$.
Note that the disjointness of horizontal extents is less restrictive than
the disjointness of shadows, which is essentially why there are more
cover-inclusive Dyck tilings than nested Dyck tilings.

\begin{figure}
\includegraphics[width = 0.35\textwidth]{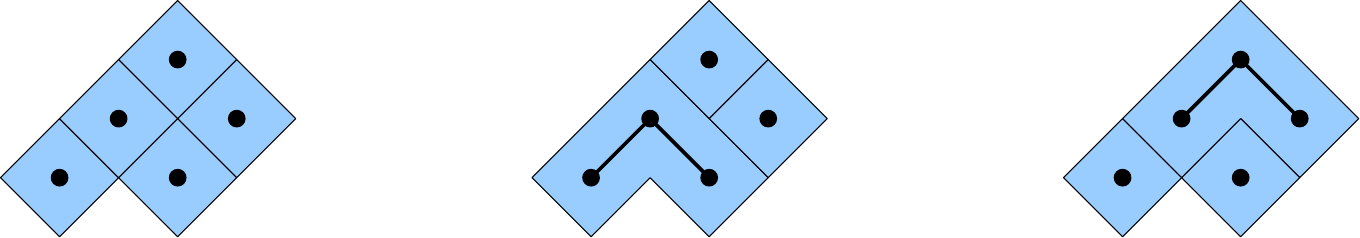}
\caption{\label{fig: CIDT} 
All the Dyck tilings of a small skew shape. The two first ones (from the left) 
are cover-inclusive. The third one is neither cover-inclusive nor nested.}
\end{figure}
\begin{figure}
\includegraphics[width = 0.4\textwidth]{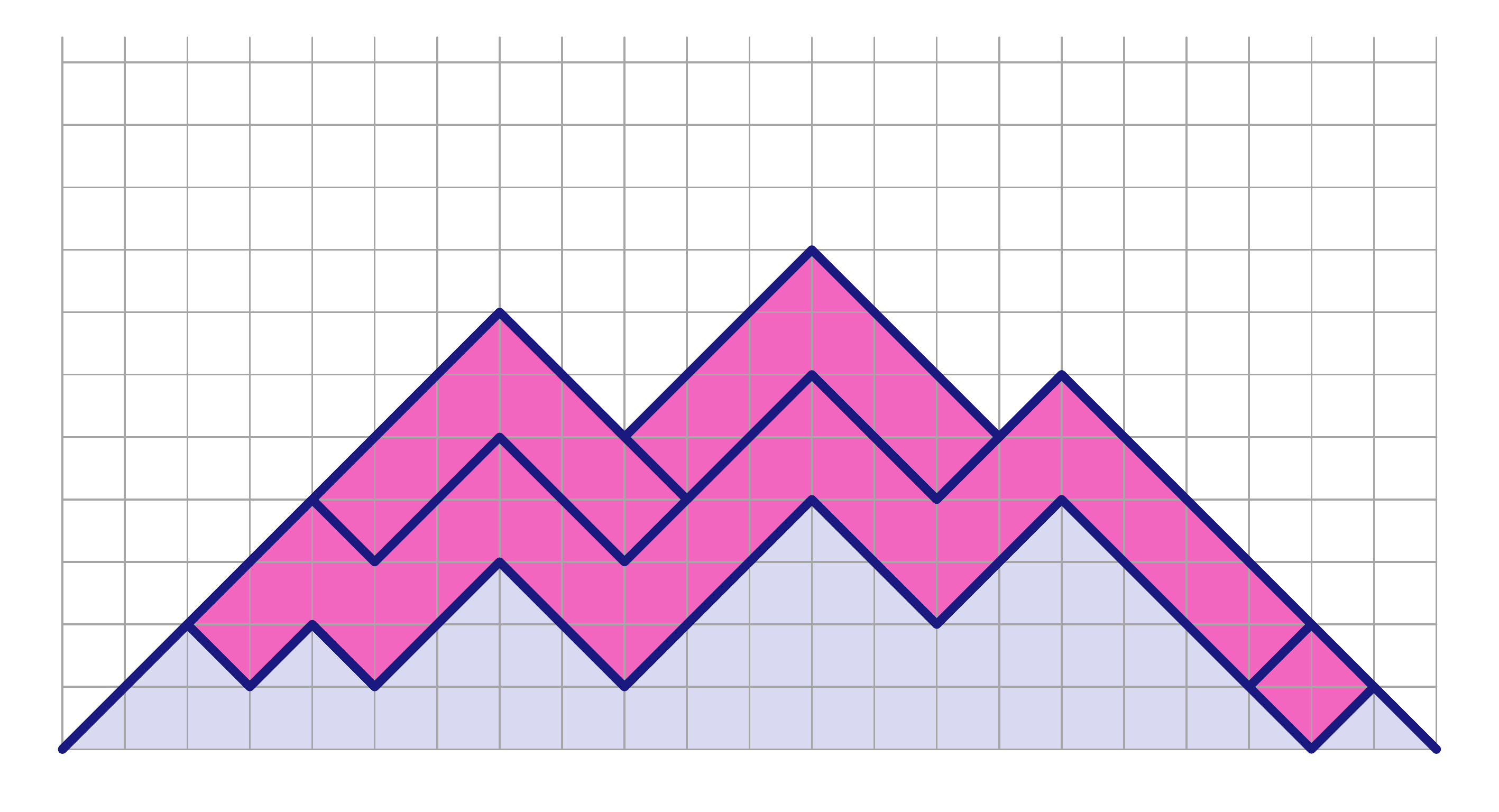} \quad
\includegraphics[width = 0.4\textwidth]{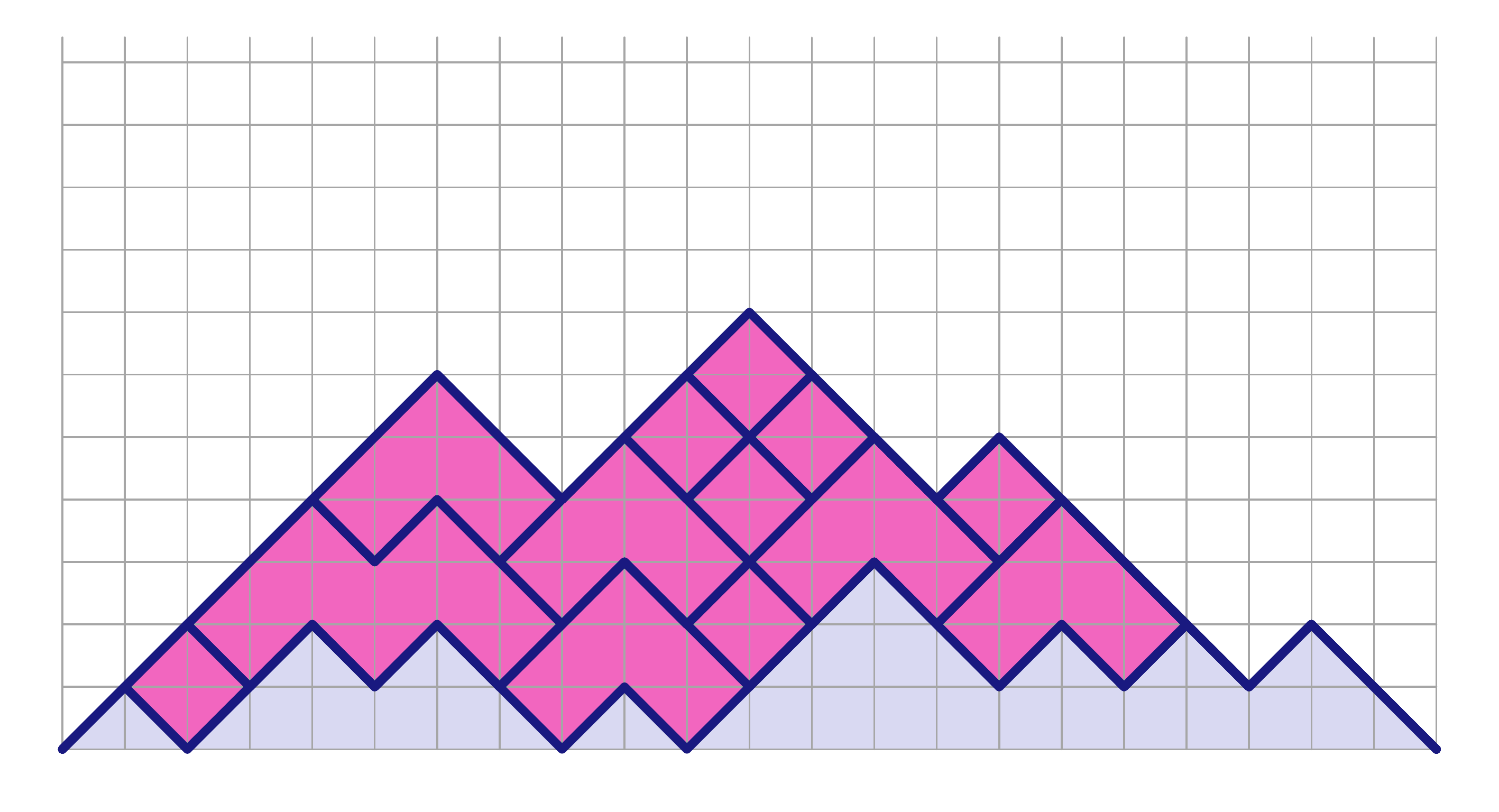} \\
\includegraphics[width = 0.4\textwidth]{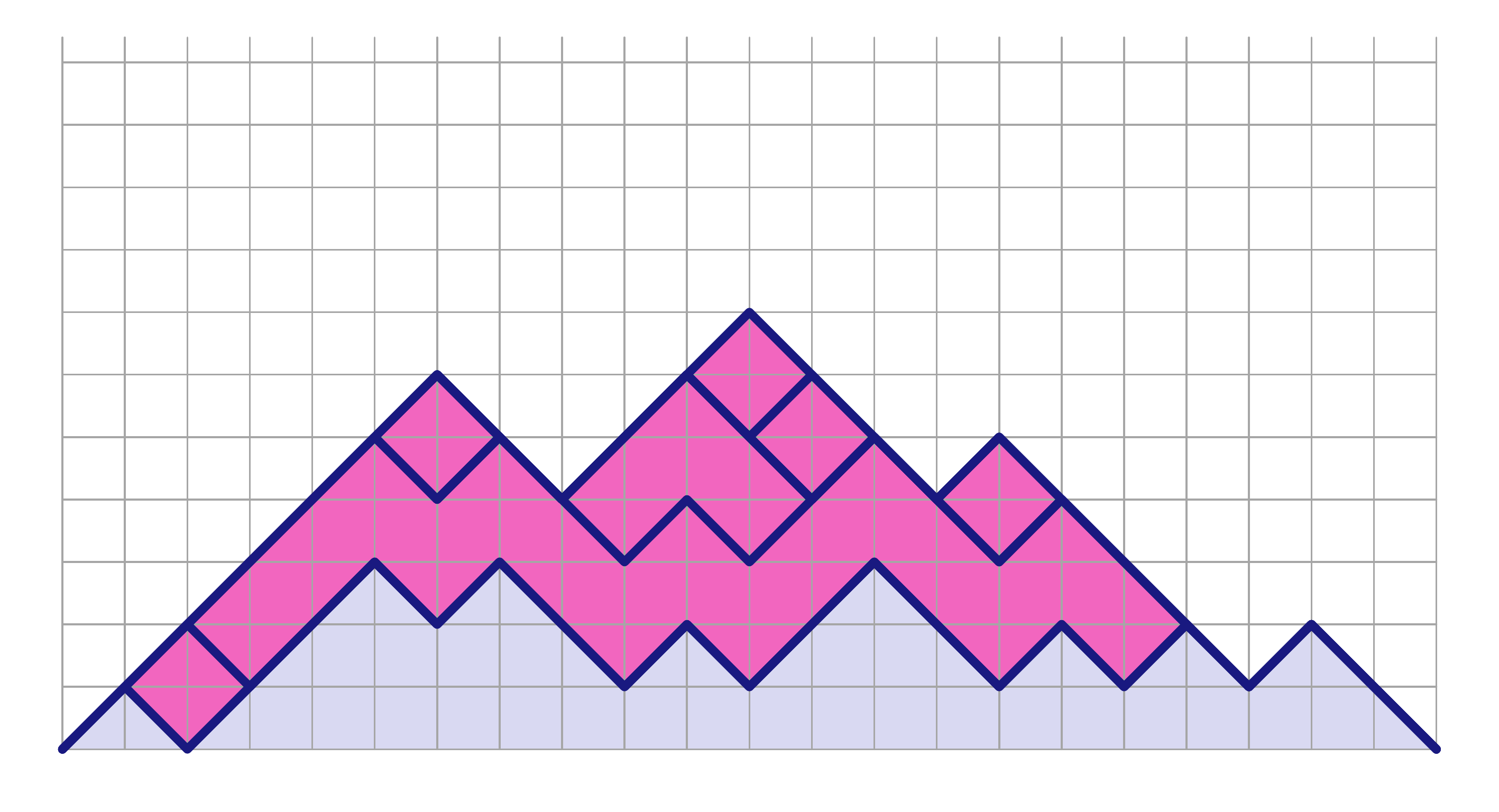} \quad
\includegraphics[width = 0.4\textwidth]{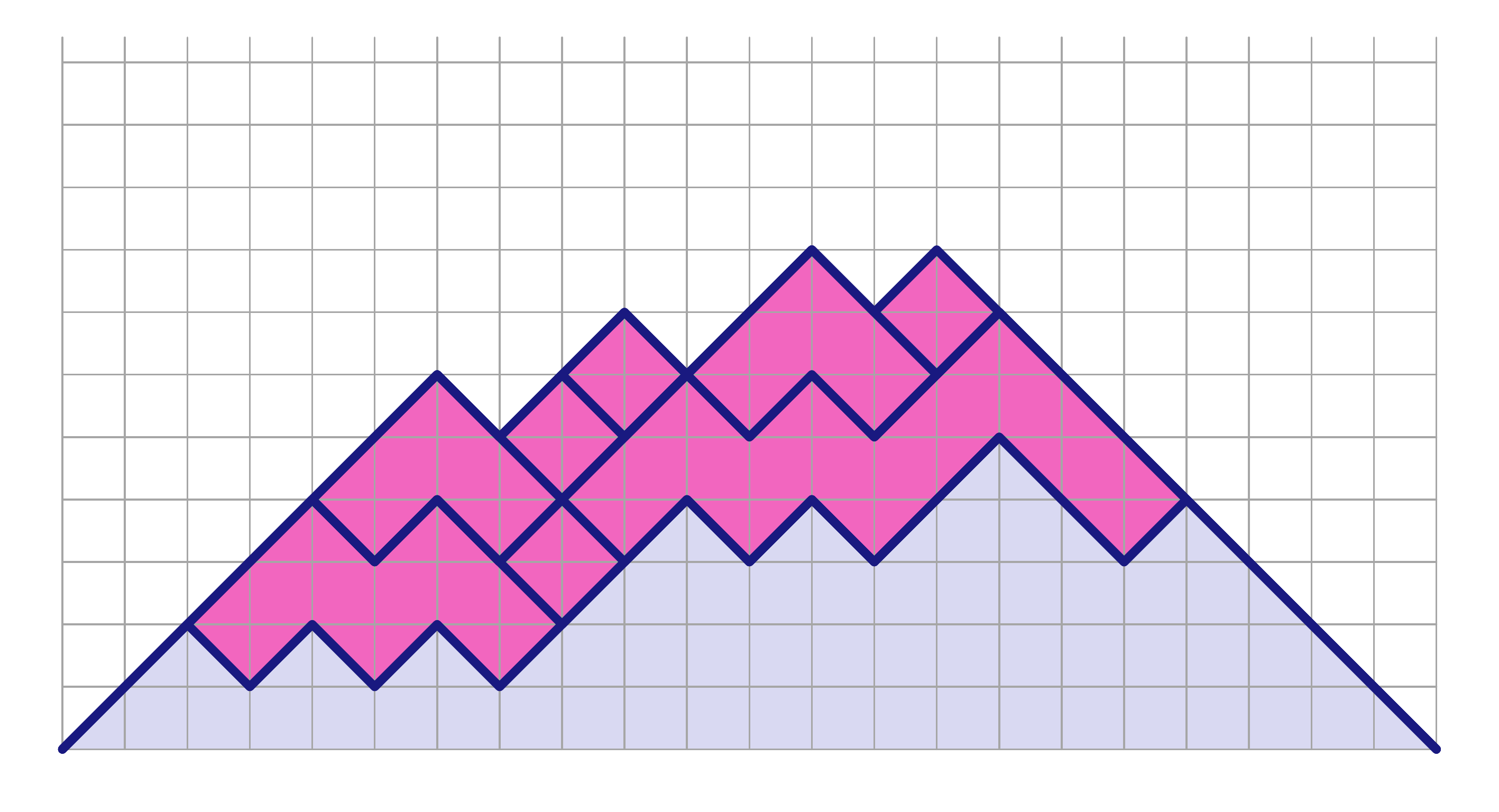} \\
\caption{\label{fig: CIDT examples} Cover-inclusive Dyck tilings of skew Young diagrams.}
\end{figure}

The cover-inclusive tilings are 
a key ingredient in the following theorem, which gives an explicit 
inversion formula for weighted incidence matrices.
Allowing for the weights makes this theorem a slight generalization of a
result of Kenyon and Wilson~\cite[Theorem~1.5]{KW-double_dimer_pairings_and_skew_Young_diagrams}.
The unit weight case will be used in Section~\ref{sec: applications to UST},
whereas a nontrivially weighted case will
be needed in~\cite{KKP-companion}.
We give the beginning of the proof
up to a point where the reduction to the results of Kenyon and Wilson is clear.

\begin{thm}
\label{thm: weighted KW incidence matrix inversion}
Assign weights $w(t) \in \bC$ to all Dyck tiles $t$ (with shape and placement).
Let $M \in \mathbb{C}^{\DP_N \times \DP_N}$ be the weighted incidence matrix
\begin{align}\label{eq: def of weighted incidence matrix}
M_{\alpha, \beta} := \begin{cases}
\prod_{t \in \nestedtilingof (\alpha / \beta)} (-w(t)) & \text{if }\alpha \KWleq \beta\\
0 & \text{otherwise}
\end{cases}
\end{align}
of the parenthesis reversal relation $\KWleq$, where $\nestedtilingof (\alpha / \beta)$ is
the unique nested tiling of the skew Young diagram $\alpha / \beta$.
Then $M$ is invertible, and the entries of the inverse matrix $M^{-1}$ are given by the weighted sums 
\begin{align}\label{eq: inverse of weighted incidence matrix}
M^{-1}_{\alpha, \beta} = \begin{cases}
\sum_{T \in \CItilingsof (\alpha/\beta)} \prod_{t \in T} w(t) & \text{if }\alpha \DPleq \beta \\
0 & \text{otherwise} 
\end{cases}
\end{align}
over the sets 
$\CItilingsof (\alpha/\beta)$ 
of cover-inclusive Dyck tilings of the skew Young diagrams $\alpha/\beta$.
\end{thm}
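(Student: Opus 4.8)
The plan is to establish the inversion formula by showing that the matrix product $M M^{-1}$ equals the identity, where $M$ and the candidate inverse $M^{-1}$ are defined by \eqref{eq: def of weighted incidence matrix} and \eqref{eq: inverse of weighted incidence matrix} respectively. Since $\alpha \KWleq \beta$ implies $\alpha \DPleq \beta$, and both $M$ and the candidate inverse are triangular with respect to the partial order $\DPleq$ (with nonzero diagonal entries $M_{\alpha,\alpha} = 1$, as the empty skew shape has the empty nested tiling with empty product equal to $1$), the matrix $M$ is automatically invertible. Thus the content of the theorem is the identity
\begin{align*}
\sum_{\gamma} M_{\alpha,\gamma} \, M^{-1}_{\gamma,\beta} = \delta_{\alpha,\beta}
\qquad \text{for all } \alpha \DPleq \beta,
\end{align*}
and by triangularity it suffices to verify this for $\alpha \DPleq \beta$, the sum running over $\gamma$ with $\alpha \KWleq \gamma \DPleq \beta$.

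First I would reduce to the unit-weight case of Kenyon and Wilson. The key observation is that the weights enter multiplicatively and \emph{locally}, tile by tile: each nested tiling $\nestedtilingof(\alpha/\gamma)$ and each cover-inclusive tiling $T \in \CItilingsof(\gamma/\beta)$ contributes a product over its constituent tiles, and every such tile is a Dyck tile with a well-defined shape and placement inside the fixed ambient diagram $\alpha/\beta$. Consequently the sign-weighted sum $\sum_\gamma M_{\alpha,\gamma} M^{-1}_{\gamma,\beta}$ can be reorganized as a single sum over pairs $\big( \nestedtilingof(\alpha/\gamma), T \big)$, which one recognizes as a weighted enumeration over decompositions of the skew shape $\alpha/\beta$ into a nested tiling of the ``upper'' part $\alpha/\gamma$ and a cover-inclusive tiling of the ``lower'' part $\gamma/\beta$, each tile carrying its own weight $w(t)$ (with signs from the factors $-w(t)$ in $M$). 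Because Kenyon and Wilson prove the identity combinatorially in the unweighted case as a cancellation of tilings with signs, and because the weights are constant on each tile shape-and-placement class, the same sign-reversing involution that proves their Theorem~1.5 respects the weighting: pairs of terms that cancel in the unweighted enumeration carry identical tile-multisets, hence identical weight products, and cancel in the weighted enumeration as well.

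The main step, then, is to exhibit this weight-compatibility precisely. I would make the change of variables explicit by using Lemma~\ref{lem: nested tilings and KW relation} to identify each intermediate $\gamma$ with $\alpha \KWleq \gamma$ by its unique nested tiling, so that the summand becomes $\prod_{t \in \nestedtilingof(\alpha/\gamma)}(-w(t)) \sum_{T \in \CItilingsof(\gamma/\beta)} \prod_{t \in T} w(t)$; then I would collect all terms by the resulting multiset of occupied tiles inside $\alpha/\beta$. The hard part will be verifying that the sign-reversing involution of Kenyon and Wilson is compatible with this weighting, i.e.\ that it pairs up terms sharing the same tile weights---equivalently, that the involution only rearranges tiles into shape-preserving configurations or toggles a single atomic square in a way that leaves the total weight product invariant up to the controlling sign. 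Once this is checked, the weighted sum telescopes to $\delta_{\alpha,\beta}$ exactly as in the unit-weight case. Since the statement explicitly asks only for the reduction to Kenyon and Wilson's result, I would carry the argument to the point where the weighted enumeration has been matched term-by-term with their unweighted cancellation and then invoke \cite[Theorem~1.5]{KW-double_dimer_pairings_and_skew_Young_diagrams}.
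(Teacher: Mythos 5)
Your setup is fine as far as it goes: triangularity with respect to $\DPleq$ and unit diagonal give invertibility, and verifying $\sum_\gamma M_{\alpha,\gamma} M^{-1}_{\gamma,\beta} = \delta_{\alpha,\beta}$ suffices. Your reorganization also correctly identifies what must actually be proven: since the weights $w(t)$ are arbitrary (equivalently, formal indeterminates, one per tile with shape and placement, so that each pair $(S,T)$ contributes the squarefree monomial of its underlying tiling $S \sqcup T$ of $\alpha/\beta$), the identity is equivalent to the statement that for every \emph{fixed} Dyck tiling $U$ of $\alpha/\beta$ with $\alpha \neq \beta$, the sum of $(-1)^{|S|}$ over all splittings $U = S \sqcup T$ into a nested tiling $S$ of $\alpha/\gamma$ and a cover-inclusive tiling $T$ of $\gamma/\beta$ (for some intermediate Dyck path $\gamma$) vanishes. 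But this monomial-by-monomial cancellation is \emph{strictly stronger} than the unweighted identity of \cite[Theorem~1.5]{KW-double_dimer_pairings_and_skew_Young_diagrams}, which only asserts that the signed count over \emph{all} pairs $(S,T)$ vanishes; no formal argument can recover the weighted statement from the unweighted theorem used as a black box. Your bridge --- that Kenyon and Wilson's proof is a sign-reversing involution pairing terms with identical tile multisets --- is precisely the missing content. Nothing in the statement of their theorem guarantees their argument has this structure, you do not verify it, and you explicitly defer it as ``the hard part.'' So the proposal is a plan whose essential step is an unexamined structural property of somebody else's proof; had you actually constructed such a tile-toggling involution, that would amount to a new proof of the theorem rather than a reduction to Kenyon--Wilson.

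The paper's proof circumvents exactly this difficulty by a different reduction. Working with $M^{-1}M = \unitmat$ and triangularity, it derives the recursion
\begin{align*}
M^{-1}_{\alpha, \beta} \; = \sum_{\substack{\text{nested Dyck tilings } S \neq \emptyset, \ \bigcup S \subseteq \alpha/\beta, \\ \text{with upper boundary in } \beta}} - (-1)^{|S|} \Big( \prod_{t \in S} w(t) \Big) \, M^{-1}_{\alpha, \beta \downarrow S} \, ,
\end{align*}
which together with $M^{-1}_{\alpha,\alpha}=1$ determines $M^{-1}$ uniquely. Induction on this recursion shows that each entry is a polynomial $M^{-1}_{\alpha,\beta} = \sum_T c_T \prod_{t \in T} w(t)$ over Dyck tilings $T$ of $\alpha/\beta$ whose coefficients $c_T$ are \emph{independent of the weight function} $w$. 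Only then does it invoke Kenyon and Wilson --- and it invokes their Theorem~1.6 of \cite{KW-double_dimer_pairings_and_skew_Young_diagrams}, an already weighted statement, not the unweighted Theorem~1.5 --- to identify $c_T = 1$ when $T$ is cover-inclusive and $c_T = 0$ otherwise. The weight-independence of the $c_T$ is what legitimizes deducing the fully general claim from Kenyon--Wilson's restricted weightings; your proposal contains no analogue of this step, and that is the gap you would need to fill (either by adding such a polynomiality argument, or by actually constructing the weight-preserving involution yourself).
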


\begin{proof}
Since $\alpha \KWleq \beta$ implies $\alpha \DPleq \beta$,
the matrix $M$ is upper-triangular with respect to the partial order~$\DPleq$.
The diagonal entries are all ones, $M_{\alpha,\alpha} = 1$.
Thus $M$ is invertible, and 
also $M^{-1}$ is upper-triangular with ones on the diagonal:
$M^{-1}_{\alpha, \beta} = 0$
unless $\alpha \DPleq \beta$, and $M^{-1}_{\alpha, \alpha} = 1$.

It remains to compute the entries $M^{-1}_{\alpha, \beta}$ when $\alpha \DPleq \beta$ and $\beta \neq \alpha$. 
We then have
\begin{align*}
\sum_{\lambda \in \DP_N} M^{-1}_{\alpha, \lambda}M_{\lambda, \beta} = \delta_{\alpha, \beta} = 0,
\end{align*}
and with the knowledge of the zero entries of $M$ and $M^{-1}$, we can restrict the summation to obtain
\begin{align*}
\sum_{ \substack{ \lambda \in \DP_N \\ \lambda \KWleq \beta \  \& \  \alpha \DPleq \lambda } } M^{-1}_{\alpha, \lambda}M_{\lambda, \beta} = 0.
\end{align*}
By Lemma \ref{lem: nested tilings and KW relation},
instead of $\lambda$, the summation can 
be indexed by the nested Dyck tilings $S = \nestedtilingof(\lambda/\beta)$,
whose upper boundary is a subpath 
of $\beta$ and which are contained
in the skew Young diagram between $\alpha$ and $\beta$, i.e., 
$\bigcup S \subseteq \alpha / \beta$.
With the notation  $\lambda = \beta \downarrow S$, we then have
\begin{align*} 
\sum_{ \substack{ \text{nested Dyck tilings $S$ of } \bigcup S \subseteq \alpha / \beta \\  \text{ with upper boundary in $\beta$}}} M^{-1}_{\alpha, \beta \downarrow S} M_{\beta \downarrow S , \beta} = 0.
\end{align*}
Using the definition of $M$, from the equation above 
we solve $M^{-1}_{\alpha, \beta}$ in terms of $M^{-1}_{\alpha, \gamma}$,
where $\alpha \DPleq \gamma \DPleq \beta$:
\begin{align*}
M^{-1}_{\alpha, \beta} \; = \sum_{ \substack{ \text{nested Dyck tilings $S$ of } \bigcup S \subseteq \alpha / \beta, \\  S \ne \emptyset, \text{ with upper boundary in $\beta$}}} - (-1)^{\vert S \vert} \left( \prod_{t \in S} w(t) \right) M^{-1}_{\alpha, \beta \downarrow S},
\end{align*}
where $\vert S \vert$ denotes the number of Dyck tiles in $S$. This formula 
combined with the initial condition $M^{-1}_{\alpha, \alpha} = 1$ can be used 
to find $M^{-1}_{\alpha, \beta}$ recursively for all $\beta$.
Inductively,
we now first deduce that $M^{-1}_{\alpha, \beta}$ is a sum of weights of 
Dyck tilings of $\alpha / \beta$, that is, there exist coefficients $c_T$ that 
are independent of the weight function $w$, such that we have
\begin{align*}
M^{-1}_{\alpha, \beta} = \sum_{\text{Dyck tilings }T \text{ of }\alpha / \beta} c_T \left( \prod_{t \in T} w(t) \right).
\end{align*}
The remaining task is to find the coefficients $c_T$.
Here we rely on results of Kenyon and Wilson: it follows from 
\cite[Theorem~1.6.]{KW-double_dimer_pairings_and_skew_Young_diagrams} that
\[ c_T = \begin{cases}
    1 & \text{ if $T$ is cover-inclusive} \\
    0 & \text{ otherwise.}
\end{cases}
\]
\end{proof}
The next example gives the 
special case of the above theorem that will be employed in the present article,
with all tiles having weights $w(t)=1$.
This case will be needed in Section~\ref{sec: applications to UST} in order to solve
for the UST connectivity probabilities, and it is very closely
related to the original choice of Kenyon and Wilson~\cite{KW-double_dimer_pairings_and_skew_Young_diagrams},
which can be recovered by setting $w(t)=-1$ instead.
More general choices of weights will be needed in the follow-up work~\cite{KKP-companion}.
\begin{exa}
\label{ex: signed incidence matrix of KWleq}
Let $\Mmat \in \C^{\DP_N \times \DP_N}$ be the unit weight incidence matrix of the parenthesis reversal relation,
obtained by choosing the weight function $w(t) = 1$ for all tiles $t$, 
\begin{align*}
\Mmat_{\alpha,\beta} = \begin{cases}
    (-1)^{\vert \nestedtilingof (\alpha / \beta) \vert} & \text{ if } \alpha \KWleq \beta \\
    0 & \text{ otherwise},
\end{cases}
\end{align*}
where $\vert \nestedtilingof (\alpha / \beta) \vert$ is the number of Dyck tiles 
in the nested tiling $\nestedtilingof (\alpha / \beta)$.
Let $( (a_\ell,b_\ell) )_{\ell=1}^N$ denote the left-to-right orientation of
the link pattern $\alpha$. Lemmas~\ref{lem: link patterns and KW relation}
and~\ref{lem: nested tilings and KW relation} show that
the relation ${\alpha \KWleq \beta}$ is equivalent to the existence of a unique permutation
$\sigma \in \SymmGrp_N$ of the exits $(b_\ell)_{\ell=1}^N$ of $\alpha$
such that ${\beta = \{ \link{a_1}{b_{\sigma(1)}} \ldots \link{a_N}{b_{\sigma(N)}} \}}$, and
then $(-1)^{|\nestedtilingof(\alpha/\beta)|} = \sgn(\sigma)$ is the sign of that permutation.
Hence, 
we can equivalently write
\begin{align}\label{eq: basic KW matrix}
\Mmat_{\alpha,\beta} = \begin{cases}
    \sgn(\sigma) \quad & \text{ if } \beta = \{ \link{a_1}{b_{\sigma(1)}} \ldots \link{a_N}{b_{\sigma(N)}} \}
            \text{ for some } \sigma \in \SymmGrp_N \\ 
    0 & \text{ otherwise}.
\end{cases}
\end{align}
Theorem~\ref{thm: weighted KW incidence matrix inversion} shows that
$\Mmat$ has inverse with entries counting the cover-inclusive Dyck tilings,
\begin{align}\label{eq: basic KW inverse matrix}
\Mmat_{\alpha,\beta}^{-1} = \begin{cases}
    \# \CItilingsof (\alpha / \beta) & \text{ if } \alpha \DPleq \beta \\
    0 & \text{ otherwise} .
\end{cases}
\end{align}
In particular, the entries $\Mmat_{\alpha,\beta}^{-1}$
are non-negative integers, and positive precisely when $\alpha \DPleq \beta$.

For concreteness, the matrices $\Mmat$ and $\Minv$ as well as the illustrations of 
all cover-inclusive and nested Dyck tilings of each skew shape
are given explicitly for $N=2,3,4$ in 
Figures~\ref{fig: KW matrices}--\ref{fig: CIDT matrices 4}.
\end{exa}
\begin{figure}
\bigskip
\bigskip
\bigskip
\centerfloat
\includegraphics[width = 0.28 \textwidth]{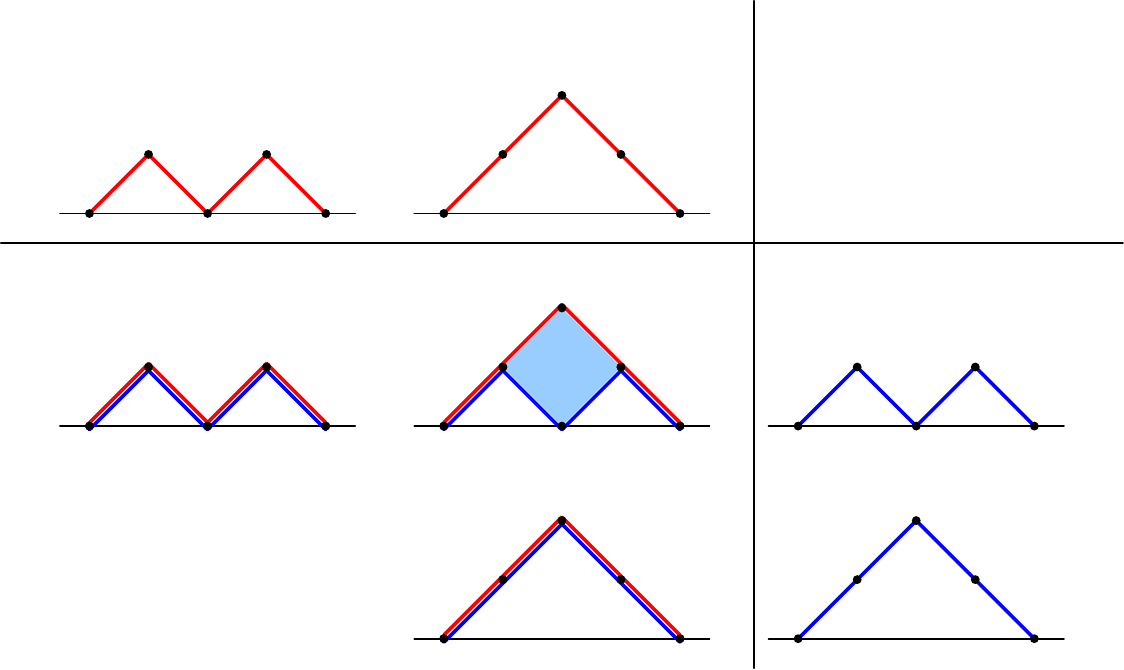} \qquad
\includegraphics[width = 0.28 \textwidth]{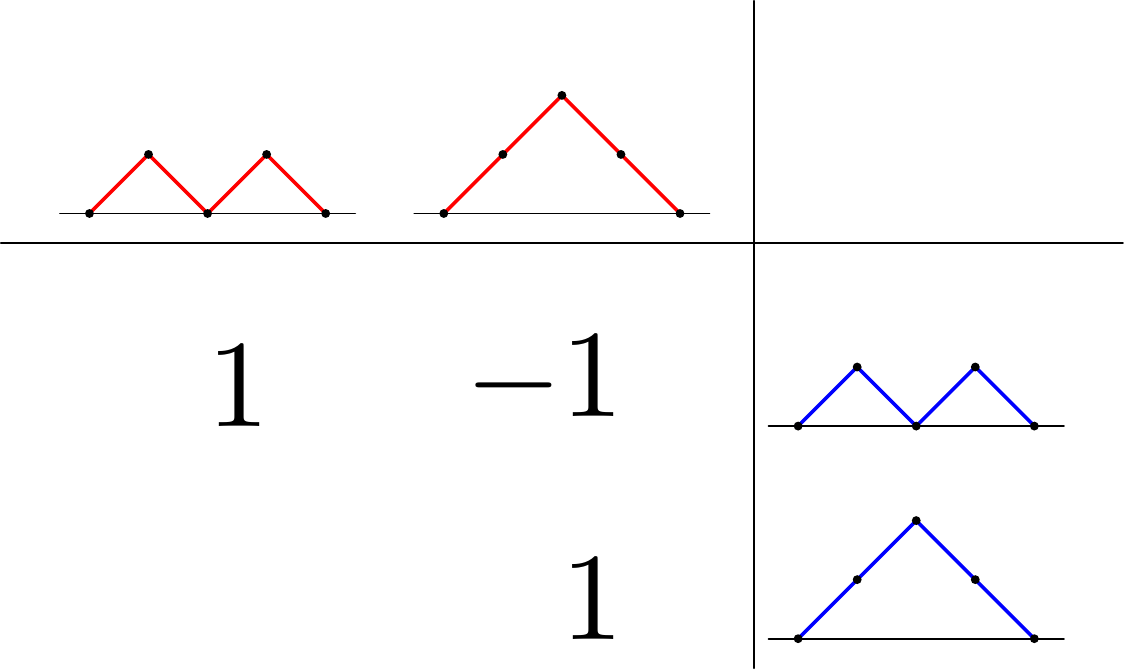} \\
\vspace{0.5 cm}
\includegraphics[width = 0.45 \textwidth]{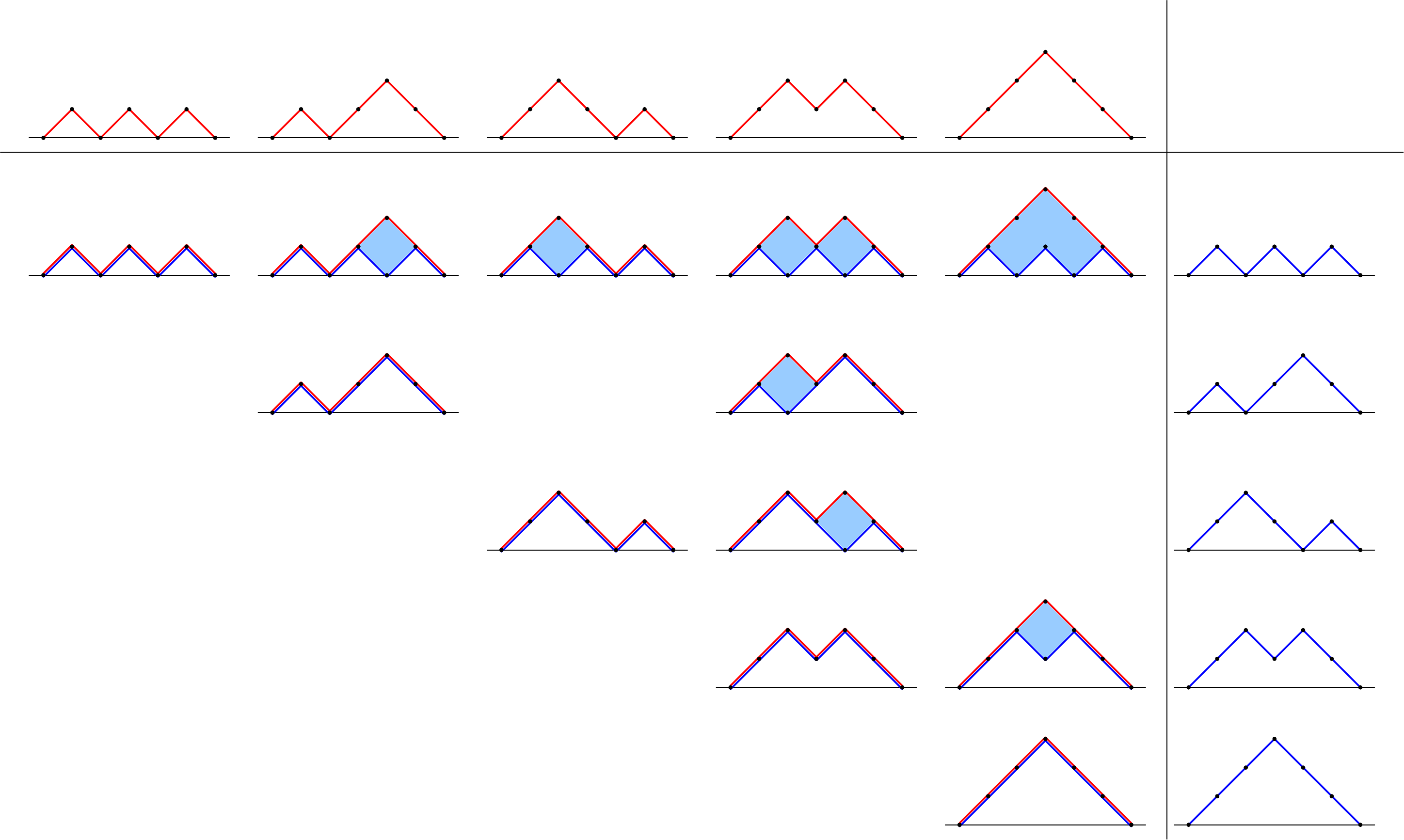} \qquad
\includegraphics[width = 0.45 \textwidth]{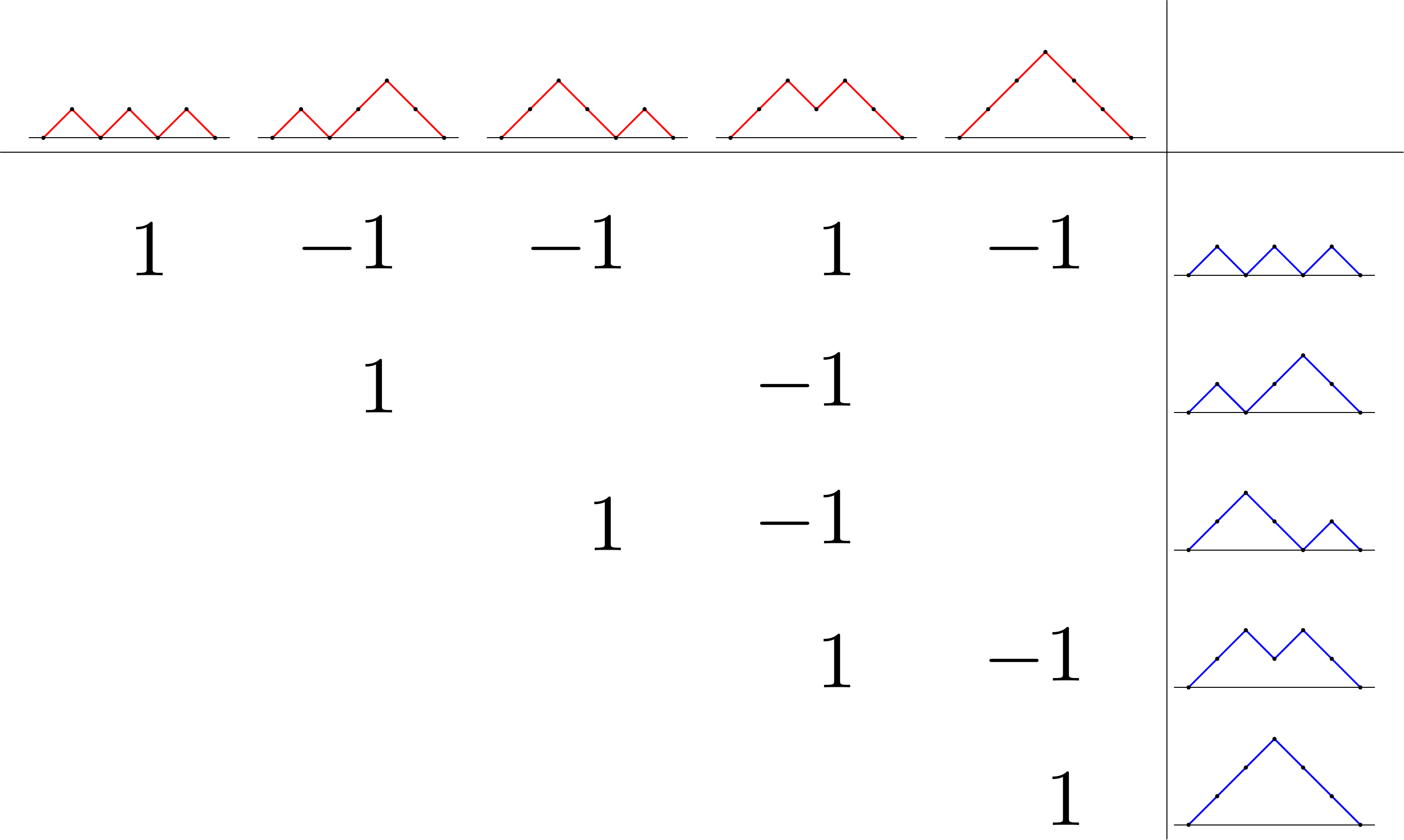}
\bigskip
\bigskip
\bigskip
\caption{\label{fig: KW matrices}
The nested Dyck tilings of all skew Young diagrams for $N=2$ and $N=3$
and the corresponding signed incidence matrices $\Mmat$ 
defined in Equation~\eqref{eq: basic KW matrix}.
}
\end{figure}
\begin{figure}
\bigskip
\bigskip
\bigskip
\centerfloat
\includegraphics[width = 0.28 \textwidth]{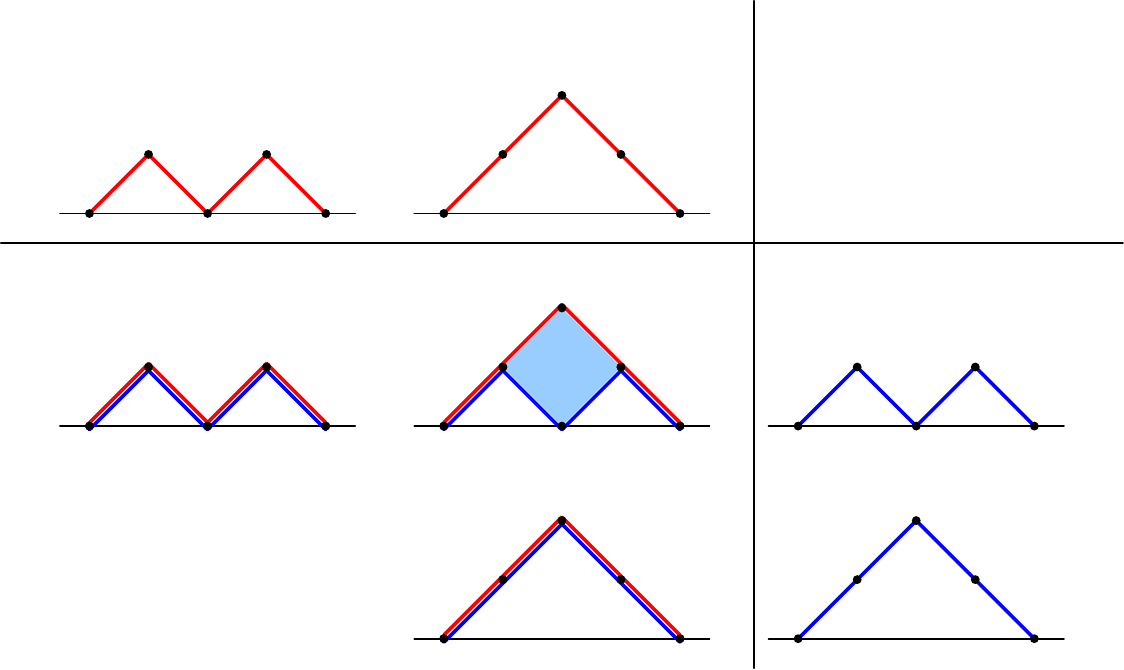} \qquad
\includegraphics[width = 0.28 \textwidth]{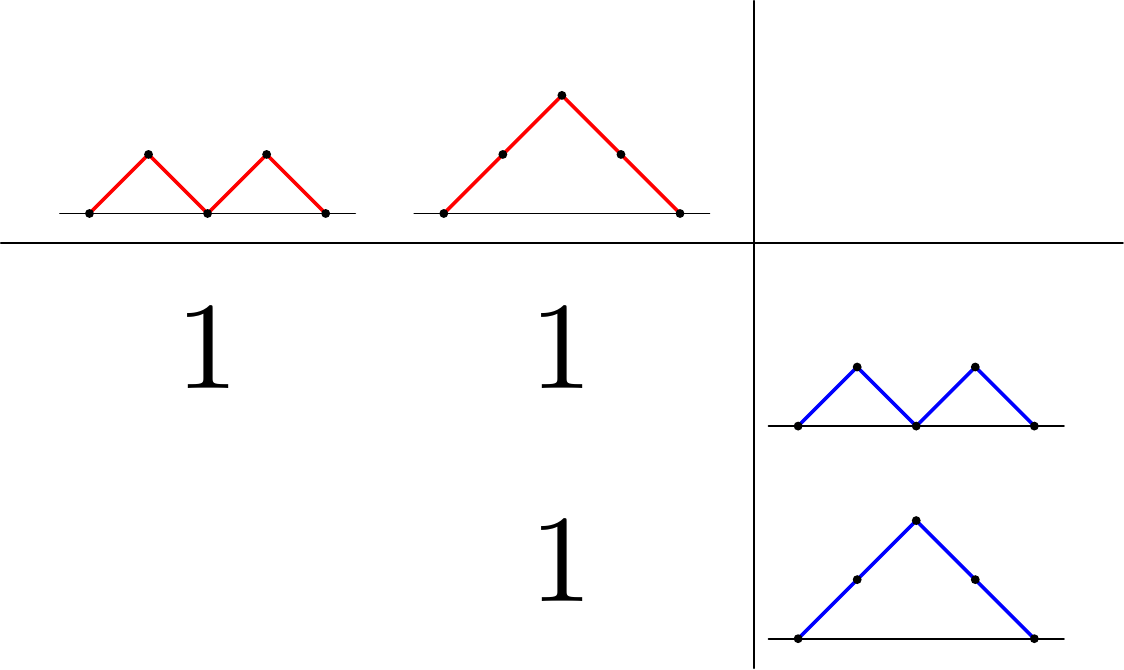} \\
\vspace{0.5 cm}
\includegraphics[width = 0.45\textwidth]{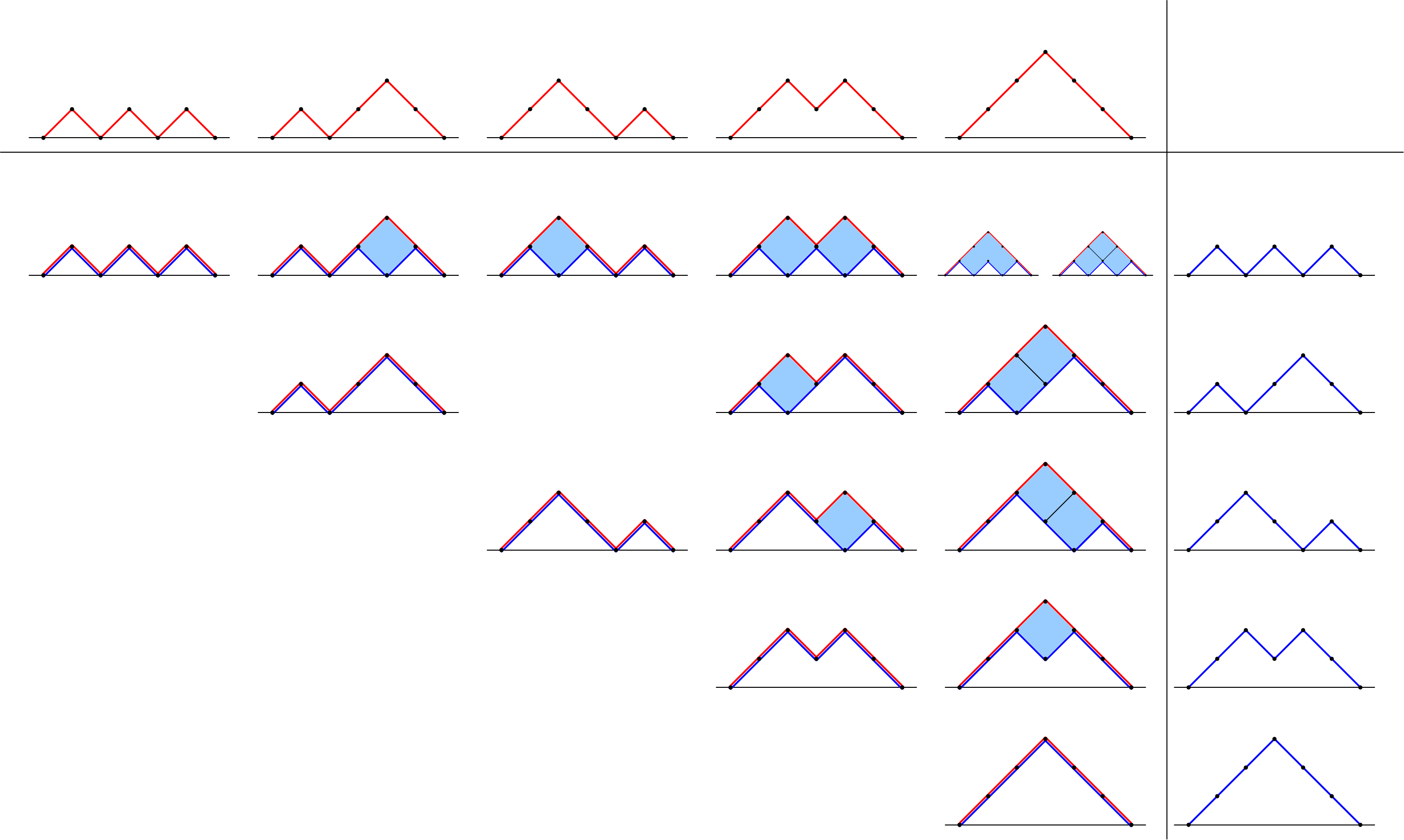} \qquad 
\includegraphics[width = 0.45\textwidth]{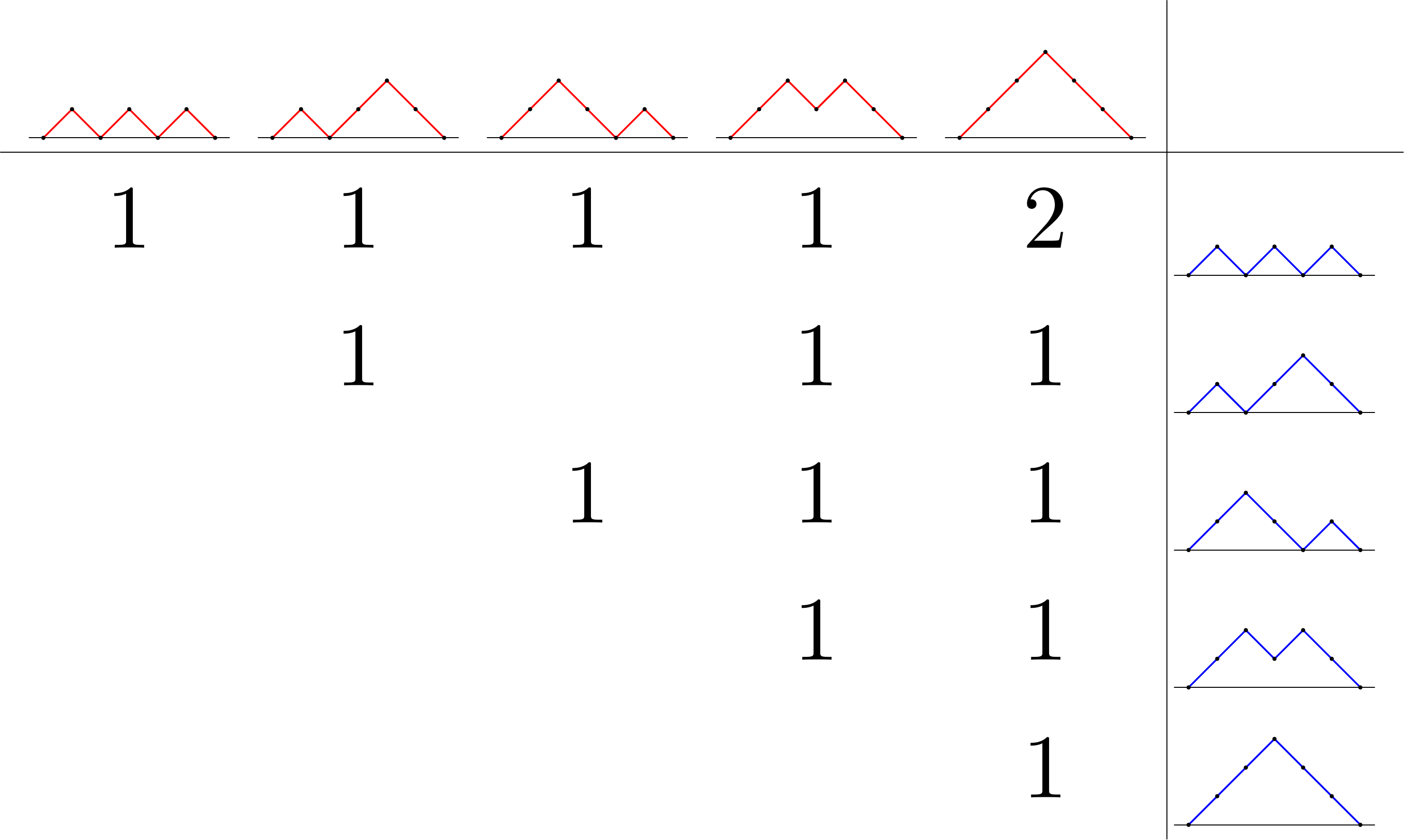}
\bigskip
\bigskip
\bigskip
\caption{\label{fig: CIDT matrices}
The cover-inclusive Dyck tilings of all skew Young diagrams for $N=2$ and $N=3$
and the corresponding matrices $\Minv$,
whose entries count the number of such tilings according to Equation~\eqref{eq: basic KW inverse matrix}.}
\end{figure}
\begin{figure}
\centerfloat
\includegraphics[width = 1.0\textwidth]{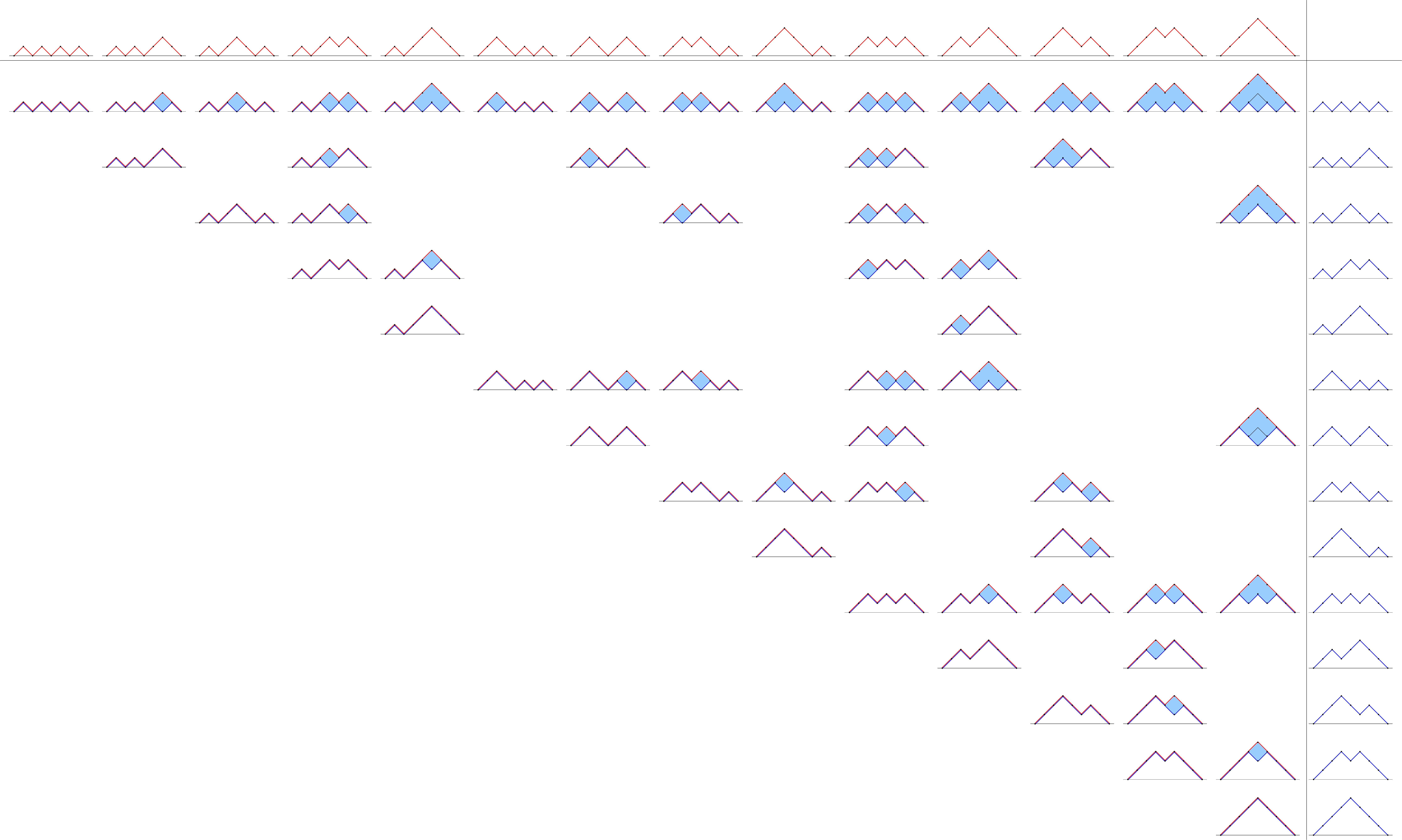} \\
\vspace{0.5 cm}
\includegraphics[width = 1.0\textwidth]{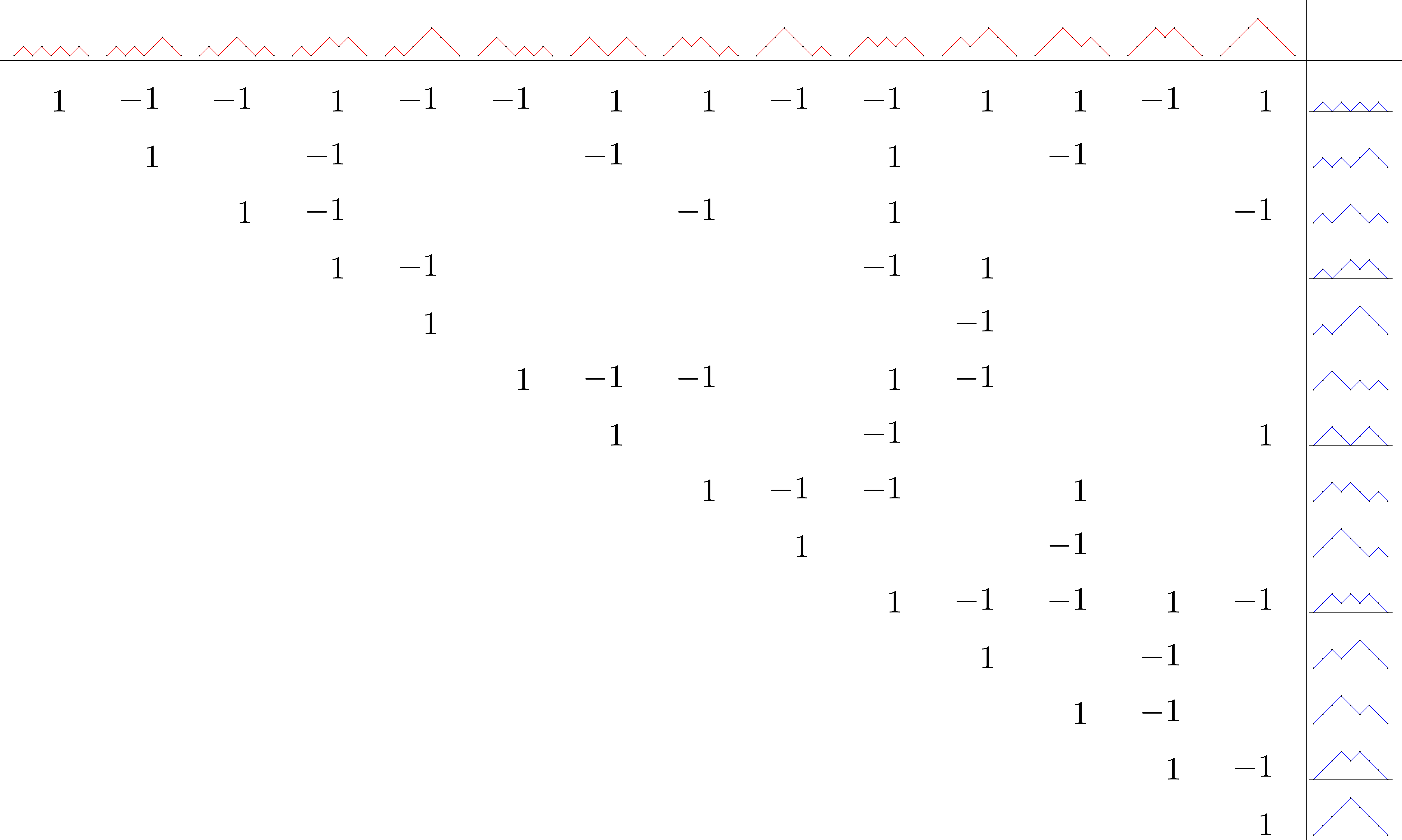}
\caption{\label{fig: KW matrices 4}
The nested Dyck tilings of all skew Young diagrams for $N=4$
and the corresponding signed incidence matrix $\Mmat$.}
\end{figure}
\begin{figure}
\centerfloat
\includegraphics[width=1.35\textwidth, angle =90]{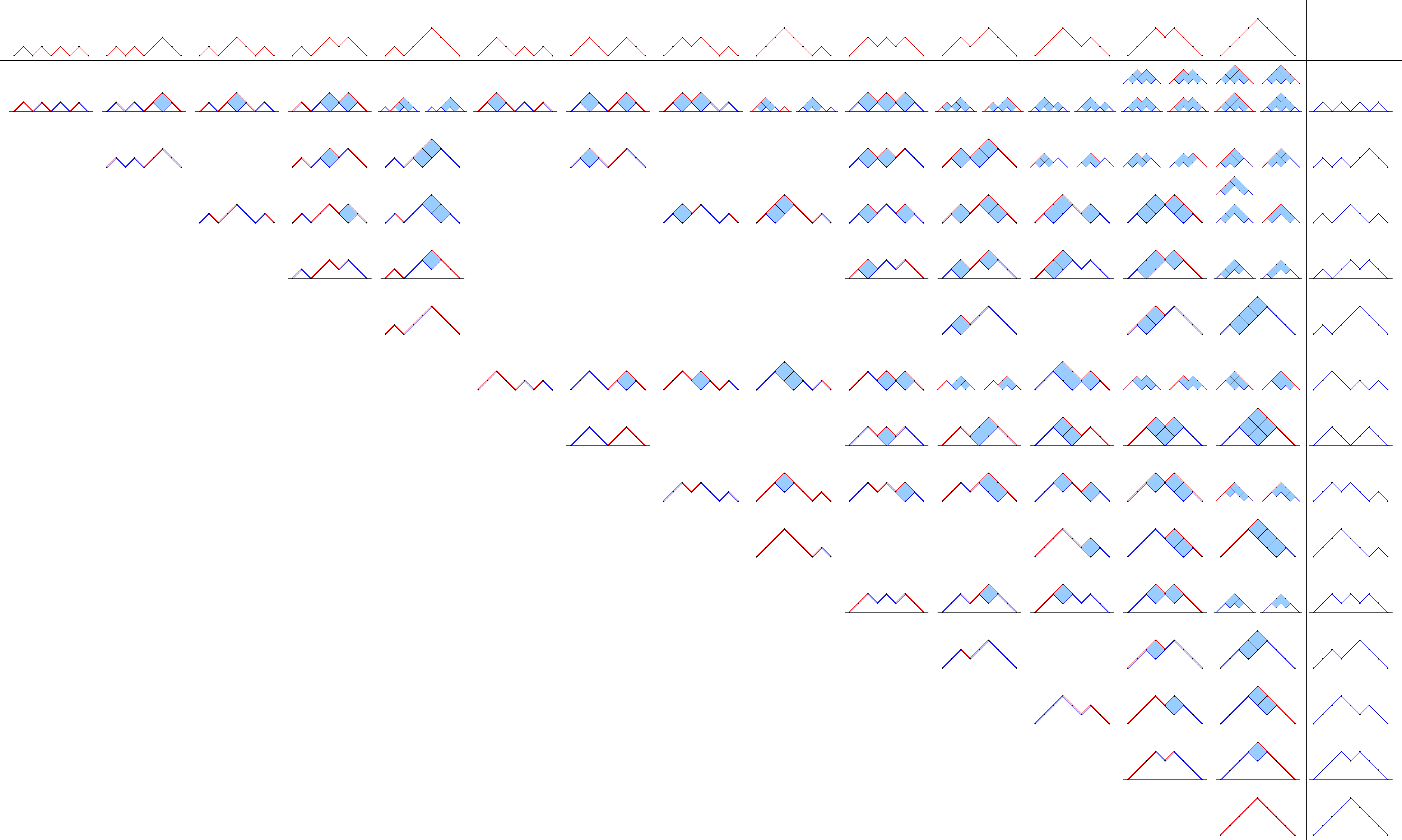}
\caption{\label{fig: CIDT matrices 4 pics}
The cover-inclusive Dyck tilings of all skew Young diagrams for $N=4$.
}
\end{figure}
\begin{figure}
\centerfloat
\includegraphics[width=1.35\textwidth, angle =90]{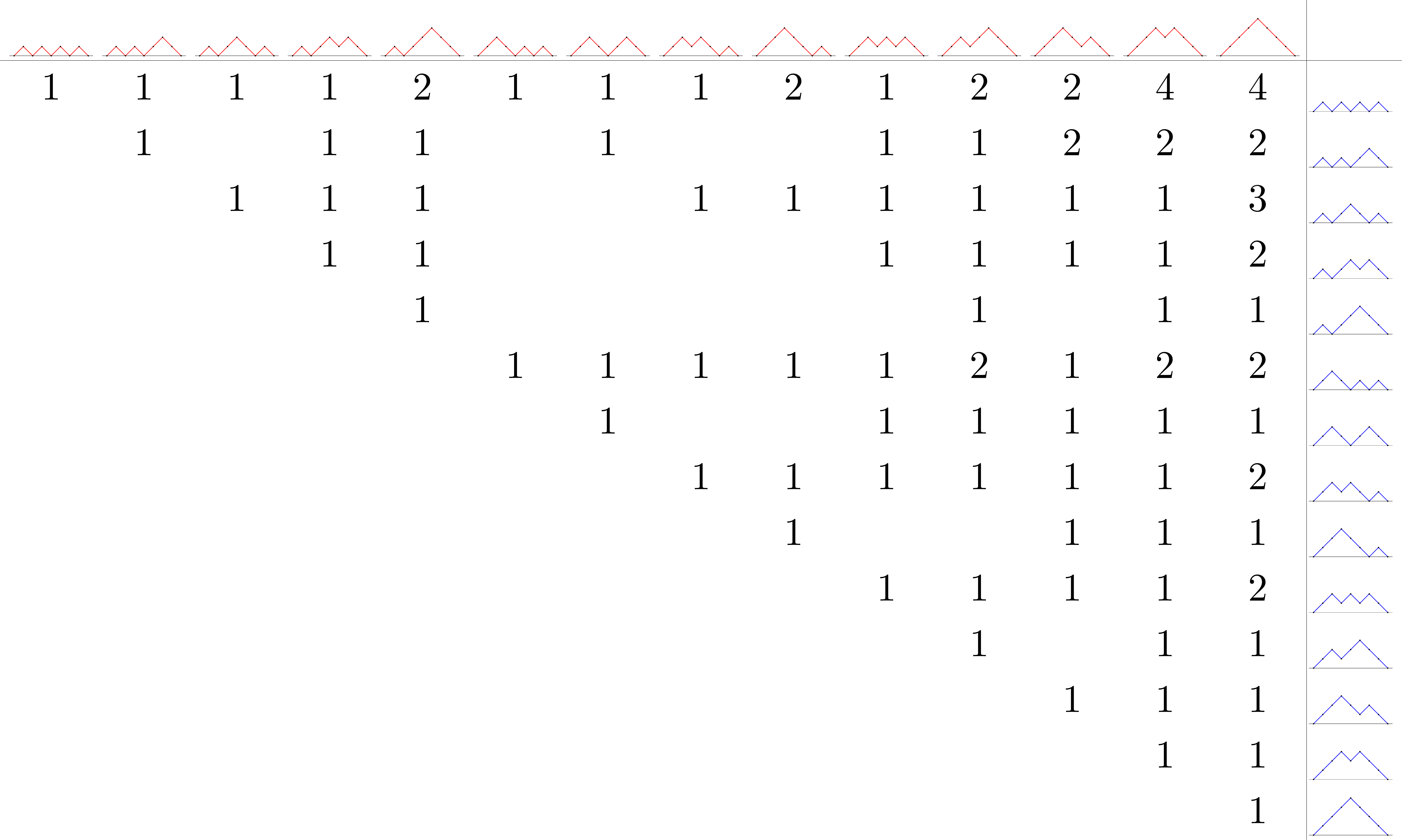}
\caption{\label{fig: CIDT matrices 4}
The matrix $\Minv$ counting the cover-inclusive Dyck tilings in
Figure~\ref{fig: CIDT matrices 4 pics}. 
}
\end{figure}

\subsection{\label{subsec: Wedges, slopes, and link removals}Wedges, slopes, and link removals}
A recurrent topic in this article is cascade properties of random interface models
(branches in the uniform spanning tree in Section~\ref{sec: applications to UST}
and multiple SLEs in Section~\ref{sec: applications to SLEs}),
with the following interpretation. Suppose that $N$ interfaces connect $2N$
boundary points $p_1 , \ldots, p_{2N}$ in a planar domain according to a link
pattern $\alpha \in \LP_N$, which contains a link $\link{j}{j+1} \in \alpha$ between
two consecutive points.
If we let the endpoints $p_j$ and $p_{j+1}$ of the corresponding curve approach
each other, then the other $N-1$ random curves are described by the random interface model 
in which the curves connect the remaining points
$p_1, \ldots, p_{j-1},p_{j+2},\ldots p_{2N}$ according to
a link pattern obtained from $\alpha$ by removing the link $\link{j}{j+1}$.
The rest of this section considers the combinatorics of such link removals.

If a link pattern $\alpha \in \LP_N$ has a link $\link{j}{j+1} \in \alpha$
of the above kind, then in the corresponding balanced parenthesis expression $\alpha \in \BPE_N$,
the $j$:th and $(j+1)$:st parentheses form a matching pair, i.e., we have
$\alpha = \BPEfont{X()Y}$ for some parenthesis sequences
$\BPEfont{X}$ and $\BPEfont{Y}$ of lengths $j-1$ and $2N-j-1$. 
Then, we denote by $\alpha \removeupwedge{j} = \BPEfont{XY} \in \BPE_{N-1}$ the
balanced parenthesis expression with this one matching pair 
removed. As usual, we use the same notation for link patterns and Dyck paths,
and call the operation $\alpha \mapsto \alpha \removeupwedge{j}$
the link removal of $\alpha$.
Figure~\ref{fig: link removal} illustrates this with all three equivalent
combinatorial objects.

\begin{figure}[h!]
\includegraphics[width = 0.35\textwidth]{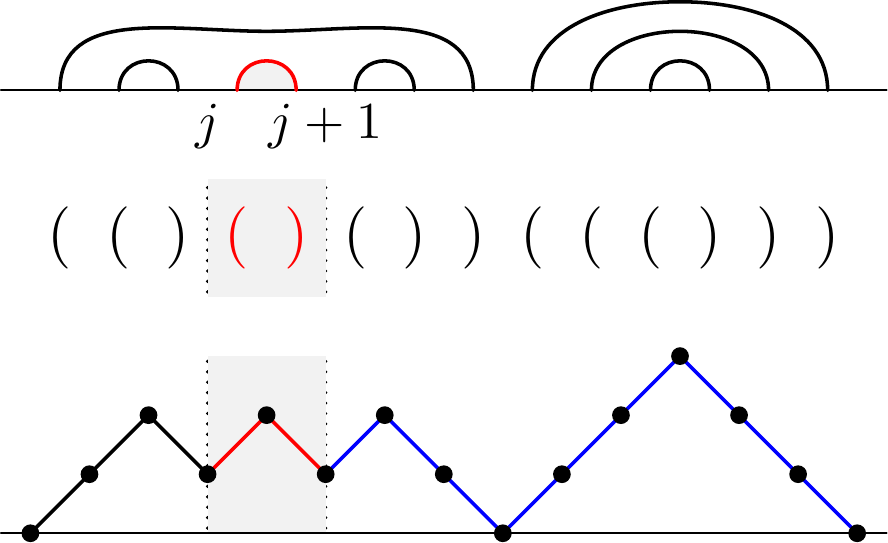} \hspace{2cm}
\includegraphics[width = 0.35\textwidth]{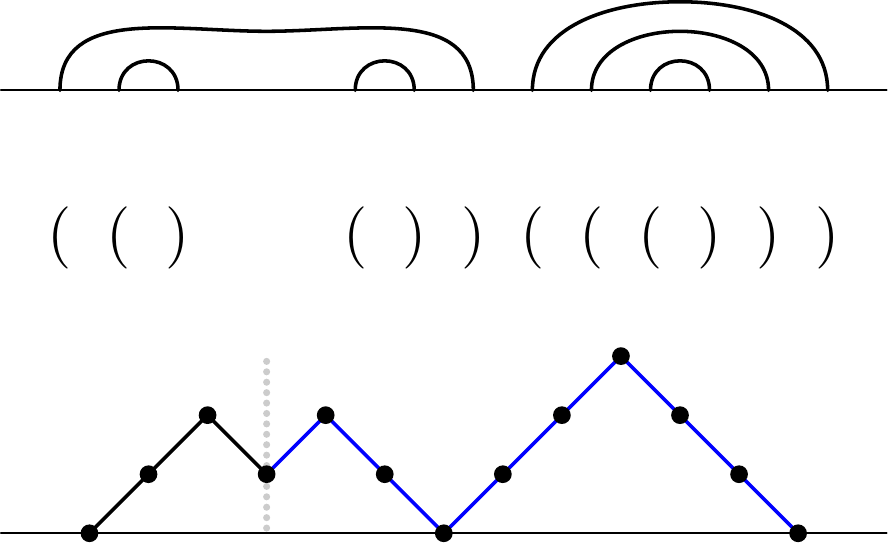}
\caption{\label{fig: link removal} 
Link removal and its interpretation in terms of parentheses and Dyck paths.}
\end{figure}

We usually formulate our combinatorial results in terms of Dyck paths. 
Then, a link between $j$ and $j+1$ corresponds with
an up-step followed by a down-step, so $\link{j}{j+1} \in \alpha \in \LP_N$
is equivalent to $j$ being a local maximum of the Dyck path $\alpha \in \DP_N$.
In this situation, we say that $\alpha$ has an \textit{up-wedge} at $j$ and denote 
$\upwedgeat{j} \in \alpha$. 
\textit{Down-wedges} $\downwedgeat{j}$ are defined analogously, and an unspecified 
local extremum is called a \textit{wedge} $\wedgeat{j}$. Otherwise, we say that $\alpha$ 
has a \textit{slope} at $j$, denoted by $\slopeat{j} \in \alpha$.

For Dyck paths, the link removal $\alpha \mapsto \alpha \removeupwedge{j}$
could alternatively be called an up-wedge removal, and 
one can define a completely analogous down-wedge removal $\alpha \mapsto \alpha \removedownwedge{j}$.
Occasionally, it is not important to specify the type of wedge that is removed, so
whenever $\alpha$ has either type of local extremum at $j$, we denote by $\alpha \removewedge{j} \in \DP_{N-1}$ the two steps
shorter Dyck path obtained by removing the two steps around that local extremum.
Wedge removals are depicted in Figure~\ref{fig: wedge removal}.

\begin{figure}[h!]
\includegraphics[width = 0.35\textwidth]{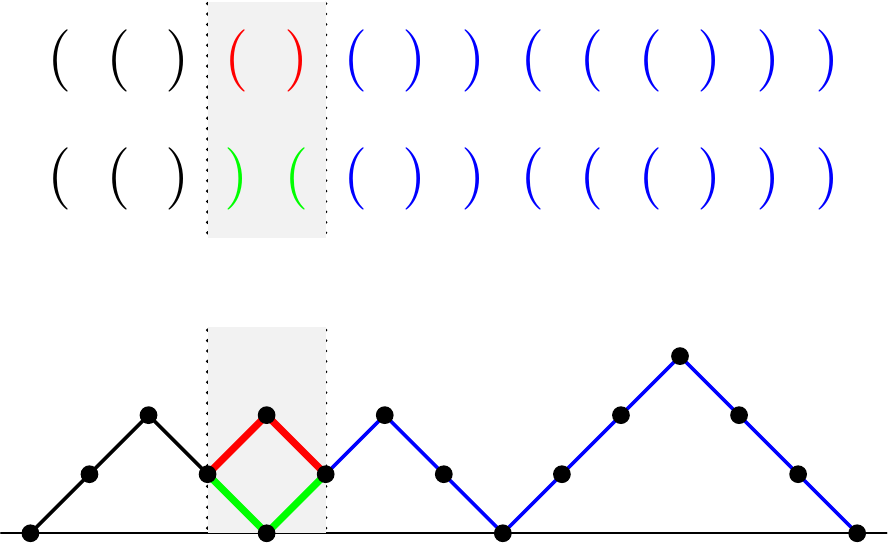}  \hspace{2cm}
\includegraphics[width = 0.35\textwidth]{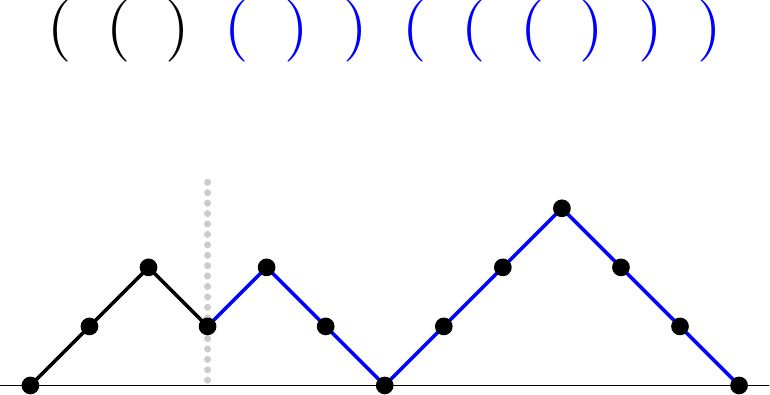}
\caption{\label{fig: wedge removal} Wedge removal.}
\end{figure}

Finally, when $\alpha$ has a down-wedge,
$\downwedgeat{j} \in \alpha$, we define the \textit{wedge-lifting operation} 
$\alpha \mapsto \alpha \wedgelift{j}$ by letting $\alpha \wedgelift{j}$ be the 
Dyck path obtained by converting the down-wedge $\downwedgeat{j}$ in $\alpha$ 
into an up-wedge $\upwedgeat{j}$. Note that for the  balanced
parenthesis expression  $\alpha$, the property $\downwedgeat{j} \in \alpha$ is equivalent to that 
the $j$:th and $(j+1)$:st parentheses are \BPEfont{)(}, and a wedge-lift converts 
them to a matching pair \BPEfont{()}.

The following two lemmas will be needed later in this section.
\begin{lem}
\label{lem: pairs}
Assume that $\upwedgeat{j} \not \in \alpha$ and $\downwedgeat{j} \in \beta$. 
Then, we have $\alpha \DPleq \beta$ if and only if $\alpha \DPleq \beta \wedgelift{j}$.
\end{lem}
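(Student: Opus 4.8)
The plan is to reduce the claim to a direct comparison of the height functions of the Dyck paths, using the explicit effect of the wedge-lift. First I would record the local profiles. Since $\downwedgeat{j} \in \beta$, the path $\beta$ has a local minimum at $j$, so $\beta(j-1) = \beta(j+1) = \beta(j) + 1$, and by definition the wedge-lift alters only the height at $j$: one has $(\beta\wedgelift{j})(i) = \beta(i)$ for every $i \neq j$ and $(\beta\wedgelift{j})(j) = \beta(j) + 2$. In particular $\beta(i) \le (\beta\wedgelift{j})(i)$ for all $i$, i.e. $\beta \DPleq \beta\wedgelift{j}$. The forward implication is then immediate and uses nothing about $\alpha$: if $\alpha \DPleq \beta$, then transitivity of $\DPleq$ together with $\beta \DPleq \beta\wedgelift{j}$ gives $\alpha \DPleq \beta\wedgelift{j}$.

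The substance is the converse, and this is where $\upwedgeat{j} \notin \alpha$ enters. Assuming $\alpha \DPleq \beta\wedgelift{j}$, I would note that for every index $i \neq j$ one already has $\alpha(i) \le (\beta\wedgelift{j})(i) = \beta(i)$, so the only inequality left to establish is $\alpha(j) \le \beta(j)$. Since $\alpha(j) \le (\beta\wedgelift{j})(j) = \beta(j) + 2$, it suffices to exclude $\alpha(j) \in \{\beta(j)+1,\, \beta(j)+2\}$. The value $\beta(j)+1$ is ruled out for free by parity: every Dyck path satisfies $w(j) \equiv j \pmod 2$, so $\alpha(j) \equiv \beta(j) \pmod 2$ and $\alpha(j) - \beta(j)$ is even.

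It then remains to exclude $\alpha(j) = \beta(j) + 2$, which is the key step. Suppose for contradiction that $\alpha(j) = \beta(j) + 2$. The inequalities at the neighbouring indices give $\alpha(j-1) \le \beta(j-1) = \beta(j)+1 = \alpha(j) - 1$ and similarly $\alpha(j+1) \le \alpha(j) - 1$, while the Dyck-path step constraint forces $\alpha(j-1) \ge \alpha(j) - 1$ and $\alpha(j+1) \ge \alpha(j) - 1$. Hence $\alpha(j-1) = \alpha(j+1) = \alpha(j) - 1$, meaning the $j$-th step of $\alpha$ is an up-step and the $(j+1)$-th a down-step, i.e. $\upwedgeat{j} \in \alpha$, contradicting the hypothesis. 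Therefore $\alpha(j) \le \beta(j)$, and combined with the inequalities at all other indices this yields $\alpha \DPleq \beta$. I expect no serious obstacle: the only point requiring care is the height bookkeeping at the three positions $j-1, j, j+1$, together with the observation that parity disposes of the off-by-one case, so that the up-wedge hypothesis is needed precisely, and only, to kill the off-by-two case.
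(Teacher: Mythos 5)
Your proof is correct: the pointwise height comparison, the parity argument ruling out $\alpha(j) = \beta(j)+1$, and the step-constraint contradiction forcing $\upwedgeat{j} \in \alpha$ in the off-by-two case are all sound. The paper's own proof simply states that the assertion is immediate upon drawing the Dyck paths, and your argument is exactly the rigorous bookkeeping behind that picture, so it is the same approach written out in full.
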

\begin{proof}
Drawing the Dyck paths, the assertion is immediate. 
\end{proof}

A useful reinterpretation of the above lemma is that if 
we have $\upwedgeat{j} \not \in \alpha$, then the Dyck paths $\beta$ such that
$\beta \DPgeq \alpha$ and $\wedgeat{j} \in \beta$ come in pairs,
one containing an up-wedge and the other a down-wedge at $j$.

The next lemma states in what sense the parenthesis reversal relation is preserved under wedge removals.

\begin{lem}
\label{lem: main wedge lemma}
Assume that $\upwedgeat{j} \in \beta$. Then, we have
$\alpha \KWleq \beta$ if and only if $\wedgeat{j} \in \alpha$ and $\alpha \removewedge{j} \KWleq \beta \removeupwedge{j}$.
\end{lem}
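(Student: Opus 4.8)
The plan is to argue directly from the definition of the parenthesis reversal relation, working with balanced parenthesis expressions. Recall that $\upwedgeat{j} \in \beta$ means the $j$:th and $(j+1)$:st parentheses of $\beta$ form a matching pair \BPEfont{()}, and that $\alpha \KWleq \beta$ holds precisely when $\alpha$ is obtained from $\beta$ by reversing all parentheses in some subset $B$ that is closed under taking matching pairs. The organizing observation is that, since positions $j, j+1$ constitute a single matching pair of $\beta$, the parentheses of $\alpha$ at those two positions are determined solely by whether this distinguished pair lies in $B$: if it does not, they remain \BPEfont{()} and $\upwedgeat{j} \in \alpha$; if it does, reversal turns them into \BPEfont{)(} and $\downwedgeat{j} \in \alpha$. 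In both cases $\wedgeat{j} \in \alpha$, so this dichotomy is exactly the hypothesis $\wedgeat{j} \in \alpha$, and it controls the whole correspondence.

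First I would prove the forward implication. Assuming $\alpha \KWleq \beta$ with reversal set $B$, the observation above gives $\wedgeat{j} \in \alpha$ immediately. Deleting the two positions $j, j+1$ from both expressions yields $\beta \removeupwedge{j}$ and $\alpha \removewedge{j}$; since every other matching pair of $\beta$ is disjoint from $\{j,j+1\}$, the set obtained from $B$ by discarding the pair $\{j,j+1\}$ (if present) is a matching-closed subset of $\beta \removeupwedge{j}$, and reversing it produces $\alpha \removewedge{j}$. Hence $\alpha \removewedge{j} \KWleq \beta \removeupwedge{j}$.

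For the converse I would reverse this construction. Given a matching-closed reversal subset $B''$ of $\beta \removeupwedge{j}$ taking it to $\alpha \removewedge{j}$, I reinsert the pair \BPEfont{()} at positions $j, j+1$ to recover $\beta$; the pairs of $B''$ then correspond to matching pairs of $\beta$ not meeting $\{j,j+1\}$. If $\upwedgeat{j} \in \alpha$, reversing exactly this transported set returns $\alpha$ (the reinserted \BPEfont{()} is left untouched). If instead $\downwedgeat{j} \in \alpha$, I additionally include the pair $\{j,j+1\}$ in the reversal set, so that the reinserted \BPEfont{()} is flipped to the down-wedge \BPEfont{)(} of $\alpha$. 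In either case the reversal set is matching-closed, so $\alpha \KWleq \beta$.

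The routine-but-essential checks to carry out are that deleting and reinserting the matching pair at $j, j+1$ preserves the matching structure of all other pairs (immediate, as \BPEfont{()} is innermost) and that the resulting objects stay well-defined: removing a \BPEfont{()} from a balanced expression gives an element of $\BPE_{N-1}$, and removing the two steps around the local minimum of $\alpha$ in the down-wedge case yields a genuine Dyck path in $\DP_{N-1}$. The only real subtlety, and the point I would treat most carefully, is the bookkeeping of the case split: verifying that membership of the pair $\{j,j+1\}$ in the reversal set is equivalent to $\alpha$ carrying a down-wedge rather than an up-wedge at $j$, and that the induced correspondence between reversal subsets of $\beta$ and of $\beta \removeupwedge{j}$ is precisely the one described in each case. (An alternative route through the nested-tiling characterization of Lemma~\ref{lem: nested tilings and KW relation} is possible, but the geometric bookkeeping of the skew shapes makes it less transparent than the direct argument above.)
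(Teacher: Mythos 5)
Your proof is correct and follows essentially the same route as the paper's: both arguments work directly from the definition of $\KWleq$, observe that the matching pair of $\beta$ at positions $j,j+1$ is either left intact (giving $\upwedgeat{j} \in \alpha$) or reversed (giving $\downwedgeat{j} \in \alpha$), and use the bijection between the remaining matching pairs of $\beta$ and those of $\beta \removeupwedge{j}$. The paper's proof is merely a terser statement of exactly these facts; your version spells out the two directions and the well-definedness checks explicitly.
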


\begin{proof}
The $j$:th and $(j+1)$:st parentheses are a matching pair \BPEfont{()} in $\beta$. 
When $\alpha \KWleq \beta$, then the reversal of matching pairs
either leaves the $j$:th and $(j+1)$:st parentheses unchanged as \BPEfont{()},
or reverses them to \BPEfont{)(}.
In both cases, $\alpha$ contains the wedge $\wedgeat{j}$ and $\alpha \removewedge{j}$ 
can be defined. All other matching pairs of parentheses in $\beta$ correspond 
bijectively with the matching pairs of $\beta \removeupwedge{j}$.
\end{proof}

\subsection{\label{subsec: cascades}Cascades of weighted incidence matrices}
In this section, we establish a 
characterization of weighted incidence matrices
by a recursion property under wedge removals, that will be needed in~\cite{KKP-companion}.
This property holds 
whenever the weights $w(t)$ of the Dyck tiles $t$ only depend on the height $h_t$ of the tile:
$w(t) = f(h_t)$ for some function $f \colon \Z_+ \to \C$.

Cascade Recursion~\eqref{eq: recurrence} captures how
weighted incidence matrix elements change under the removal
of a wedge.
Note that, for fixed $j \in \set{1,\ldots,2N-1}$,
the wedge removal gives rise to the natural bijection 
$\beta \mapsto \beta \removeupwedge{j}$ between 
the elements of $\DP_N$ containing the up-wedge $\upwedgeat{j}$, 
and the elements of $\DP_{N-1}$.
Furthermore, by Lemma \ref{lem: main wedge lemma},
the parenthesis reversal relation is preserved under wedge removals:
if $\hat{\alpha} = \alpha \removewedge{j}$ and
$\hat{\beta} = \beta \removeupwedge{j}$
are the wedge removals of $\alpha$ and $\beta$, then
$\alpha \KWleq \beta$ if and only if $\hat{\alpha} \KWleq \hat{\beta}$.
The Cascade Recursion
expresses the incidence matrix entry at $(\alpha,\beta)$ 
in terms of that at $(\hat{\alpha},\hat{\beta})$.



Consider a collection of matrices $( M^{(N)} )_{N \ge 1}$, with 
$M^{(N)} = \big( M^{(N)}_{\alpha, \beta} \big) \in \C^{\DP_N \times \DP_N}$.
This collection is said to satisfy the \textit{Cascade Recursion} if
for any $\alpha , \beta \in \DP_N$ and any $j \in \set{1,\ldots,2N-1}$ 
such that  $\upwedgeat{j} \in \beta$, we have
\begin{align}\label{eq: recurrence}
M_{\alpha,\beta}^{(N)} = \; & \begin{cases} 
0 & \text{if } \alpha \not \KWleq \beta \\ 
M_{\hat{\alpha},\hat{\beta}}^{(N-1)}
& \text{if } \alpha \KWleq \beta \text{ and } \upwedgeat{j} \in \alpha \\ 
- f \big( \walk(j)+1 \big) \times
M_{\hat{\alpha},\hat{\beta}}^{(N-1)} 
& \text{if } \alpha \KWleq \beta \text{ and } \downwedgeat{j} \in \alpha, 
\end{cases}
\end{align}
where we denote by $\hat{\alpha} = \alpha \removewedge{j} \in \DP_{N-1}$ and 
$\hat{\beta} = \beta \removeupwedge{j} \in \DP_{N-1}$.

\begin{lem}\label{lem: KW cascades}
The Cascade Recursion~\eqref{eq: recurrence} has a unique solution 
$ ( M^{(N)} )_{N \ge 1}$ with the initial condition $M^{(1)} = 1$,
given by a weighted incidence 
matrix with tile weights determined by heights:
\begin{align}\label{eq: matrix solving the recursion}
M^{(N)}_{\alpha, \beta} = \begin{cases}
\prod_{t \in \nestedtilingof (\alpha / \beta)} (-f(h_t)) & \text{if }\alpha \KWleq \beta\\
0 & \text{otherwise}.
\end{cases}
\end{align}
\end{lem}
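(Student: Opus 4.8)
The plan is to prove the two assertions of the lemma—uniqueness of the solution, and that the explicit formula~\eqref{eq: matrix solving the recursion} is that solution—separately, and to carry the second by a case analysis keyed to Lemmas~\ref{lem: main wedge lemma} and~\ref{lem: nested tilings and KW relation}. For \emph{uniqueness} I would induct on $N$. When $N=1$ the set $\DP_1$ has a single element and the initial condition $M^{(1)}=1$ fixes the lone entry. For $N\ge 2$, every $\beta\in\DP_N$ is a Dyck path of length $2N\ge 4$ and hence has at least one local maximum, i.e. an up-wedge $\upwedgeat{j}\in\beta$. Fixing such a $j$, the Cascade Recursion~\eqref{eq: recurrence} expresses each entry $M^{(N)}_{\alpha,\beta}$ directly in terms of $M^{(N-1)}_{\hat\alpha,\hat\beta}$, where $\hat\beta=\beta\removeupwedge{j}$ and $\hat\alpha=\alpha\removewedge{j}$. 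By Lemma~\ref{lem: main wedge lemma}, whenever $\alpha\KWleq\beta$ we automatically have $\wedgeat{j}\in\alpha$, so the three cases of~\eqref{eq: recurrence} (namely $\alpha\not\KWleq\beta$, and $\alpha\KWleq\beta$ with either $\upwedgeat{j}\in\alpha$ or $\downwedgeat{j}\in\alpha$) are mutually exclusive and exhaustive. Thus $M^{(N)}$ is completely determined by $M^{(N-1)}$, and uniqueness follows.

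It then remains to verify that the matrix defined by~\eqref{eq: matrix solving the recursion} satisfies the recursion for \emph{every} admissible triple $(\alpha,\beta,j)$ with $\upwedgeat{j}\in\beta$. If $\alpha\not\KWleq\beta$ the formula gives $0$, matching the first case of~\eqref{eq: recurrence}. If $\alpha\KWleq\beta$, then Lemma~\ref{lem: main wedge lemma} yields $\hat\alpha\KWleq\hat\beta$, and by Lemma~\ref{lem: nested tilings and KW relation} both nested tilings $\nestedtilingof(\alpha/\beta)$ and $\nestedtilingof(\hat\alpha/\hat\beta)$ exist; moreover their tiles correspond to the matching pairs of $\beta$ (resp. $\hat\beta$) reversed along the nested chains to $\alpha$ (resp. $\hat\alpha$). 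The crux is to compare these two tilings. I would argue that removing the up-wedge of $\beta$ at $j$ deletes the column over position $j$ without vertically displacing any tile sitting over the other positions, so that every tile away from $j$ survives the removal with unchanged shape and unchanged height $h_t$, and the remaining matching pairs of $\beta$ match up bijectively with those of $\hat\beta$.

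From here the two subcases are immediate. If $\upwedgeat{j}\in\alpha$, the matching pair of $\beta$ at $(j,j{+}1)$ is not reversed in the nested chain, contributes no tile, and the bijection of the remaining pairs shows $\nestedtilingof(\alpha/\beta)$ and $\nestedtilingof(\hat\alpha/\hat\beta)$ carry identical multisets of tile heights; hence $\prod_t(-f(h_t))$ is unchanged and $M^{(N)}_{\alpha,\beta}=M^{(N-1)}_{\hat\alpha,\hat\beta}$, as required. If $\downwedgeat{j}\in\alpha$, the pair at $(j,j{+}1)$ \emph{is} reversed; since it is an adjacent pair \BPEfont{()}, this reversal sweeps exactly one atomic square, producing a single additional tile sitting over the valley of $\alpha$ at $j$, whose bottom corners lie at height $\alpha(j)+1=\walk(j)+1$. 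This contributes precisely the extra factor $-f\big(\walk(j)+1\big)$, giving $M^{(N)}_{\alpha,\beta}=-f\big(\walk(j)+1\big)\,M^{(N-1)}_{\hat\alpha,\hat\beta}$. Combining the three cases shows that~\eqref{eq: matrix solving the recursion} satisfies~\eqref{eq: recurrence}, and together with $M^{(1)}=1$ it is therefore the unique solution.

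The hard part will be the geometric bookkeeping underlying the previous two paragraphs: one must check that wedge removal at $j$ neither creates nor destroys tiles over the other positions and preserves all their heights, and that in the valley case the single new tile has height exactly $\walk(j)+1$. I expect this to fall out cleanly from the recursive ``top-layer'' description of nested tilings (each component's top layer is a single tile, peeled off recursively) together with the tile-versus-reversal correspondence of Lemma~\ref{lem: nested tilings and KW relation}, with the height of the new tile best read off directly from the figures illustrating wedge removals. Everything else—the exhaustiveness of the cases, the reduction $\beta\mapsto\hat\beta$, and the inductive uniqueness—is routine given the lemmas already established.
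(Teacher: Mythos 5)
Your proof is correct and matches the paper's own argument essentially step for step: uniqueness by induction on $N$ using the fact that every $\beta \in \DP_N$ has an up-wedge, and existence by the same three-case verification, where the down-wedge case is accounted for by the single atomic square tile at position $(j,\alpha(j)+1)$ sitting on the valley of $\alpha$ (the paper merely routes this case through the wedge-lift, writing $\nestedtilingof(\alpha/\beta) = \nestedtilingof((\alpha\wedgelift{j})/\beta) \cup \set{t_0}$ and then invoking the up-wedge case, which is the same computation in a different order). The one imprecision is your claim that tiles surviving the wedge removal keep their \emph{shapes}: tiles nested around the removed wedge lose their squares over positions $j$ and $j+1$, so only their heights $h_t$ are preserved — but since the weights depend on heights alone, this does not affect the argument.
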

\begin{proof}
Suppose first that $( M^{(N)} )_{N \ge 1}$ and $( \widetilde{M}^{(N)} )_{N \ge 1}$
are two solutions to the Cascade Recursion~\eqref{eq: recurrence}.
We show by induction on $N$ that $M^{(N)} = \widetilde{M}^{(N)}$.
The case $N=1$ is just the initial condition $M^{(1)} = 1 = \widetilde{M}^{(1)}$
of the recursion. Assume then that the matrices
$M^{(N-1)} = \widetilde{M}^{(N-1)}$ coincide. Then, for any $\alpha,\beta \in \DP_N$,
choosing $j$ such that $\upwedgeat{j} \in \beta$ (such $j$ always exists),
it follows from the recursion~\eqref{eq: recurrence} that 
$M^{(N)}_{\alpha, \beta} - \widetilde{M}^{(N)}_{\alpha, \beta} = 0$.
Thus, the solution to the recursion~\eqref{eq: recurrence} is necessarily unique.

It remains to prove that the matrix~\eqref{eq: matrix solving the recursion} satisfies 
the recursion~\eqref{eq: recurrence}. 
The initial condition $M^{(1)} = 1$ is obviously satisfied. 
Fix $\alpha, \beta \in \DP_N$, and let $j \in \set{1,\ldots,2N-1}$
be such that $\upwedgeat{j} \in \beta$. 
We may assume that $\alpha \KWleq \beta$. 
Then, by Lemma~\ref{lem: main wedge lemma}, we have 
$\wedgeat{j} \in \alpha$ and 
$\alpha \removewedge{j} \KWleq \beta \removeupwedge{j}$.
%

Suppose first that $\upwedgeat{j} \in \alpha$. 
Then, 
the nested tilings $\nestedtilingof (\alpha / \beta)$ and
$\nestedtilingof (\hat{\alpha} / \hat{\beta})$
of the skew shapes $\alpha / \beta$ and 
$\hat{\alpha} / \hat{\beta}$, respectively, contain equally many tiles 
and the heights of the tiles are equal as well, 
see Figure~\ref{fig: matrix recursion lemma}. 
From this, we immediately get the asserted recursion~\eqref{eq: recurrence}
in the case $\alpha \KWleq \beta$ and $\upwedgeat{j} \in \alpha$:
\begin{align}\label{eq: recursion with up wedge}
M^{(N)}_{\alpha, \beta} 
= \prod_{t \in \nestedtilingof (\alpha / \beta)} (-f(h_t))
= \prod_{t \in \nestedtilingof (\hat{\alpha} / \hat{\beta})} (-f(h_t))
= M^{(N-1)}_{\hat{\alpha}, \hat{\beta}} .
\end{align}

\begin{figure}
\includegraphics[width = 0.3\textwidth]{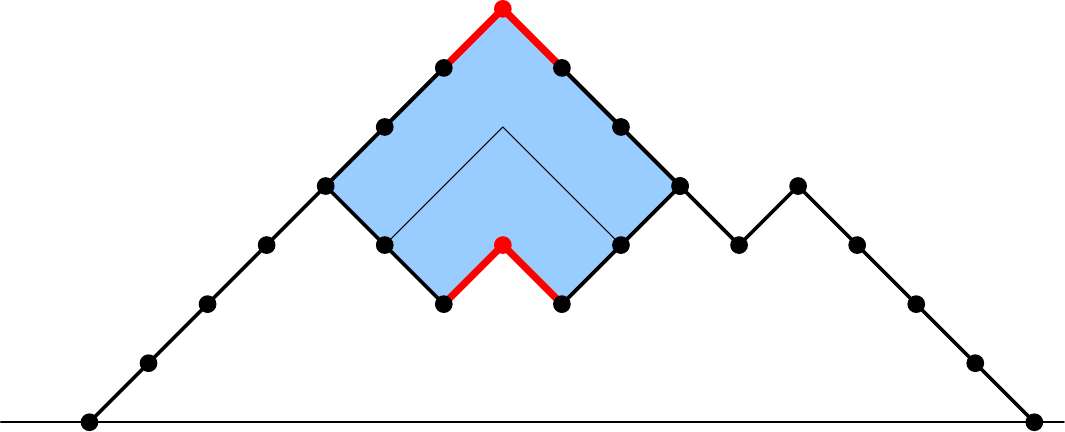}
\quad 
\includegraphics[width = 0.3\textwidth]{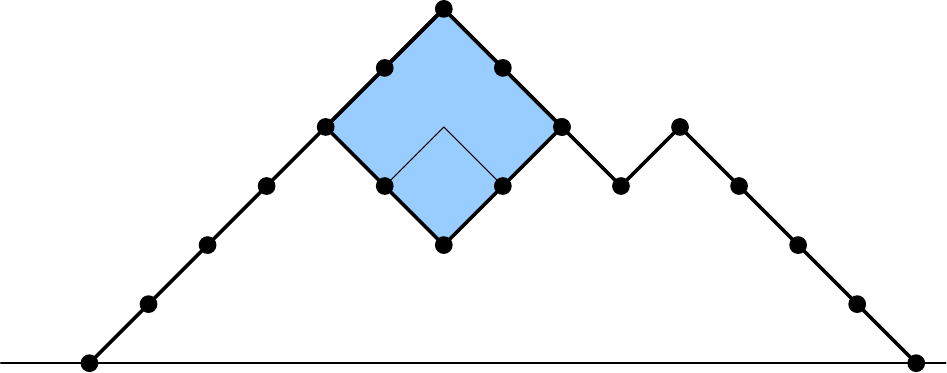}
\quad 
\includegraphics[width = 0.3\textwidth]{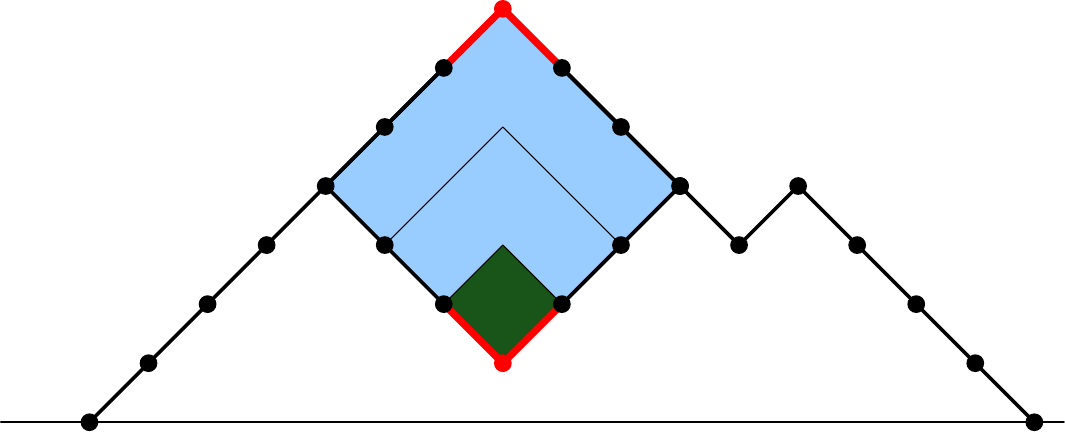}
\caption{\label{fig: matrix recursion lemma} 
The nested tilings {$\nestedtilingof (\alpha / \beta)$} 
and {$\nestedtilingof (\hat{\alpha} / \hat{\beta})$} 
of the skew shapes $\alpha / \beta$ and $\hat{\alpha} / \hat{\beta}$ 
in the cases when $\upwedgeat{j} \in \alpha$ 
(left and middle figure, respectively), and 
$\downwedgeat{j} \in \alpha$ (right and middle figure, respectively).}
\end{figure}

Suppose then that $\downwedgeat{j} \in \alpha$. In this case, the 
nested tiling $\nestedtilingof (\alpha / \beta)$ contains an atomic square 
tile $t_0$ at position $(x_{t_0},h_{t_0}) = (j,\alpha(j)+1)$ at the bottom of
$\nestedtilingof (\alpha / \beta )$,
as illustrated 
in Figure~\ref{fig: matrix recursion lemma}.
Removing the tile $t_0$ from $\nestedtilingof (\alpha / \beta)$,
we obtain the nested tiling 
$\nestedtilingof ((\alpha \wedgelift{j}) / \beta)$ of the skew shape 
$(\alpha \wedgelift{j}) / \beta$, that is,
$\nestedtilingof (\alpha / \beta) 
= \nestedtilingof ((\alpha \wedgelift{j}) / \beta) \cup \set{t_0}$.
Using this observation, the identity $h_{t_0} = \alpha(j) + 1$, and 
Equation~\eqref{eq: matrix solving the recursion}, we get 
\begin{align*}
M^{(N)}_{\alpha, \beta} 
= \prod_{t \in \nestedtilingof (\alpha / \beta)} (-f(h_t))
= -f( h_{t_0} ) \times  
\prod_{t \in \nestedtilingof ((\alpha \wedgelift{j}) / \beta)} (-f(h_t))
= -f( \alpha(j) + 1 ) \times M^{(N)}_{\alpha \wedgelift{j}, \beta} .
\end{align*}
To obtain the asserted recursion~\eqref{eq: recurrence}
in the case $\alpha \KWleq \beta$ and $\downwedgeat{j} \in \alpha$, it remains to note 
that Equation~\eqref{eq: recursion with up wedge} with $\upwedgeat{j} \in \alpha \wedgelift{j}$ gives
\begin{align*}
M^{(N)}_{\alpha \wedgelift{j}, \beta}
= M^{(N-1)}_{\widehat{\alpha \wedgelift{j}}, \hat{\beta}}
= M^{(N-1)}_{\hat{\alpha}, \hat{\beta}}.
\end{align*}
\end{proof}

The recursion~\eqref{eq: recurrence} can equivalently be cast in the following form, 
only referring to the local structure of the Dyck paths.
\begin{lem}\label{cor: recursion for matrix elements}
The Cascade Recursion relations~\eqref{eq: recurrence}
are equivalent to the following linear recursion relations: 
for any $N$, any $\alpha, \beta \in \DP_N$, and any
$j \in \set{1,\ldots,2N-1}$ such that 
$\upwedgeat{j} \in \beta$, we have
\begin{align}\label{eq: equivalent recursion}
M_{\walk,\beta}^{(N)} = \begin{cases} 
0 & \text{if } \slopeat{j} \in \walk \\
M_{\hat{\walk},\hat{\beta}}^{(N-1)}
& \text{if } \upwedgeat{j} \in \walk \\
- f(\walk(j)+1) \times
M_{\hat{\walk},\hat{\beta}}^{(N-1)} 
& \text{if } \downwedgeat{j} \in \walk,
\end{cases} 
\end{align}
where we denote by $\hat{\walk} = \walk \removewedge{j} \in \DP_{N-1}$ and 
$\hat{\beta} = \beta \removeupwedge{j} \in \DP_{N-1}$.
\end{lem}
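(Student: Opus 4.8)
The plan is to fix a collection $(M^{(N)})_{N \ge 1}$ and show that it satisfies~\eqref{eq: recurrence} if and only if it satisfies~\eqref{eq: equivalent recursion}, by matching the two recursions entry by entry for each fixed triple $(\alpha, \beta, j)$ with $\upwedgeat{j} \in \beta$. The bridge between the two formulations is Lemma~\ref{lem: main wedge lemma}, which (given $\upwedgeat{j} \in \beta$) states that $\alpha \KWleq \beta$ holds precisely when $\wedgeat{j} \in \alpha$ and $\hat{\alpha} \KWleq \hat{\beta}$, where $\hat{\alpha} = \alpha \removewedge{j}$ and $\hat{\beta} = \beta \removeupwedge{j}$. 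I would extract from it the two facts used throughout: if $\slopeat{j} \in \alpha$ then necessarily $\alpha \not\KWleq \beta$, and if instead $\wedgeat{j} \in \alpha$ then $\alpha \KWleq \beta$ is equivalent to $\hat{\alpha} \KWleq \hat{\beta}$. Thus the local shape of $\alpha$ at $j$ --- slope, up-wedge, or down-wedge --- is exactly what selects the branch of~\eqref{eq: recurrence} that applies.

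First I would record a vanishing property that both recursions enforce: $M^{(N)}_{\alpha, \beta} = 0$ whenever $\alpha \not\KWleq \beta$. For~\eqref{eq: recurrence} this is immediate, as any up-wedge $\upwedgeat{j} \in \beta$ (one always exists) lands us in the first branch. For~\eqref{eq: equivalent recursion} I would argue by induction on $N$: the statement is vacuous for $N = 1$ since $\DP_1$ is a singleton, and for $N \ge 2$, choosing $j$ with $\upwedgeat{j} \in \beta$ and reading off~\eqref{eq: equivalent recursion}, the slope case gives $0$ outright, while in the up-wedge and down-wedge cases Lemma~\ref{lem: main wedge lemma} forces $\hat{\alpha} \not\KWleq \hat{\beta}$, so the induction hypothesis gives $M^{(N-1)}_{\hat{\alpha}, \hat{\beta}} = 0$ and hence $M^{(N)}_{\alpha, \beta} = 0$.

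With the vanishing property available, the equivalence reduces to a three-way case check on the local shape of $\alpha$ at $j$. When $\slopeat{j} \in \alpha$, we have $\alpha \not\KWleq \beta$, so both recursions return $0$. When $\upwedgeat{j} \in \alpha$ (respectively $\downwedgeat{j} \in \alpha$), I would split into $\alpha \KWleq \beta$ and $\alpha \not\KWleq \beta$: in the former the matching branch of~\eqref{eq: recurrence} produces exactly $M^{(N-1)}_{\hat{\alpha}, \hat{\beta}}$ (respectively $-f(\alpha(j)+1)\, M^{(N-1)}_{\hat{\alpha}, \hat{\beta}}$), agreeing with~\eqref{eq: equivalent recursion}; in the latter Lemma~\ref{lem: main wedge lemma} gives $\hat{\alpha} \not\KWleq \hat{\beta}$, so the vanishing property makes $M^{(N-1)}_{\hat{\alpha}, \hat{\beta}} = 0$ and both sides are again $0$.

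The one genuinely delicate point --- and the step I would treat most carefully --- is that the first branch of~\eqref{eq: recurrence} is keyed to the global relation $\alpha \not\KWleq \beta$, whereas~\eqref{eq: equivalent recursion} is phrased only through the local shape of $\alpha$ at $j$. Reconciling these is exactly what the vanishing property accomplishes, with Lemma~\ref{lem: main wedge lemma} supplying the translation between $\alpha \KWleq \beta$ and the reduced relation $\hat{\alpha} \KWleq \hat{\beta}$; once these are in place the remaining verification is routine bookkeeping across the three cases.
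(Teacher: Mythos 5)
Your proposal is correct and follows essentially the same route as the paper's proof: both hinge on Lemma~\ref{lem: main wedge lemma} to translate between the global relation $\alpha \KWleq \beta$ and the local shape of $\alpha$ at $j$, both observe that the two recursions literally coincide when $\alpha \KWleq \beta$, and both handle $\alpha \not\KWleq \beta$ by an induction on $N$ showing that~\eqref{eq: equivalent recursion} forces $M^{(N)}_{\alpha,\beta}=0$ (directly in the slope case, via $\hat{\alpha} \not\KWleq \hat{\beta}$ in the wedge case). Your version merely packages the induction as an explicit vanishing property before the case check, which is a presentational difference, not a mathematical one.
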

\begin{proof}
If $\alpha \KWleq \beta$, then $\wedgeat{j} \in \alpha$ by Lemma~\ref{lem: main wedge lemma},
so the content of Equations~\eqref{eq: recurrence} and~\eqref{eq: equivalent recursion}
is the same. If $\alpha \not\KWleq \beta$, it suffices to show
that~\eqref{eq: equivalent recursion}
implies $M^{(N)}_{\alpha,\beta} = 0$. 
In that case,
by Lemma~\ref{lem: main wedge lemma}, we either have 
$\wedgeat{j} \in \alpha$ and 
$\walk \removewedge{j} \not \KWleq \beta \removeupwedge{j}$, 
or $\slopeat{j} \in \alpha$. In both cases, the relations 
\eqref{eq: equivalent recursion} imply $M^{(N)}_{\walk, \beta} = 0$
--- the latter case is a defining property, and the former follows by induction on~$N$.
\end{proof}

\subsection{\label{subsec: inverse Fomin sums}Inverse Fomin type sums}

Let $\mathfrak{K} \colon \set{1,\ldots,2N} \times \set{1,\ldots,2N} \to \bC$
be a symmetric kernel: $\mathfrak{K}(i,j) = \mathfrak{K}(j,i)$. Let 
$\beta \in \LP_N$ be a link pattern and
$((a_\ell, b_\ell))_{\ell = 1}^N$ its left-to-right orientation, i.e.,
$a_1 < a_2 < \cdots < a_N$ and $a_\ell < b_\ell$ for all $\ell$.
We define the determinant of $\beta$ with the kernel $\mathfrak{K}$ as
\begin{align}
\label{eq: LPdet with general kernel}
\Delta^{\mathfrak{K}}_\beta := 
\det \Big( \mathfrak{K}(a_k, b_\ell) \Big)_{k,\ell=1}^N.
\end{align}
This makes sense even if the diagonal entries $\mathfrak{K}(i,i)$
of the kernel are not defined, since they do not appear in the determinants.
For a link pattern $\alpha$, we set
\begin{align}\label{eq: inverse Fomin-type sum}
\mathfrak{Z}^{\mathfrak{K}}_\alpha 
:= \sum_{ \beta \DPgeq \alpha} \# \CItilingsof (\alpha / \beta) \, \Delta^{\mathfrak{K}}_\beta ,
\end{align}
where $\# \CItilingsof (\alpha / \beta)$ is the number of cover-inclusive Dyck tilings
of the skew Young diagram~$\alpha/\beta$.
We call $\mathfrak{Z}^{\mathfrak{K}}_\alpha$ the \textit{inverse Fomin type sum} associated to $\alpha$.
We will see in Section~\ref{sec: applications to UST} that sums of this type give
connectivity probabilities in the uniform spanning tree as well as boundary visit probabilities of
the loop-erased random walk,
ultimately by virtue of Fomin's formula~\cite{Fomin-LERW_and_total_positivity}.
In the rest of this section, we prove properties of the inverse Fomin type 
sums~\eqref{eq: inverse Fomin-type sum} for the later purpose of analyzing the asymptotics and 
scaling limits of these probabilities.


We first prove that the coefficients $\# \CItilingsof (\alpha / \beta) $ of 
the determinants $\Delta^{\mathfrak{K}}_\beta$ in the inverse Fomin type sum
$\mathfrak{Z}^{\mathfrak{K}}_\alpha $
have the same value for the pairs 
of Dyck paths described after Lemma~\ref{lem: pairs}.

\begin{lem}\label{lem: numofC}
Assume that $\upwedgeat{j} \not \in \alpha$, $\downwedgeat{j} \in \beta$, and $\alpha \DPleq \beta$. 
Then we have 
$\# \CItilingsof (\alpha / \beta) = 
\# \CItilingsof \big(\alpha / (\beta \wedgelift{j}) \big)$.
\end{lem}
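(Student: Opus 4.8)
The plan is to exhibit an explicit bijection between $\CItilingsof(\alpha/\beta)$ and $\CItilingsof\big(\alpha/(\beta\wedgelift{j})\big)$. First I would record the geometric effect of the wedge-lift: since $\downwedgeat{j}\in\beta$, the path $\beta\wedgelift{j}$ agrees with $\beta$ except at the point $j$, where its value increases by $2$, producing an up-wedge $\upwedgeat{j}\in\beta\wedgelift{j}$. By Lemma~\ref{lem: pairs} we still have $\alpha\DPleq\beta\wedgelift{j}$, so the skew shape is defined, and it is obtained from $\alpha/\beta$ by adjoining exactly one atomic square $s_0$, sitting at the top of the column over $j$ (the apex of the new up-wedge). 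Thus $\alpha/(\beta\wedgelift{j}) = (\alpha/\beta)\sqcup\{s_0\}$, and everything reduces to controlling how a cover-inclusive tiling can use $s_0$.

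The bijection I have in mind is simply adjoining or deleting $s_0$ as a singleton tile, and the whole proof hinges on one claim: in every cover-inclusive tiling of $\alpha/(\beta\wedgelift{j})$, the square $s_0$ forms a singleton tile. Granting this, deleting $\{s_0\}$ from such a tiling yields a tiling of $\alpha/\beta$, which is automatically cover-inclusive since removing a tile only removes constraints; conversely, adjoining $\{s_0\}$ as a singleton to any $T\in\CItilingsof(\alpha/\beta)$ stays cover-inclusive, because the only new covering relations are those of $\{s_0\}$ over tiles meeting the column over $j$, and the singleton's horizontal extent is contained in the extent of any such tile. These two operations are mutually inverse, giving the asserted equality of cardinalities.

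The hard part is the claim, which is where the hypothesis $\upwedgeat{j}\notin\alpha$ must enter. I would argue by contradiction via a cascade down the column over $j$. Suppose $s_0$ lies in a tile $t_0$ with at least two squares. Being the highest square of the region over $j$ and sitting at the apex of $\upwedgeat{j}\in\beta\wedgelift{j}$, the square $s_0$ is necessarily a peak of $t_0$, so $t_0$ contains the two flanking squares one level down at $j-1$ and $j+1$; in particular its horizontal extent covers these three columns. Now list the tiles $t^{(1)}=t_0, t^{(2)},\ldots,t^{(H)}$ that meet the column over $j$, read from top to bottom (they are distinct, as a Dyck tile meets each column at most once). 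Each $t^{(k)}$ covers $t^{(k+1)}$, so cover-inclusivity nests their horizontal extents increasingly as we descend; combined with disjointness of tiles, an induction shows that every $t^{(k)}$ peaks over $j$, with its flanks at $j\pm1$ lying exactly one level below its column-$j$ square. Applying this to the bottommost tile $t^{(H)}$, whose column-$j$ square rests on $\alpha$, forces its flanks to sit one level below over $j-1$ and $j+1$; these lie in the skew shape only if $\alpha(j-1)=\alpha(j)-1=\alpha(j+1)$, that is only if $\upwedgeat{j}\in\alpha$. This contradicts the hypothesis and proves the claim.

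The main obstacle is making this cascade induction airtight: precisely, checking that at each step the only position available to $t^{(k+1)}$ at the neighbouring columns is one level below, the level above being already occupied by the flank of $t^{(k)}$, so that the ``peak over $j$'' property propagates all the way down to the boundary $\alpha$. Once this local bookkeeping is done, the contradiction with $\upwedgeat{j}\notin\alpha$ is immediate, and the add/delete-singleton bijection of the first two paragraphs closes the proof.
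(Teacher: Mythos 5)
Your proof is correct and takes essentially the same approach as the paper: the same add/delete-singleton bijection, resting on the same key claim that every cover-inclusive tiling of $\alpha / (\beta \wedgelift{j})$ must use the lifted square as an atomic tile, proved by the same contradiction in which three-square peaks (``$\Lambda$-shapes'') are forced to cascade down column $j$ until they reach $\alpha$, which would require $\upwedgeat{j} \in \alpha$. The only difference is that you spell out the disjointness-plus-extent induction that the paper compresses into ``stacking such shapes until they touch $\alpha$''.
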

\begin{proof}
The equality of the cardinalities is shown by giving a bijection
between the sets $\CItilingsof (\alpha / \beta )$
and $\CItilingsof \big(\alpha / (\beta \wedgelift{j}) \big)$ of 
cover-inclusive Dyck tilings of the two
skew Young diagrams. The only difference between the diagrams is that
$\alpha / (\beta \wedgelift{j})$ contains exactly one atomic square
more than $\alpha / \beta$.
The bijection is defined by adding to a tiling of $\alpha / \beta$
the tile formed by this atomic square.

Clearly such extensions of tilings in $\CItilingsof (\alpha / \beta )$
produce $\# \CItilingsof (\alpha / \beta )$ distinct elements of the set
$\CItilingsof \big( \alpha / (\beta \wedgelift{j}) \big)$, so it remains to
prove that all cover-inclusive tilings of $\alpha / (\beta \wedgelift{j})$
must have an atomic square tile at the lifted square. Consider a Dyck
tiling $S$ of  $\alpha / (\beta \wedgelift{j})$ not satisfying this property.
Then, the tile covering the lifted square contains at least a ``three-square
$\Lambda$-shape''  growing down from the lifted square. 
In order for $S$ to be cover-inclusive, also all tiles  below the
``three-square $\Lambda$-shape''
would have to contain a lowered ``three-square $\Lambda$-shape''. Stacking
such shapes until
they touch $\alpha$ as in Figure~\ref{fig: beta lifted},
we notice that $S$ could only be cover-inclusive if $\upwedgeat{j} \in \alpha$,
which is ruled out by our assumption $\upwedgeat{j} \notin \alpha$.
\end{proof}
\begin{figure}[h!]
\includegraphics[width = 0.35\textwidth]{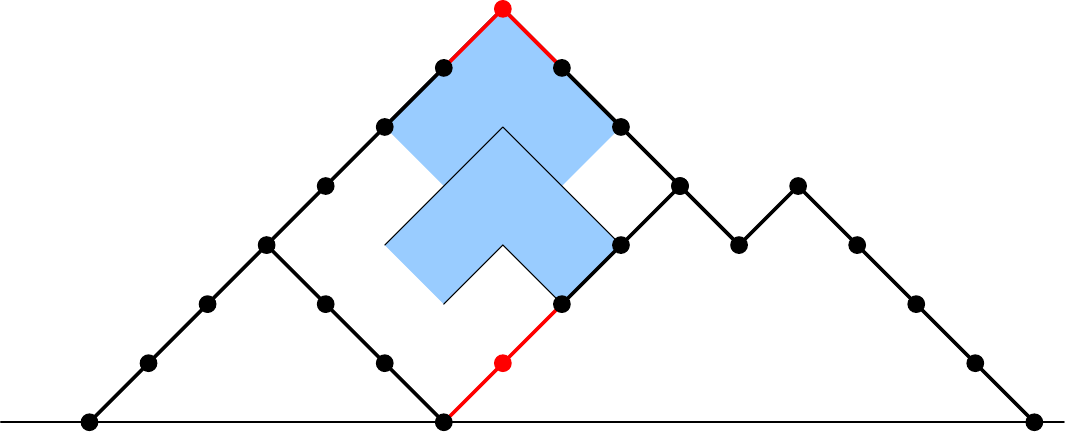}
\caption{\label{fig: beta lifted} 
Illustration of a situation in the proof of Lemma~\ref{lem: numofC}
where a Dyck tiling of the skew-shape {$\alpha / (\beta \wedgelift{j})$} would
not have an atomic square tile at the lifted square. 
The Dyck tiling cannot be cover-inclusive, because 
{$\upwedgeat{j} \not \in \alpha$}.}
\end{figure}


We then prove that the determinants $\Delta^{\mathfrak{K}}_\beta $ in 
the inverse Fomin type sum~\eqref{eq: inverse Fomin-type sum}
do not change too much in the wedge-lifting operation either.
\begin{lem}
\label{lem: wedge-lifts in link patterns}
Let $((a_\ell, b_\ell))_{\ell =1}^N$ be the left-to-right orientation of
the link pattern $\beta \in \LP_N$, and let $\downwedgeat{j} \in \beta$, 
so that $j = b_s$ and $j+1 = a_r$ for some 
$s < r$. Let $((a_\ell', b_\ell'))_{\ell=1}^N$ be the left-to-right orientation 
of $\beta \wedgelift{j}$. Then we have
\[
a_k' = \begin{cases} a_k & \text{ for } k \neq r 
\\ b_s & \text{ for } k = r \end{cases}
\qquad \text{ and } \qquad 
b_\ell' = \begin{cases} b_\ell & \text{ for } \ell \neq r, s \\ 
a_r & \text{ for } \ell = r \\ 
b_r & \text{ for } \ell = s \end{cases} .
\]
%
\end{lem}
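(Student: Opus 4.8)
The plan is to argue entirely with the balanced parenthesis expression of $\beta$ and to track how the unique matching structure is altered by the single local modification at positions $j$ and $j+1$. First I would record the parenthesis data near the down-wedge: since $\downwedgeat{j} \in \beta$, the $j$:th and $(j+1)$:st parentheses read $\BPEfont{)(}$, where the closing parenthesis at $j = b_s$ is the right endpoint of the link $\link{a_s}{b_s}$ and the opening parenthesis at $j+1 = a_r$ is the left endpoint of the link $\link{a_r}{b_r}$. Since $a_s < b_s = j < j+1 = a_r$ and the left endpoints increase, this already forces $s < r$, consistent with the hypothesis. Writing out the two matched blocks, $\beta$ reads locally $\cdots \BPEfont{(Y)(Z)} \cdots$, with the displayed opening parentheses at $a_s$ and $a_r = j+1$, the displayed closing parentheses at $b_s = j$ and $b_r$, and $\BPEfont{Y}, \BPEfont{Z}$ the balanced sub-expressions strictly inside the two links.

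Next I would apply the wedge-lift, which by definition swaps only the two parentheses at $j, j+1$, turning $\BPEfont{)(}$ into $\BPEfont{()}$ and leaving every other symbol in place, so that the block becomes $\cdots \BPEfont{(Y()Z)} \cdots$. Because matching in a balanced expression is uniquely determined and only these two positions changed, the new matching can be read off directly: the symbols at $j$ and $j+1$ form a new innermost matching pair $\link{j}{j+1}$; the opening parenthesis at $a_s$, whose former partner at $j$ is now an opening parenthesis, matches the closing parenthesis at $b_r$, whose former partner at $j+1$ is now a closing parenthesis, producing the link $\link{a_s}{b_r}$; and every matching pair disjoint from $\set{j, j+1}$ is untouched. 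Hence the links of $\beta \wedgelift{j}$ are $\link{j}{j+1}$, $\link{a_s}{b_r}$, and $\link{a_k}{b_k}$ for $k \neq s, r$.

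Finally I would extract the left-to-right orientation $((a_\ell', b_\ell'))_{\ell=1}^N$ of $\beta \wedgelift{j}$. The set of opening positions is now $\set{a_k : k \neq r} \cup \set{j}$: the opening at $a_r = j+1$ has vanished while a new opening has appeared at $j = a_r - 1$. The one point to check carefully is that this new opening takes exactly the rank previously held by $a_r$; this holds because $a_{r-1} < j < a_r$, where the lower inequality follows from $a_{r-1} < a_r$ together with $a_{r-1} \neq j$ (an opening position of $\beta$ cannot equal the closing position $j$), and $a_r < a_{r+1}$ when $r < N$. Consequently $a_r' = j = b_s$ and $a_k' = a_k$ for $k \neq r$, which is the asserted formula for the $a_\ell'$. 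Reading the partner of each opening from the list of links then yields $b_r' = j+1 = a_r$ (partner of the opening at $a_r' = j$), $b_s' = b_r$ (partner of the opening at $a_s' = a_s$, legitimate since $s \neq r$), and $b_\ell' = b_\ell$ for $\ell \neq r, s$, which is exactly the claimed formula. The argument carries no analytic content and is pure bookkeeping; the only mild subtlety, handled above, is confirming the rank of the newly created opening parenthesis.
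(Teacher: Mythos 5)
Your proof is correct and takes essentially the same route as the paper: both pass to the balanced parenthesis expression, observe that the wedge-lift converts the local block $\BPEfont{(X)(Y)}$ into $\BPEfont{(X()Y)}$, and read off the new matching as $\link{j}{j+1}$ and $\link{a_s}{b_r}$ with all other links unchanged. The only difference is that you spell out the rank bookkeeping (that the new opening at position $j$ inherits exactly the rank $r$ among left endpoints), which the paper treats as immediate from its figure.
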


\begin{proof}
In terms of parenthesis expressions, 
the $j$:th and $(j+1)$:st parentheses of $\beta$ read \BPEfont{)(}.
Writing out their matching pairs, we see that $\beta$ contains the balanced subexpression 
\BPEfont{(X)(Y)}, which converts to \BPEfont{(X()Y)}
in $\beta \wedgelift{j} $, while everything else remains unchanged.
Recalling that matching pairs of a parenthesis expression correspond to links of
a link pattern,
the assertion is immediate from Figure~\ref{fig: wedge-lift}.

\begin{figure}[h!]
\includegraphics[width = 0.27\textwidth]{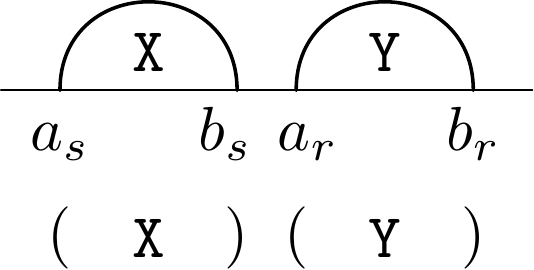} \hspace{2cm}
\includegraphics[width = 0.27\textwidth]{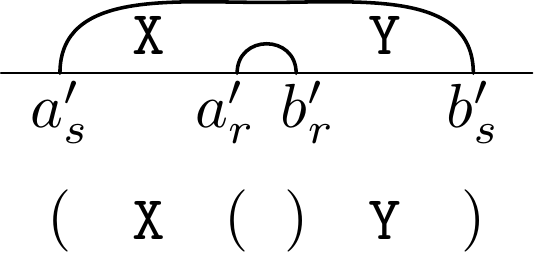}
\caption{\label{fig: wedge-lift} 
The $s$:th and $r$:th links in left-to-right orientations 
before and after a wedge-lift.}
\end{figure}
\end{proof}

\begin{rem}\label{rem: interpretation before Zero-replacing Rule}
\textit{
This lemma has an important interpretation in terms of the 
determinants with a symmetric kernel $\mathfrak{K}$.
Compare the two determinants $\Delta^{\mathfrak{K}}_\beta$ and 
$\Delta^{\mathfrak{K}}_{ \beta \wedgelift{j} }$, and interchange 
the $r$:th and $s$:th columns in the
matrix in the determinant $\Delta^{\mathfrak{K}}_{ \beta \wedgelift{j} }$.
Then, the resulting matrix and the matrix in $\Delta^{\mathfrak{K}}_\beta $
only differ in the $r$:th row and the $s$:th column, 
containing the kernel entries
depending on $j$ and $j+1$. Explicitly, the resulting determinants read
\begin{align*}
\LPdet{\beta}{\mathfrak{K}}
= \det \left( \begin{array}{c c c}
\ddots & \mathfrak{K}(a_k, j )_{k=1}^{r - 1} & \ddots \\[.225cm]
\mathfrak{K} ({j + 1}, b_\ell)_{\ell=1}^{s - 1} & \mathfrak{K}(j+1, j) & \mathfrak{K} (j + 1, b_\ell)_{\ell=s+1}^{N}  \\
\ddots & \mathfrak{K} (a_k, j)_{k=r + 1}^{N} & \ddots 
\end{array}
 \right)
\end{align*}
and
\begin{align*}
\LPdet{\beta \wedgelift{ { j }}}{\mathfrak{K}}   
=  - \det \left( \begin{array}{c c c}
\ddots & \mathfrak{K}(a_k, j+ 1)_{k=1}^{r - 1} & \ddots \\[.225cm]
\mathfrak{K}(j, b_\ell)_{\ell=1}^{s - 1} & \mathfrak{K}(j, j +1) & \mathfrak{K} (j, b_\ell)_{\ell=s+1}^{N}  \\
\ddots & \mathfrak{K}(a_k, j +1)_{k=r + 1}^{N} & \ddots 
\end{array}
 \right),
\end{align*}
where the 
ellipses stand for submatrices which are identical in both cases. 
Notice also that, by the symmetry of $\mathfrak{K}$,
the entries in the middle are equal: 
$\mathfrak{K}(j, j +1) = \mathfrak{K}(j +1, j)$.
}
\end{rem}




The determinants $\LPdet{\beta}{\mathfrak{K}}$ and the inverse Fomin type sums
$\mathfrak{Z}^{\mathfrak{K}}_\alpha$, defined in~\eqref{eq: LPdet with general kernel}
and~\eqref{eq: inverse Fomin-type sum}, are polynomials in the entries $\mathfrak{K}(i,j)$
of the kernel $\mathfrak{K}$. The kernels are symmetric and the diagonal
kernel entries $\mathfrak{K}(i,i)$ do not appear in $\LPdet{\beta}{\mathfrak{K}}$ and
$\mathfrak{Z}^{\mathfrak{K}}_\alpha$,
so we can view the entries $\mathfrak{K}(i,j)$, for $i<j$, as the independent variables of these polynomials.
For notational convenience, for any $j$, we let $\mathfrak{K}(\cdot,j) = \mathfrak{K}(j, \cdot)$
stand for the collection of independent variables
$\big( \mathfrak{K}(1,j), \mathfrak{K}(2,j), \ldots, \mathfrak{K}(j-1,j) , \mathfrak{K}(j,j+1), \ldots, \mathfrak{K}(j,2N) \big)$
that involve the index~$j$.
For any $j$, the determinants $\LPdet{\beta}{\mathfrak{K}}$
are linear in the collection
$\mathfrak{K}(\cdot,j) = \mathfrak{K}(j, \cdot)$, i.e., they are of the form
\[ \LPdet{\beta}{\mathfrak{K}} = \sum_{\substack{1 \leq i \leq 2N \\ i \neq j}} \big[ \LPdet{\beta}{\mathfrak{K}} \big]_{i,j} \, \mathfrak{K}(i,j) , \]
where $[ \LPdet{\beta}{\mathfrak{K}} ]_{i,j}$ is a polynomial in the variables other than $\mathfrak{K}(i,\cdot)$
and $\mathfrak{K}(\cdot,j)$.
Below, by the coefficient of $\mathfrak{K}(i,j)$ in $\LPdet{\beta}{\mathfrak{K}}$
we mean $[ \LPdet{\beta}{\mathfrak{K}} ]_{i,j}$. Note that we have
$[ \LPdet{\beta}{\mathfrak{K}} ]_{i,j} = [ \LPdet{\beta}{\mathfrak{K}} ]_{j,i}$.
Similarly, we may define the coefficient $[ \mathfrak{Z}^{\mathfrak{K}}_\alpha ]_{i,j}$ of 
$\mathfrak{K}(i,j)$ in the inverse Fomin type sum $\mathfrak{Z}^{\mathfrak{K}}_\alpha$,
by noting that $\mathfrak{Z}^{\mathfrak{K}}_\alpha$ is a linear combination of 
determinants $\LPdet{\beta}{\mathfrak{K}}$.

\begin{prop}
\label{prop: Zero-replacing Rule}
Let $\alpha \in \LP_N$ be a link pattern, and suppose that $j \in \set{1,\ldots,2N-1}$ is 
such that~$\upwedgeat{j} \not \in \alpha$. 
Then, the following statements hold.
\begin{description}
\item[(a)] The inverse Fomin type sum $\mathfrak{Z}_\alpha$ is antisymmetric
under interchanging 
the kernel entries at $j$ and $j+1$ in the following sense: if
\begin{align*}
\begin{cases}
\widetilde{\mathfrak{K}}(j, \cdot) = \mathfrak{K} (j + 1, \cdot) & \text{and} \quad \widetilde{\mathfrak{K}}(j +1, \cdot) = \mathfrak{K} (j, \cdot) \\
\widetilde{\mathfrak{K}}(i,k) = \mathfrak{K}(i,k), & \text{for all other indices $(i,k)$},
\end{cases}
\end{align*}
then for the inverse Fomin type sums $\mathfrak{Z}^{\widetilde{\mathfrak{K}}}_\alpha$ and
$\mathfrak{Z}^{\mathfrak{K}}_\alpha$ with kernels $\widetilde{\mathfrak{K}}$ and ${\mathfrak{K}}$, respectively, we have
\begin{align*}
\mathfrak{Z}^{\widetilde{\mathfrak{K}}}_\alpha = - \mathfrak{Z}^{\mathfrak{K}}_\alpha.
\end{align*}
\item[(b)] If the collections $\mathfrak{K}(j, \cdot)$ and $\mathfrak{K}(j+1, \cdot)$ are identical, then $\mathfrak{Z}^{\mathfrak{K}}_\alpha = 0$.
\item[(c)] The coefficient of $\mathfrak{K}(j,j+1)$ in $\mathfrak{Z}^{\mathfrak{K}}_\alpha$ is zero: 
$[ \mathfrak{Z}^{\mathfrak{K}}_\alpha ]_{j,j+1} = 0$.
In particular, the replacement
\begin{align*}
\begin{cases}
{\mathfrak{K}_0}(j, j+1) = {\mathfrak{K}_0}(j+1, j) = 0 \\
{\mathfrak{K}_0}(i,k) = \mathfrak{K}(i,k), \qquad \text{for all other indices $(i,k)$},
\end{cases}
\end{align*}
of the $(j,j+1)$ entries by zero does not affect the inverse Fomin-type sum:
$\mathfrak{Z}^{{\mathfrak{K}_0}}_\alpha = \mathfrak{Z}^{\mathfrak{K}}_\alpha$.
\end{description}
\end{prop}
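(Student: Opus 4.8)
The plan is to establish part (a) first and then obtain (b) and (c) as short consequences, since all three assertions are polynomial identities in the entries $\mathfrak{K}(i,k)$ and the swap $\mathfrak{K}\mapsto\widetilde{\mathfrak{K}}$ is a linear change of variables. For (a), I would split the defining sum $\mathfrak{Z}^{\mathfrak{K}}_\alpha = \sum_{\beta \DPgeq \alpha} \# \CItilingsof (\alpha / \beta)\, \Delta^{\mathfrak{K}}_\beta$ according to the local shape of $\beta$ at position $j$: every $\beta$ has exactly one of $\slopeat{j}$, $\upwedgeat{j}$, or $\downwedgeat{j}$ there. Since the coefficients $\# \CItilingsof (\alpha / \beta)$ do not depend on the kernel, it suffices to track the effect of $\widetilde{\mathfrak{K}}$ on each determinant $\Delta^{\mathfrak{K}}_\beta$ class by class; the point is that $\widetilde{\mathfrak{K}}$ simply interchanges the kernel row/column carrying the index $j$ with the one carrying $j+1$.

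For a slope term, $j$ and $j+1$ are either both entrances (pattern \BPEfont{((}) or both exits (\BPEfont{))}), so in the matrix $\big(\mathfrak{K}(a_k,b_\ell)\big)$ of $\Delta^{\mathfrak{K}}_\beta$ they index two distinct rows, respectively two distinct columns; the swap transposes exactly those two rows (columns), whence $\Delta^{\widetilde{\mathfrak{K}}}_\beta = -\Delta^{\mathfrak{K}}_\beta$, so such a $\beta$ is antisymmetric on its own. For the wedge terms I would invoke the reinterpretation of Lemma~\ref{lem: pairs}: because $\upwedgeat{j}\not\in\alpha$, the elements of $\set{\beta \DPgeq \alpha}$ with a wedge at $j$ come in pairs $\set{\beta,\beta\wedgelift{j}}$ with $\downwedgeat{j}\in\beta$ and $\upwedgeat{j}\in\beta\wedgelift{j}$, and this pairing respects the constraint $\DPgeq\alpha$. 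Lemma~\ref{lem: numofC} gives $\# \CItilingsof (\alpha / \beta) = \# \CItilingsof \big(\alpha / (\beta\wedgelift{j})\big)$, and the explicit matrices of Remark~\ref{rem: interpretation before Zero-replacing Rule} show that substituting $\widetilde{\mathfrak{K}}$ into $\Delta^{\mathfrak{K}}_\beta$ turns its matrix into precisely the sign-corrected matrix appearing for $\Delta^{\mathfrak{K}}_{\beta\wedgelift{j}}$, so that $\Delta^{\widetilde{\mathfrak{K}}}_\beta = -\Delta^{\mathfrak{K}}_{\beta\wedgelift{j}}$ and, applying the same to $\widetilde{\mathfrak{K}}$, $\Delta^{\widetilde{\mathfrak{K}}}_{\beta\wedgelift{j}} = -\Delta^{\mathfrak{K}}_\beta$. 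Hence each pair contributes antisymmetrically, and summing the slope and paired-wedge contributions yields $\mathfrak{Z}^{\widetilde{\mathfrak{K}}}_\alpha = -\mathfrak{Z}^{\mathfrak{K}}_\alpha$, proving (a).

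Part (b) is then immediate: if the collections $\mathfrak{K}(j,\cdot)$ and $\mathfrak{K}(j+1,\cdot)$ coincide, then $\widetilde{\mathfrak{K}}$ agrees with $\mathfrak{K}$ on every off-diagonal entry (the $(j,j+1)$ entry being fixed by symmetry of $\mathfrak{K}$), so $\mathfrak{Z}^{\widetilde{\mathfrak{K}}}_\alpha = \mathfrak{Z}^{\mathfrak{K}}_\alpha$; combined with (a) this forces $\mathfrak{Z}^{\mathfrak{K}}_\alpha = 0$. For (c) I would exploit that $\mathfrak{K}(j,j+1)$ is the unique variable left invariant by the swap. Writing $\mathfrak{Z}^{\mathfrak{K}}_\alpha = P\,\mathfrak{K}(j,j+1) + Q$ with $P = [\mathfrak{Z}^{\mathfrak{K}}_\alpha]_{j,j+1}$ and $Q$ free of $\mathfrak{K}(j,j+1)$ — legitimate since $\mathfrak{Z}$ is linear in the collection $\mathfrak{K}(\cdot,j)$ — the coefficient $P$ involves no variable touching $j$ or $j+1$, hence is itself fixed by the swap, and $\widetilde{Q}$ remains free of $\mathfrak{K}(j,j+1)$ because the swap permutes the remaining $j,j+1$-variables among themselves. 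Substituting into $\mathfrak{Z}^{\widetilde{\mathfrak{K}}}_\alpha = -\mathfrak{Z}^{\mathfrak{K}}_\alpha$ and comparing the coefficients of $\mathfrak{K}(j,j+1)$ gives $P = -P$, i.e.\ $[\mathfrak{Z}^{\mathfrak{K}}_\alpha]_{j,j+1} = 0$; the asserted replacement of the $(j,j+1)$ entry by zero then leaves $\mathfrak{Z}^{\mathfrak{K}}_\alpha$ unchanged.

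The only genuinely delicate step is the wedge case of (a): one must line up the two explicit matrices of Remark~\ref{rem: interpretation before Zero-replacing Rule}, including the overall sign coming from the column interchange there, and check that the $\widetilde{\mathfrak{K}}$-substitution reproduces them entry-for-entry, while simultaneously confirming that the pairing $\beta \leftrightarrow \beta\wedgelift{j}$ never leaves the index set $\set{\beta \DPgeq \alpha}$. The slope case and the deductions of (b) and (c) are essentially formal once (a) is in hand.
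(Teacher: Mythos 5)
Your proof is correct and takes essentially the same route as the paper's: part (a) via the identical decomposition of $\mathfrak{Z}^{\mathfrak{K}}_\alpha$ into slope terms (antisymmetric individually, by a row/column transposition) and wedge terms grouped into pairs $\beta \leftrightarrow \beta\wedgelift{j}$ using Lemma~\ref{lem: pairs}, Lemma~\ref{lem: numofC}, and the explicit matrices of Remark~\ref{rem: interpretation before Zero-replacing Rule}, and part (b) as the symmetric-versus-antisymmetric clash. The only deviation is in (c): you compare coefficients of $\mathfrak{K}(j,j+1)$ directly in the antisymmetry identity of (a), using that this entry and its coefficient are both fixed by the swap, whereas the paper instead evaluates the kernel at the special point $\mathfrak{K}(i,j)=\delta_{i,j+1}$, $\mathfrak{K}(i,j+1)=\delta_{i,j}$ and invokes (b); both are valid formal polynomial arguments of comparable length, so this is a cosmetic rather than substantive difference.
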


\begin{rem}\label{rem: iterative zero replacement}
\textit{%
Parts (a) and (c) are often applied iteratively: the modified kernels $\widetilde{\mathfrak{K}}$
and ${\mathfrak{K}_0}$ are themselves symmetric kernels, and the above replacement rules
continue to hold for them.
}
\end{rem}

\begin{proof}

Parts (b) and (c) will be obtained as rather straightforward consequences of part (a).
We prove the antisymmetry rule of part (a) by regrouping the terms of the inverse Fomin
type sum~\eqref{eq: inverse Fomin-type sum} into antisymmetric groups.
Recall from the discussion after Lemma~\ref{lem: pairs} that Dyck paths
having a wedge at $j$ appear in the sum \eqref{eq: inverse Fomin-type sum} in pairs: if $\beta$ is a Dyck path 
with a down-wedge, $\downwedgeat{{j}} \in \beta$,
and $\beta \wedgelift{{ j }}$ its wedge-lift, then $\beta \DPgeq \alpha$ is equivalent to $\beta \wedgelift{{ j }} \DPgeq \alpha$.
By Lemma~\ref{lem: numofC}, the coefficients corresponding to $\beta$ and 
$\beta \wedgelift{{ j }}$ in the inverse Fomin type sum~\eqref{eq: inverse Fomin-type sum} are equal: 
$\# \CItilingsof(\alpha / \beta) = 
\# \CItilingsof\big(\alpha / (\beta \wedgelift{{ j }})\big)$. Thus, we split the sum as
\begin{align*}
\mathfrak{Z}^{\mathfrak{K}}_\alpha = 
\sum_{\substack{\beta \DPgeq \alpha \\ \downwedgeat{{ j } } \in \beta }} \# \CItilingsof(\alpha / \beta) \, \Big( \Delta^{\mathfrak{K}}_\beta + \Delta^{\mathfrak{K}}_{\beta \wedgelift{j} } \Big) 
+     \sum_{\substack{\beta \DPgeq \alpha \\ \slopeat{ { j }} \in \beta }} \# \CItilingsof(\alpha / \beta) \, \Delta^{\mathfrak{K}}_\beta .
\end{align*}
The terms in the first sum on the right-hand side are antisymmetric under the exchange of $j$ and $j+1$,
as seen from the expressions in Remark \ref{rem: interpretation before Zero-replacing Rule} given for
$\Delta^{\mathfrak{K}}_\beta$ and $\Delta^{\mathfrak{K}}_{\beta \wedgelift{j} }$.
In the second sum, where $\slopeat{ { j }} \in \beta$, the endpoints $j$ and $j+1$ are
either both exits or both entrances in the 
left-to-right orientation of $\beta$. For definiteness, assume
that they are entrances. Then, in the matrix in
$\Delta^{\mathfrak{K}}_\beta$, both collections $\mathfrak{K}(j, \cdot)$ and $\mathfrak{K}(j + 1, \cdot)$
appear on a row. The determinant $\LPdet{\beta}{\mathfrak{K}}$ changes sign under the
exchange of these rows. 

For part (b), if we have $\mathfrak{K}(j, \cdot) = \mathfrak{K}(j+1, \cdot)$, then $\widetilde{\mathfrak{K}} = \mathfrak{K}$,
and $\mathfrak{Z}^{\mathfrak{K}}_\alpha$ is symmetric under interchange of indices $j$ and $j+1$.
On the other hand, it is antisymmetric by part (a), 
and must therefore vanish.

For part (c), recall first that
the coefficient $[\mathfrak{Z}_\alpha^\mathfrak{K} ]_{j, j+1}$ of $\mathfrak{K}(j, j+1)$ 
is a polynomial in the variables other than
$\mathfrak{K}(j, \cdot)$ and $\mathfrak{K}(j+1, \cdot)$. Let us evaluate the polynomial $\mathfrak{Z}_\alpha^\mathfrak{K}$ at
$\mathfrak{K}(i, j) = \delta_{i, j+1}$ and $\mathfrak{K}(i, j+1) = \delta_{i,j}$, leaving the variables of $[\mathfrak{Z}_\alpha^\mathfrak{K} ]_{j, j+1}$  undetermined. Using part (b), we obtain
\begin{align*}
0 = \mathfrak{Z}_\alpha^\mathfrak{K} = \sum_{\substack{1 \leq i \leq 2N \\ i \neq j+1}} [\mathfrak{Z}_\alpha^\mathfrak{K}]_{i, j+1} \mathfrak{K}(i, j+1) = [\mathfrak{Z}_\alpha^\mathfrak{K}]_{j, j+1}.
\end{align*}  
The property $\mathfrak{Z}_\alpha^ { \mathfrak{K}_0 } =  \mathfrak{Z}_\alpha^\mathfrak{K}$ is clear.
\end{proof}

The above result details the behavior of the inverse Fomin type sum $\mathfrak{Z}^{\mathfrak{K}}_{\alpha}$
when $\upwedgeat{j} \not \in \alpha$. We will also need the complementary case
$\upwedgeat{j} \in \alpha$ where, in terms of the link pattern $\alpha$, 
the indices $j$ and $j+1$ are connected by a link. For this purpose, we define the wedge removal of 
a symmetric kernel as follows:
as a matrix, let the kernel $\mathfrak{K}\removewedge{j} \in \bC^{2(N-1) \times 2(N-1)}$ be obtained
from $\mathfrak{K} \in \bC^{2N \times 2N}$ by removing the rows and columns $j$ and $j+1$.
Let us first calculate the coefficients of $\mathfrak{K}(j, j+1)$
in the determinants $\LPdet{\beta}{\mathfrak{K}}$.
\begin{lem}\label{lem: coefficients in LP determinants}
The coefficient of $\mathfrak{K}(j, j+1)$ in the determinant $\LPdet{\beta}{\mathfrak{K}}$ is given by
\begin{align}\label{eq: coefficients in LP determinants}
\big[ \LPdet{\beta}{\mathfrak{K}} \big]_{j,j+1} = 
\begin{cases}
\LPdet{\beta \removeupwedge{j} }{\mathfrak{K}\removewedge{j} }, & \text{if } \upwedgeat{j} \in \beta \\
- \LPdet{\beta \removedownwedge{j} }{\mathfrak{K}\removewedge{j} }, & \text{if } \downwedgeat{j} \in \beta \\
0, & \text{if } \slopeat{j} \in \beta.
\end{cases}
\end{align}
\end{lem}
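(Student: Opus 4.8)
The plan is to exploit that the determinant $\LPdet{\beta}{\mathfrak{K}} = \det (\mathfrak{K}(a_k,b_\ell))_{k,\ell=1}^N$ is multilinear in the entries of $\mathfrak{K}$, so that the coefficient of the single variable $\mathfrak{K}(j,j+1)$ is a signed cofactor --- once I have located where, if anywhere, this variable occurs in the matrix. Since $\mathfrak{K}$ is symmetric, the variable $\mathfrak{K}(j,j+1)=\mathfrak{K}(j+1,j)$ appears at position $(k,\ell)$ exactly when $\{a_k,b_\ell\}=\{j,j+1\}$, i.e.\ when one of the indices $j,j+1$ is an entrance and the other an exit in the left-to-right orientation of $\beta$. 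This dichotomy organises the proof into the three cases of the statement. First I would dispose of the slope case $\slopeat{j}\in\beta$: then $j$ and $j+1$ are both entrances or both exits, the pair $\{j,j+1\}$ is never of the form $\{a_k,b_\ell\}$, the variable does not occur in the matrix, and $[\LPdet{\beta}{\mathfrak{K}}]_{j,j+1}=0$.

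In the up-wedge case $\upwedgeat{j}\in\beta$ the pair $\link{j}{j+1}$ is a link, say the $\ell_0$:th, so $a_{\ell_0}=j$ and $b_{\ell_0}=j+1$, and the variable sits on the diagonal at $(\ell_0,\ell_0)$. Its coefficient is then the principal minor obtained by deleting row and column $\ell_0$, with cofactor sign $(-1)^{2\ell_0}=+1$. I would identify this minor with $\LPdet{\beta\removeupwedge{j}}{\mathfrak{K}\removewedge{j}}$: removing the link $\link{j}{j+1}$ leaves the remaining links $\link{a_k}{b_k}$, $k\neq\ell_0$, in the same relative order, and deleting the rows and columns $j,j+1$ from $\mathfrak{K}$ realises precisely the kernel $\mathfrak{K}\removewedge{j}$ on the surviving, relabelled indices. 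No reordering of rows or columns is needed, which gives the clean sign $+$.

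The down-wedge case $\downwedgeat{j}\in\beta$ carries the real bookkeeping. By Lemma~\ref{lem: wedge-lifts in link patterns} one has $j=b_s$ and $j+1=a_r$ with $s<r$, so the variable now sits off the diagonal at position $(r,s)$, with cofactor sign $(-1)^{r+s}$ times the minor deleting row $r$ and column $s$. The plan is to identify that minor with $\LPdet{\beta\removedownwedge{j}}{\mathfrak{K}\removewedge{j}}$ up to a controlled permutation. Here the elementary identity $\beta\removedownwedge{j}=(\beta\wedgelift{j})\removeupwedge{j}$ together with Lemma~\ref{lem: wedge-lifts in link patterns} tells me the surviving link structure: all links $\link{a_k}{b_k}$ with $k\neq r,s$ persist, while $\link{a_s}{b_s}$ and $\link{a_r}{b_r}$ merge into the through-link $\link{a_s}{b_r}$. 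Consequently $\LPdet{\beta\removedownwedge{j}}{\mathfrak{K}\removewedge{j}}$ uses the same row set $\{a_k:k\neq r\}$ and the same exit set $\{b_\ell:\ell\neq s\}$ as the cofactor minor, but orders its columns by the sorted surviving entrances, which places $b_r$ at the column position of $a_s$ rather than at its original position. Moving $b_r$ across the $r-s-1$ intervening columns contributes a permutation sign $(-1)^{r-s-1}$, and combining this with the cofactor sign gives $(-1)^{r+s}(-1)^{r-s-1}=-1$, hence $[\LPdet{\beta}{\mathfrak{K}}]_{j,j+1}=-\LPdet{\beta\removedownwedge{j}}{\mathfrak{K}\removewedge{j}}$, as claimed.

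The main obstacle I anticipate is exactly this sign-and-relabelling reconciliation in the down-wedge case: I must check that the left-to-right orientations of $\beta$ and of $\beta\removedownwedge{j}$ induce consistent row/column orderings, so that the cofactor minor and the reduced determinant differ by a single explicit column permutation whose sign I can pin down unambiguously. The slope and up-wedge cases should be essentially immediate once the position of the variable $\mathfrak{K}(j,j+1)$ has been identified, and they also serve as the base needed to make the reduction in the down-wedge case transparent.
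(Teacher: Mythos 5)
Your proposal is correct, and in the slope and up-wedge cases it coincides with the paper's own argument: locate the (unique) occurrence of $\mathfrak{K}(j,j+1)$ in the matrix --- or observe that it does not occur --- and expand the cofactor at the diagonal position, where the sign is $(-1)^{2s}=+1$ and the minor is literally $\LPdet{\beta \removeupwedge{j}}{\mathfrak{K}\removewedge{j}}$. The genuine difference is in the down-wedge case, which is where the sign in the statement originates. The paper never computes the off-diagonal cofactor directly: it uses the antisymmetry exhibited in Remark~\ref{rem: interpretation before Zero-replacing Rule} (a single column interchange plus the symmetry $\mathfrak{K}(j,j+1)=\mathfrak{K}(j+1,j)$) to get $[\LPdet{\beta}{\mathfrak{K}}]_{j,j+1} = -[\LPdet{\beta \wedgelift{j}}{\mathfrak{K}}]_{j,j+1}$, then applies the already-settled up-wedge case to $\beta\wedgelift{j}$ and concludes via $(\beta\wedgelift{j})\removeupwedge{j} = \beta\removedownwedge{j}$. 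You instead expand the cofactor at position $(r,s)$, sign $(-1)^{r+s}$, and reconcile the resulting minor with $\LPdet{\beta\removedownwedge{j}}{\mathfrak{K}\removewedge{j}}$ by an explicit column permutation. Your bookkeeping checks out: in the minor the columns appear in the order $b_1,\ldots,b_{s-1},b_{s+1},\ldots,b_{r-1},b_r,b_{r+1},\ldots,b_N$, whereas the left-to-right orientation of $\beta\removedownwedge{j}$ pairs $b_r$ with $a_s$ and hence places it at column position $s$; this is a cycle of length $r-s$, sign $(-1)^{r-s-1}$, and $(-1)^{r+s}(-1)^{r-s-1}=-1$ as required (the rows need no reordering since deleting $a_r$ preserves the sorted entrance order, and none of the surviving entries involve the deleted indices $j,j+1$, so the minor is indeed a determinant over the kernel $\mathfrak{K}\removewedge{j}$). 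What each route buys: yours is more self-contained, needing from the wedge-lift machinery only the elementary fact $j=b_s$, $j+1=a_r$ with $s<r$; the paper's route trades that explicit sign computation for a reduction to the diagonal case, packaging all sign sensitivity into one column swap, which is harder to get wrong but leans on the Remark.
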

\begin{proof}
Assume first that $\upwedgeat{j} \in \beta$, and
let $((a_\ell,b_\ell))_{\ell=1}^N$ be the left-to-right orientation of $\beta$.
Since there is a link $\link{j}{j+1}$ in $\beta$,
we have $j=a_s$ and $j+1 = b_s$ for some $s$. Then, applying the subdeterminant rule
 in the definition of the  determinant $\LPdet{\beta}{\mathfrak{K}}$, we have
\begin{align}\label{eq: case 1 subdeterminant}
\big[ \LPdet{\beta}{\mathfrak{K}} \big]_{j,j+1} =
(-1)^{s + s} \det \Big( \mathfrak{K}(a_k, b_\ell) \Big)_{\substack{k,\ell=1 \\ k, \ell \ne s }}^N 
= \LPdet{\beta \removeupwedge{j} }{\mathfrak{K}\removewedge{j}}.
\end{align}

Assume next that $\downwedgeat{j} \in \beta$. Applying the subdeterminant rule in
the matrices written out in Remark~\ref{rem: interpretation before Zero-replacing Rule},
the case~\eqref{eq: case 1 subdeterminant} above, and 
the fact $( \beta \wedgelift{j} ) \removeupwedge{j} =  \beta  \removedownwedge{j}$,
we observe that 
\begin{align*}
\big[ \LPdet{\beta}{\mathfrak{K}} \big]_{j,j+1} =
- \big[ \LPdet{\beta \wedgelift{j} }{\mathfrak{K}} \big]_{j,j+1} 
= - \LPdet{( \beta \wedgelift{j} ) \removeupwedge{j} }{\mathfrak{K}\removewedge{j} }
= - \LPdet{\beta \removedownwedge{j} }{\mathfrak{K}\removewedge{j} } .
\end{align*}

Finally, if $\slopeat{j} \in \beta$, then
$\mathfrak{K}(j, j+1)$ and $\mathfrak{K}(j +1, j)$ do not appear
as entries in the matrix of $\LPdet{\beta}{\mathfrak{K}}$, because
either both collections $\mathfrak{K}(j, \cdot)$ and $\mathfrak{K}(j + 1, \cdot)$ 
appear on a row, or both appear on a column.
\end{proof}

We now prove a cascade property for the inverse Fomin type sums,
see also Figure~\ref{fig: link removal}.


\begin{prop} \label{prop: inverse Fomin cascade}
Assume that $\upwedgeat{j} \in \alpha$. Then, the coefficient
of $\mathfrak{K}(j, j+1)$ in $\mathfrak{Z}^{\mathfrak{K}}_\alpha$ is
$\mathfrak{Z}^{\mathfrak{K}\removewedge{j} }_{\alpha \removeupwedge{j}}$.
\end{prop}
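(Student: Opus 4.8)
The plan is to reduce the statement to a single identity about the inverse matrix $\Minv$ (recall from Example~\ref{ex: signed incidence matrix of KWleq} and Equation~\eqref{eq: basic KW inverse matrix} that $\Minv_{\alpha,\beta} = \#\CItilingsof(\alpha/\beta)$), and then to prove that identity by running the Cascade Recursion of Lemma~\ref{lem: KW cascades} in reverse. First I would rewrite the inverse Fomin type sum~\eqref{eq: inverse Fomin-type sum} as $\mathfrak{Z}^{\mathfrak{K}}_\alpha = \sum_{\beta \in \DP_N} \Minv_{\alpha,\beta}\, \Delta^{\mathfrak{K}}_\beta$. Since each $\Delta^{\mathfrak{K}}_\beta$ is linear in $\mathfrak{K}(j,j+1)$ and the coefficients $\Minv_{\alpha,\beta}$ do not depend on $\mathfrak{K}$, extracting the coefficient and applying Lemma~\ref{lem: coefficients in LP determinants} gives, with the slope terms dropping out,
\begin{equation*}
\big[ \mathfrak{Z}^{\mathfrak{K}}_\alpha \big]_{j,j+1}
= \sum_{\upwedgeat{j} \in \beta} \Minv_{\alpha,\beta}\, \Delta^{\mathfrak{K}\removewedge{j}}_{\beta \removeupwedge{j}}
- \sum_{\downwedgeat{j} \in \beta} \Minv_{\alpha,\beta}\, \Delta^{\mathfrak{K}\removewedge{j}}_{\beta \removedownwedge{j}} .
\end{equation*}
Grouping by $\hat\beta := \beta \removewedge{j} \in \DP_{N-1}$, and writing $\beta^\uparrow, \beta^\downarrow \in \DP_N$ for the up- and down-wedge reinsertions of $\hat\beta$ at $j$ (with $\beta^\downarrow$ existing iff $\hat\beta(j-1) \ge 1$, and $\Minv_{\alpha,\beta^\downarrow} := 0$ otherwise), both reinsertions satisfy $\beta^\uparrow \removeupwedge{j} = \beta^\downarrow \removedownwedge{j} = \hat\beta$ and hence feed the \emph{same} determinant. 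The coefficient therefore equals $\sum_{\hat\beta} \big( \Minv_{\alpha,\beta^\uparrow} - \Minv_{\alpha,\beta^\downarrow} \big)\, \Delta^{\mathfrak{K}\removewedge{j}}_{\hat\beta}$. Comparing with $\mathfrak{Z}^{\mathfrak{K}\removewedge{j}}_{\hat\alpha} = \sum_{\hat\beta} \Minv_{\hat\alpha,\hat\beta}\, \Delta^{\mathfrak{K}\removewedge{j}}_{\hat\beta}$, where $\hat\alpha = \alpha \removeupwedge{j}$, the Proposition follows from the purely combinatorial identity
\begin{equation*}
\Minv_{\alpha,\beta^\uparrow} - \Minv_{\alpha,\beta^\downarrow} = \Minv_{\hat\alpha,\hat\beta}
\qquad \text{for all } \hat\beta \in \DP_{N-1} .
\tag{$\star$}
\end{equation*}

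To prove $(\star)$ I would avoid a direct bijection of cover-inclusive tilings near the inserted wedge (which is delicate) and instead characterize the row $(\Minv_{\hat\alpha,\hat\beta})_{\hat\beta}$ as the unique row vector $v$ solving $v\, \Mmat^{(N-1)} = e_{\hat\alpha}$, using that $\Mmat^{(N-1)}$ is invertible by Theorem~\ref{thm: weighted KW incidence matrix inversion}. Setting $v(\hat\beta) := \Minv_{\alpha,\beta^\uparrow} - \Minv_{\alpha,\beta^\downarrow}$, it then suffices to verify $\sum_{\hat\beta} v(\hat\beta)\, \Mmat^{(N-1)}_{\hat\beta,\hat\gamma} = \delta_{\hat\alpha,\hat\gamma}$ for all $\hat\gamma \in \DP_{N-1}$. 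Fix such a $\hat\gamma$ and let $\gamma^\uparrow \in \DP_N$ be its up-wedge reinsertion, so $\upwedgeat{j} \in \gamma^\uparrow$. Because $\Mmat$ is the $f \equiv 1$ instance of the family in Lemma~\ref{lem: KW cascades} and thus satisfies the Cascade Recursion~\eqref{eq: recurrence}, applied with second index $\gamma^\uparrow$ and using Lemma~\ref{lem: main wedge lemma}, we get $\Mmat^{(N-1)}_{\hat\beta,\hat\gamma} = \Mmat^{(N)}_{\beta^\uparrow,\gamma^\uparrow} = -\Mmat^{(N)}_{\beta^\downarrow,\gamma^\uparrow}$ (the last equality whenever $\beta^\downarrow$ exists). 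Substituting these into the target sum and reindexing the wedge-at-$j$ paths of $\DP_N$ by $\lambda = \beta^\uparrow$ and $\lambda = \beta^\downarrow$,
\begin{equation*}
\sum_{\hat\beta} v(\hat\beta)\, \Mmat^{(N-1)}_{\hat\beta,\hat\gamma}
= \sum_{\upwedgeat{j} \in \lambda} \Minv_{\alpha,\lambda}\, \Mmat^{(N)}_{\lambda,\gamma^\uparrow}
+ \sum_{\downwedgeat{j} \in \lambda} \Minv_{\alpha,\lambda}\, \Mmat^{(N)}_{\lambda,\gamma^\uparrow} .
\end{equation*}
Paths $\lambda$ with a slope at $j$ contribute $0$, since $\upwedgeat{j} \in \gamma^\uparrow$ forces $\Mmat^{(N)}_{\lambda,\gamma^\uparrow} = 0$ by Lemma~\ref{lem: main wedge lemma}; hence the sum runs over all $\lambda \in \DP_N$ and equals $(\Minv \Mmat)_{\alpha,\gamma^\uparrow} = \delta_{\alpha,\gamma^\uparrow}$. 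Finally $\delta_{\alpha,\gamma^\uparrow} = \delta_{\hat\alpha,\hat\gamma}$, because $\alpha$ and $\gamma^\uparrow$ both carry an up-wedge at $j$ and therefore coincide iff their wedge removals do. This yields $v\, \Mmat^{(N-1)} = e_{\hat\alpha}$, i.e.\ $v(\hat\beta) = \Minv_{\hat\alpha,\hat\beta}$, establishing $(\star)$ and the Proposition.

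I expect the main obstacle to be the bookkeeping in applying the Cascade Recursion "in reverse": one must reinsert a \emph{common} up-wedge at $j$ in the second index $\hat\gamma$, correctly match the sign $-1$ produced for down-wedges in~\eqref{eq: recurrence} against the sign $-1$ coming from the down-wedge case of Lemma~\ref{lem: coefficients in LP determinants}, and confirm that the two signs conspire so that the up- and down-wedge contributions \emph{add} rather than cancel in the displayed sum. A secondary point requiring care is the boundary case $\hat\beta(j-1) = 0$, where the down-wedge reinsertion $\beta^\downarrow$ fails to exist; this is harmless because the corresponding path $\lambda$ simply does not appear in the down-wedge sum over $\DP_N$, so no term is lost or double-counted. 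One should also record at the outset that $\hat\beta \mapsto \beta^\uparrow$ and $\hat\beta \mapsto \beta^\downarrow$ are the asserted bijections of $\DP_{N-1}$ onto the up-wedge-at-$j$ and down-wedge-at-$j$ subsets of $\DP_N$, so that the reindexing of the two sums is legitimate.
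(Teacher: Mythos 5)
Your proof is correct, and it takes a genuinely different route from the paper's, although it is built from the same ingredients (Lemma~\ref{lem: coefficients in LP determinants}, the wedge-removal bijection and Lemma~\ref{lem: main wedge lemma}, the Cascade Recursion~\eqref{eq: recurrence} with unit weights, and invertibility of the incidence matrices). The paper manipulates the \emph{forward} system $\LPdet{\beta}{\mathfrak{K}} = \sum_{\gamma}\Mmat_{\beta,\gamma}\,\mathfrak{Z}^{\mathfrak{K}}_{\gamma}$: it extracts the coefficient of $\mathfrak{K}(j,j+1)$ from both sides for every $\beta$ with $\upwedgeat{j}\in\beta$, invokes Proposition~\ref{prop: Zero-replacing Rule}(c) to discard all $\gamma$ with $\upwedgeat{j}\notin\gamma$ (so that only the up-wedge cases of Lemma~\ref{lem: coefficients in LP determinants} and of the Cascade Recursion are ever used), and concludes by uniqueness of the solution of the resulting square system indexed by $\DP_{N-1}$. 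You instead differentiate the \emph{inverse} expansion $\mathfrak{Z}^{\mathfrak{K}}_\alpha=\sum_\beta\Minv_{\alpha,\beta}\Delta^{\mathfrak{K}}_\beta$, which forces you to keep the down-wedge contributions of Lemma~\ref{lem: coefficients in LP determinants}, pair them against the up-wedge ones via the two reinsertions $\beta^\uparrow,\beta^\downarrow$, and reduce the Proposition to the row identity $(\star)$; you then prove $(\star)$ by verifying $v\,\Mmat^{(N-1)}=e_{\hat\alpha}$, using both sign cases of the Cascade Recursion together with the orthogonality $\Minv\Mmat=\unitmat$ at level $N$. The trade-off is clear: the paper's argument is shorter because Proposition~\ref{prop: Zero-replacing Rule}(c) silently absorbs exactly the down-wedge cancellation that you carry out by hand, whereas your argument never needs that proposition and instead isolates the purely combinatorial identity $\Minv_{\alpha,\beta^\uparrow}-\Minv_{\alpha,\beta^\downarrow}=\Minv_{\hat\alpha,\hat\beta}$ for $\upwedgeat{j}\in\alpha$, a counting statement about cover-inclusive Dyck tilings of independent interest, which exactly complements Lemma~\ref{lem: numofC} (the same difference, equal to zero, in the case $\upwedgeat{j}\notin\alpha$). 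Your two flagged delicate points are handled correctly: the minus sign from the down-wedge case of Lemma~\ref{lem: coefficients in LP determinants} cancels against the factor $-f(\cdot)=-1$ in~\eqref{eq: recurrence}, so the up- and down-wedge sums indeed add, and the nonexistence of $\beta^\downarrow$ when $\hat\beta(j-1)=0$ costs nothing since the corresponding term is absent from both sides.
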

\begin{proof}
By Example \ref{ex: signed incidence matrix of KWleq}, the coefficients
$\# \CItilingsof (\alpha / \beta)$  in the inverse Fomin type sums~\eqref{eq: inverse Fomin-type sum}
are the entries $\Minv_{\alpha, \beta}$
of the inverse signed incidence matrix of the relation $\KWleq$. Thus, the inverse Fomin
type sums $\mathfrak{Z}_{\gamma}^{\mathfrak{K}}$ are uniquely determined by the system of equations
\begin{align}\label{eq: linear system determining inverse Fomin sum}
\LPdet{\beta}{\mathfrak{K}}  
= \sum_{\substack{\gamma \in \DP_N \\ \beta \KWleq \gamma }} \Mmat_{\beta, \gamma}
\mathfrak{Z}_{\gamma}^{\mathfrak{K}},
\end{align}
indexed by $\beta \in \DP_N$,
where $\Mmat_{\beta, \gamma} = (-1)^{\vert \nestedtilingof ( \beta / \gamma) \vert}$.
Assume that $\upwedgeat{j} \in \beta$, and denote $\hat{\beta} = \beta \removeupwedge{j}$.
Applying Equation~\eqref{eq: linear system determining inverse Fomin sum} with both $\beta$ and $\hat{\beta}$,
and using  Proposition~\ref{prop: Zero-replacing Rule}(c) and
Lemma~\ref{lem: coefficients in LP determinants} gives
\begin{align}\label{eq: manipulated linear system for inverse Fomin sum}
\sum_{\substack{\gamma \in \DP_{N} \\ \upwedgeat{j} \in \gamma \ \& \ \beta \KWleq \gamma }} \Mmat_{\beta, \gamma} \,
\big[ \mathfrak{Z}_{\gamma}^{\mathfrak{K}} \big]_{j,j+1}
= \big[ \LPdet{\beta}{\mathfrak{K}} \big]_{j,j+1} 
= \LPdet{\beta \removeupwedge{j}}{\mathfrak{K}\removewedge{j}}
= \sum_{\substack{ \hat{\gamma} \in \DP_{N-1} \\ \hat{\beta} \KWleq \hat{\gamma} }} \Mmat_{\hat{\beta}, \hat{\gamma}} \,
\mathfrak{Z}_{\hat{\gamma} }^{\mathfrak{K}\removewedge{j}} .
\end{align}

Next, notice that $\gamma \mapsto \gamma \removeupwedge{j} $ gives is a bijection between the Dyck paths $\gamma \in \DP_{N} $ such that $\upwedgeat{j} \in \gamma$ 
and $ \DP_{N-1}$,
and Lemma~\ref{lem: main wedge lemma} furthermore guarantees
that $\beta \KWleq \gamma $ is equivalent to $\hat{\beta} \KWleq \gamma \removeupwedge{j}  $.
Thus, re-indexing the sum on the left-hand side of
Equation~\eqref{eq: manipulated linear system for inverse Fomin sum} by $\hat{\gamma} = \gamma \removeupwedge{j} $
gives
\begin{align*}
\sum_{\substack{ \hat{\gamma} \in \DP_{N-1} \\ \hat{\beta} \KWleq \hat{\gamma} }} \Mmat_{\beta, \gamma} \,
\big[ \mathfrak{Z}_{\gamma}^{\mathfrak{K}} \big]_{j,j+1}
= \sum_{\substack{ \hat{\gamma} \in \DP_{N-1} \\ \hat{\beta} \KWleq \hat{\gamma} }} \Mmat_{\hat{\beta}, \hat{\gamma}} \,
\mathfrak{Z}_{\hat{\gamma} }^{\mathfrak{K}\removewedge{j}} .
\end{align*}
Observe that $\Mmat_{ {\beta}, {\gamma}} = \Mmat_{\hat{\beta}, \hat{\gamma}}$, by 
the Cascade Recursion~\eqref{eq: recurrence} with tile weight $1$.
Since the above equations hold for all $\hat{\beta} \in \DP_{N-1}$, solving the system yields
$[ \mathfrak{Z}_{\gamma}^{\mathfrak{K}} ]_{j,j+1}
    = \mathfrak{Z}_{\hat{\gamma} }^{\mathfrak{K}\removewedge{j}}$
for all $\gamma \in \DP_{N}$ such that $\upwedgeat{j} \in \gamma$.
\end{proof}


\bigskip{}

\section{\label{sec: application to UST}Uniform spanning trees and loop-erased random walks}
\label{sec: applications to UST}

We now consider planar loop-erased random walks (LERW) 
and the planar
uniform spanning tree (UST). The main results of this section are the following. 
First, we derive explicit determinantal formulas for the connectivity 
probabilities of boundary branches in the UST, see 
Theorem~\ref{thm: the formula for UST partition functions}.
These formulas are obtained using the combinatorial results of 
Sections~\ref{subsec: Combinatorial objects and bijections}--\ref{subsec: Dyck tilings}
combined with Fomin's formulas, given in Section~\ref{sec: Fomin}
--- hence the determinantal form.
Using the connectivity probabilities
and Wilson's algorithm, we obtain formulas for boundary visit probabilities of the LERW, see 
Corollary~\ref{cor: LERW boundary visit probabilities with partition functions}.

Second, we establish results concerning scaling limits of these quantities.
In Theorem~\ref{thm: scaling limit of partition functions}, we prove that
the suitably renormalized connectivity probabilities have 
explicit conformally covariant scaling limits.
We will prove in Section~\ref{sec: applications to SLEs} that
these functions satisfy a system of PDEs of second order.
The formula of 
Corollary~\ref{cor: LERW boundary visit probabilities with partition functions}
also enables us to prove a similar convergence result
(Theorem~\ref{thm: scaling limit of LERW bdry visits})
for the LERW boundary visit probabilities --- remarkably, the scaling limit satisfies
a system of PDEs of second and third order. 
The proof of this fact is postponed to
Sections~\ref{sec: determinant Taylor expansions}--\ref{sub: proof of bdry visit thm}. 

This section is organized as follows.
In Section~\ref{sec: UST definitions}, we define the UST with wired boundary conditions
and study connectivity  and boundary visit events for its branches. 
Section~\ref{sec: LERW and UST} contains the definition of the LERW
and relates it to a branch of the UST, via Wilson's algorithm.
In Section~\ref{sec: Fomin}, we recall Fomin's formulas,
which are crucial tools for explicitly solving the probabilities of interest
in Section~\ref{sec: partition functions}.
The scaling limit setup and main scaling limit results
are the topic of Section~\ref{sec: scaling limits}.
Finally, a number of further generalizations is briefly
discussed in Section~\ref{sec: generalizations}, including results
about the uniform spanning tree with free boundary conditions.

\subsection{\label{sec: graphs in planar domain}Graphs embedded in a planar domain}


\begin{figure}
\includegraphics[width=.4\textwidth]{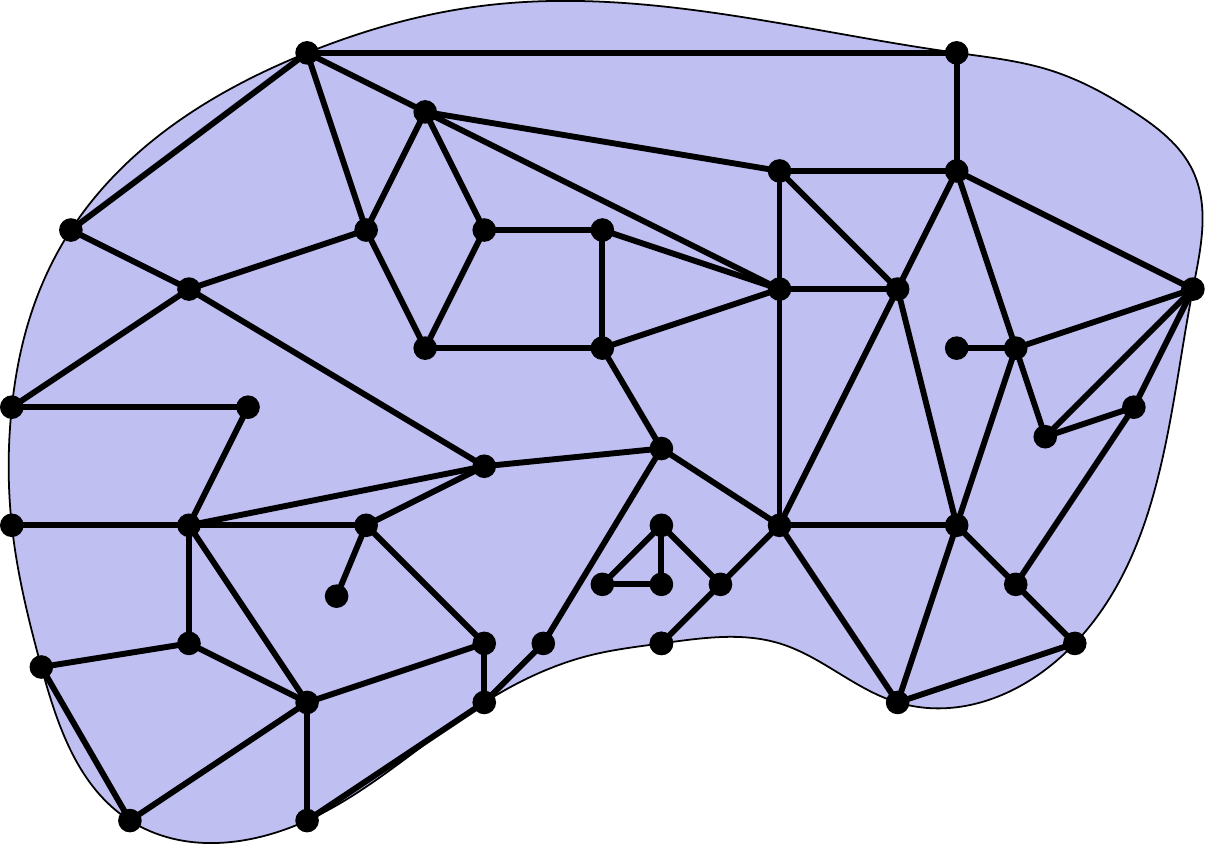}
\hspace{1cm}
\includegraphics[width=.4\textwidth]{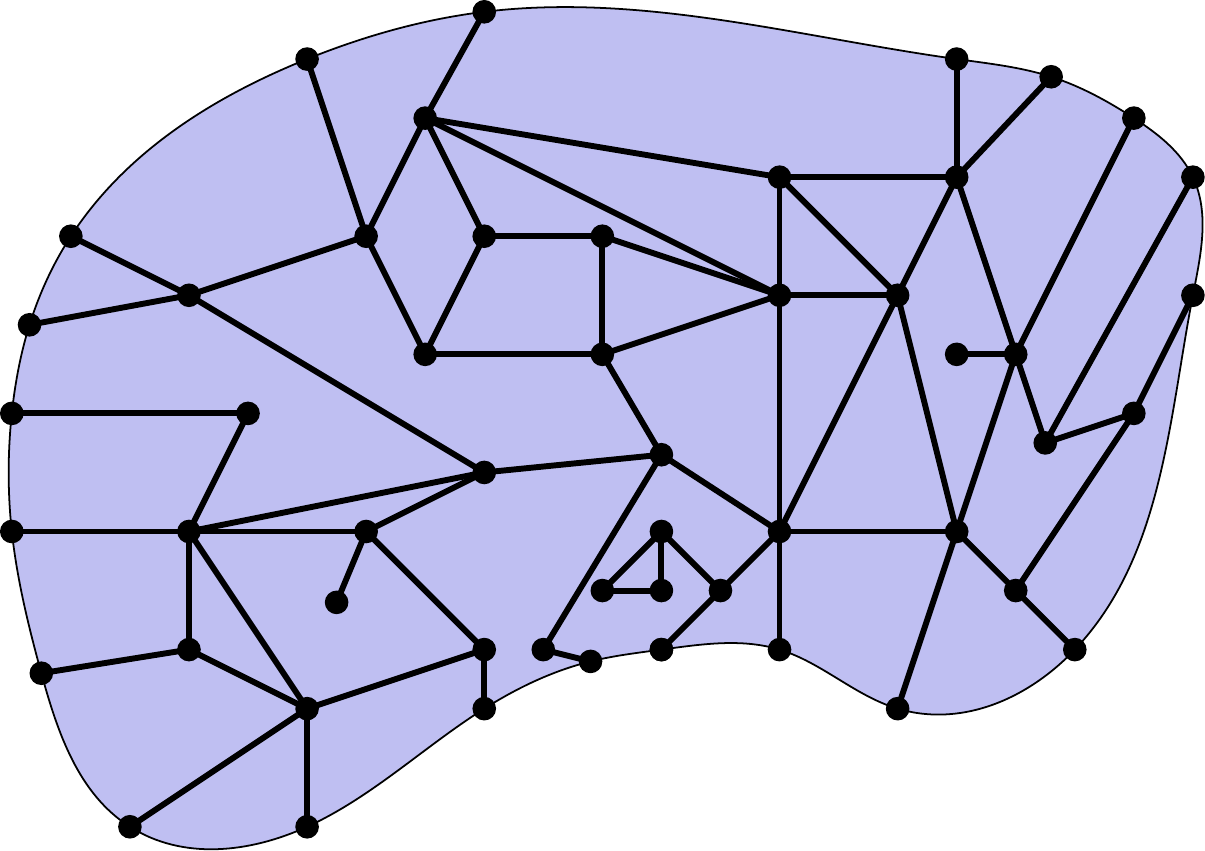}
\caption{\label{fig: graph in a Jordan domain}
Planar graphs $\Gr'$ (left) and $\Gr$ (right) embedded in a Jordan domain. 
The UST and LERW models are the same in $\Gr' / \bdry$ and $\Gr / \bdry$.}
\end{figure}

We consider a finite planar graph $\Gr = (\Vert, \Edg)$ with a non-empty subset 
of vertices $\bdry \Vert \subset \Vert$ declared as \textit{boundary vertices}.
All other vertices are called \textit{interior vertices}, and the set of them is 
denoted by $\Vert^\circ = \Vert \setminus \bdry \Vert$. The set $\bdry \Edg \subset \Edg$ of \textit{boundary edges} consists of those edges 
$e = \edgeof{e^\bdry}{e^\circ}$ which connect a boundary vertex 
$e^\bdry \in \bdry \Vert$ to an interior vertex 
$e^\circ \in \Vert^\circ$. 
We assume that the graph is embedded inside a Jordan domain in the plane in such a way that
the boundary vertices $\bdry \Vert$ are embedded on the boundary of the domain. 
We also assume that every interior vertex is connected to some boundary vertex by a path on $\Gr$, see Figure~\ref{fig: graph in a Jordan domain}.

We denote by $\GrWired$ the (multi-)graph obtained by collapsing all boundary vertices
$\bdry \Vert$ into a single vertex $v_\bdry$. The vertex set of $\GrWired$ 
is $\Vert^\circ \, \cup \, \set{v_\bdry}$,
where the single vertex $v_\bdry$ represents the collapsed boundary, and
the edges of $\GrWired$ are obtained from the edges of $\Gr$ by replacing
any boundary vertex by $v_\bdry$.
As illustrated in Figure~\ref{fig: graph in a Jordan domain}(right),
we can assume that all boundary vertices of $\Gr$ have degree one and that
no edge of $\Gr$ connects two boundary vertices, 
so that the boundary points of $\Gr$ correspond one-to-one
to the edges from $v_\bdry$ and
the (multi-)graph $\GrWired$ has no self-loops at~$v_\bdry$.




\subsection{\label{sec: UST definitions}The planar uniform spanning tree}

%
%
%
%
%

A spanning tree of a connected graph is a subgraph which is
connected and has no cycles (tree) and which contains every vertex (spanning).
Let $\Gr$ be a finite graph embedded in a planar domain as in
Section~\ref{sec: graphs in planar domain} above.
A uniformly randomly chosen spanning tree $\tree$ of the quotient graph $\GrWired$
is called a \textit{uniform spanning tree with wired boundary conditions} on $\Gr$,
below just referred to as a uniform spanning tree (UST). To facilitate the
discussion, we occasionally view $\tree$ as a collection of edges of the original 
graph $\Gr$, by the obvious identification of the edges of $\Gr$ and those of $\GrWired$.



If $v \in \Vert^\circ$ is an interior vertex, then 
there exists a unique path $\gamma_v$ in $\tree$ 
from $v$ to the boundary, i.e., a sequence 
$\gamma_v = (v_0 , v_1 , \ldots, v_\ell)$ of distinct
vertices $v_0 , v_1 , \ldots, v_\ell$ with $v_0 = v$ and $v_\ell = v_\bdry$ and
$\edgeof{v_{j-1}}{v_{j}} \in \tree$ for all $j=1,\ldots,\ell$
--- for an illustration, see Figure~\ref{fig: UST and its boundary branch from bulk point}(right).
We call $\gamma_v$ the \textit{boundary branch} from the vertex $v$,
and say that the branch reaches the boundary via the boundary
edge $\edgeof{v_{\ell-1}}{v_{\ell}} \in \bdry \Edg$.
For a given $\eout \in \bdry \Edg$, we denote by $v \rightsquigarrow \eout$
on the event $\edgeof{v_{\ell-1}}{v_{\ell}} = \eout$ that the branch $\gamma_v$ from $v$
reaches the boundary via $\eout$.

We will mostly be interested in boundary-to-boundary connectivities
of the type illustrated in Figure~\ref{fig: UST boundary branch}, where
the boundary branches from the interior vertices of boundary edges are considered.
It is convenient to label these by the boundary edge rather than its interior vertex.
For a boundary edge $\ein = \edgeof{\ein^\bdry}{\ein^\circ}$, 
we thus write simply $\gamma_{\ein}$ instead of $\gamma_{\ein^\circ}$,
and $\ein \rightsquigarrow \eout$ instead of
$\ein^\circ \rightsquigarrow \eout$. 

\subsubsection{\textbf{The connectivity partition functions}}

Let $N \in \bN$ and consider $2N$ marked distinct boundary edges 
$e_1 , \ldots, e_{2N} \in \bdry \Edg$
appearing in counterclockwise order along the boundary of the domain.
We are interested in the probability that the boundary branches
$\gamma_{e_1} , \ldots, \gamma_{e_{2N}}$ of the UST 
connect the marked boundary edges
$e_1 , \ldots, e_{2N}$ in a particular topological manner, encoded in 
a link pattern $\alpha \in \LP_{N}$ ---
see Figure~\ref{fig: connectivity} for an example.
To make precise sense of this, fix the link pattern $\alpha \in \LP_N$,
and choose an orientation $\big( (a_\ell, b_\ell) \big)_{\ell = 1}^N$ of $\alpha$.
This selects $N$ of the marked edges, $e_{a_1} , \ldots, e_{a_N}$, as
entrances and the other $N$ of the marked edges, $e_{b_1} , \ldots, e_{b_N}$, 
as exits.
By the connectivity $\alpha$ we then mean that, for all $\ell = 1 , \ldots, N$,
the boundary branch $\gamma_{e_{a_\ell}}$ from $e_{a_\ell}^\circ$ connects to 
the wired boundary via the edge $e_{b_\ell}$ (i.e., 
$\pathfromto{e_{a_\ell}}{e_{b_\ell}}$).
The partition function 
for the connectivity $\alpha$ is the probability of this event,
\begin{align}\label{eq: partition function for connectivity alpha}
Z_\alpha(e_1 , \ldots, e_{2N}) := \PR \Big[ \bigcap_{\ell=1}^N \set{\pathfromto{e_{a_\ell}}{e_{b_\ell}}} \Big] .
\end{align}
A priori, the definition of the event depends on our choice of orientation of $\alpha$
which determines the entrance points, but we will show below in
Lemma~\ref{lem: well definedness of the partition function for connectivity alpha}
that the 
probability is the same for any orientation of $\alpha$.
Note that we do not assign any connectivity
unless the boundary branches from some $N$ marked boundary edges 
connect to the boundary via the other $N$ marked boundary edges.
In particular, even the total partition function, defined as the sum of all connectivity probabilities
\[ Z(e_1 , \ldots, e_{2N}) = \sum_{\alpha \in \LP_{N}} Z_\alpha (e_1 , \ldots, e_{2N}) , \]
is typically of small order of magnitude.


\begin{lem}\label{lem: well definedness of the partition function for connectivity alpha}
The connectivity probability
\[ \PR \Big[ \bigcap_{\ell=1}^N \set{\pathfromto{e_{a_\ell}}{e_{b_\ell}}} \Big] \]
does not depend on the choice of orientation 
$\big( (a_\ell, b_\ell) \big)_{\ell = 1}^N$
of the link pattern $\alpha \in \LP_N$.
In particular, the partition function $Z_\alpha(e_1, \ldots,e_{2N})$ 
is well-defined by Equation~\eqref{eq: partition function for connectivity alpha}.
\end{lem}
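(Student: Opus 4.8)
The plan is to reduce the statement to a single elementary operation on spanning trees and to realize that operation by a measure-preserving bijection. Since the connectivity event $\bigcap_{\ell=1}^N \set{\pathfromto{e_{a_\ell}}{e_{b_\ell}}}$ is an intersection, it is manifestly insensitive to the order in which the links are listed; the only genuine freedom in an orientation of $\alpha$ is, link by link, the choice of which endpoint is the entrance. Any two orientations of $\alpha$ therefore differ (up to reordering) by flipping a subset of links, which can be performed one at a time, so it suffices to prove that reversing the orientation of a single link, with all other links held fixed, preserves the probability. Fix such a link, write its two marked edges as $e_a$ (entrance) and $e_b$ (exit), and let $A$ and $A'$ denote the connectivity events before and after reversing it; thus $A$ contains the factor $\set{\pathfromto{e_a}{e_b}}$ and $A'$ the factor $\set{\pathfromto{e_b}{e_a}}$, with all remaining factors identical.

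Because the UST is uniform on the spanning trees of $\GrWired$, it is enough to exhibit a bijection between the spanning trees realizing $A$ and those realizing $A'$. I would define it by the edge swap $T \mapsto T' := (T \setminus \set{e_b}) \cup \set{e_a}$. On $A$, the edge $e_b$ joins $e_b^\circ$ to $v_\partial$ and the branch of $T$ from $e_a^\circ$ descends to $e_b^\circ$ and exits through $e_b$; in particular $e_a \notin T$, since if $e_a$ were a tree edge it would be the parent edge of $e_a^\circ$ and the branch would exit immediately through $e_a \neq e_b$. Deleting $e_b$ splits $T$ into the component $C_1 \ni v_\partial$ and the component $C_2 \ni e_b^\circ$ hanging below $e_b^\circ$, which also contains $e_a^\circ$; re-attaching via $e_a$ joins $e_a^\circ \in C_2$ to $v_\partial \in C_1$, so $T'$ is again a spanning tree, obtained from $T$ by reversing the segment from $e_a^\circ$ to $e_b^\circ$ and re-rooting $C_2$ at $e_a^\circ$. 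In $T'$ the branch from $e_b^\circ$ is exactly this reversed path and exits through $e_a$, so $T'$ realizes the reversed link; the inverse map $T' \mapsto (T' \setminus \set{e_a}) \cup \set{e_b}$ has the same form, giving the bijection.

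The step I expect to be the crux is verifying that the swap preserves all of the remaining factors $\set{\pathfromto{e_{a_\ell}}{e_{b_\ell}}}$. The swap touches only edges incident to $C_2$: every edge inside $C_1$ is untouched, so every vertex of $C_1$ keeps its original tree-path, and hence its exit edge, to $v_\partial$. Here I would use that the exits $e_{b_1},\ldots,e_{b_N}$ are distinct marked edges: on $A$, for each other link $\ell$ the branch from $e_{a_\ell}^\circ$ exits through $e_{b_\ell} \neq e_b$, whereas in $T$ every vertex of $C_2$ exits through $e_b$; therefore neither $e_{a_\ell}^\circ$ nor $e_{b_\ell}^\circ$ lies in $C_2$. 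All other marked vertices thus sit in $C_1$, their exit edges are unchanged, and the swap maps $A$ into $A'$; the symmetric argument for the inverse shows it maps $A$ \emph{onto} $A'$. Comparing cardinalities of spanning trees then gives $\PR[A]=\PR[A']$, and iterating over links yields the asserted independence of the orientation.
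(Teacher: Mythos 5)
Your proof is correct and uses essentially the same argument as the paper: the edge-swap bijection $T \mapsto (T \setminus \set{e_{b}}) \cup \set{e_{a}}$ that deletes exit edges and adds entrance edges of reversed links. The only difference is that you flip one link at a time and iterate (with a careful verification of each swap), whereas the paper performs all reversals simultaneously in a single bijection; this is a bookkeeping variation, not a different method.
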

\begin{figure}
\vspace{-2cm}

\includegraphics[width=.35\textwidth]{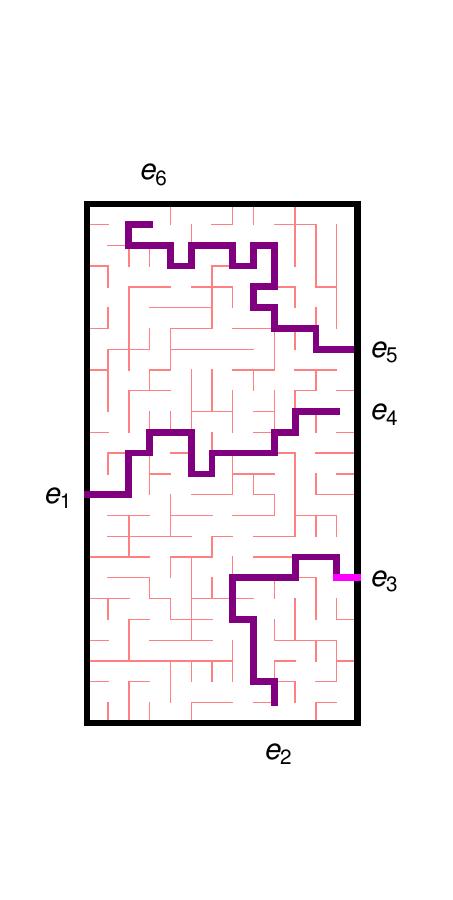}
\hspace{.5cm}
\includegraphics[width=.35\textwidth]{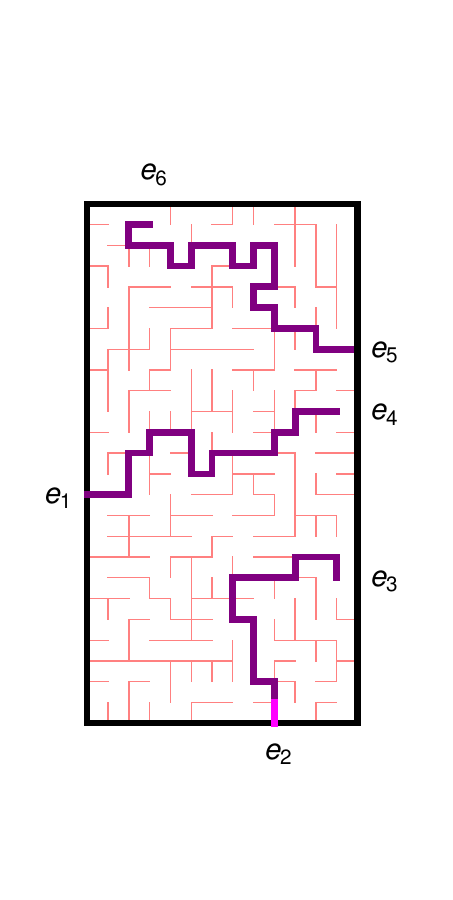}

\vspace{-1.5cm}
\caption{\label{fig: well definedness of connectivity partition functions}
Illustration of the bijection used in the proof of
Lemma~\ref{lem: well definedness of the partition function for connectivity alpha}
to show that UST boundary branch connectivity probabilities do not depend on the
orientation. Both pictures have connectivities
$\pathfromto{e_{4}}{e_{1}}$ and $\pathfromto{e_{6}}{e_{5}}$, but the connectivity
$\pathfromto{e_{2}}{e_{3}}$ on the left is changed to 
the connectivity
$\pathfromto{e_{3}}{e_{2}}$ on the right by deleting one edge and adding one in the tree.}
\end{figure}
\begin{proof}
Fix the link pattern $\alpha \in \LP_N$, and let
$\big( (a_\ell, b_\ell) \big)_{\ell = 1}^N$ and 
$\big( (a'_\ell, b'_\ell) \big)_{\ell = 1}^N$
be two orientations of $\alpha$. The order of the links does not affect 
the definition of the connectivity event, so we may assume that for each 
$\ell$, we have $\set{a_\ell , b_\ell} = \set{a'_\ell , b'_\ell}$
(in either order). Let $R$ be the 
set of those link indices $\ell$ for which the orientation of the link is 
reversed, $a_\ell = b'_\ell$ and $b_\ell = a'_\ell$.
Now, define 
the following bijection between the sets of spanning trees with these connectivities. To a tree $\tree$ for which the connectivity
$\pathfromto{e_{a_\ell}}{e_{b_\ell}}$ holds for each $\ell$, we associate the tree
\[ \Big( \tree \, \cup \, \set{e_{a_r} \, \big| \, r \in R} \Big) \setminus \set{e_{b_r} \, \big| \, r \in R} \]
obtained by deleting the exit edges and adding the entrance edges of the reversed links,
as illustrated in Figure~\ref{fig: well definedness of connectivity partition functions}.
This defines a bijection 
between the two connectivity events
\[ \bigcap_{\ell=1}^N \set{\pathfromto{e_{a_\ell}}{e_{b_\ell}}} \; \longrightarrow \;
\bigcap_{\ell=1}^N \set{\pathfromto{e_{a'_\ell}}{e_{b'_\ell}}} \]
and shows that the probabilities of the connectivity events for the
uniform spanning tree are equal.
\end{proof}

\subsubsection{\label{sec: boundary visits from partition functions}\textbf{Boundary visit probabilities of boundary branches}}

Fix two boundary edges $\ein , \eout \in \bdry \Edg$ of the graph $\Gr$.
Consider the uniform spanning tree with wired boundary conditions
on $\Gr$,
conditioned on the event $\set{\pathfromto{\ein}{\eout}}$ 
that the boundary branch $\gamma = \gamma_{\ein}$ from $\ein^\circ$ connects 
to the boundary via $\eout$.
Note that the probability of the event $\set{\pathfromto{\ein}{\eout}}$ 
on which we condition equals
\[ \PR[\pathfromto{\ein}{\eout}] = Z(\ein,\eout) , \]
the partition function of the unique link pattern with just one link.
To make a distinction, we continue to denote by $\PR$ the uniform measure on spanning trees,
and use $\PR_{\ein,\eout}$ for the conditioned measure.

We are interested in the boundary visit probabilities of $\gamma$,
i.e., the probabilities of the event that $\gamma$ contains
given edges at unit distance from the boundary, as illustrated in Figure~\ref{fig: LERW boundary visit event}.
We say that an edge $\hat{e}_s \in \Edg$ is at unit distance from the boundary
if $\hat{e}_s = \edgeof{\hat{e}_{s;1}^\circ}{\hat{e}_{s;2}^\circ}$ joins the interior vertices
$\hat{e}_{s;1}^\circ, \hat{e}_{s;2}^\circ$ of two boundary edges $\hat{e}_{s;1} , \hat{e}_{s;2} \in \bdry \Edg$,
as in Figure~\ref{fig: edge at unit distance from the boundary}.
Let $N' \in \bN$ and let $\hat{e}_1 , \ldots \hat{e}_{N'} \in \Edg$
be $N'$ edges at unit distance from the boundary,
and assume that the edges $\ein, \eout, \hat{e}_1 , \ldots \hat{e}_{N'}$ do not
have any common vertices.
We will calculate the probability
\[ \PR_{\ein,\eout} \big[ \gamma \ni \hat{e}_1 , \ldots, \hat{e}_{N'} \big] ,\]
and, even more specifically, the probability that the branch $\gamma$ visits the
edges $\hat{e}_1 , \ldots \hat{e}_{N'}$ in this order, as illustrated schematically in
Figure~\ref{fig: alpha omega}(left). 

For each of the edges $\hat{e}_s$, choose two boundary edges
$\hat{e}_{s;1}, \hat{e}_{s;2}$ so that $\hat{e}_{s}$ joins their interior vertices.
Note that due to planarity and the fact that these edges are at unit distance
from the boundary, any simple path from $\ein$ to $\eout$ 
can only traverse $\hat{e}_s$ in one possible direction.
We assume $\hat{e}_{s;1}^\circ, \hat{e}_{s;2}^\circ$ chosen so that
$\hat{e}_{s;1}^\circ$ must be visited before $\hat{e}_{s;2}^\circ$.
If $\hat{e}_s$ is on the counterclockwise boundary arc $\bdry \Gr_+$ 
from $\ein$ to $\eout$,
then the directed edge from $\hat{e}_{s;1}^\circ$ to $\hat{e}_{s;2}^\circ$ is counterclockwise
along the boundary, and if $\hat{e}_s$ is on the clockwise boundary arc $\bdry \Gr_-$ 
from $\ein$ to $\eout$, 
then the directed edge is clockwise along the boundary, see
Figure~\ref{fig: alpha omega}.
The sequence $\omega = (\omega_1 , \ldots , \omega_{N'}) \in \set{+,-}^{N'}$,
with $\omega_s = \pm$ if $\hat{e}_s$ is on $\bdry \Gr_\pm$, 
is called a \textit{boundary visit order}: it
specifies the order in which the (unordered) collection of
$N'$ edges at unit distance from the boundary is to be visited.

We now have $2 N = 2 N' + 2$ marked boundary edges
$\ein, \eout, \hat{e}_{1;1} , \hat{e}_{1;2} , \ldots , \hat{e}_{N';1} , \hat{e}_{N';2}$. 
We order these boundary edges counterclockwise along the boundary,
to obtain a sequence of boundary edges that we denote by $e_1, \ldots, e_{2N} \in \bdry \Edg$
as in the previous section. By convention, we choose to start the labeling from $e_1 = \ein$.
The boundary visit order $\omega$ determines a link pattern $\alpha(\omega) \in \LP_{N}$
as illustrated in Figure~\ref{fig: alpha omega} --- see also
\cite[Section~5.2]{KP-pure_partition_functions_of_multiple_SLEs} for a more formal definition.
The next result relates the probability of the boundary visits in the order 
$\omega$ to the partition function of the connectivity $\alpha(\omega)$.

\begin{figure}
\includegraphics[width=.4\textwidth]{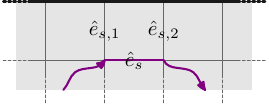}
\caption{\label{fig: edge at unit distance from the boundary}
For an edge $\hat{e}_s \in \Edg$ at unit distance from the boundary,
we associate two boundary edges $\hat{e}_{s;1} , \hat{e}_{s;2} \in \bdry \Edg$.
We choose them in such an order that any simple path from $\ein$ to $\eout$
that uses $\hat{e}_s$ will visit $\hat{e}_{s;1}^\circ$ before $\hat{e}_{s;2}^\circ$.}
\end{figure}
\begin{figure}
\includegraphics[width=.45\textwidth]{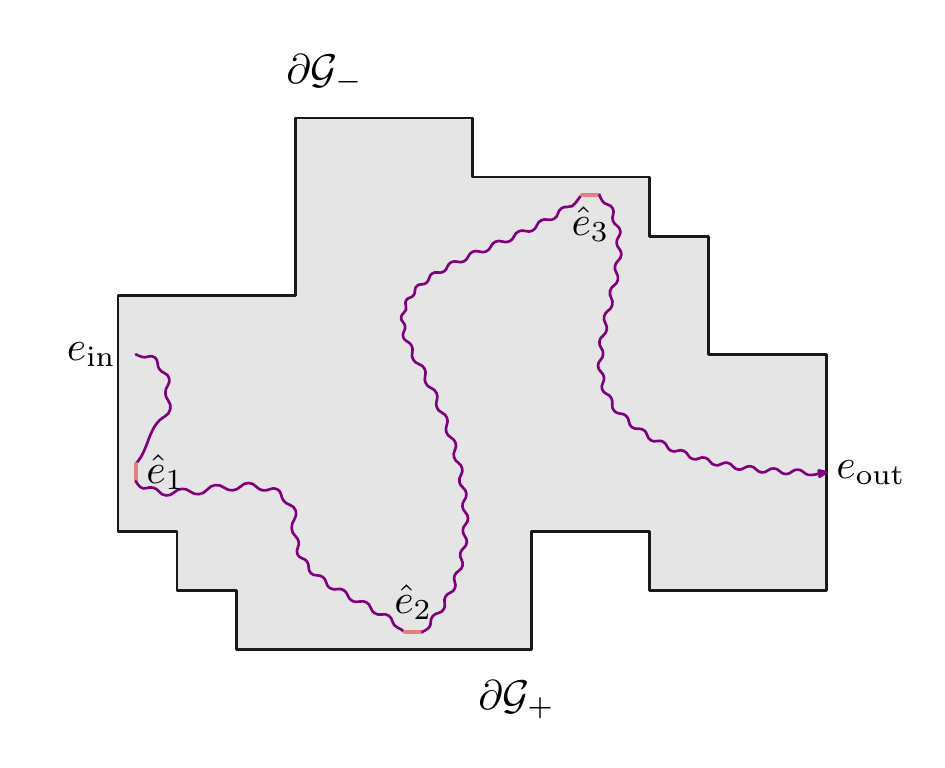}
\hspace{1cm}
\includegraphics[width=.45\textwidth]{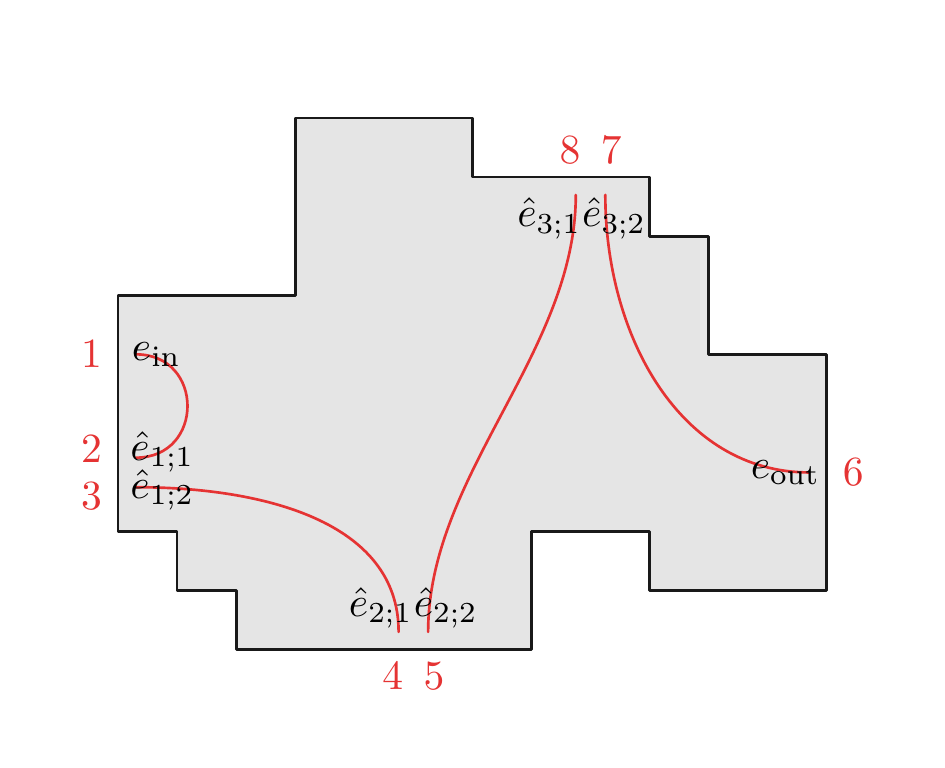}
\caption{\label{fig: alpha omega}
To any boundary visit order $\omega \in \set{+,-}^{N'}$
we associate a corresponding link pattern $\alpha(\omega) \in \LP_N$ with $N=N'+1$.
For the case illustrated in the figure, we have $\omega = (+,+,-)$ and
$\alpha(\omega) = \set{ \{ 1 , 2 \} , \{ 3 , 4 \} , \{ 5 , 8 \} , \{ 6 , 7 \} }$.
}
\end{figure}

\begin{lem}\label{lem: boundary visit probabilities with partition functions}
Let $\ein, \eout \in \bdry \Edg$ be two boundary edges and let
$\hat{e}_1 , \ldots \hat{e}_{N'}$ be edges at unit distance from the boundary,
as above. Associate to them the boundary edges $e_1, \ldots,e_{2N}$ 
and the link pattern $\alpha(\omega)$,
as above. Then, the boundary visit probability for the branch $\gamma$ from 
$\ein$ to $\eout$ is given by the ratio
\[
\PR_{\ein,\eout} \big[ \gamma \text{ uses $\hat{e}_1 , \ldots , \hat{e}_{N'}$ in this order} \big]
= \frac{Z_{\alpha(\omega)}(e_1 , \ldots, e_{2N})}{Z(\ein,\eout)}  \]
of partition functions.
\end{lem}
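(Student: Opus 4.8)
The plan is to remove the conditioning and reduce the statement to a bijection between spanning trees. Since the visiting event $\set{\gamma \text{ uses } \hat{e}_1,\ldots,\hat{e}_{N'} \text{ in this order}}$ is contained in $\set{\pathfromto{\ein}{\eout}}$ and $\PR[\pathfromto{\ein}{\eout}] = Z(\ein,\eout)$, the definition of $\PR_{\ein,\eout}$ shows that the claim is equivalent to the unconditioned identity
\[
\PR\big[ \gamma \text{ uses } \hat{e}_1,\ldots,\hat{e}_{N'} \text{ in this order} \big]
= Z_{\alpha(\omega)}(e_1,\ldots,e_{2N})
= \PR\Big[ \bigcap_{\ell=1}^N \set{\pathfromto{e_{a_\ell}}{e_{b_\ell}}} \Big] .
\]
As $\PR$ is uniform on the spanning trees of $\GrWired$, it suffices to exhibit a bijection between the trees realizing the visiting event and the trees realizing the connectivity $\alpha(\omega)$; by Lemma~\ref{lem: well definedness of the partition function for connectivity alpha} I may fix any convenient orientation of $\alpha(\omega)$, and I would choose one in which $\eout$ is an entrance.

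First I would describe the structure of a tree $\tree$ in the visiting event. Because $\gamma$ traverses each $\hat{e}_s = \edgeof{\hat{e}_{s;1}^\circ}{\hat{e}_{s;2}^\circ}$ from $\hat{e}_{s;1}^\circ$ to $\hat{e}_{s;2}^\circ$, all these interior vertices, as well as $\ein^\circ$, lie on $\gamma$. Cutting $\gamma$ at the $N'$ edges $\hat{e}_1,\ldots,\hat{e}_{N'}$ and at its final edge $\eout$ splits it into the $N=N'+1$ interior segments
\[
\sigma_0 \colon \ein^\circ \rightsquigarrow \hat{e}_{1;1}^\circ, \qquad
\sigma_s \colon \hat{e}_{s;2}^\circ \rightsquigarrow \hat{e}_{s+1;1}^\circ \quad (1 \le s \le N'-1), \qquad
\sigma_{N'} \colon \hat{e}_{N';2}^\circ \rightsquigarrow \eout^\circ ,
\]
and the pairing of $e_1,\ldots,e_{2N}$ induced by these segments is exactly $\alpha(\omega)$ by the definition of the latter (Figure~\ref{fig: alpha omega}). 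The decisive structural fact is that the boundary edges $\ein, \hat{e}_{1;1}, \hat{e}_{1;2},\ldots,\hat{e}_{N';1},\hat{e}_{N';2}$ are all \emph{absent} from $\tree$, while $\hat{e}_1,\ldots,\hat{e}_{N'},\eout$ all belong to $\tree$: indeed each interior vertex of one of these boundary edges lies on $\gamma$ and so has its unique parent edge (toward the root $v_\bdry$) along $\gamma$, whence the direct edge to $v_\bdry$ cannot lie in the tree.

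The bijection $\Phi$ then deletes $\set{\hat{e}_1,\ldots,\hat{e}_{N'},\eout}$ and inserts the $N$ exit edges of $\alpha(\omega)$ in the chosen orientation; since $\eout$ is an entrance none of these coincides with $\eout$, and each is absent from $\tree$ by the previous step. I would verify that deleting the $N$ tree edges of $\gamma$ splits $\tree$ into exactly $N+1$ components --- the segments $\sigma_0,\ldots,\sigma_{N'}$ together with their hanging subtrees, and the component of $v_\bdry$ --- since every vertex severed from $v_\bdry$ reached it only through $\gamma$. Inserting the $N$ exit edges attaches each segment directly to $v_\bdry$ without forming a cycle, so $\Phi(\tree)$ is a spanning tree in which the boundary branch from each entrance exits through the corresponding exit edge, realizing precisely the connectivity $\alpha(\omega)$. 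For the inverse I would first note the symmetric fact that in any tree $\tree'$ realizing $\alpha(\omega)$ exactly the exit edges are present among the marked boundary edges, and moreover $\hat{e}_s \notin \tree'$ (otherwise $\hat{e}_{s;1}^\circ$ and $\hat{e}_{s;2}^\circ$, which exit via two distinct boundary edges, would admit two distinct paths to $v_\bdry$); then deleting the exit edges and inserting $\hat{e}_1,\ldots,\hat{e}_{N'},\eout$ chains the segments back into a single branch using $\hat{e}_1,\ldots,\hat{e}_{N'}$ in this order, giving a tree in the visiting event and an explicit inverse to $\Phi$.

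The reduction and the segment decomposition are routine; the main obstacle is the bookkeeping in the last paragraph, namely checking that $\Phi$ is well defined (the deleted edges are in the tree and the inserted ones are not), that it maps spanning trees to spanning trees (the component count and acyclicity of the reattachment), and that it produces the connectivity $\alpha(\omega)$ for the fixed orientation rather than a different one. Granting these, $\Phi$ is a bijection, the two events are equiprobable, and dividing by $Z(\ein,\eout)$ yields the stated formula.
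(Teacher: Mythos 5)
Your bijection itself is sound, and it is essentially the paper's argument: the paper's map replaces each visited edge $\hat{e}_s$ by the boundary edge $\hat{e}_{s;1}$ and keeps $\eout$ in the tree (so the resulting connectivity carries the orientation given by the direction of travel), whereas your $\Phi$ additionally trades $\eout$ for $\hat{e}_{N';2}$; in other words, $\Phi$ is the paper's map composed with the orientation-reversal bijection of Lemma~\ref{lem: well definedness of the partition function for connectivity alpha} applied to the last link. Your verification of which marked boundary edges are present or absent in a tree of each event is correct, and in fact more detailed than what the paper writes out.

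There is, however, one genuine error, in the reduction step: the visiting event is \emph{not} contained in $\set{\pathfromto{\ein}{\eout}}$ under the unconditioned UST measure. The event that $\gamma_{\ein}$ uses $\hat{e}_1,\ldots,\hat{e}_{N'}$ in this order (even with the prescribed traversal directions) constrains which edges the branch passes through, but not where it terminates: the branch can traverse all of $\hat{e}_1,\ldots,\hat{e}_{N'}$ exactly as prescribed and still reach the boundary through some edge other than $\eout$, for instance through a boundary edge lying strictly between $\hat{e}_{N'}$ and $\eout$. Consequently the ``unconditioned identity'' you reduce to, namely $\PR\big[\text{visit}\big]=Z_{\alpha(\omega)}(e_1,\ldots,e_{2N})$, is false in general. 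The correct unconditioned statement is $\PR\big[\set{\text{visit}}\cap\set{\pathfromto{\ein}{\eout}}\big]=Z_{\alpha(\omega)}(e_1,\ldots,e_{2N})$, and this is in fact what your own argument proves: your structural description (cutting $\gamma$ at ``its final edge $\eout$'', and the claim $\eout\in\tree$) presupposes that the branch terminates at $\eout$, so the domain of $\Phi$ is the intersection event, not the visiting event alone. Restating the reduction as $\PR_{\ein,\eout}\big[\text{visit}\big] = \PR\big[\set{\text{visit}}\cap\set{\pathfromto{\ein}{\eout}}\big]/Z(\ein,\eout)$ — which is just the definition of conditional probability, with no containment needed — repairs the proof, and is precisely how the paper phrases it.
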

\begin{proof}
Recall first that the assertion concerns the uniform spanning tree
conditioned on the event ${\pathfromto{\ein}{\eout}}$
of probability $Z(\ein,\eout)$.
We again use a bijection in the uniform spanning tree.
To define the bijection, we use the link pattern $\alpha(\omega)$
and the orientation $\big( (a_\ell , b_\ell) \big)_{\ell=1}^N$ of it
which naturally corresponds to the direction that the path $\gamma$ travels.

Let $\tree$ be a tree such that we have $\pathfromto{\ein}{\eout}$
and the branch $\gamma$ contains the edges $\hat{e}_1 , \ldots \hat{e}_{N'}$.
The bijection associates to this tree the tree obtained by replacing in $\tree$ 
each edge $\hat{e}_s$ at unit distance from the boundary by the boundary edge 
$\hat{e}_{s;1}$,
see Figure~\ref{fig: multiple boundary visit bijection for a branch}.
This transformation is bijective onto the set of
spanning trees which have the connectivity $\alpha(\omega)$
for the edges $e_1 , \ldots, e_{2N}$. Therefore, we have
\[ \PR \big[ \pathfromto{\ein}{\eout} \text{ and } 
\gamma_{\ein} \text{ uses } \hat{e}_1 , \ldots \hat{e}_{N'} \text{ in this order} \big]
= \PR\Big[ \bigcap_{\ell=1}^N \set{ \pathfromto{e_{a_\ell}}{e_{b_\ell}} } \Big]
= Z_{\alpha(\omega)}(e_1 , \ldots, e_{2N}) . \]
With the conditioning, we get the asserted formula
\begin{align*}
\PR_{\ein,\eout} \big[ \gamma \text{ uses } \hat{e}_1 , \ldots \hat{e}_{N'} \text{ in this order} \big]
= \; & \frac{\PR \big[ \pathfromto{\ein}{\eout} \text{ and } \gamma_{\ein} \text{ uses } \hat{e}_1 , \ldots \hat{e}_{N'} \text{ in this order} \big]}{\PR [ \pathfromto{\ein}{\eout} ]} \\
= \; &  \frac{Z_{\alpha(\omega)}(e_1 , \ldots, e_{2N})}{Z(\ein,\eout)} .
\end{align*}
\end{proof}
\begin{figure}
\vspace{-.5cm}

\includegraphics[width=.7\textwidth]{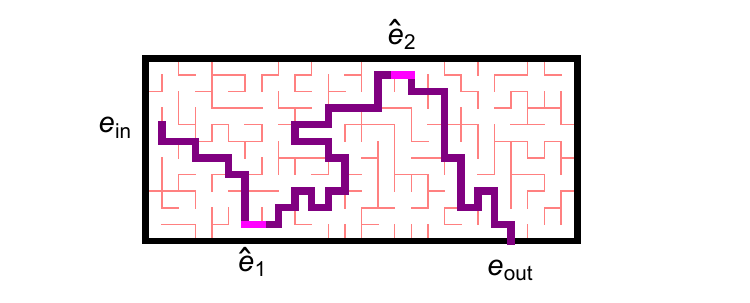}\\
\vspace{.2cm}
\includegraphics[width=.7\textwidth]{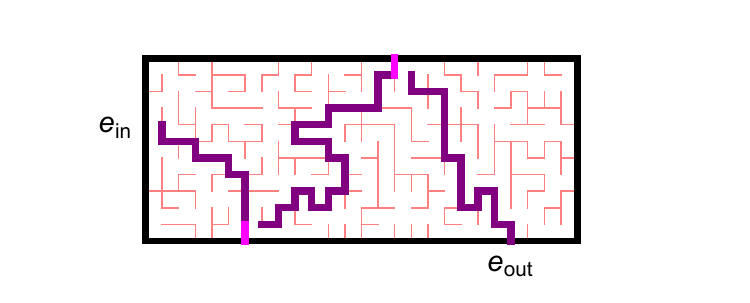}
\caption{\label{fig: multiple boundary visit bijection for a branch}
Illustration of the bijection used in
Lemma~\ref{lem: boundary visit probabilities with partition functions}
to obtain general boundary visit probabilities of a
UST boundary branch from multiple branch connectivity probabilities.}
\end{figure}

\subsection{\label{sec: LERW and UST}Relation between uniform spanning trees and loop-erased random walks}

We next turn to Wilson's algorithm, introduced
in~\cite{Wilson-generating_random_spanning_trees}
as an efficient method of sampling a uniform spanning tree.
For us, the algorithm is mainly important because with it, the relation 
of the uniform spanning tree to loop-erased random walks becomes apparent,
see also~\cite{Pemantle-choosing_a_spanning_tree_for_the_integer_lattice_uniformly}.


\subsubsection{\label{sec: discrete harmonic measures}\textbf{Random walks}}

Let $\Gr = (\Vert , \Edg)$ and $\bdry \Vert \subset \Vert$, $\bdry \Edg \subset \Edg$ be
as in Section~\ref{sec: graphs in planar domain}.
The \textit{symmetric random walk} (SRW) on the graph $\Gr$ is
the Markov process on the vertex set $\Vert$
whose transition probability from $v \in \Vert$ to $w \in \Vert$ is
\begin{align}\label{eq: symmetric random walk transition probabilities}
P_{v,w} = \begin{cases}
\frac{1}{\deg(v)} & \text{ if $\edgeof{v}{w} \in \Edg$} \\
0 & \text{ otherwise.}
\end{cases}
\end{align}
We use in particular the random walk $\eta$ started from an interior vertex
$\eta(0) = v \in \Vert^\circ$, and stopped at the first time $\tau$
at which it is on the boundary $\bdry \Vert$.
The last step of the stopped random walk $\eta = \big(\eta(t)\big)_{t=0}^\tau$
is a boundary edge $\edgeof{\eta(\tau-1)}{\eta(\tau)} \in \bdry \Edg$.
For any given boundary edge $\eout \in \bdry \Edg$, the \textit{harmonic
measure} $\HarmMeas_v(\eout)$ of $\eout$
seen from $v \in \Vert^\circ$ is the probability that 
the random walk $\eta$ started from $v$ exits via the edge $\eout$:
\[ \HarmMeas_v(\eout) = \PR\Big[ \edgeof{\eta(\tau-1)}{\eta(\tau)} = \eout \; \Big| \; \eta(0)=v \Big] . \]

If $\ein = \edgeof{\ein^\bdry}{\ein^\circ} \in \bdry \Edg$ and
$\eout = \edgeof{\eout^\bdry}{\eout^\circ} \in \bdry \Edg$ 
are two boundary edges, then, an easy path reversal argument shows
that the harmonic measure of $\eout$ seen from $\ein^\circ$ and the
harmonic measure of $\ein$ seen from $\eout^\circ$ coincide.
\begin{lem}\label{lem: symmetry of discrete excursion kernel}
For any $\ein,\eout \in \bdry \Edg$, we have 
$\HarmMeas_{\ein^\circ}(\eout) = \HarmMeas_{\eout^\circ}(\ein)$.
\end{lem}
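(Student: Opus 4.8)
The plan is to use reversibility of the symmetric random walk, made concrete through a path-reversal bijection on trajectories. First I would unwind the definition of $\HarmMeas_{\ein^\circ}(\eout)$. A trajectory $\eta$ contributing to it starts at $\eta(0) = \ein^\circ$, keeps every vertex $\eta(0), \ldots, \eta(\tau-1)$ in the interior $\Vert^\circ$ (since the walk is stopped at the first boundary hit), and exits through $\eout$, meaning $\eta(\tau-1) = \eout^\circ$ and $\eta(\tau) = \eout^\bdry$. Thus such a trajectory is an interior nearest-neighbor path $u_0 = \ein^\circ, u_1, \ldots, u_{\tau-1} = \eout^\circ$ followed by the single step along the boundary edge $\eout$. (If $\ein = \eout$ the claim is trivial, so assume $\ein \neq \eout$.)

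Next I would compute the weight of such a trajectory. By \eqref{eq: symmetric random walk transition probabilities}, each step out of a vertex $v$ carries probability $1/\deg(v)$, independently of the target, so
\[ \PR[\eta] = \prod_{t=1}^{\tau} \frac{1}{\deg(\eta(t-1))} = \prod_{i=0}^{\tau-1} \frac{1}{\deg(u_i)} . \]
The crucial observation is that this weight depends only on the collection of interior vertices $u_0, \ldots, u_{\tau-1}$ from which steps are taken, and not on the orientation in which the path is traversed.

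Finally I would set up the reversal bijection. To a trajectory $\eta$ as above, associate the reversed interior path $u_{\tau-1} = \eout^\circ, u_{\tau-2}, \ldots, u_0 = \ein^\circ$ followed by the step along $\ein$ to $\ein^\bdry$; call this $\overleftarrow{\eta}$. This is precisely a trajectory contributing to $\HarmMeas_{\eout^\circ}(\ein)$, the map $\eta \mapsto \overleftarrow{\eta}$ is a bijection between the two families of trajectories (its inverse being the analogous reversal), and by the weight formula $\PR[\overleftarrow{\eta}] = \prod_{i=0}^{\tau-1} 1/\deg(u_i) = \PR[\eta]$. Summing over all contributing trajectories then yields $\HarmMeas_{\ein^\circ}(\eout) = \HarmMeas_{\eout^\circ}(\ein)$.

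The argument is genuinely elementary; the only points requiring care are bookkeeping ones. I would make sure that every intermediate vertex is interior, so that both $\eta$ and $\overleftarrow{\eta}$ meet the boundary exactly once, at their last step; and I would check that the degree product is taken over the correct vertices. The subtlety — and the crux of why the symmetric random walk is reversible here — is that the factor $1/\deg(v)$ is attached to the \emph{source} vertex $v$ of each step, so one must verify that reversing the path merely reindexes the product over the same set $\set{u_0, \ldots, u_{\tau-1}}$ of interior vertices, rather than bringing in the boundary endpoints $\ein^\bdry, \eout^\bdry$.
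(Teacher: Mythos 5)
Your proof is correct and is exactly the argument the paper has in mind: the lemma is justified there by the remark that ``an easy path reversal argument'' gives the symmetry, and your trajectory-reversal bijection with the weight $\prod_{i=0}^{\tau-1} 1/\deg(u_i)$ attached only to the interior source vertices is precisely that argument, written out in full. The bookkeeping points you flag (all intermediate vertices interior, the degree product unchanged under reversal since boundary endpoints never appear as sources) are indeed the only things to check, and you handle them correctly.
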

The \textit{random walk excursion kernel} $\ExcK(\ein,\eout)$ between $\ein$ and $\eout$ is
either one of the above harmonic measures
\[ \ExcK(\ein,\eout) := \HarmMeas_{\ein^\circ}(\eout) = \HarmMeas_{\eout^\circ}(\ein) . \]
In particular, the excursion kernel is symmetric, $\ExcK(\ein,\eout) = \ExcK(\eout,\ein)$.

\subsubsection{\label{sec: Wilsons algorithm}\textbf{Loop-erased random walk and Wilson's algorithm}}

If $z = \big( z(t) \big)_{t=0}^n$ is 
a finite sequence of symbols, its \textit{loop-erasure} $\LE(z)$ is defined as the sequence
$\big( \lambda(s) \big)_{s=0}^m$ given recursively by
\begin{align*}
\lambda(0) = \; & z(0), & & \text{and for $s \geq 0$} \\  
\lambda(s+1) = \; & z(t_s + 1),
    & & \text{where } \quad
    t_s = \max \set{t \in \bZnn \; \Big| \; t > t_{s-1} \text{ and } z(t) = \lambda(s)}
\end{align*}
(interpret $t_{-1} = -1$),
and its number of steps is the smallest $m$ such that $t_m = n$.
Note that the loop-erasure 
has the same first and last symbols as the original sequence,
$\lambda(0) = z(0)$ and $\lambda(m) = z(n)$, and it
is self-avoiding in the sense that we have $\lambda(s) \neq \lambda(s')$ 
whenever $s \neq s'$.

A \textit{loop-erased random walk}
is the loop-erasure of some random walk of finitely many steps.
We consider a symmetric random walk $\eta$ as in 
Section~\ref{sec: discrete harmonic measures}.
The random walk $\eta$ is started from an interior vertex $\eta(0) = v \in \Vert$,
its transition probabilities 
\eqref{eq: symmetric random walk transition probabilities}
to all neighboring vertices are equal, 
and the walk is stopped at the (almost surely finite) first time $\tau$ at which 
the walk reaches a given non-empty set $S \subset \Vert$, for example the boundary 
$\bdry \Vert$. We define the loop-erased random walk (LERW) 
$\lambda$ from $v$ to $S$ as the loop-erasure of
$\eta = \big( \eta(t) \big)_{t=0}^\tau$, that is, $\lambda = \LE(\eta)$.
With a slight abuse of notation, we view $\lambda$ alternatively as
the list of its vertices $\big(\lambda(s)\big)_{s=0}^m$, or as the list of edges
$\big(\edgeof{\lambda(s-1)}{\lambda(s)}\big)_{s=1}^m$ it uses, or as the subgraph
formed by these vertices and edges.


The following procedure of constructing a uniform spanning tree is known as Wilson's algorithm.

\begin{thm}{{\cite{Wilson-generating_random_spanning_trees}}}
\label{thm: Wilsons algorithm}
Let $v_0 , \ldots, v_n$ be any enumeration of the set of vertices
of a finite connected graph. Define $\tree_0$ as the subgraph consisting of
only the vertex $v_0$, and recursively
for $k=1,\ldots,n$ define $\tree_k$ 
as the union of $\tree_{k-1}$ and a loop-erased random walk
from $v_k$ to $\tree_{k-1}$, independently for each $k$. Then $\tree = \tree_n$
is a uniform spanning tree of the original 
graph.
\end{thm}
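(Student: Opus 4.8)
The plan is to prove the theorem through Wilson's \emph{cycle-popping} representation, which generates all the loop-erased random walks from a single pool of randomness and renders the law of the output tree transparent. First I would root the graph at $v_0$ and attach to each non-root vertex $v$ an infinite \emph{stack} of independent arrows $S_v^{(1)}, S_v^{(2)}, \ldots$, where each $S_v^{(i)}$ is a random neighbour of $v$ drawn from the transition law $P_{v,\cdot}$ of~\eqref{eq: symmetric random walk transition probabilities}. A \emph{state} of the system records, for every $v \neq v_0$, how many arrows have been discarded there; the topmost surviving arrow at each non-root vertex then produces a directed graph with exactly one outgoing edge per non-root vertex. If this directed graph is acyclic it is precisely a spanning tree oriented towards $v_0$; otherwise one chooses a directed cycle among the visible arrows and \emph{pops} it, discarding the current top arrow at each of its vertices and exposing the next.

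The heart of the argument, and the step I expect to be the main obstacle, is the \emph{cycle-popping lemma}: the (multi)set of cycles popped before the procedure halts, together with the resulting tree, is independent of the order in which poppable cycles are selected, and the procedure halts almost surely. I would prove order-independence by labelling each visible cycle with the stack levels of the arrows composing it, turning it into a \emph{coloured cycle}, and then showing by a local exchange argument that popping one available cycle never destroys another available one; consequently any two complete popping sequences differ by transpositions of independent pops and yield the same coloured-cycle set and the same final configuration. Almost-sure termination follows because on the finite connected graph $\Gr$ the exploring walk reaches $v_0$ almost surely, bounding the expected number of pops.

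Granting the lemma, I would next identify the procedure in the statement with one admissible popping order. Growing $\tree_k$ from $\tree_{k-1}$ by a loop-erased random walk from $v_k$ is exactly the operation of reading off arrows along the walk and, each time a loop is closed, popping the corresponding visible cycle; the loop-erasure~$\LE$ of Section~\ref{sec: Wilsons algorithm} is precisely the record of the arrows that survive. Hence $\tree = \tree_n$ has the same distribution as the tree produced by any cycle-popping order, and in particular this distribution does not depend on the chosen enumeration $v_0, \ldots, v_n$. To compute it, write $\Orders$ for the random set of popped coloured cycles and $T$ for the output oriented tree. Since both are deterministic functions of the stacks, the event $\set{\Orders = o,\, T = t}$ specifies exactly which arrow occupies each used $(\text{vertex},\text{level})$ slot, and these slots split disjointly into the cycle-arrows recorded by $o$ and the tree-arrows of $t$. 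By independence of the arrows across vertices and levels, its probability factors as $w(o) \times \prod_{v \neq v_0} P_{v, t(v)}$, where $w(o)$ depends only on $o$ and $t(v)$ is the parent of $v$ in $t$; summing over the admissible sets $o$ (whose range is unaffected by $t$) gives $\PR[T = t] = C \times \prod_{v \neq v_0} P_{v, t(v)}$ for a constant $C$.

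Finally I would insert the symmetric weights $P_{v,w} = 1/\deg(v)$ from~\eqref{eq: symmetric random walk transition probabilities}. Each non-root vertex contributes exactly one factor $1/\deg(v)$ to $\prod_{v \neq v_0} P_{v, t(v)}$, irrespective of which neighbour serves as its parent, so this product equals $\prod_{v \neq v_0} 1/\deg(v)$, a number independent of the spanning tree $t$. Thus $\PR[T = t]$ is the same for every spanning tree, which is exactly the claim that $\tree = \tree_n$ is uniform. Everything after the cycle-popping lemma is bookkeeping, and this last specialization is what converts the general weighted identity into uniformity.
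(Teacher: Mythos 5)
Your proposal is correct, but note that the paper itself offers no proof of this statement: it is quoted verbatim as a known result, with the citation \cite{Wilson-generating_random_spanning_trees}, and the surrounding text immediately moves on to its consequences (Corollary~\ref{cor: branches of UST are LERWs} and Lemma~\ref{lem: multiple branch connectivity probability}). What you have written is essentially a reconstruction of Wilson's original cycle-popping argument: stacks of i.i.d.\ arrows at each non-root vertex, the order-independence lemma for popping coloured cycles, the identification of the enumeration-based algorithm with one admissible popping order (which also explains why the law of $\tree_n$ does not depend on the enumeration $v_0,\ldots,v_n$), the factorization $\PR[\Orders=o,\,T=t]=w(o)\prod_{v\neq v_0}P_{v,t(v)}$, and finally the observation that for the symmetric walk of~\eqref{eq: symmetric random walk transition probabilities} the product $\prod_{v\neq v_0}1/\deg(v)$ is tree-independent, hence uniformity. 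Two points are stated rather than proved and would need to be fleshed out in a complete write-up: first, the claim that the family of admissible popped-cycle collections $o$ ``is unaffected by $t$'' (one must check that any poppable collection of coloured cycles is compatible with any spanning tree sitting above it in the stacks); second, since the paper applies the theorem to the wired multigraph $\GrWired$, the bookkeeping should be done in terms of parent \emph{edges} rather than parent vertices, though each edge out of $v$ still carries weight $1/\deg(v)$ and the conclusion is unchanged. Neither issue is a gap in the underlying idea; both are standard steps in Wilson's proof.
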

\begin{figure}
\includegraphics[width=.35\textwidth]{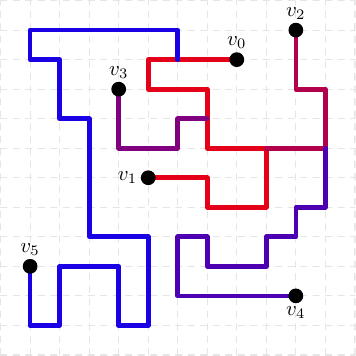}
\caption{\label{fig: Wilsons algorithm}
Wilson's algorithm (Theorem~\ref{thm: Wilsons algorithm})
constructs the uniform spanning tree 
step by step, by adding loop-erased random walks from
vertices $v_1, v_2 , \ldots$.
}
\end{figure}

To get a uniform spanning tree with wired boundary conditions, i.e.,
a uniform spanning tree of $\Gr / \bdry$, we will make
various convenient choices of enumeration of the interior vertices,
but we always use the boundary as the zeroth step in the
construction, $v_0 = v_\bdry$.

Let us now record a few direct consequences of Wilson's algorithm.
\begin{cor}\label{cor: branches of UST are LERWs}
For the uniform spanning tree with wired boundary conditions on $\Gr$, we have:
\begin{description}
\item[(a)]
The boundary branch $\gamma_v$
from $v \in \Vert$ has the law of a loop-erased random walk
from $v$ to $\bdry \Vert$.
\item[(b)]
The probability that the boundary branch $\gamma_v$
from $v \in \Vert$ reaches the boundary via a boundary edge $\eout \in \bdry \Edg$ is
the harmonic measure
\[ \PR \big[ \pathfromto{v}{\eout} \big] = \HarmMeas_v(\eout) .\]
\item[(c)]
Conditionally on the event that the boundary
branch $\gamma_v$ reaches the boundary via an edge $\eout \in \bdry \Edg$,
the law of the branch $\gamma_v$ is the loop-erasure of a symmetric
random walk from $v$ to $\bdry \Vert$ conditioned to reach $\bdry \Vert$
via the edge $\eout$.
\end{description}
\end{cor}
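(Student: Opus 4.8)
The plan is to deduce all three parts from Wilson's algorithm (Theorem~\ref{thm: Wilsons algorithm}) applied to the quotient graph $\GrWired$, exploiting the freedom in the choice of enumeration of the vertices. I would take the root to be the collapsed boundary vertex, $v_0 = v_\bdry$, and begin the construction from the vertex of interest, $v_1 = v$.

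For part (a), I would first recall that by Wilson's theorem the resulting tree $\tree$ is a uniform spanning tree of $\GrWired$ regardless of which enumeration is used, so this particular choice is legitimate. At the first step, $\tree_1$ is obtained from $\tree_0 = \set{v_\bdry}$ by adjoining a loop-erased random walk from $v$ to $\tree_0 = \set{v_\bdry}$, which is by definition a LERW from $v$ to $\bdry \Vert$. The point to emphasize is that a branch, once attached in Wilson's algorithm, is never subsequently modified: every later step stops its walk upon hitting the current tree and only adjoins new edges, so the unique path from $v$ to $v_\bdry$ in the final tree $\tree$ coincides with the loop-erased walk produced at the first step. Since $\gamma_v$ is exactly this path, part (a) follows.

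For part (b), I would write $\gamma_v = \LE(\eta)$ for a symmetric random walk $\eta = \big(\eta(t)\big)_{t=0}^\tau$ started at $v$ and stopped when it first reaches $\bdry \Vert$. The key observation is that loop-erasure preserves the final step of the walk: unwinding the recursive definition of $\LE$, the last vertex of $\LE(\eta)$ is $\eta(\tau)$ and the next-to-last is $\eta(\tau-1)$, because the terminal boundary vertex is visited only at time $\tau$. Hence $\gamma_v$ exits through the same boundary edge as $\eta$, so the event $\set{\pathfromto{v}{\eout}}$ coincides with the event that $\eta$ exits via $\eout$, whose probability is $\HarmMeas_v(\eout)$ by definition of the harmonic measure. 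Part (c) is then immediate: since $\LE$ is a deterministic functional and, by (b), the conditioning event $\set{\pathfromto{v}{\eout}}$ is $\sigma(\eta)$-measurable, the conditional law of $\gamma_v = \LE(\eta)$ given this event is the loop-erasure of $\eta$ conditioned to exit via $\eout$.

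The only genuinely substantive points are the two structural facts highlighted above --- that Wilson's algorithm never alters an already-attached branch, and that loop-erasure keeps the last step of the walk intact. Both are elementary, but they are where the mechanism of the proof actually resides; the rest is bookkeeping. I expect the mild care needed in matching indices in the recursive definition of $\LE$ (to confirm that the next-to-last vertex is $\eta(\tau-1)$) to be the most error-prone, though not conceptually deep, part.
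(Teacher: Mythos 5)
Your proof is correct and follows essentially the same route as the paper: Wilson's algorithm with the enumeration $v_0 = v_\bdry$, $v_1 = v$ gives part (a), and the observation that loop-erasure preserves the last step of the walk gives parts (b) and (c). The two structural facts you highlight (branches are never modified in later steps of Wilson's algorithm, and the walk's final edge survives loop-erasure) are exactly the points the paper's proof rests on, merely stated there more tersely.
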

\begin{proof}
Construct the uniform spanning tree $\tree$ by Wilson's algorithm, 
using an enumeration of vertices such that $v_0 = v_\bdry$ and $v_1 = v$.
For part (a), note that the loop-erased random walk from $v_1=v$ to $\tree_0 = \set{v_\bdry}$
is a path in the tree $\tree = \tree_n$ from $v$ to $\bdry \Vert$,
and this unique path is the boundary branch $\gamma_v$.
For parts (b) and (c), note that the last edge used by the random walk and its
loop-erasure is the same. Therefore, the event that the branch reaches the boundary
via $\eout$ coincides with the event that the random walk reaches the
boundary via $\eout$.
\end{proof}

In particular, we can describe the boundary branch $\gamma_{\ein}$ from a
boundary edge $\ein \in \bdry \Edg$, 
by choosing $v = \ein^\circ$.
The partition function for the connectivity of just two boundary edges
$\ein , \eout \in \bdry \Edg$ is the excursion kernel
\begin{align}\label{eq: excursin kernel is the one branch connectivity proba}
Z(\ein , \eout) = \PR \big[ \pathfromto{\ein^\circ}{\eout} \big]
= \HarmMeas_{\ein^\circ}(\eout) = \ExcK(\ein , \eout) .
\end{align}

To consider multiple branches, we will crucially use the following
slight generalization. We use the notation
$\walksfromto{u}{e}$ for the set of all finite walks on the graph $\Gr$
from an interior vertex $u \in \Vert^\circ$ to the boundary via
the boundary edge $e \in \bdry \Edg$, i.e.,
\begin{align}
\label{eq: walksfromto definition}
\begin{split}
\walksfromto{u}{e} := \Big\{ (v_0 , v_1 , \ldots , v_{\ell-1}, v_{\ell}) \; \Big| \; &
    \ell \in \bN , \; v_0 = u , \; \edgeof{v_{\ell-1}}{v_\ell} = e , \\
  & \text{for all $j<\ell$ we have }  v_j \in \Vert^\circ \text{ and }
    \edgeof{v_{j-1}}{v_j} \in \Edg \Big\} .
\end{split}
\end{align}

\begin{lem}\label{lem: multiple branch connectivity probability}
Let $e_1 , \ldots, e_{N} \in \bdry \Edg$ be distinct boundary edges,
and $u_1 , \ldots, u_N \in \Vert^\circ$ distinct interior
vertices. Let also $\eta^1 , \ldots, \eta^N$ be independent symmetric random walks
started from $u_1 , \ldots , u_N$, respectively, and stopped upon
hitting the boundary. Then the probability that, in the uniform spanning tree, 
the $N$ boundary branches from $u_1 , \ldots , u_N$ connect to the boundary via 
the edges $e_1 , \ldots, e_N$, respectively, equals
\begin{align*}
\PR \big[ \bigcap_{j=1}^N \set{\pathfromto{u_j}{e_j}} \big]
= \PR \Big[ 
\walkfromto{\eta^j}{u_j}{e_j} \text{ for all $j$, and }
            \eta^j \cap \LE(\eta^i) = \emptyset \text{ for all $i<j$} \Big] .
\end{align*}
\end{lem}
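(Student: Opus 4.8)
The plan is to run Wilson's algorithm (Theorem~\ref{thm: Wilsons algorithm}) on the wired graph $\GrWired$ with the enumeration $v_0 = v_\bdry$, $v_1 = u_1, \ldots, v_N = u_N$, followed by the remaining interior vertices in any order, and to couple the walks used in the first $N$ steps with the independent walks $\eta^1, \ldots, \eta^N$. The key point is that at step $j$ the algorithm grows the tree by the loop-erasure of a walk from $u_j$ to the current tree $\tree_{j-1}$, whereas $\eta^j$ is a walk from $u_j$ to $\bdry\Vert$; since $\tree_{j-1}$ always contains $v_\bdry$, stopping $\eta^j$ at its first visit to $\tree_{j-1}$ occurs no later than it reaches the boundary, and the resulting initial segment is, conditionally on $\tree_{j-1}$, exactly a walk from $u_j$ to $\tree_{j-1}$ with the correct law. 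Thus this coupling is a legitimate realization of Wilson's algorithm, and it produces a sample of the UST.

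Next I would identify, under this coupling, the event $\bigcap_{\ell=1}^N\set{\pathfromto{u_\ell}{e_\ell}}$ with the event
\[ A := \set{\walkfromto{\eta^j}{u_j}{e_j} \text{ for all } j} \cap \set{\eta^j \cap \LE(\eta^i) = \emptyset \text{ for all } i<j}. \]
For the inclusion $A \subseteq \bigcap_\ell\set{\pathfromto{u_\ell}{e_\ell}}$ I would argue by induction that on $A$ one has $\tree_j = \set{v_\bdry} \cup \bigcup_{i\le j}\LE(\eta^i)$: since $\eta^j$ meets none of the interior vertices of $\LE(\eta^1), \ldots, \LE(\eta^{j-1})$, its first hit of $\tree_{j-1}$ is its arrival at the boundary, so the Wilson walk at step $j$ is the full $\eta^j$ and the branch it adds is $\LE(\eta^j)$, which exits via $e_j$ because $\eta^j$ does (the loop-erasure keeps the last edge, cf.\ Corollary~\ref{cor: branches of UST are LERWs}). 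Hence each boundary branch $\gamma_{u_j}$ equals $\LE(\eta^j)$ and reaches the boundary via $e_j$.

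For the reverse inclusion I would run the same induction on $j$ and use the dichotomy for the Wilson walk at step $j$: either $\eta^j$ hits an interior vertex $w$ of $\tree_{j-1} = \set{v_\bdry}\cup\bigcup_{i<j}\LE(\eta^i)$ before reaching $\bdry\Vert$, or it does not. In the first case $w$ lies on some earlier branch $\LE(\eta^{i_0})$, so $\gamma_{u_j}$ merges into that branch and exits via $e_{i_0} \neq e_j$, contradicting $\pathfromto{u_j}{e_j}$; hence on the connectivity event only the second case occurs, giving $\eta^j \cap \LE(\eta^i) = \emptyset$ for all $i<j$ (the boundary endpoints are automatically distinct because $e_j \neq e_i$ have distinct boundary vertices) together with $\walkfromto{\eta^j}{u_j}{e_j}$. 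This yields $\bigcap_\ell\set{\pathfromto{u_\ell}{e_\ell}} \subseteq A$, so the two events coincide and their probabilities agree.

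The main obstacle is the bookkeeping in the coupling: one must check that truncating the boundary-bound walk $\eta^j$ at $\tree_{j-1}$ genuinely reproduces Wilson's conditional law at step $j$ (this is where $v_\bdry \in \tree_{j-1}$ is essential), and one must be careful that the intersection condition $\eta^j \cap \LE(\eta^i) = \emptyset$ is read on the original graph $\Gr$ with its distinct boundary vertices, rather than on $\GrWired$ where all boundary vertices are identified to $v_\bdry$; on the connectivity event the distinctness of the exit edges $e_1, \ldots, e_N$ makes these two readings coincide.
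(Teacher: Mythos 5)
Your proof is correct and follows exactly the paper's approach: the paper's own proof consists of the single remark that the assertion ``becomes clear'' when the UST is sampled by Wilson's algorithm with the enumeration $v_0 = v_\bdry$, $v_1 = u_1, \ldots, v_N = u_N$. Your write-up simply makes explicit the coupling (truncating the boundary-bound walks $\eta^j$ at $\tree_{j-1}$) and the two event inclusions, including the genuinely relevant subtleties (that $v_\bdry \in \tree_{j-1}$ makes the truncation legitimate, and that the intersection condition is read on $\Gr$ rather than $\GrWired$) that the paper leaves to the reader.
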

\begin{proof}
The assertion becomes clear when the uniform spanning tree is
constructed by Wilson's algorithm, 
using an enumeration of vertices such that $v_0 = v_\bdry$,
$v_1 = u_1$, \ldots, $v_N = u_N$.
\end{proof}

\subsection{\label{sec: Fomin}Fomin's formula}

We now recall a formula by Fomin~\cite{Fomin-LERW_and_total_positivity},
which we will use for the calculation
of the connectivity probabilities of UST branches.

To state the original formulation of Fomin's theorem,
consider for a moment a weighted directed graph $\Gr$, with each directed
edge $(v,v')$ assigned a weight $w_{v, v'}$. To a finite
walk $\chi = (v_0 , \ldots, v_\ell)$ on the graph $\Gr$,
assign the weight $w(\chi) = \prod_{s=1}^\ell w_{\edgeof{v_{s-1}}{v_s}}$
given by the product of the weights of the edges used by the walk.
We denote $\walkfromto{\chi}{u}{v}$ if the path $\chi$
starts from $v_0=u$ and ends at $v_\ell = v$ --- this differs from~\eqref{eq: walksfromto definition}
in that the path must end at a given vertex $v$ rather than a given boundary edge $e$.
The generalized Green's
function is defined as the sum of weights of such (finite) walks,
\begin{align*}
G(u,v) = \; & \sum_{\walkfromto{\chi}{u}{v}} w(\chi) .
\end{align*}

Fomin found a formula for a determinant of Green's functions
as a sum over paths, from the starting points $u_1 , \ldots, u_N$ to the end points
$v_1 , \ldots , v_N$ in any order, subject to the requirement that
later paths do not intersect the loop-erasures of the former paths.
His proof was a clever generalization of a path-switching argument of
Karlin and McGregor~\cite{KM-coincidence_probabilities},
see also Lindstr\"om~\cite{Lindstrom-vector_representations_of_induced_matroids}
and Gessel and Viennot~\cite{GV-binomial_determinants_paths_and_hook_length_formulae}.

\begin{thm}{{\cite[Theorem~6.1]{Fomin-LERW_and_total_positivity}}}
\label{thm: Fomins general formula}
Let $u_1, \ldots, u_N$ and $v_1, \ldots, v_N$
be distinct vertices of a weighted directed graph. Then we have
\begin{align*}
\det \big[ G(u_i,v_j) \big]_{i,j=1}^N
= \sum_{\sigma \in \SymmGrp_N} \sgn(\sigma)
    \sum_{\substack{\chi^1 , \ldots , \chi^N \\ \walkfromto{\chi^i}{u_i}{v_{\sigma(i)}} \\ \forall i < j \; : \; \chi^j \cap \LE(\chi^i) = \emptyset}}
            w(\chi^1) \cdots w(\chi^N) .
\end{align*}
\end{thm}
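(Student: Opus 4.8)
The plan is to prove the identity by a sign-reversing involution, following the path-switching strategy of Lindstr\"om--Gessel--Viennot and Karlin--McGregor, adapted to the loop-erased non-intersection condition.

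First I would expand the determinant using the permutation definition together with the definition of the generalized Green's function, which gives
\[
\det\big[ G(u_i,v_j) \big]_{i,j=1}^N = \sum_{\sigma\in\SymmGrp_N} \sgn(\sigma) \sum_{\substack{\chi^1,\ldots,\chi^N \\ \walkfromto{\chi^i}{u_i}{v_{\sigma(i)}}}} w(\chi^1)\cdots w(\chi^N),
\]
where now the inner sum ranges over \emph{all} tuples of walks with the prescribed endpoints, with no intersection constraint imposed. Thus the left-hand side is a signed, weighted sum over the set $\sA$ of all endpoint-labelled path tuples, and the asserted right-hand side is the restriction of the same sum to the subset $\Sol\subset\sA$ of \emph{good} tuples, namely those satisfying $\chi^j \cap \LE(\chi^i)=\emptyset$ for all $i<j$. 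It therefore suffices to show that the signed, weighted sum over the complementary set $\ConvSet = \sA\setminus\Sol$ of \emph{bad} tuples vanishes.

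To this end I would construct an involution $\Phi\colon \ConvSet\to\ConvSet$ that preserves the total weight $w(\chi^1)\cdots w(\chi^N)$ but reverses the sign of the associated permutation. Given a bad tuple, let $i$ be the smallest index such that the loop-erasure $\LE(\chi^i)$ is met by some later walk $\chi^j$ with $j>i$; tracing $\LE(\chi^i)=(\lambda_0,\lambda_1,\ldots)$ from $u_i$, let $x=\lambda_k$ be the first vertex on it lying on $\bigcup_{j>i}\chi^j$, and single out $j$ and a visit of $\chi^j$ to $x$ by a fixed rule (for instance, the largest index $j$ with $x\in\chi^j$, and within $\chi^j$ its last visit to $x$). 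The map $\Phi$ then swaps tails at $x$: the new $i$-th walk follows $\chi^i$ up to its last visit to $x$ and continues along the portion of $\chi^j$ after $x$, while the new $j$-th walk follows $\chi^j$ up to $x$ and continues along the portion of $\chi^i$ after $x$. Since the two tails are merely interchanged, the multiset of directed edges used is unchanged and the weight is preserved; since the destinations $v_{\sigma(i)}$ and $v_{\sigma(j)}$ are exchanged, the permutation is multiplied by the transposition $(i\,j)$ and $\sgn$ is reversed.

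The main obstacle, and the crux of Fomin's argument, is to verify that $\Phi$ is genuinely an involution mapping $\ConvSet$ into itself. The key point is that $x$ lies on $\LE(\chi^i)$, so it has a well-defined last occurrence in $\chi^i$, and the loop-erasure of the modified $i$-th walk agrees with $\LE(\chi^i)$ up to and including $x$. Combined with the fact that the swap preserves the collective vertex set $\bigcup_{j'}\chi^{j'}$ and leaves every walk of index $<i$ untouched, this shows that the distinguished data $(i,x)$ recovered from $\Phi(\chi^1,\ldots,\chi^N)$ coincides with that of the original tuple, so applying $\Phi$ twice swaps the tails back. I would establish this prefix-stability of loop-erasure directly from the recursive definition of $\LE$, checking that loop-erasure commutes with a tail-swap performed before the first shared vertex. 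Once $\Phi$ is confirmed to be a weight-preserving, sign-reversing involution on $\ConvSet$, the bad terms cancel in pairs, and the surviving good terms yield exactly the claimed formula.
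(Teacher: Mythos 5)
Your proposal is correct, and it is essentially Fomin's original argument: the paper itself does not prove this statement but cites \cite[Theorem~6.1]{Fomin-LERW_and_total_positivity}, describing that proof as a generalization of the Karlin--McGregor/Lindstr\"om--Gessel--Viennot path-switching argument, which is exactly the sign-reversing involution you construct. Your choice of splicing both walks at their \emph{last} visits to $x$ (rather than, say, the first visit of $\chi^j$) is the right one --- it is what makes the swapped configuration reproduce the same distinguished data $(i,x,j)$ and splice points, so the map is genuinely an involution --- and the prefix-stability of loop-erasure that you flag as the remaining verification is indeed the key lemma and follows directly from the recursive definition of $\LE$.
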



Note that if the weights are the transition probabilities of
a random walk, $w_{v,v'} = P_{v,v'}$, then the generalized Green's function
is the usual probabilistic Green's function
$G(u,v) = \sum_{t \geq 0} \PR \big[ \eta(t)=v \, \big| \, \eta(0)=u \big]$.
We consider the symmetric random walk of 
\eqref{eq: symmetric random walk transition probabilities}
stopped upon reaching the boundary $\bdry \Vert$. For this, set
\begin{align}\label{eq: symmetric random walk weights}
w_{v,v'} = \begin{cases}
\frac{1}{\deg(v)} & \text{ if $v \in \Vert^\circ$ and $\edgeof{v}{v'} \in \Edg$} \\
0 & \text{ otherwise.}
\end{cases}
\end{align}
In particular, the weights from the boundary vertices are set to zero, to correctly
account for the stopping.
Then the harmonic measure of the boundary edge $e \in \bdry \Edg$ seen from 
an interior vertex $u \in \Vert^\circ$ can be written as a Green's function, by summing
over all possible walks:
\[ \HarmMeas_u(e) = \sum_{\walkfromto{\chi}{u}{e^\bdry}} \PR\big[ \big(\eta(t) \big)_{t=0}^\tau = \chi \big]
= \sum_{\walkfromto{\chi}{u}{e^\bdry}} w(\chi) = G(u,e^\bdry). \]
In particular, the random walk excursion kernel can be written as
\[ \ExcK(\ein,\eout) = \HarmMeas_{\ein^\circ}(\eout)
= G(\ein^\circ,\eout^\bdry) . \]

There is more than a superficial resemblance
between Theorem~\ref{thm: Fomins general formula} and
Lemma~\ref{lem: multiple branch connectivity probability}.
In fact, when the endpoints $v_1 , \ldots, v_N$ are boundary vertices
and the weights are chosen as in \eqref{eq: symmetric random walk weights},
the inner sums on the right-hand side in Theorem~\ref{thm: Fomins general formula}
are connectivity probabilities in the uniform spanning tree.
\begin{lem}\label{lem: multiple branch connectivity probability expanded}
Let $e_1 , \ldots, e_{N} \in \bdry \Edg$ be distinct boundary edges,
and $u_1 , \ldots, u_N \in \Vert^\circ$ distinct interior
vertices. Then the probability that in the uniform spanning tree, the $N$ boundary branches
from $u_1 , \ldots , u_N$ connect to the boundary via 
the edges $e_1 , \ldots, e_N$, respectively, equals
\begin{align*}
\PR \big[ \bigcap_{j=1}^N \set{\pathfromto{u_j}{e_j}} \big] =
\sum_{\substack{\chi^1 , \ldots , \chi^N \\ \walkfromto{\chi^j}{u_j}{e^\bdry_{j}} \\ \forall i < j \; : \; \chi^j \cap \LE(\chi^i) = \emptyset}}
            w(\chi^1) \cdots w(\chi^N) ,
\end{align*}
where the weights $w$ are as in \eqref{eq: symmetric random walk weights}.
\end{lem}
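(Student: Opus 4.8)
The plan is to read off the statement from the probabilistic identity already established in Lemma~\ref{lem: multiple branch connectivity probability}, which rewrites the connectivity probability $\PR[\bigcap_{j=1}^N \set{\pathfromto{u_j}{e_j}}]$ as
\[
\PR\Big[\walkfromto{\eta^j}{u_j}{e_j}\ \text{for all }j,\ \text{and }\eta^j\cap\LE(\eta^i)=\emptyset\ \text{for all }i<j\Big],
\]
where $\eta^1,\ldots,\eta^N$ are independent symmetric random walks stopped on $\bdry\Vert$. What remains is to expand this probability into the asserted weighted sum over deterministic trajectories, which is pure bookkeeping.

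First I would recall the identity $\PR[(\eta(t))_{t=0}^\tau = \chi] = w(\chi)$ for the stopped walk, already used in Section~\ref{sec: Fomin}, with the weights $w$ of~\eqref{eq: symmetric random walk weights}; setting the out-weights from boundary vertices to zero is exactly what encodes the stopping, so only genuine excursions carry positive weight. Because every boundary vertex has degree one and no edge joins two boundary vertices (Section~\ref{sec: graphs in planar domain}), a positive-weight trajectory terminating at the vertex $e_j^\bdry$ must use the unique incident edge $e_j$ as its last step. Hence the collection of walks $\set{\walkfromto{\chi}{u_j}{e_j^\bdry}}$ appearing in the sum coincides, up to zero-weight trajectories that do not contribute, with $\walksfromto{u_j}{e_j}$, the excursions from $u_j$ exiting through $e_j$; so the endpoint descriptions in the two formulations match.

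Next I would decompose the event inside the probability above according to the realized trajectories. It is the disjoint union, over all tuples $(\chi^1,\ldots,\chi^N)$ with $\walkfromto{\chi^j}{u_j}{e_j^\bdry}$ for every $j$ and $\chi^j\cap\LE(\chi^i)=\emptyset$ for all $i<j$, of the elementary events $\set{\eta^1=\chi^1,\ldots,\eta^N=\chi^N}$. By independence of the walks, each elementary event has probability $\prod_{j=1}^N \PR[\eta^j=\chi^j] = \prod_{j=1}^N w(\chi^j)$, and summing over all admissible tuples yields precisely the claimed formula.

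The argument carries no real obstacle beyond these identifications; the only point needing a moment's care is that the non-intersection constraints $\eta^j\cap\LE(\eta^i)=\emptyset$ and $\chi^j\cap\LE(\chi^i)=\emptyset$ are literally the same condition once the trajectories are fixed, so that no admissible configuration is created or lost in passing from the probabilistic event to the deterministic sum.
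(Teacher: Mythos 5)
Your proposal is correct and follows essentially the same route as the paper: invoke Lemma~\ref{lem: multiple branch connectivity probability} and then decompose the event over deterministic trajectories using independence and the identity $\PR[\eta^j=\chi^j]=w(\chi^j)$. The extra care you take about the vertex-versus-edge endpoint and about zero-weight trajectories (walks touching the boundary early) is exactly what makes that identity hold in all cases, and is implicit in the paper's shorter argument.
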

\begin{proof}
Let $\eta^j$ be a symmetric random walk started from $u_j$ and stopped upon reaching the boundary.
For any given path $\chi^j$ which starts from $u_j$ and ends on the boundary, we have
$\PR[\eta^j = \chi^j] = w(\chi^j)$.
The assertion now follows from Lemma~\ref{lem: multiple branch connectivity probability},
by writing the probability concerning the random walks $\eta^1 , \ldots, \eta^N$
as the sum of probabilities that the random walks take the specific trajectories $\chi^1 , \ldots, \chi^N$.
\end{proof}
Rewriting the terms on the right-hand side of Fomin's formula as connectivity
probabilities of branches, 
we arrive at the following interpretation for the uniform spanning tree. 
\begin{prop}\label{prop: Fomin determinant and UST}
Let $e_1 , \ldots, e_{N} \in \bdry \Edg$ be distinct boundary edges,
and $u_1 , \ldots, u_N \in \Vert^\circ$ distinct interior
vertices. Then,
for the uniform spanning tree with wired boundary conditions, we have
\begin{align}\label{eq: Fomin determinant and UST}
\sum_{\sigma \in \SymmGrp_N} \sgn(\sigma) \;
    \PR \Big[ \bigcap_{\ell=1}^N \set{\pathfromto{u_\ell}{e_{\sigma(\ell)}}} \Big] 
= \; & \det \Big( \HarmMeas_{u_k}(e_{\ell}) \Big)_{k,\ell=1}^N ,
\end{align}
where $\HarmMeas_{u}(e)$ is the harmonic measure of $e \in \bdry \Edg$ seen from $u \in \Vert$.
\end{prop}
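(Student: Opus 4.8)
The plan is to invoke Fomin's formula (Theorem~\ref{thm: Fomins general formula}) directly, taking the endpoints to be the boundary vertices $v_j = e_j^\bdry$ of the marked boundary edges and the edge weights to be the stopped symmetric random walk weights~\eqref{eq: symmetric random walk weights}. With this choice each generalized Green's function is a harmonic measure, $G(u_k, e_\ell^\bdry) = \HarmMeas_{u_k}(e_\ell)$, so the left-hand side of Fomin's formula is already the determinant $\det\big( \HarmMeas_{u_k}(e_\ell) \big)_{k,\ell=1}^N$ that appears on the right-hand side of the asserted identity. The work is then to recognize the right-hand side of Fomin's formula, term by term in the permutation $\sigma$, as the signed sum of connectivity probabilities.

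First I would check that the hypotheses of Theorem~\ref{thm: Fomins general formula} are met: the endpoints $e_1^\bdry, \ldots, e_N^\bdry$ are distinct because the marked boundary edges are distinct and, by the normalization of Section~\ref{sec: graphs in planar domain}, distinct boundary edges meet distinct degree-one boundary vertices. That same degree-one property reconciles the two conventions for walks that are in play: a walk from an interior vertex ending at the boundary \emph{vertex} $e^\bdry$, as in Fomin's formula, must arrive along the unique edge incident to $e^\bdry$, namely $e$, so it coincides with a walk exiting \emph{via} the boundary edge $e$ in the sense of~\eqref{eq: walksfromto definition}.

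Next I would read off the right-hand side of Fomin's formula for each fixed $\sigma \in \SymmGrp_N$. The corresponding inner sum runs over tuples of walks $\chi^1, \ldots, \chi^N$ with $\walkfromto{\chi^i}{u_i}{e_{\sigma(i)}^\bdry}$ and $\chi^j \cap \LE(\chi^i) = \emptyset$ for all $i<j$, weighted by $w(\chi^1)\cdots w(\chi^N)$. Applying Lemma~\ref{lem: multiple branch connectivity probability expanded} with the target boundary edges relabeled as $e_{\sigma(1)}, \ldots, e_{\sigma(N)}$ identifies this inner sum precisely with the connectivity probability $\PR \big[ \bigcap_{\ell=1}^N \set{\pathfromto{u_\ell}{e_{\sigma(\ell)}}} \big]$. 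Summing over $\sigma$ weighted by $\sgn(\sigma)$ then reproduces the left-hand side of the proposition, completing the argument.

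The computation is essentially bookkeeping, and the one point I would treat carefully — the main, if modest, obstacle — is to confirm that the ordering in the non-intersection constraints $\forall i<j \colon \chi^j \cap \LE(\chi^i) = \emptyset$ agrees between Fomin's formula and Lemma~\ref{lem: multiple branch connectivity probability expanded}. In both, the indices $i<j$ refer to the order of the \emph{starting} vertices $u_1, \ldots, u_N$ (equivalently, the order in which they are explored in Wilson's algorithm), and this order is unaffected when the targets are permuted by $\sigma$. Hence the two inner sums coincide verbatim for every $\sigma$, and no reindexing of the constraints is required.
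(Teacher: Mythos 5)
Your proposal is correct and follows essentially the same route as the paper's proof: apply Theorem~\ref{thm: Fomins general formula} with sources $u_k$ and targets $v_\ell = e_\ell^\bdry$ under the stopped-walk weights~\eqref{eq: symmetric random walk weights}, identify each inner sum with a connectivity probability via Lemma~\ref{lem: multiple branch connectivity probability expanded}, and recognize $G(u_k, e_\ell^\bdry) = \HarmMeas_{u_k}(e_\ell)$. The extra checks you flag (distinctness of the degree-one boundary vertices, and the fact that the non-intersection constraints are indexed by the fixed order of the sources and hence unaffected by $\sigma$) are left implicit in the paper but are handled correctly here.
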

\begin{proof}
Apply Theorem~\ref{thm: Fomins general formula} with
sources $u_k$, for $k = 1, \ldots, N$, and targets 
$v_\ell = e^\bdry_{\ell}$, for $\ell = 1, \ldots, N$.
Simplify the inner summations with Lemma~\ref{lem: multiple branch connectivity probability expanded},
and observe that when the target points are on the boundary, 
$v_\ell = e^\bdry_\ell$,
the Green's functions in the determinant are harmonic measures, 
$G(u_k,e_\ell^\bdry) = \HarmMeas_{u_k}(e_\ell)$.
\end{proof}

In general, the difficulty in applying Fomin's formula 
to the uniform spanning tree connectivity probabilities is that
the formula contains simultaneously connectivities from the starting
points $u_1 , \ldots , u_N$ to the end points $e_1 , \ldots , e_N$
in all possible permutations $\sigma$.
Fomin also noted \cite[Theorem 6.4]{Fomin-LERW_and_total_positivity}, however,
that if the graph and the choice of points is such that
it is only possible to connect the starting points to the end points in
one order without intersections of the trajectories, then the sum over permutations only
contains one term. In this situation, it is possible to compute the 
sum over paths in Lemma~\ref{lem: multiple branch connectivity probability expanded}
as a determinant of the much simpler Green's functions.
This happens most naturally in the planar setup, if the points
$u_1 , \ldots , u_N , v_N, \ldots, v_1$ are ordered counterclockwise
(or clockwise) along the boundary.
This special case has become quite well known in the
two-dimensional statistical physics research. In particular, the following
consequence has been observed by many authors, for instance
\cite{Dubedat-Euler_integrals, LK-configurational_measure}.
We include the proof, because some of our further results then become
transparent generalizations.
\begin{cor}\label{cor: usual Fomin conclusion}
Let $\Gr$ be a graph embedded in a Jordan domain as
in Section~\ref{sec: graphs in planar domain}, and let
$e_1 , \ldots, e_{2N} \in \bdry \Edg$ distinct boundary edges
ordered counterclockwise along the boundary of the domain.
Consider the uniform spanning tree with wired boundary conditions
on $\Gr$. Then the probability of the rainbow connectivity
$\pathfromto{e_1}{e_{2N}}$, $\pathfromto{e_2}{e_{2N-1}} $, \ldots, $\pathfromto{e_N}{e_{N+1}}$
of the $N$ boundary branches is given by the determinant
\begin{align*}
\PR \Big[ \bigcap_{\ell=1}^N \set{ \pathfromto{e_\ell}{e_{2N+1-\ell}} } 
          \Big] = \; & \det \Big( \ExcK(e_k, e_{2N+1-\ell}) \Big)_{k,\ell=1}^N ,
\end{align*}
where $\ExcK$ is the random walk excursion kernel on $\Gr$.
\end{cor}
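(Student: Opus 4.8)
The plan is to apply Proposition~\ref{prop: Fomin determinant and UST} for a judicious choice of sources and targets, and then to argue that planarity forces all but one term of the resulting alternating sum to vanish. Concretely, I would take the $N$ sources to be the interior vertices $u_k := e_k^\circ$ of the first half of the marked edges, for $k = 1, \ldots, N$, and the $N$ target edges to be the second half in reversed order, $\tilde{e}_\ell := e_{2N+1-\ell}$ for $\ell = 1, \ldots, N$. Proposition~\ref{prop: Fomin determinant and UST} then reads
\[
\sum_{\sigma \in \SymmGrp_N} \sgn(\sigma) \, \PR\Big[ \bigcap_{\ell=1}^N \set{\pathfromto{u_\ell}{\tilde{e}_{\sigma(\ell)}}} \Big]
= \det\Big( \HarmMeas_{u_k}(\tilde{e}_\ell) \Big)_{k,\ell=1}^N ,
\]
and by the very definition of the excursion kernel the matrix entries are exactly $\HarmMeas_{e_k^\circ}(e_{2N+1-\ell}) = \ExcK(e_k, e_{2N+1-\ell})$, which already matches the right-hand side of the asserted formula.

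It then remains to show that on the left-hand side only the term $\sigma = \mathrm{id}$ survives, since that term (with $\sgn(\mathrm{id}) = +1$) is precisely the rainbow connectivity probability $\PR[ \bigcap_\ell \set{\pathfromto{e_\ell}{e_{2N+1-\ell}}} ]$. The key claim is that $\PR[ \bigcap_\ell \set{\pathfromto{u_\ell}{\tilde{e}_{\sigma(\ell)}}} ] = 0$ whenever $\sigma \neq \mathrm{id}$. First I would observe that, on this event, the $N$ boundary branches must be pairwise vertex-disjoint: two branches of the spanning tree that share a vertex necessarily continue along a common path and hence reach the wired boundary through the \emph{same} boundary edge, whereas the targets $\tilde{e}_{\sigma(1)}, \ldots, \tilde{e}_{\sigma(N)}$ are distinct. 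Being vertex-disjoint simple paths in a graph embedded in a Jordan domain, with all their endpoints lying on the boundary curve, they realize a non-crossing (planar) matching of the $2N$ boundary points.

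Finally I would invoke the cyclic order of the endpoints: read counterclockwise along $\bdry \domain$ they appear as $u_1, \ldots, u_N, \tilde{e}_N, \tilde{e}_{N-1}, \ldots, \tilde{e}_1$, because $\tilde{e}_\ell = e_{2N+1-\ell}$ sits at boundary position $2N+1-\ell$. A short induction (peeling off $u_1$, whose only admissible non-crossing partner is $\tilde{e}_1$, since the points strictly between them must contain equally many sources and targets) shows that the unique non-crossing matching pairing each source $u_k$ with a target is the nested rainbow one, $u_k \leftrightarrow \tilde{e}_k$, i.e.\ $\sigma = \mathrm{id}$; any other $\sigma$ would force two branches to cross. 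Hence every term with $\sigma \neq \mathrm{id}$ vanishes, the alternating sum collapses to the single rainbow connectivity probability, and the claimed determinant identity follows. The step requiring the most care is precisely this planarity argument --- in particular the clean observation that distinct boundary targets force vertex-disjointness of the branches --- which is exactly what turns the abstract signed Fomin sum into the determinant of plain excursion kernels; the rest is bookkeeping with the definitions of harmonic measure and the excursion kernel.
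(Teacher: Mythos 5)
Your proposal is correct and takes essentially the same route as the paper's proof: both apply Proposition~\ref{prop: Fomin determinant and UST} with sources $u_k = e_k^\circ$ and targets $e_{2N+1-\ell}$, identify the matrix entries as excursion kernels, and use planarity to conclude that every term with $\sigma \neq \mathrm{id}$ vanishes. The only difference is that you spell out the planarity step in detail (vertex-disjointness of tree branches forced by distinct target edges, followed by the induction showing the rainbow matching is the unique non-crossing one), whereas the paper dispatches this in a single sentence; your elaboration is accurate and fills in exactly what the paper leaves implicit.
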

\begin{proof}
Apply Proposition~\ref{prop: Fomin determinant and UST} with
sources $u_\ell = e_\ell^\circ$, for $\ell = 1, \ldots, N$, and targets 
$v_k = e^\bdry_{2N+1-k}$, for $k = 1, \ldots, N$.
Note that the connectivity probability
\[ \PR \Big[ \bigcap_{\ell=1}^N \set{\pathfromto{e_\ell}{e_{2N+1-\sigma(\ell)}}} \Big] \]
vanishes unless $\sigma$ is the identity permutation, since 
the boundary branches $\gamma_{e_1} , \ldots, \gamma_{e_N}$ cannot cross each other
in the planar domain. The left-hand side of~\eqref{eq: Fomin determinant and UST}
thus contains only one non-vanishing term, 
\[ \PR \Big[ \bigcap_{\ell=1}^N \set{\pathfromto{e_\ell}{e_{2N+1-\ell}}} \Big] 
= \det \Big( \HarmMeas_{e_k^\circ}(e_{2N+1-\ell}) \Big)_{k,\ell=1}^N . \]
It remains to recognize the harmonic measures seen from boundary points
as random walk excursion kernels:
$\HarmMeas_{e_k^\circ}(e_{2N+1-\ell}) = \ExcK(e_k,e_{2N+1-\ell})$.
\end{proof}

\subsection{\label{sec: partition functions}Solution of the connectivity partition functions}

The well-known Corollary~\ref{cor: usual Fomin conclusion}
of Fomin's formula is particular first because of planarity, and second
because of the maximally nested rainbow connectivity
that it describes, encoded in the link pattern $\nested_N$ 
(see Figure~\ref{fig: minimal and maximal}).

Let us keep the planar graph embedded in the Jordan domain
and marked boundary edges $e_1 , \ldots, e_{2N}$ counterclockwise
along the boundary of the domain, but generalize the choice of the $N$ source
points $u_\ell$, for $\ell=1,\ldots N$ among $e_1^\circ , \ldots, e_{2N}^\circ$.
Each term in Fomin's formula can still be interpreted as a connectivity probability,
up to a sign, but the determinant is a sum of various possibilities.

As in Section~\ref{subsec: inverse Fomin sums}, for a link pattern $\alpha \in \LP_N$
with the left-to-right orientation $\big( (a_\ell, b_\ell) \big)_{\ell=1}^N$, denote by
\begin{align}\label{eq: definition of LPdet in the discrete}
\LPdet{\alpha}{\ExcK} (e_1 , \ldots , e_{2N})
    := \det \Big( \ExcK(e_{a_k}, e_{b_\ell}) \Big)_{k,\ell=1}^N 
\end{align}
the determinant of $\alpha$ with the random walk excursion
kernel $\ExcK$. 
Recalling the definition \eqref{eq: partition function for connectivity alpha}
of partition functions as the probabilities of connectivities, the conclusion
of Corollary~\ref{cor: usual Fomin conclusion} can be written as
\[ Z_{\nested_N}(e_1 , \ldots , e_{2N}) = 
\LPdet{\nested_N}{\ExcK} (e_1 , \ldots , e_{2N}) . \]
With the choice of sources $u_\ell = e_{a_\ell}^\circ$, for $\ell=1,\ldots N$,
and targets $v_k = e_{b_k}^\bdry$, for $k=1,\ldots N$, 
the conclusion of the more
general Proposition~\ref{prop: Fomin determinant and UST} becomes
\begin{align}\label{eq: LPdet as a sum of connectivities not simplified}
\LPdet{\alpha}{\ExcK} (e_1 , \ldots , e_{2N})
    = \sum_{\sigma \in \SymmGrp_N} \sgn(\sigma) \;
        \PR \Big[ \bigcap_{\ell=1}^N \set{\pathfromto{e_{a_\ell}}{e_{b_{\sigma(\ell)}}}} \Big] .
\end{align}
Although there is, in general, more than one non-vanishing term in the sum over
permutations $\sigma$, planarity still puts certain constraints on them.
In fact, we have the following.
\begin{prop}\label{prop: general planar Fomins formula for UST}
Let $\alpha \in \LP_N$ be a link pattern. 
We have
\begin{align}\label{eq: LPdet as a sum of connectivities} 
\LPdet{\alpha}{\ExcK} (e_1 , \ldots, e_{2N})
= \sum_{\beta \in \LP_N} \Mmat_{\alpha, \beta} \, Z_\beta(e_1 , \ldots, e_{2N}) ,
\end{align}
where $\Mmat$ is the unit weight incidence matrix~\eqref{eq: basic KW matrix}
of the parenthesis reversal relation $\KWleq$,
explicitly given in Example~\ref{ex: signed incidence matrix of KWleq}.
\end{prop}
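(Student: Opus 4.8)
The plan is to start from the determinant expansion already recorded in~\eqref{eq: LPdet as a sum of connectivities not simplified}, namely
\[ \LPdet{\alpha}{\ExcK}(e_1,\ldots,e_{2N}) = \sum_{\sigma\in\SymmGrp_N}\sgn(\sigma)\,\PR\Big[\bigcap_{\ell=1}^N\set{\pathfromto{e_{a_\ell}}{e_{b_{\sigma(\ell)}}}}\Big], \]
where $((a_\ell,b_\ell))_{\ell=1}^N$ is the left-to-right orientation of $\alpha$, and to reorganize the sum over $\sigma$ into a sum over link patterns $\beta$. The two ingredients I would combine are a planarity (non-crossing) argument that kills most terms, exactly as in the proof of Corollary~\ref{cor: usual Fomin conclusion}, together with the combinatorial bookkeeping from Section~\ref{subsec: Partial order and Kenyon-Wilson relation}.

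First I would argue that the term indexed by $\sigma$ vanishes unless the pairing $\set{\link{a_1}{b_{\sigma(1)}},\ldots,\link{a_N}{b_{\sigma(N)}}}$ is itself a planar link pattern. Indeed, on the connectivity event the $N$ boundary branches from $e_{a_1}^\circ,\ldots,e_{a_N}^\circ$ reach the boundary through the distinct edges $e_{b_{\sigma(1)}},\ldots,e_{b_{\sigma(N)}}$; since distinct boundary branches in a tree that exit through different edges can never merge, they are vertex-disjoint simple arcs in the Jordan domain with endpoints on its boundary, and the Jordan curve theorem then forbids them from realizing a crossing (non-planar) pairing, so the event is empty and the probability is zero. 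This is the same mechanism that forced $\sigma$ to be the identity in Corollary~\ref{cor: usual Fomin conclusion}, only now applied to a general choice of sources.

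For the surviving terms, the pairing equals a genuine $\beta\in\LP_N$. Since the entrances $a_1<\cdots<a_N$ are fixed, distinct permutations produce distinct pairings, so $\sigma\mapsto\beta$ is injective; by Lemma~\ref{lem: link patterns and KW relation} its image is precisely $\set{\beta : \alpha\KWleq\beta}$, and for each such $\beta$ there is a unique contributing $\sigma$ with $\sgn(\sigma)=(-1)^m$, where $m$ is the number of reversed matching pairs in the nested chain from $\beta$ to $\alpha$. By Example~\ref{ex: signed incidence matrix of KWleq}, this sign is exactly the matrix entry $\Mmat_{\alpha,\beta}$ of~\eqref{eq: basic KW matrix}, while $\Mmat_{\alpha,\beta}=0$ for the unrelated $\beta$, consistently with the vanishing established above. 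Finally, on the event for such a $\beta$ the probability is the connectivity probability of $\beta$ in the orientation $(a_\ell,b_{\sigma(\ell)})$, which by Lemma~\ref{lem: well definedness of the partition function for connectivity alpha} equals $Z_\beta$ irrespective of the chosen orientation. Re-indexing the surviving sum by $\beta$ then turns the right-hand side into $\sum_{\beta\in\LP_N}\Mmat_{\alpha,\beta}\,Z_\beta(e_1,\ldots,e_{2N})$, as claimed.

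The only genuinely delicate point is the planarity argument in the second paragraph: one must verify that disjointness of the branches in the tree, together with the embedding in the Jordan domain, really does rule out every non-planar pairing, so that the vanishing terms are precisely those with $\alpha\not\KWleq\beta$. Everything else is a direct translation through Lemma~\ref{lem: link patterns and KW relation} and Example~\ref{ex: signed incidence matrix of KWleq}, so I expect this crossing/disjointness step to be the crux.
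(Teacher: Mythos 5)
Your proposal is correct and follows essentially the same route as the paper's proof: both start from Equation~\eqref{eq: LPdet as a sum of connectivities not simplified}, discard the non-planar permutations by a planarity argument, and identify the signs of the surviving terms with the entries $\Mmat_{\alpha,\beta}$ via Lemma~\ref{lem: link patterns and KW relation} and Example~\ref{ex: signed incidence matrix of KWleq}. The paper merely asserts the planarity vanishing and leaves the orientation-independence of $Z_\beta$ (Lemma~\ref{lem: well definedness of the partition function for connectivity alpha}) implicit, so your more explicit treatment of those two points --- vertex-disjointness of branches with distinct exits, followed by the Jordan-curve obstruction to crossing pairings --- is a faithful filling-in of the same argument.
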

\begin{proof}
Keep the link pattern $\alpha$ fixed throughout.
To get the asserted formula~\eqref{eq: LPdet as a sum of connectivities},
we just simplify Equation~\eqref{eq: LPdet as a sum of connectivities not simplified}
for the same determinant $\LPdet{\alpha}{\ExcK}$. Consider a permutation $\sigma \in \SymmGrp_N$
of the exit points. Then, the connectivity probability
$\PR\big[ \bigcap_{\ell=1}^N \set{ \pathfromto{e_{a_\ell}}{e_{b_{\sigma(\ell)}}} } \big]$ vanishes unless 
this connectivity determined by $\sigma$ is planar, i.e.,
$\big( (a_\ell , b_{\sigma(\ell)}) \big)_{\ell=1}^N$ is an orientation of some link pattern $\beta$.
By Example~\ref{ex: signed incidence matrix of KWleq}, for any link pattern $\beta \in \LP_N$, we have
$\Mmat_{\alpha,\beta} = \sgn(\sigma)$ if $\beta$ is obtained from $\alpha$ by a
permutation $\sigma$ of exits, and $\Mmat_{\alpha,\beta} = 0$ if no such permutation exists.
The formula~\eqref{eq: LPdet as a sum of connectivities} follows.
%
%
\end{proof}

The matrix $\Mmat$ is invertible, and a formula for the inverse is explicitly 
given in Example~\ref{ex: signed incidence matrix of KWleq}, as a special case of
Theorem~\ref{thm: weighted KW incidence matrix inversion}.
We can therefore solve~\eqref{eq: LPdet as a sum of connectivities} for the partition functions
of connectivities.
\begin{thm}\label{thm: the formula for UST partition functions}
Let $\alpha \in \LP_N$ be a link pattern. 
We have
\begin{align}\label{eq: discrete partition function as a sum of LPdets} 
Z_\alpha(e_1 , \ldots, e_{2N})
= \sum_{\beta \succeq \alpha} \Minv_{\alpha, \beta} \, \LPdet{\beta}{\ExcK}(e_1 , \ldots, e_{2N}) ,
\end{align}
where $\Minv$ is explicitly given in~\eqref{eq: basic KW inverse matrix} in 
Example~\ref{ex: signed incidence matrix of KWleq}.
\end{thm}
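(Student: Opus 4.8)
The plan is to read Proposition~\ref{prop: general planar Fomins formula for UST} as a single matrix identity and then simply invert. First I would fix $N$ and regard both $\big( Z_\beta(e_1,\ldots,e_{2N}) \big)_{\beta \in \LP_N}$ and $\big( \LPdet{\alpha}{\ExcK}(e_1,\ldots,e_{2N}) \big)_{\alpha \in \LP_N}$ as column vectors indexed by $\LP_N$, which we identify with $\DP_N$ through the bijections of Section~\ref{subsec: Combinatorial objects and bijections}. With this convention, Equation~\eqref{eq: LPdet as a sum of connectivities} asserts precisely that the vector of determinants equals $\Mmat$ applied to the vector of partition functions, i.e. $\LPdet{\cdot}{\ExcK} = \Mmat \, Z$, where $\Mmat$ is the unit-weight incidence matrix of the parenthesis reversal relation $\KWleq$ recorded in Example~\ref{ex: signed incidence matrix of KWleq}.

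Next I would invoke the invertibility of $\Mmat$. This is exactly the content of Theorem~\ref{thm: weighted KW incidence matrix inversion} specialized to the weight $w(t) = 1$ in Example~\ref{ex: signed incidence matrix of KWleq}: since $\alpha \KWleq \beta$ forces $\alpha \DPleq \beta$, the matrix $\Mmat$ is upper-triangular for the partial order $\DPleq$ with unit diagonal, hence invertible, and its inverse has entries $\Minv_{\alpha,\beta} = \#\CItilingsof(\alpha/\beta)$ when $\alpha \DPleq \beta$ and $\Minv_{\alpha,\beta} = 0$ otherwise. Multiplying the identity $\LPdet{\cdot}{\ExcK} = \Mmat \, Z$ on the left by $\Minv$ then gives $Z = \Minv \, \LPdet{\cdot}{\ExcK}$, which entrywise reads $Z_\alpha(e_1,\ldots,e_{2N}) = \sum_{\beta \in \LP_N} \Minv_{\alpha,\beta} \, \LPdet{\beta}{\ExcK}(e_1,\ldots,e_{2N})$. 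To reach the stated form of the sum, I would finally use the support of $\Minv$: because $\Minv_{\alpha,\beta} = 0$ unless $\alpha \DPleq \beta$, equivalently $\beta \DPgeq \alpha$, the summation collapses to $\beta \succeq \alpha$, which is precisely Equation~\eqref{eq: discrete partition function as a sum of LPdets}.

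There is essentially no analytic obstacle here: all of the substance has been front-loaded into Proposition~\ref{prop: general planar Fomins formula for UST}, which packages Fomin's formula together with the planarity constraint, and into the combinatorial inversion of Theorem~\ref{thm: weighted KW incidence matrix inversion}. The only point that genuinely requires care is the bookkeeping of identifications: one must confirm that the matrix $\Mmat$ occurring in Proposition~\ref{prop: general planar Fomins formula for UST} is, under the identification $\LP_N \cong \DP_N$, literally the same matrix that is inverted in Example~\ref{ex: signed incidence matrix of KWleq}. This is immediate, since $\Mmat$ was defined in Equation~\eqref{eq: basic KW matrix} via the left-to-right orientation and permutations of exits, which is exactly the description of the nonzero terms produced by planarity in the proof of Proposition~\ref{prop: general planar Fomins formula for UST}. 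Hence the two results compose cleanly and the theorem follows by a one-line inversion.
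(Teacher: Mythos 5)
Your proposal is correct and is essentially the paper's own proof: the paper likewise treats Proposition~\ref{prop: general planar Fomins formula for UST} as the linear system $\LPdet{\alpha}{\ExcK} = \sum_\beta \Mmat_{\alpha,\beta} Z_\beta$ and inverts it by multiplying by $\Minv$ and summing, with the restriction to $\beta \succeq \alpha$ coming from the support of $\Minv$ in Example~\ref{ex: signed incidence matrix of KWleq}. No gaps; the bookkeeping point you flag (that the $\Mmat$ of the proposition is the matrix of Equation~\eqref{eq: basic KW matrix}) is indeed immediate from Lemma~\ref{lem: link patterns and KW relation}.
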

\begin{proof}
Multiply Equation~\eqref{eq: LPdet as a sum of connectivities} by $\Minv_{\gamma,\alpha}$
and sum over $\alpha \in \LP_N$ to get $Z_\gamma(e_1 , \ldots, e_{2N})$.
\end{proof}



Recall from Section~\ref{sec: boundary visits from partition functions} that
the boundary visit probabilities of a UST branch from $\ein$ to $\eout$
can be expressed in terms of the connectivity partition function $Z_\alpha$
by Lemma~\ref{lem: boundary visit probabilities with partition functions},
and therefore, Theorem~\ref{thm: the formula for UST partition functions}
also leads to explicit determinantal formulas for them.
Since the law of the UST boundary branch is that of a loop-erased
random walk, we get also the boundary visit probabilities for a
loop-erased random walk. We state the result in this form.
\begin{cor}
\label{cor: LERW boundary visit probabilities with partition functions}
Let $\ein, \eout \in \bdry \Edg$
be two boundary edges and 
let $\hat{e}_1 , \ldots \hat{e}_{N'}$ 
be edges at unit distance from the boundary.
Associate to them the boundary edges $e_1, \ldots,e_{2N} \in \bdry \Edg$
and the link pattern $\alpha(\omega)$, 
as in Section~\ref{sec: boundary visits from partition functions}.
Then, the boundary visit probability for the loop-erasure $\lambda = \LE(\eta)$
of a random walk $\eta$ on $\Gr$ from $\ein^\circ$, 
conditioned to reach the boundary $\bdry \Vert$ via $\eout$, is given by 
\[
\PR_{\ein,\eout} \big[ \lambda \text{ uses $\hat{e}_1 , \ldots , \hat{e}_{N'}$ in this order} \big]
= \frac{Z_{\alpha(\omega)}(e_1 , \ldots , e_{2N})}{Z(\ein , \eout)}
=  \sum_{\beta \succeq \alpha} \Minv_{\alpha, \beta} \frac{ \LPdet{\beta}{\ExcK}(e_1 , \ldots, e_{2N}) }{\ExcK(\ein,\eout)} . \]
\end{cor}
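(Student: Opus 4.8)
The plan is to assemble the two asserted equalities directly from results already in place, since the corollary is a genuine corollary: nothing new needs to be proved, only combined. First I would identify the loop-erased random walk $\lambda = \LE(\eta)$ in the statement with the conditioned uniform spanning tree boundary branch. By Corollary~\ref{cor: branches of UST are LERWs}(c), conditionally on the event that the boundary branch $\gamma = \gamma_{\ein}$ reaches $\bdry \Vert$ via $\eout$, the branch $\gamma$ has precisely the law of the loop-erasure of a symmetric random walk from $\ein^\circ$ conditioned to exit through $\eout$. Thus $\lambda$ and $\gamma$ have the same law, and in particular their boundary visit probabilities agree:
\[
\PR_{\ein,\eout} \big[ \lambda \text{ uses $\hat{e}_1 , \ldots , \hat{e}_{N'}$ in this order} \big]
= \PR_{\ein,\eout} \big[ \gamma \text{ uses $\hat{e}_1 , \ldots , \hat{e}_{N'}$ in this order} \big] .
\]

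Next I would invoke Lemma~\ref{lem: boundary visit probabilities with partition functions}, which expresses the branch visit probability on the right as the ratio $Z_{\alpha(\omega)}(e_1,\ldots,e_{2N}) / Z(\ein,\eout)$; this yields the first claimed equality. To pass to the second equality, I would rewrite the numerator using Theorem~\ref{thm: the formula for UST partition functions}, which gives $Z_{\alpha(\omega)} = \sum_{\beta \succeq \alpha} \Minv_{\alpha,\beta}\, \LPdet{\beta}{\ExcK}$ with $\alpha = \alpha(\omega)$, and the denominator using the identification $Z(\ein,\eout) = \ExcK(\ein,\eout)$ recorded in Equation~\eqref{eq: excursin kernel is the one branch connectivity proba}. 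Substituting both into the ratio produces the asserted determinantal formula.

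Since every ingredient is already available, there is no real obstacle; the only point requiring a moment's care is the law identification in the first step. The statement is phrased for a loop-erased random walk, whereas Lemma~\ref{lem: boundary visit probabilities with partition functions} is phrased for a uniform spanning tree boundary branch, and it is exactly Corollary~\ref{cor: branches of UST are LERWs}(c) that bridges the two descriptions under the conditioned measure $\PR_{\ein,\eout}$. I would also flag that $\alpha = \alpha(\omega)$ throughout, so the summation range $\beta \succeq \alpha$ is over link patterns $\beta \succeq \alpha(\omega)$, consistent with the hypotheses of Theorem~\ref{thm: the formula for UST partition functions}.
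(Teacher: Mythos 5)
Your proposal is correct and follows exactly the paper's own argument: the paper proves this corollary precisely by citing Corollary~\ref{cor: branches of UST are LERWs}, Lemma~\ref{lem: boundary visit probabilities with partition functions}, and Theorem~\ref{thm: the formula for UST partition functions}. Your additional care in spelling out the law identification under $\PR_{\ein,\eout}$ and the identity $Z(\ein,\eout)=\ExcK(\ein,\eout)$ simply makes explicit what the paper leaves implicit.
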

\begin{proof}
This is a direct consequence of
Corollary~\ref{cor: branches of UST are LERWs},
Lemma~\ref{lem: boundary visit probabilities with partition functions},
and Theorem~\ref{thm: the formula for UST partition functions}.
\end{proof}

\begin{rem}
\label{rem: degenerate boundary visit probabilities}
\textit{
The boundary visit probability is given by the formula
on the right-hand side of Corollary~\ref{cor: LERW boundary visit probabilities with partition functions}
even if the distinct boundary edges $\hat{e}_1 , \ldots , \hat{e}_{N'}$ share some vertices, i.e.,
not all edges $e_1 , \ldots, e_{2N}$ are distinct. The partition function expression for this probability,
however, has no immediate interpretation in such degenerate cases.
}
\end{rem}

\subsection{\label{rem: weighted graph version}Random spanning 
trees on weighted graphs}

For simplicity, we have formulated the results above for a spanning tree chosen uniformly at random.
The arguments however generalize to other natural random spanning trees, where 
the edges $e$ of the graph have weights $c(e) > 0$,
and the natural random walk has the transition probabilities
\[ w_{v,v'} = \frac{c(\edgeof{v}{v'})}{\sum_{u} c(\edgeof{v}{u}) }. \]

Using Wilson's algorithm, the natural random spanning tree on $\Gr / \bdry$ is obtained from the loop-erasures of these weighted random walks. 
The probabilities of the spanning tree are~\cite{Wilson-generating_random_spanning_trees} 
\[ \PR \big[ \set{\tree} \big] \propto \prod_{e \in \tree} c(e) . \]



Let $\alpha \in \LP_N$ be a link pattern with any orientation $((a_\ell, b_\ell))_{\ell = 1}^N$, and
let $e_1, \ldots, e_{2N}$ be distinct boundary edges. Associate to them the partition function
\begin{align*} 
Z_\alpha (e_1, \ldots, e_{2N}) 
= \left( \prod_{\ell = 1}^N c({e_{a_\ell}}) \right) \times \PR \Big[\set{ \pathfromto{e_{a_1}}{e_{b_1}} } \cap \cdots \cap \set{ \pathfromto{e_{a_N}}{e_{b_N}} }\Big],
\end{align*}
which is independent of the chosen orientation by the bijection argument of Lemma~\ref{lem: well definedness of the partition function for connectivity alpha}. 
These connectivity partition functions can be solved by repeating the analysis of Sections~\ref{sec: LERW and UST}--\ref{sec: partition functions} with 
the weighted random walks: let $\HarmMeas$ denote the harmonic measure of the weighted random walk, and define 
\begin{align*}
\ExcK_{ \mathrm{S} } (\ein, \eout ) = c(\ein ) \, \HarmMeas_{\eout} (\ein^\circ)
= \ExcK_{ \mathrm{S} } (\eout , \ein) .
\end{align*}
Then, Theorem~\ref{thm: the formula for UST partition functions} generalizes to
\begin{align*}
Z_\alpha (e_1, \ldots, e_{2N})  =  \sum_{\beta \succeq \alpha} \Minv_{\alpha, \beta}  \LPdet{\beta}{ \ExcK_{ \mathrm{S} } }(e_1 , \ldots, e_{2N}),
\end{align*}
and, analogously, the statements of Corollary~\ref{cor: LERW boundary visit probabilities with partition functions} and 
Remark~\ref{rem: degenerate boundary visit probabilities}  hold with the formula
\begin{align*}
\PR_{\ein,\eout} \big[ \lambda \text{ uses $\hat{e}_1 , \ldots , \hat{e}_{N'}$ in this order} \big]
=  \left( \prod_{s=1}^{N'} \frac{ c ( \hat{e}_s ) }{ c ( \hat{e}_{s;1} ) c ( \hat{e}_{s; 2} ) } \right) \times 
\sum_{\beta \succeq \alpha} \Minv_{\alpha, \beta} \frac{ \LPdet{\beta}{\ExcK_{ \mathrm{S} }}(e_1 , \ldots, e_{2N}) }{\ExcK_{ \mathrm{S} }(\ein,\eout)} . 
\end{align*}

By virtue of the robustness of the scaling limit results for random 
walks and discrete harmonic 
functions~\cite{YY-Loop-erased_random_walk_and_Poisson_kernel_on_planar_graphs},
some scaling limit results of the next section could be shown to be universal
for suitable weighted random spanning trees.


\subsection{\label{sec: scaling limits}Scaling limits}

So far we have considered the discrete model of uniform
spanning tree on a planar graph. We now turn to the question of scaling limits,
where a fixed planar domain $\domain \subset \bC$
is approximated by graphs with increasingly fine mesh and probabilities are renormalized 
by suitable power laws of the mesh size.

\subsubsection{\label{sec: finer mesh graphs}\textbf{Finer mesh graphs}}
For concreteness, when discussing scaling limits, we always take graphs 
$\Gr^\delta$ which are subgraphs of the regular square lattice $\delta \bZ^2$ with
mesh size $\delta>0$, and study the limit $\delta \to 0$.
Our scaling limit results could be extended to more general setups, 
as long as the random walks on the graphs $\Gr^\delta$ tend to the Brownian motion 
on $\domain$ and the boundary approximation near the marked points is regular 
enough so that also suitably renormalized random walk excursion kernels 
tend to the Brownian excursion kernel --- 
see Lemma~\ref{lem: scaling limit of discrete excursion kernels}.

Fix the domain $\domain$, and $2N$ boundary points
$p_1 , \ldots, p_{2N} \in \bdry \domain$ appearing 
in counterclockwise order
along the boundary $\bdry \domain$. Assume throughout that locally near each $p_j$, 
the boundary is a straight horizontal or vertical line segment. This property is assumed in 
order to control the scaling limit behavior of the random walk excursion kernels 
from these marked boundary points.

\begin{figure}
\includegraphics[width=.7\textwidth]{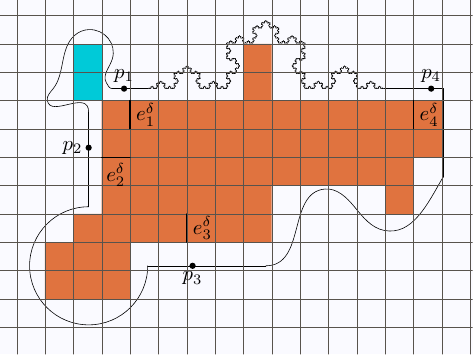}
\caption{\label{fig: square grid approx}
A Jordan domain with marked boundary points and its square grid approximation.
}
\end{figure}

For a given (small) mesh size $\delta>0$, we define the 
\textit{square grid approximation}
of the domain $\domain$ as the following graph
$\Gr^\delta = (\Vert^\delta,\Edg^\delta)$, illustrated in Figure~\ref{fig: square grid approx}.
Consider the closed squares 
$[n \delta , (n+1)\delta] \times [m \delta, (m+1)\delta]$
of the square lattice $\delta \bZ^2$ that are contained in $\domain$. 
Take a connected component $A$ of the interior of their union, with a maximal number of squares. 
The graph $\Gr^\delta$ is taken to have vertices
$\Vert^\delta = \cl{A} \cap \delta \bZ^2$. The boundary vertices are defined as
$\bdry \Vert^\delta = \bdry A \cap \delta \bZ^2$. The set $\Edg^\delta$ of edges
consists of all pairs of vertices at distance $\delta$ from each other, such that
at least one of the vertices is an interior vertex.

For all $j$, we denote by $e_j^\delta \in \bdry \Edg^\delta$
a boundary edge nearest to the marked boundary point $p_j \in \bdry \domain$, 
that is, an edge $e_j = \edgeof{e_j^\bdry}{e_j^\circ}$ that contains the
boundary vertex $e_j^\bdry$ which is the nearest to $p_j$, and its neighbor 
$e_j^\circ$ at distance $\delta$ to the direction of the inwards normal to 
the boundary $\bdry \domain$
(recall that the boundary is assumed locally horizontal or vertical near $p_j$).
The choice of these boundary edges is also illustrated
in Figure~\ref{fig: square grid approx}.

\subsubsection{\textbf{Scaling limits of excursion kernels}}
\label{sec: scaling limit of excursion kernels}

In the scaling limit $\delta \to 0$,
the random walk excursion kernels $\ExcK^\delta$ on $\Gr^\delta$
can be approximated with the Brownian excursion kernel $\ExcKdom$ in 
the domain $\domain$, in the sense of Lemma~\ref{lem: scaling limit of discrete excursion kernels}
below.

We use the simple notation $\ExcKH$ for 
\begin{align}\label{eq: Brownian excursion kernel in H}
\ExcKH(x_1 , x_2) = \frac{1}{(x_2 - x_1)^2} 
 \qquad \text{for $x_1 , x_2 \in \bR$, $x_1 \neq x_2$},
\end{align}
which is a constant multiple of the Brownian excursion kernel 
in the upper half-plane $\bH$.
In any simply connected domain $\domain$, with two boundary points 
$p_1 , p_2 \in \bdry \domain$ on straight boundary segments, 
the Brownian excursion kernel
$\ExcKdom$ is expressed in terms of $\ExcKH$ by conformal covariance:
\begin{align}\label{eq: Brownian excursion kernel in domain}
\ExcKdom(p_1 , p_2) = 
\frac{1}{\pi} |\confmap'(p_1)| \, |\confmap'(p_2)| \; 
\ExcKH \big( \confmap(p_1) , \confmap(p_2) \big) ,
\end{align}
where $\confmap \colon \domain \to \bH$ is any conformal map from 
$\domain$ to 
$\bH$ such that $\confmap(p_1) \neq \infty$ and $\confmap(p_2) \neq \infty$.

Let 
$e_{1+}^\delta$ be the boundary edge one lattice unit
from $e_1^\delta$ to the counterclockwise direction, and define the
discrete tangential derivative of the excursion kernel with respect to the first
variable as 
\[ D^\delta_{\tau;1} \, \ExcK^\delta
    := \frac{\ExcK^\delta (e_{1+}^\delta , e_2^\delta) - \ExcK^\delta (e_1^\delta , e_2^\delta)}{\delta} . \]
The discrete derivatives $D^\delta_{\tau;2} \ExcK^\delta$ and $D^\delta_{\tau;1} D^\delta_{\tau;2} \ExcK^\delta$
are defined similarly in terms of differences.
We also denote by 
$\partial_{\tau;i}$ the usual counterclockwise tangential derivative with respect to 
the $i$:th argument.


\begin{lem}\label{lem: scaling limit of discrete excursion kernels}
As $\delta \to 0$, we have
\[ \ExcK^\delta(e_1^\delta , e_2^\delta)
= \delta^2 \, \ExcKdom(p_1, p_2) + \oo(\delta^2) , \]
and
\begin{align*}
D^\delta_{\tau;i} \, \ExcK^\delta 
    = \; & \delta^2 \; \partial_{\tau;i} \, \ExcKdom(p_1, p_2) + \oo(\delta^2) \qquad \qquad \text{for } i=1,2 \\
D^\delta_{\tau;1} D^\delta_{\tau;2} \, \ExcK^\delta 
    = \; & \delta^2 \; \partial_{\tau;1} \partial_{\tau;2} \, \ExcKdom(p_1, p_2) + \oo(\delta^2) .
\end{align*}
\end{lem}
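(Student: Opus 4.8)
The plan is to reduce all three asymptotics to the convergence of a single discrete Green's function and its discrete tangential derivatives. First I would rewrite the excursion kernel as a Green's function. Recall that $\ExcK^\delta(e_1^\delta,e_2^\delta) = \HarmMeas_{e_1^\circ}(e_2^\delta)$ is the probability that the walk started at $e_1^\circ$ exits through $e_2^\delta = \edgeof{e_2^\bdry}{e_2^\circ}$. Since the walk is killed on $\bdry\Vert^\delta$, the number of transitions $e_2^\circ \to e_2^\bdry$ along a trajectory is either $0$ or $1$, so its expectation equals this exit probability; on the other hand it equals the expected number of visits to $e_2^\circ$ times the one-step probability $1/\deg(e_2^\circ) = 1/4$. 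Writing $G^\delta(u,w)$ for the Green's function of the walk killed on $\bdry\Vert^\delta$, this gives the identity $\ExcK^\delta(e_1^\delta,e_2^\delta) = \tfrac14\, G^\delta(e_1^\circ,e_2^\circ)$. Moving $e_1^\delta$ (resp. $e_2^\delta$) one lattice step counterclockwise moves $e_1^\circ$ (resp. $e_2^\circ$) accordingly, so the discrete tangential differences $D^\delta_{\tau;1}\ExcK^\delta$, $D^\delta_{\tau;2}\ExcK^\delta$ and $D^\delta_{\tau;1}D^\delta_{\tau;2}\ExcK^\delta$ are $\tfrac14$ times the corresponding discrete tangential differences of $G^\delta(e_1^\circ,e_2^\circ)$ in the respective variable. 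Thus all three claims reduce to the near-boundary asymptotics of $G^\delta$ and of its discrete derivatives.

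Next I would pass to the continuum limit. For interior points, $G^\delta$ converges, uniformly on compact subsets and together with its discrete derivatives, to the continuum Dirichlet Green's function $G_\domain$ of $\domain$ with the normalization fixing the constant; this is standard discrete complex analysis \cite{CS-discrete_complex_analysis_on_isoradial, YY-Loop-erased_random_walk_and_Poisson_kernel_on_planar_graphs}. The two evaluation points $e_1^\circ,e_2^\circ$, however, sit at distance $\delta$ from $\bdry\domain$ near $p_1,p_2$, so interior convergence alone does not suffice. Here the hypothesis that $\bdry\domain$ is locally a straight, lattice-aligned segment near each $p_j$ is essential: the function $u \mapsto G^\delta(u,e_2^\circ)$ is discrete harmonic and vanishes on such a segment, so odd reflection across it extends the function to a discrete harmonic function in a full neighborhood, and interior a priori estimates then yield that near $p_1$ it equals $\dist(u,\bdry\domain)$ times a discrete normal derivative, up to $\oo(\dist)$. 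Extracting one factor of $\delta$ per variable in this way gives $G^\delta(e_1^\circ,e_2^\circ) = \delta^2\,\partial_{n_1}\partial_{n_2} G_\domain(p_1,p_2) + \oo(\delta^2)$. Finally I would identify the double normal derivative of $G_\domain$ with the Brownian excursion kernel: in $\bH$ it is the constant multiple of $(x_2-x_1)^{-2}$ recorded in~\eqref{eq: Brownian excursion kernel in H}, and for a general domain the identification follows from conformal covariance~\eqref{eq: Brownian excursion kernel in domain}. Multiplying by $\tfrac14$ matches the normalization and yields $\ExcK^\delta(e_1^\delta,e_2^\delta) = \delta^2\,\ExcKdom(p_1,p_2) + \oo(\delta^2)$.

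The derivative statements follow the same route. The discrete tangential differences amount to moving the near-boundary points one step along the straight boundary, and the convergence of discrete tangential derivatives of harmonic functions up to the boundary---again obtained via reflection and a priori estimates---shows that, with the same $\delta^2$ scaling, $D^\delta_{\tau;i}\ExcK^\delta$ and $D^\delta_{\tau;1}D^\delta_{\tau;2}\ExcK^\delta$ converge to $\partial_{\tau;i}\ExcKdom(p_1,p_2)$ and $\partial_{\tau;1}\partial_{\tau;2}\ExcKdom(p_1,p_2)$, up to $\oo(\delta^2)$; the symmetry of Lemma~\ref{lem: symmetry of discrete excursion kernel} lets me treat the two variables on equal footing. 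The hard part will be precisely this near-boundary analysis: because the two evaluation points collapse onto $\bdry\domain$ as $\delta \to 0$, I need the uniform boundary-regularity estimate that cleanly extracts the factor $\delta^2$ and, for the derivative claims, justifies interchanging the tangential finite difference with the $\delta \to 0$ limit while retaining the $\oo(\delta^2)$ error. The straight, lattice-aligned boundary assumption is exactly what makes the reflection argument and the resulting discrete a priori estimates available; relaxing it would require the full strength of discrete complex analysis near rough boundaries.
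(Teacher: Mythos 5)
Your proposal is correct, and it runs on the same engine as the paper's proof: known convergence of discrete harmonic functions together with their discrete derivatives, uniformly on compact subsets, upgraded to the boundary by discrete Schwarz (odd) reflection across the locally straight, lattice-aligned segments, and then the observation that the discrete kernel carries one factor of $\delta$ per variable through a discrete normal derivative. The difference is the decomposition. The paper is asymmetric: it works with the renormalized harmonic measure $v \mapsto \tfrac{1}{\delta}\HarmMeas^\delta_v(e_2^\delta)$, whose convergence to the continuum Poisson kernel (with the target edge on a straight segment) is cited as known --- so the collapse of the second variable is built into the cited result --- and the reflection argument is needed only in the first variable, after which $\ExcK^\delta$ is read off as $\delta^2$ times a discrete normal derivative of that function at $e_1^\delta$. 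You instead pass to the Green's function via $\ExcK^\delta = \tfrac14 G^\delta(e_1^\circ,e_2^\circ)$ (a correct identity, since interior vertices of the square-grid approximation have degree $4$) and treat the variables symmetrically, which is why you face the two-variable boundary collapse you flag as the hard part. But the two routes are closer than they look: since $G^\delta$ vanishes as soon as one argument is a boundary vertex, the extraction in the first variable is the tautology $G^\delta(e_1^\circ,w) = \delta \cdot \tfrac{1}{\delta} G^\delta(e_1^\circ,w)$, and $\tfrac{1}{\delta}G^\delta(e_1^\circ,\cdot) = \tfrac{4}{\delta}\HarmMeas^\delta_{\cdot}(e_1^\delta)$ is exactly the paper's renormalized harmonic measure (times $4$); so after one step your proof becomes the paper's, and the feared uniform two-variable estimate factorizes into two applications of the same one-variable boundary argument, the second legitimized by the symmetry of Lemma~\ref{lem: symmetry of discrete excursion kernel}, as you note. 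What your packaging buys is a cleaner endpoint: the limit is identified as the double normal derivative of the continuum Dirichlet Green's function, which equals $\ExcKdom$ in the paper's normalization (with your prefactor consistent, since $\tfrac14 G^\delta \to G_\domain$ under the walk's Laplacian normalization, and $\partial_{n_1}\partial_{n_2}G_\bH(x_1,x_2) = \tfrac{1}{\pi}(x_2-x_1)^{-2}$ matches \eqref{eq: Brownian excursion kernel in domain}); what the paper's packaging buys is that the delicate near-boundary convergence in one of the two variables never has to be argued at all, being absorbed into the quoted result.
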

\begin{proof}
This follows from known convergence results of discrete harmonic functions.
The renormalized harmonic measure 
$v \mapsto \frac{1}{\delta} \, \HarmMeas_{v}^\delta(e_2^\delta)$
is a discrete harmonic function on $\Vert^\delta$. 
It is known to converge to the Poisson
kernel when $e_2^\delta$ is on a straight boundary segment,
see e.g. \cite{CS-discrete_complex_analysis_on_isoradial}.
The convergence of the discrete harmonic function and all its discrete derivatives
is uniform on compact subsets of the domain~$\domain$
\cite{CFL-uber_die_PDE_der_mathphys, CS-discrete_complex_analysis_on_isoradial}. 
But, by Schwarz reflection, the convergence of the function 
and its discrete derivatives also holds when $v$ is taken to some straight part of 
the boundary. It remains to note that $\ExcK^\delta(e_1^\delta,e_2^\delta)$ is $\delta^2$ times the 
discrete normal derivative of 
$v \mapsto \frac{1}{\delta} \, \HarmMeas_{v}^\delta(e_2^\delta)$ at $e_1^\delta$.
\end{proof}

\subsubsection{\label{sec: scaling limits of partition functions}\textbf{Scaling limits of partition functions for connectivities}}


To prepare for the scaling limit statement, we first
give two definitions for the continuum setup.
For a link pattern $\alpha$ with the left-to-right orientation
$\big( (a_\ell, b_\ell) \big)_{\ell=1}^N$,
and for any $x_1 < x_2 < \cdots < x_{2N}$, we set
\begin{align}\label{eq: definition of LPdet in the continuum}
\LPdet{\alpha}{\ExcKH} (x_1 , \ldots , x_{2N})
    := \; & \det \Big( \ExcKH(x_{a_k}, x_{b_\ell}) \Big)_{k,\ell=1}^N
    = \det \Bigg( \frac{1}{ ( x_{b_\ell} - x_{a_k} )^2 } \Bigg)_{k,\ell=1}^N ,
\end{align}
analogously to~\eqref{eq: definition of LPdet in the discrete},
with the 
kernel $\ExcKH(x_1,x_2) = (x_2-x_1)^{-2}$
in the place of the random walk excursion kernel $\ExcK(e_1 , e_2)$.
Analogously to~\eqref{eq: discrete partition function as a sum of LPdets},
we also set
\begin{align}\label{eq: SLE partition function as a sum of LPdets} 
\PartF_\alpha(x_1 , \ldots, x_{2N})
= \sum_{\beta \succeq \alpha} \Minv_{\alpha, \beta} \, \LPdet{\beta}{\ExcKH}(x_1 , \ldots, x_{2N}) .
\end{align}

In the scaling limit setup of Section~\ref{sec: finer mesh graphs},
we get the following limiting formula for the partition functions.
\begin{thm}\label{thm: scaling limit of partition functions}
Let $\alpha \in \LP_N$, and for all $\delta>0$ denote by
$Z^{\Gr^\delta}_\alpha \big( e_1^\delta , \ldots, e_{2N}^\delta \big)$
the corresponding connectivity partition function for the UST on the
square grid approximation $\Gr^\delta$ of the domain $\domain$.
Then in the scaling limit $\delta \to 0$, we have
\begin{align*}
\frac{1}{\delta^{2N}} \, Z^{\Gr^\delta}_\alpha \big( e_1^\delta , \ldots, e_{2N}^\delta \big)
\longrightarrow \frac{1}{\pi^{N}} \times \prod_{j=1}^{2N} |\confmap'(p_j)| \times
     \PartF_\alpha \big( \confmap(p_1) , \ldots, \confmap(p_{2N}) \big) ,
\end{align*}
where $\confmap \colon \domain \to \bH$ is any conformal map
such that $\confmap(p_1) < \cdots < \confmap(p_{2N})$.
\end{thm}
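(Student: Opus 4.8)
The plan is to reduce everything to the exact discrete identity of Theorem~\ref{thm: the formula for UST partition functions}, which expresses the connectivity partition function as a \emph{finite} linear combination
\[ Z^{\Gr^\delta}_\alpha \big( e_1^\delta , \ldots, e_{2N}^\delta \big) = \sum_{\beta \DPgeq \alpha} \Minv_{\alpha, \beta} \, \LPdet{\beta}{\ExcK^\delta}(e_1^\delta , \ldots, e_{2N}^\delta) \]
of determinants of the discrete excursion kernel $\ExcK^\delta$ on $\Gr^\delta$. Since the coefficients $\Minv_{\alpha,\beta}$ do not depend on $\delta$ and the index set $\set{\beta \in \LP_N \;|\; \beta \DPgeq \alpha}$ is finite, it suffices to prove that each renormalized summand $\delta^{-2N} \LPdet{\beta}{\ExcK^\delta}$ converges to $\frac{1}{\pi^N} \prod_{j=1}^{2N} |\confmap'(p_j)| \, \LPdet{\beta}{\ExcKH}(\confmap(p_1),\ldots,\confmap(p_{2N}))$. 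Summing against $\Minv_{\alpha,\beta}$ then assembles, by definition~\eqref{eq: SLE partition function as a sum of LPdets}, exactly the claimed limit $\frac{1}{\pi^N}\prod_{j=1}^{2N}|\confmap'(p_j)|\,\PartF_\alpha(\confmap(p_1),\ldots,\confmap(p_{2N}))$.

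Second, I would fix $\beta$ with left-to-right orientation $((a_\ell, b_\ell))_{\ell=1}^N$ and analyze the $N \times N$ determinant $\LPdet{\beta}{\ExcK^\delta} = \det \big( \ExcK^\delta(e^\delta_{a_k}, e^\delta_{b_\ell}) \big)_{k,\ell=1}^N$ entrywise. By Lemma~\ref{lem: scaling limit of discrete excursion kernels}, each entry satisfies $\ExcK^\delta(e^\delta_{a_k}, e^\delta_{b_\ell}) = \delta^2 \, \ExcKdom(p_{a_k}, p_{b_\ell}) + \oo(\delta^2)$. Writing the matrix as $\delta^2 A^\delta$ with $A^\delta_{k,\ell} = \ExcKdom(p_{a_k}, p_{b_\ell}) + \oo(1)$ and using $\det(\delta^2 A^\delta) = (\delta^2)^N \det A^\delta = \delta^{2N} \det A^\delta$, continuity of the determinant yields $\delta^{-2N} \LPdet{\beta}{\ExcK^\delta} \to \det\big( \ExcKdom(p_{a_k}, p_{b_\ell}) \big)_{k,\ell=1}^N$. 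This is the step where the entrywise $\oo(\delta^2)$ errors must be checked to assemble into a single $\oo(1)$ error in $\det A^\delta$, which is immediate from the finite (multilinear) expansion of the determinant.

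Finally, I would invoke the conformal covariance~\eqref{eq: Brownian excursion kernel in domain}, namely $\ExcKdom(p_{a_k},p_{b_\ell}) = \frac{1}{\pi} |\confmap'(p_{a_k})|\,|\confmap'(p_{b_\ell})|\, \ExcKH(\confmap(p_{a_k}),\confmap(p_{b_\ell}))$, to pull the factor $\frac{1}{\pi} |\confmap'(p_{a_k})|$ out of the $k$-th row and $|\confmap'(p_{b_\ell})|$ out of the $\ell$-th column of the limiting matrix. Because $a_1,\ldots,a_N,b_1,\ldots,b_N$ is a pairing of $1,\ldots,2N$, these prefactors collect into $\frac{1}{\pi^N}\prod_{j=1}^{2N}|\confmap'(p_j)|$, and the residual determinant is $\det\big(\ExcKH(\confmap(p_{a_k}),\confmap(p_{b_\ell}))\big)_{k,\ell}=\LPdet{\beta}{\ExcKH}(\confmap(p_1),\ldots,\confmap(p_{2N}))$ by~\eqref{eq: definition of LPdet in the continuum}. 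The one point requiring care is that the left-to-right orientation of $\beta$ used in the continuum determinant must agree with the one used in the discrete determinant; this holds precisely because $\confmap$ is chosen with $\confmap(p_1)<\cdots<\confmap(p_{2N})$, so the images preserve the integer ordering of the marked points and hence the orientation. The only genuine analytic input is Lemma~\ref{lem: scaling limit of discrete excursion kernels}; granting it, the argument is multilinear algebra together with bookkeeping, and I expect no serious obstacle beyond correctly combining the error terms and matching the orientation.
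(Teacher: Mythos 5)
Your proposal is correct and is essentially the paper's own proof: the paper proves this theorem in one line by "combining Theorem~\ref{thm: the formula for UST partition functions} with the first statement of Lemma~\ref{lem: scaling limit of discrete excursion kernels}", and your argument simply spells out the details of that combination (entrywise kernel convergence, multilinearity of the determinant, and the conformal covariance~\eqref{eq: Brownian excursion kernel in domain} assembling the prefactors). The two points you flag as needing care --- the error terms combining via the finite determinant expansion, and the orientation of $\beta$ being preserved since $\confmap(p_1) < \cdots < \confmap(p_{2N})$ --- are exactly the right checks and both go through.
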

\begin{proof}
This follows by combining Theorem~\ref{thm: the formula for UST partition functions}
with the first statement of Lemma~\ref{lem: scaling limit of discrete excursion kernels}.
\end{proof}
In Section~\ref{sec: applications to SLEs}, we will furthermore prove that 
the limit function $\PartF_\alpha$ above is a positive solution to a system of
second order partial differential equations of conformal field theory,
see Theorem~\ref{thm: pure partition functions at kappa equals 2}.


\subsubsection{\label{sec: boundary visits}\textbf{Scaling limits of boundary visit probabilities}}
The formulas of Corollary~\ref{cor: LERW boundary visit probabilities with partition functions}
for boundary visit probabilities are also amenable to a scaling limit analysis, although
this case is considerably more involved than that of
Section~\ref{sec: scaling limits of partition functions}.
The difficulties arise because among the arguments $e_1 , \ldots, e_{2N}$ of the determinant
expressions, $N'$ pairs of edges $\hat{e}_{s;1} , \hat{e}_{s;2}$ are separated 
by just one lattice unit $\delta$, and we are letting $\delta \to 0$.
We state below the conclusion of the analysis, which will be done in
Sections~\ref{sec: determinant Taylor expansions}--\ref{sub: proof of bdry visit thm}. 
\begin{thm}\label{thm: scaling limit of LERW bdry visits}
Fix a domain $\domain$ and distinct boundary points 
$\pin , \pout, \hat{p}_1 , \ldots , \hat{p}_{N'}$
on horizontal or vertical boundary segments.
Take a square grid approximation $\Gr^\delta$ of $\domain$,
with $\ein^\delta, \eout^\delta \in \bdry \Edg^\delta$
and $\hat{e}_1^\delta , \ldots, \hat{e}_{N'}^\delta$ nearest to
$\pin , \pout$ and $\hat{p}_1 , \ldots , \hat{p}_{N'}$, respectively. Let $\lambda^\delta$ be 
the loop-erasure $\lambda^\delta = \LE(\eta^\delta)$
of a random walk $\eta^\delta$ on $\Gr^\delta$ from $(\ein^\delta)^\circ$,
conditioned to reach the boundary $\bdry \Vert^\delta$ via $\eout^\delta$.
Then, in the scaling limit as $\delta \to 0$, we have
\begin{align*}
& \frac{1}{\delta^{3 N'}} \, \PR_{\ein^\delta,\eout^\delta}
        \Big[ \lambda^\delta \text{ uses $\hat{e}_1^\delta , \ldots , \hat{e}_{N'}^\delta$ in this order} \Big] \\
\longrightarrow \; & \frac{1}{\pi^{N'}} \times \prod_{s=1}^{N'} |\confmap'(\hat{p}_s)|^3 \times
     \frac{\Ampl_\omega \big( \confmap(\pin) ; \confmap(\hat{p}_1) , \ldots, \confmap(\hat{p}_{N'}) ; \confmap(\pout) \big)}{\big( \confmap(\pout) - \confmap(\pin) \big)^{-2}} ,
\end{align*}
where $\confmap \colon \domain \to \bH$ is any conformal map
such that $\confmap(\pin) < \confmap(\pout)$ and 
$\confmap(\pin) < \confmap(\hat{p}_j)$ for all $j$,
and $\Ampl_\omega$~is~a function that satisfies two second order 
PDEs~\eqref{eq: second order PDEs at kappa equals two} 
and $N'$ third order PDEs~\eqref{eq: third order PDEs at kappa equals two}.
\end{thm}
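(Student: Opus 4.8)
The plan is to start from the exact finite-$\delta$ identity of Corollary~\ref{cor: LERW boundary visit probabilities with partition functions},
\[
\PR_{\ein^\delta,\eout^\delta}\big[\lambda^\delta \text{ uses } \hat{e}_1^\delta,\ldots,\hat{e}_{N'}^\delta \text{ in this order}\big]
= \frac{1}{\ExcK(\ein^\delta,\eout^\delta)} \sum_{\beta \succeq \alpha(\omega)} \Minv_{\alpha(\omega),\beta}\, \LPdet{\beta}{\ExcK}(e_1^\delta,\ldots,e_{2N}^\delta) ,
\]
with $N=N'+1$, and to extract the $\delta\to0$ asymptotics of both factors. The denominator is immediate: by the first statement of Lemma~\ref{lem: scaling limit of discrete excursion kernels}, $\ExcK(\ein^\delta,\eout^\delta)=\delta^2\,\ExcKdom(\pin,\pout)+\oo(\delta^2)$. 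All the difficulty lies in the numerator, which is the inverse Fomin type sum $\mathfrak{Z}^{\ExcK}_{\alpha(\omega)}$ of Section~\ref{subsec: inverse Fomin sums}, evaluated at $2N=2N'+2$ boundary edges of which $N'$ pairs $\hat{e}_{s;1}^\delta,\hat{e}_{s;2}^\delta$ sit a single lattice spacing apart. For these adjacent pairs the mutual kernel value $\ExcK(\hat{e}_{s;1}^\delta,\hat{e}_{s;2}^\delta)$ is of order $1$ rather than $\delta^2$, so that individual determinants $\LPdet{\beta}{\ExcK}$ are far larger than the true order $\delta^{3N'+2}$ of their weighted sum; the whole point is that drastic cancellations take place.

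These cancellations are organized by the combinatorial identities of Section~\ref{subsec: inverse Fomin sums}. For each adjacent pair, the positions $j,j+1$ of $\hat{e}_{s;1}^\delta,\hat{e}_{s;2}^\delta$ satisfy $\upwedgeat{j}\notin\alpha(\omega)$ (by the bijection of Lemma~\ref{lem: boundary visit probabilities with partition functions}, the two edges of a single visit lie in distinct links), so Proposition~\ref{prop: Zero-replacing Rule} applies. Part~(c) of that proposition states that the coefficient of the dangerous order-$1$ entry $\ExcK(\hat{e}_{s;1},\hat{e}_{s;2})$ in $\mathfrak{Z}^{\ExcK}_{\alpha(\omega)}$ vanishes, so these entries drop out completely; part~(a) states that $\mathfrak{Z}^{\ExcK}_{\alpha(\omega)}$ is antisymmetric under exchanging the kernel data at $j$ and $j+1$. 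Because $\hat{e}_{s;1}^\delta$ and $\hat{e}_{s;2}^\delta$ are only one lattice unit apart, antisymmetry forces the symmetric leading part to cancel, leaving the first tangential difference, which by the derivative statements of Lemma~\ref{lem: scaling limit of discrete excursion kernels} is of order $\delta\cdot\delta^2=\delta^3$. Applying this iteratively over the $N'$ pairs (Remark~\ref{rem: iterative zero replacement}) replaces one factor of $\delta^2$ by $\delta^3$ for each pair, yielding the order $\delta^{3N'}\times\delta^2=\delta^{3N'+2}$ together with an explicit limit formula in which $\mathfrak{Z}^{\ExcK}_{\alpha(\omega)}$ becomes a signed sum of determinants of the Brownian excursion kernel $\ExcKH$ carrying exactly one tangential derivative at each visit point. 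This is the substance of Section~\ref{sec: determinant Taylor expansions}, where the estimate is upgraded to a genuine interchange of the lattice limit $\delta\to0$ with the collision of the two edges of each visit.

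With the explicit limit in hand, the structural claims follow. Conformal covariance is read off from the covariance~\eqref{eq: Brownian excursion kernel in domain} of $\ExcKdom$ together with the chain rule for tangential derivatives: at a visit point the two merging kernel endpoints each contribute one power of $|\confmap'(\hat{p}_s)|$ and the accompanying tangential derivative contributes a third, giving the weight-three factor $|\confmap'(\hat{p}_s)|^3$ corresponding to $h_{1,3}=3$, while $\pin$ and $\pout$ retain the weight-one covariance of $\ExcKH$ and the denominator $\ExcKdom(\pin,\pout)$ produces the $(\confmap(\pout)-\confmap(\pin))^{-2}$. Positivity is inherited from the probabilistic meaning of the left-hand side. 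The two second order PDEs in $\xin$ and $\xout$ are obtained exactly as for the connectivity partition functions $\PartF_\alpha$ in Theorem~\ref{thm: scaling limit of partition functions}, since these two arguments keep the weight $h_{1,2}$.

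The genuinely new feature is the system of $N'$ third order PDEs, proved in Section~\ref{sec: third order PDEs} by the fusion argument of Dub\'edat~\cite{Dubedat-SLE_and_Virasoro_representations_fusion}: the limit $\Ampl_\omega$ is realized as the $h_{1,3}$ fusion channel produced when a pair of weight-$h_{1,2}$ arguments of $\PartF_{\alpha(\omega)}$ is brought together, and the two second order null-vector equations at the merging points combine into one third order degeneracy equation; it is precisely the interchange-of-limits property of the previous step that allows running this continuum argument against the lattice scaling limit. I expect the principal obstacle to be exactly this joint control of the cancellations and the exchange of limits: one must show that the cancellation among the $\Catalan_N$ signed determinants is uniform enough in $\delta$ that the surviving subleading terms converge to the asserted derivative formula, and the bookkeeping that makes this work is the combinatorics of cover-inclusive Dyck tilings encoded in Propositions~\ref{prop: Zero-replacing Rule} and~\ref{prop: inverse Fomin cascade}. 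Assembling the limits of numerator and denominator and dividing then gives the claimed convergence, which is summarized in Section~\ref{sub: proof of bdry visit thm}.
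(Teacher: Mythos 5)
Your proposal is correct and follows essentially the same route as the paper: the exact ratio identity of Corollary~\ref{cor: LERW boundary visit probabilities with partition functions}, Lemma~\ref{lem: scaling limit of discrete excursion kernels} for the denominator, the cancellation and exchange-of-limits analysis of the inverse Fomin type sums via Proposition~\ref{prop: Zero-replacing Rule} (the replacing algorithms of Section~\ref{sec: determinant Taylor expansions}), and Dub\'edat's fusion argument for the second and third order PDEs. Your power counting $\delta^{3N'+2}$ for the numerator and the weight-three covariance factors at the visit points also match the paper's Theorems~\ref{thm: continuous boundary visit probabilities} and~\ref{thm: discrete boundary visit probabilities}.
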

The proof is summarized in Section~\ref{sub: proof of bdry visit thm}.
It will also be shown (Proposition~\ref{prop: scaling limit of boundary visits})
that $\zeta_\omega$ is positive unless the order of visits $\omega$ to the
given edges at unit distance from the boundary is already impossible for
curves on the planar graphs $\Gr^\delta$.






\subsection{\label{sec: generalizations}Generalizations of the main results}


We finish this section by mentioning further results for uniform spanning
trees that generalize or are closely related to the above main results,
and which can still be proved with the same techniques.
Trusting that the reader can modify our arguments appropriately
to cover these generalizations, we choose not to provide full details
of their proofs.

\subsubsection{\label{subsubsec: simultaneous connectivities and visits}\textbf{Joint boundary visit and connectivity probabilities}}

Theorem~\ref{thm: the formula for UST partition functions} gives the
connectivity probability of boundary branches in a wired UST, and
Lemma~\ref{lem: boundary visit probabilities with partition functions}
gives the boundary visit probabilities of one boundary branch.
Theorems~\ref{thm: scaling limit of partition functions}
and~\ref{thm: scaling limit of LERW bdry visits} give the scaling
limits of these respective probabilities. It is straightforward to generalize
Lemma~\ref{lem: boundary visit probabilities with partition functions}
to obtain formulas for the probability that any
given boundary points are connected by UST branches and that these branches
visit any given edges at unit distance from the boundary.
Furthermore, this probability can be properly renormalized to have
a nontrivial scaling limit. More precisely, we have the following.
\begin{thm}
\label{thm: bdary visits and connectivities}
Fix a domain $\domain$ and distinct boundary points 
$p_1 , \ldots, p_{2N}, \hat{p}^{(1)}_1 , \ldots , \hat{p}^{(1)}_{N'_1}, \ldots ,
\hat{p}^{(N)}_1 , \ldots , \hat{p}^{(N)}_{N'_N}$
on horizontal or vertical boundary segments.
Take a square grid approximation $\Gr^\delta$ of $\domain$,
with boundary edges $e_1, \ldots , e_{2N} \in \bdry \Edg^\delta$
nearest to $p_1 , \ldots, p_{2N}$ and,
for $\ell=1,\ldots,N$, 
edges $\hat{e}^{(\ell)}_1 , \ldots, \hat{e}^{(\ell)}_{N'_\ell}$ 
at unit distance from the boundary 
nearest to $\hat{p}^{(\ell)}_1 , \ldots , \hat{p}^{(\ell)}_{N'_\ell}$, respectively. 
Let $\alpha \in \LP_{N}$ be a link pattern with any orientation $(a_\ell , b_{\ell})_{\ell=1}^N$ 
and denote by $\gamma^{(\ell)}$ the boundary branch from $e_{a_\ell}$.
Then, denoting $N' = \sum_{\ell=1}^N N'_\ell$, in the scaling limit as $\delta \to 0$, we have
\begin{align*}
\frac{1}{\delta^{2 N + 3 N'}} \, \PR
        \Big[ \pathfromto{e_{a_\ell}}{e_{b_\ell}} \text{ and }
        \gamma^{(\ell)} \text{ uses $\hat{e}^{(\ell)}_1 , \ldots , \hat{e}^{(\ell)}_{N'_\ell}$ in this order, for each $\ell=1,\ldots,N$} \Big] 
\longrightarrow \; & F^\domain ( \boldsymbol{p} ; \hat{\boldsymbol{p}} ) ,
\end{align*}
where $F^\domain$ is a conformally covariant function of $2N + N'$ boundary
points of $\domain$, which satisfies $2N$ second order and $N'$ third order
PDEs of conformal field theory, of the form given
in Section~\ref{sec: third order PDEs}. 
\end{thm}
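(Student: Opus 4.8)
The plan is to reduce the mixed connectivity-and-visit event to a single large connectivity partition function, and then feed it into the two scaling-limit machines already developed for the pure connectivity and pure boundary-visit cases. First I would generalize the spanning-tree bijection of Lemma~\ref{lem: boundary visit probabilities with partition functions}: for each branch $\gamma^{(\ell)}$ and each interior edge $\hat{e}^{(\ell)}_t$ at unit distance that it is required to use, replace $\hat{e}^{(\ell)}_t$ by the pair of associated boundary edges $\hat{e}^{(\ell)}_{t;1}, \hat{e}^{(\ell)}_{t;2}$, choosing the exit labelling so that the branch is forced to traverse the edge in the prescribed direction and order. Performing this edge-swap on the tree simultaneously for all branches gives a bijection onto the spanning trees realizing an enlarged connectivity $\hat\alpha \in \LP_{N+N'}$ on the $2(N+N')$ marked boundary edges obtained by interleaving, in counterclockwise order, the $2N$ connectivity edges with the $2N'$ visit edges. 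Exactly as in Lemma~\ref{lem: boundary visit probabilities with partition functions}, this yields $\PR[\,\cdots\,] = Z_{\hat\alpha}$ with no conditioning, and Theorem~\ref{thm: the formula for UST partition functions} then expresses $Z_{\hat\alpha} = \sum_{\beta \DPgeq \hat\alpha}\Minv_{\hat\alpha,\beta}\,\LPdet{\beta}{\ExcK}$.

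Second, I would carry out the scaling limit of this determinantal sum. The $2N$ connectivity points are well separated, contribute one factor $\delta^2$ per excursion kernel and a conformal weight $h_{1,2}=1$ each, and are handled verbatim as in Theorem~\ref{thm: scaling limit of partition functions} via Lemma~\ref{lem: scaling limit of discrete excursion kernels}. The $N'$ boundary-visit points are colliding pairs $\hat{e}^{(\ell)}_{t;1}, \hat{e}^{(\ell)}_{t;2}$, consecutive in the counterclockwise ordering and separated by a single lattice spacing, and these require the more delicate analysis of Sections~\ref{sec: determinant Taylor expansions}--\ref{sub: proof of bdry visit thm}. The naive determinantal estimate would give only $\delta^{2(N+N')}$; the extra power needed for the renormalization $\delta^{-(2N+3N')} = \delta^{-(2N+2N')-N'}$ comes from the cancellation of the leading symmetric terms at each colliding pair. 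Each such pair occupies adjacent positions $j,j+1$ which are \emph{not} joined by a link of $\hat\alpha$ (the branch threads through, linking them to opposite directions), i.e. $\upwedgeat{j} \notin \hat\alpha$, precisely the hypothesis of the Zero-replacing Rule (Proposition~\ref{prop: Zero-replacing Rule}); applying it iteratively over all colliding pairs (Remark~\ref{rem: iterative zero replacement}) produces the cancellations, after which the surviving contributions are governed by discrete tangential derivatives of $\ExcK$, which converge and carry the conformal weight $h_{1,3}=3$ by Lemma~\ref{lem: scaling limit of discrete excursion kernels}. The same exchange-of-limits argument as in Section~\ref{sec: determinant Taylor expansions} justifies passing to the limit termwise in the sum over $\beta$, yielding the conformally covariant limit $F^\domain$.

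Third, the partial differential equations would follow from the two mechanisms already in place. The $2N$ second order equations are the conformal field theory degeneracy equations for the connectivity endpoints; they hold for the limiting determinant sums by the harmonicity of $\ExcKH(x_i,x_j)=(x_j-x_i)^{-2}$ in each variable, exactly as proved for $\PartF_\alpha$ in Section~\ref{sec: applications to SLEs}. The $N'$ third order equations are obtained, one per boundary-visit point, by Dub\'edat's fusion argument of Section~\ref{sec: third order PDEs}, applied after the two limits have been exchanged.

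Finally, the main obstacle I anticipate is organizational rather than conceptual: I must verify that the local Taylor expansions and the leading-order cancellations performed at the different collision clusters do not interfere with one another. This locality is exactly what the combinatorial tools of Section~\ref{subsec: inverse Fomin sums} guarantee --- the Zero-replacing Rule, the coefficient computation of Lemma~\ref{lem: coefficients in LP determinants}, and the cascade of Proposition~\ref{prop: inverse Fomin cascade} all act only on the two indices of a single wedge and commute across disjoint wedges. Granting this locality, the argument is a bookkeeping-heavy but conceptually routine merger of the proofs of Theorem~\ref{thm: scaling limit of partition functions} and Theorem~\ref{thm: scaling limit of LERW bdry visits}.
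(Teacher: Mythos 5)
Your proposal is correct and follows essentially the same route the paper intends for this result: the paper itself offers no detailed proof, only the remark that one generalizes the bijection of Lemma~\ref{lem: boundary visit probabilities with partition functions} to reduce the mixed event to a single enlarged connectivity partition function $Z_{\hat\alpha}$ with $\hat\alpha \in \LP_{N+N'}$, and then runs the machinery of Theorem~\ref{thm: the formula for UST partition functions}, the replacing algorithms and exchange-of-limits of Section~\ref{sec: determinant Taylor expansions} (whose hypotheses $\upwedgeat{j_s}\notin\hat\alpha$ you correctly verify), and the fusion Lemma~\ref{lem: fusion for third order PDEs} for the PDEs. The only caveat is cosmetic: the tree bijection swaps each visited interior edge for the single boundary edge $\hat{e}^{(\ell)}_{t;1}$ (not "the pair"), and the $2N$ second order PDEs for the limit function carry weight $h_{1,3}=3$ at the collapsed visit points, so they come out of case (i) of the fusion lemma rather than verbatim from the pre-limit computation for $\PartF_{\hat\alpha}$ --- both points are already handled by the machinery you invoke.
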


\subsubsection{\label{sss: free UST}\textbf{Boundary touching subtrees in a free uniform spanning tree}}
So far we have discussed the uniform spanning tree only with
wired boundary conditions (\textit{wired UST}),
meaning that we collapsed the boundary $\bdry \Vert \subset \Vert$
into a single vertex. In contrast, the uniform spanning tree with free boundary
conditions (\textit{free UST}) is just the uniformly randomly chosen spanning
tree of the given graph, without any collapsing of boundary. We now present some
results for the free UST on square lattice graph approximations of a domain in the plane.

Let $\Gr$ be a square grid approximation of a Jordan domain $\domain$ as
in Section~\ref{sec: finer mesh graphs}, with mesh size $\delta>0$ that we keep implicit
in the notation below.
In order to relate the results to earlier ones in a transparent
manner, for the free UST the graph approximation of $\domain$ is taken to
be the dual graph $\Gr^* = (\Vert^* , \Edg^*)$ of~$\Gr/\bdry$:
the vertices $v^* \in \Vert^*$ are the square faces of $\Gr$ and the
edges $e^* = \edgeof{v_1^*}{v_2^*} \in \Edg^*$ join two square faces $v_1^*, v_2^*$
that share one side with each other. %
There is a natural notion of boundary also in $\Gr^*$.
A face $v^* \in \Vert^*$ is said to be a \textit{boundary face} if 
at least one of the corners of the square face $v^*$ is a boundary vertex of $\Gr$.
A dual edge $e^* \in \Edg^*$ is said to be a \textit{boundary dual edge} if
the edge $e$ which it crosses is a boundary edge, $e \in \bdry \Edg$.

A well-known simple fact is that the free UST $\tree^*$
on $\Gr^*$ can be obtained from the wired UST $\tree$ on $\Gr$ by
the duality
(illustrated in Figure~\ref{fig: dual UST}):
\[ e \in \tree \qquad \Longleftrightarrow \qquad e^* \notin \tree^* \;
\text{ for the dual edge $e^*$ which crosses the edge $e$.} \]
\begin{figure}
\includegraphics[width=.3\textwidth]{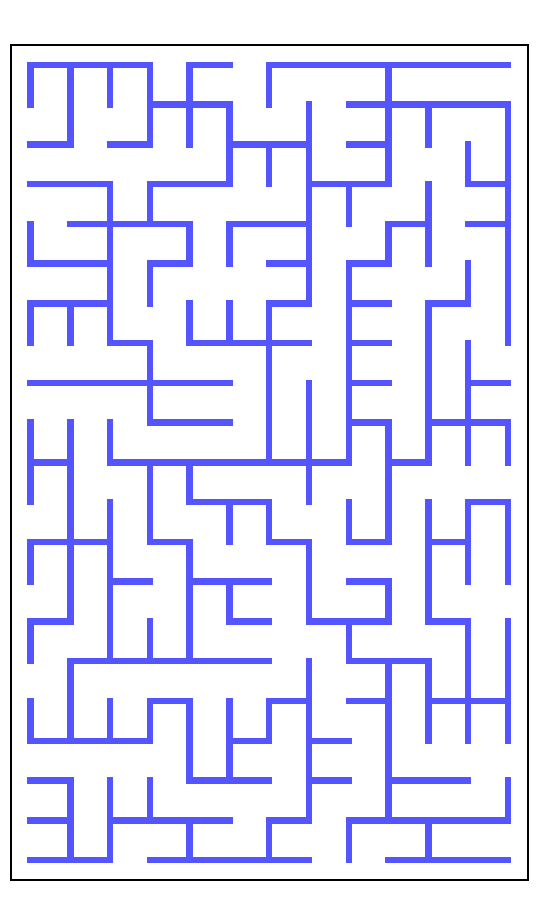} \quad 
\includegraphics[width=.3\textwidth]{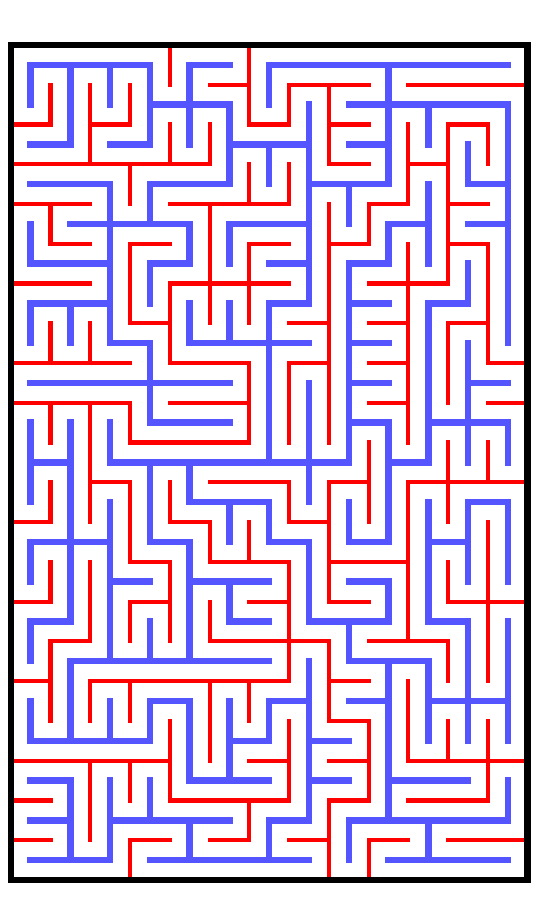} \quad
\includegraphics[width=.3\textwidth]{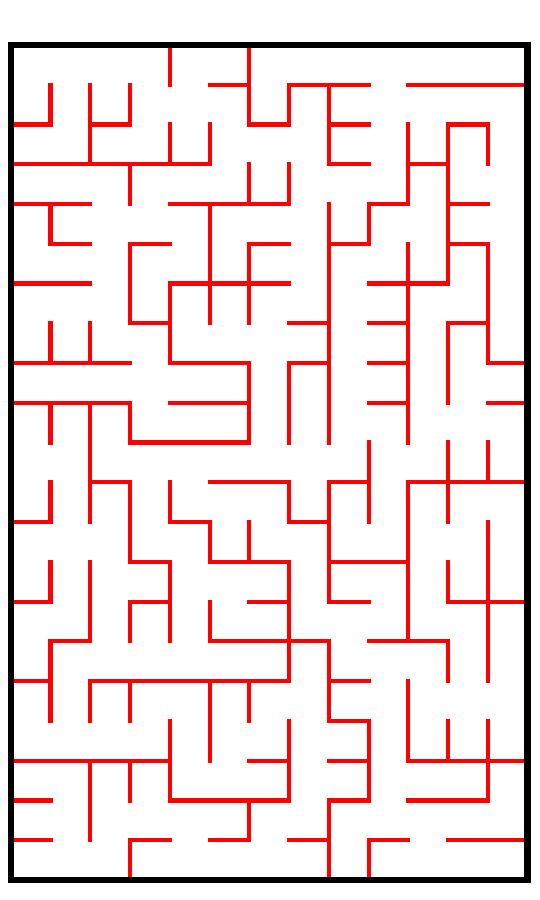}
\caption{\label{fig: dual UST}
The uniform spanning tree with free boundary conditions (left figure) is dual
to the uniform spanning tree with wired boundary conditions (right figure)
as illustrated here (middle figure).
}
\end{figure}

\begin{figure}
\includegraphics[width=.4\textwidth]{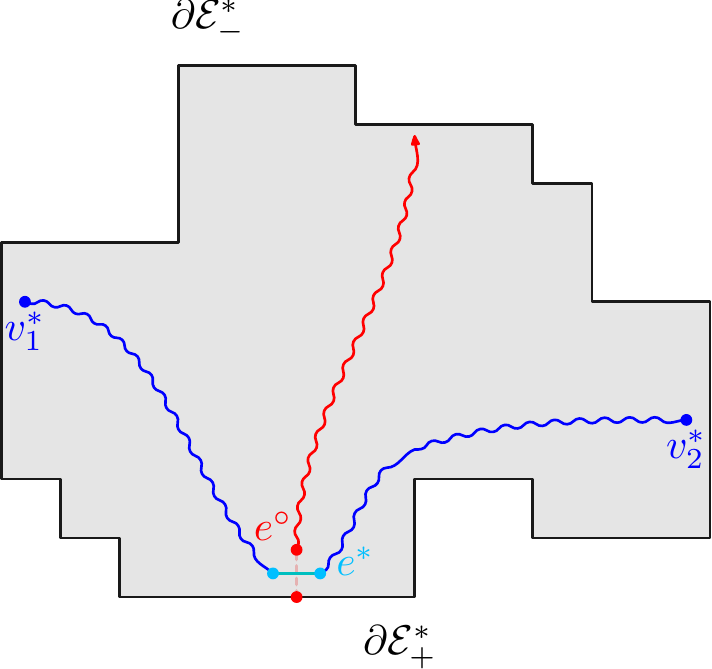} \quad 
\caption{\label{fig: free UST boundary visits}
Schematic illustration of a boundary-to-boundary 
branch of the free~UST $\tree^*$ using the boundary dual edge $e^*$ (blue), and
the boundary branch of the edge $e$ in the wired~UST $\tree $
on the primal graph (red).
}
\end{figure}


In the free UST model, the most naive analogue of our earlier questions would be
the boundary visits of a boundary-to-boundary branch,
illustrated in Figure~\ref{fig: free UST boundary visits}.
Formally, given two boundary faces $v_1^*$ and $v_2^*$, and a boundary dual edge
$e^*$ on the counterclockwise dual boundary segment $\bdry \Edg^*_+$ 
from $v_1^*$ to $v_2^*$: what is the probability that the unique
branch of $\tree^*$ connecting $v_1^*$ to $v_2^*$ passes through $e^*$?
By duality, this occurs if and only if the boundary branch of $e $ in $\tree$
connects  to the primal boundary $\bdry \Vert$ so that it crosses 
the clockwise dual boundary segment $\bdry \Edg^*_-$ 
from $v_1^*$ to $v_2^*$, as depicted in Figure~\ref{fig: free UST boundary visits}.
By Wilson's algorithm, the probability of this event
is given by the harmonic measure in $\Gr$ of the
boundary edges crossing $ \bdry \Edg_-^*$ seen from $e^\circ$. This gives
the discrete boundary visit probability in the free UST.


The scaling limit behavior of the probability of such a boundary visit event
is then an easy consequence of the
convergence of discrete harmonic measures and their derivatives to the corresponding
continuum objects (see, e.g., \cite{CS-discrete_complex_analysis_on_isoradial}).
Assume that $\Gr$, $v_1^*$, $v_2^*$, and $e^*$ are an approximation of
the domain $\domain$ with three boundary points $p _1$, $p_2$, and $\hat{p}$, and
suppose that $\hat{p}$ lies on a horizontal or vertical boundary segment.
The free UST branch from $v_1^*$ to $v_2^*$ visits $e^*$ with probability $\OO(\delta)$,
where $\delta$ is the mesh size of the graph approximation.
Renormalized by $\delta^{-1}$, this probability converges to the normal derivative at
$\hat{p}$ of the continuum harmonic measure of the clockwise arc from $p _1$ to $p_2$
The limit function is known to be conformally covariant and to satisfy a second order PDE 
of conformal field theory. 
This is to be contrasted with the third order PDEs and 
probability $\OO(\delta^3)$ of boundary visits in the wired UST.
In summary, the boundary visit probability
in the free UST model is considerably easier than in the wired UST,
and essentially different in terms of its scaling exponent and PDEs.


The more interesting counterpart is the opposite question: what is the probability that two
boundary faces are connected by a path not visiting the boundary in the free UST?
More precisely,
we consider the following problem of boundary touching subtrees
in the free UST $\tree^*$.
The \textit{interior forest} of $\tree^*$ is the subgraph obtained by
removing all boundary dual edges from $\tree^*$. The connected components
$\tau^*_1 , \ldots , \tau^*_M$ of the interior forest are trees,
and we ask whether there is a component which intersects the boundary exactly
at some given faces $v_1^*, \ldots, v_N^* \in \bdry \Vert^*$.


\begin{thm}\label{thm: boundary touching subtrees in FUST}
Let $v_1^*, \ldots, v_N^* \in \bdry \Vert^*$ be non-neighboring distinct boundary faces
in counterclockwise order along $\bdry \Vert^*$, and let $e_1 , \ldots, e_{2N} \in \bdry \Edg$ be the (unique) $2N$ boundary egdes adjacent to these faces, enumerated counterclockwise starting from $v_1^*$.
Then the probability that some component $\tau^*_m$ of the interior forest of $\tree^*$ intersects
the boundary $\partial \Vert^*$ exactly at the faces $v_1^*, \ldots, v_N^*$ is
given by
\[ \PR \Big[ \exists \, m \text{ such that }
            \tau^*_m \cap \bdry \Vert^* = \{ v_1^* , \ldots, v_N^* \} \Big]
    = 2^N \; Z_{\unnested_N}(e_1 , \ldots, e_{2N}) , \]
where $\unnested_N \in \LP_N$ is the completely unnested link pattern.
\end{thm}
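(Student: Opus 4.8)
The plan is to reduce the free-UST event to a boundary-branch connectivity event for the wired UST $\tree$ via planar duality, and then to recognize the resulting probability as $Z_{\unnested_N}$ up to the multiplicity coming from orientations of links.

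First I would translate the event into the primal picture using the duality $e \in \tree \iff e^* \notin \tree^*$. An interior dual edge $e^*$ lies in the interior forest of $\tree^*$ exactly when the primal edge $e$ it crosses is absent from $\tree$; hence the interior-forest component of a boundary face $v_i^*$ is precisely the set of dual faces reachable from $v_i^*$ by crossing non-tree primal edges, and its outer boundary is a collection of wired-tree edges together with arcs of $\bdry\domain$. Consequently, a single component contains the faces $v_1^*, \ldots, v_N^*$ and no other boundary face if and only if, for each of the $N$ boundary arcs (``gaps'') separating consecutive marked faces, a branch of $\tree$ seals off that gap from the common interior region. Topologically, sealing the gap between two consecutive marked faces amounts to a branch of $\tree$ joining the two boundary edges $e_{2i-1}, e_{2i}$ that flank that gap (under the stated counterclockwise enumeration), so that the required configuration is exactly that the boundary branches of $\tree$ realize the completely unnested connectivity $\unnested_N = \{ \link{1}{2}, \ldots, \link{2N-1}{2N} \}$ of the marked edges.

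Next I would make the correspondence between the free-UST event and this wired-UST connectivity precise by Wilson's algorithm, in the same spirit as Lemma~\ref{lem: boundary visit probabilities with partition functions} and Lemma~\ref{lem: well definedness of the partition function for connectivity alpha}: running Wilson's algorithm from the interior endpoints $e_1^\circ, \ldots, e_{2N}^\circ$ produces the loop-erased branches whose joint law is the connectivity law, and a deletion/addition bijection on spanning trees identifies the sealing configuration with the connectivity $\unnested_N$. The only point that requires care is the factor $2^N$: sealing the $i$-th gap does not prescribe an orientation, and it can be realized in two disjoint ways, namely the branch from $e_{2i-1}^\circ$ exiting via $e_{2i}$ (forcing $e_{2i} \in \tree$ and $e_{2i-1} \notin \tree$) or the branch from $e_{2i}^\circ$ exiting via $e_{2i-1}$ (forcing $e_{2i-1} \in \tree$ and $e_{2i} \notin \tree$). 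These two possibilities correspond to the two orientations of the link $\link{2i-1}{2i}$, so the full event is the disjoint union over all $2^N$ orientations of $\unnested_N$. By the orientation-independence of connectivity probabilities (Lemma~\ref{lem: well definedness of the partition function for connectivity alpha}), each of these $2^N$ events has probability $Z_{\unnested_N}(e_1, \ldots, e_{2N})$, whence the total probability is $2^N \, Z_{\unnested_N}(e_1, \ldots, e_{2N})$, as claimed.

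I expect the main obstacle to be the rigorous verification of the topological dictionary in the first step --- that the interior-forest component touching $\bdry \Vert^*$ at exactly $v_1^*, \ldots, v_N^*$ is equivalent to the $N$ sealing branches of $\tree$ and to no spurious additional constraints. This requires a careful treatment of the boundary geometry of the square-grid approximation (the shape of each face's ``mouth'', the location of the flanking boundary edges, and the disjointness of the sealing branches), and it is precisely the kind of planar-duality bookkeeping that must be checked with some care; once it is in place, the probabilistic identification and the counting of the $2^N$ orientations are routine given Theorem~\ref{thm: the formula for UST partition functions} and Lemma~\ref{lem: well definedness of the partition function for connectivity alpha}.
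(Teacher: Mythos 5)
Your proposal follows the paper's own argument for this theorem essentially verbatim: primal--dual translation, identification of the free-UST event with gap-sealing branches of the wired tree $\tree$, disjointness of the $2^N$ orientations, and equality of their probabilities via Lemma~\ref{lem: well definedness of the partition function for connectivity alpha}. However, the step you single out as the ``main obstacle'' and then treat as settled contains a genuine gap: it is \emph{not} true that a gap can be sealed in only the two ways corresponding to the two orientations of a link. Recall that $\tree$, viewed as an edge set of $\Gr$, is a forest each of whose components contains exactly one boundary vertex. A gap between consecutive marked faces is sealed as soon as the interior endpoints of its two flanking boundary edges are joined by a path of $\tree$-edges; this does not force either flanking edge itself to lie in $\tree$, because the $\tree$-component of the sealing path may be rooted at $\bdry\Vert$ through some other boundary edge entirely. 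Concretely, take $N=1$, with $v_1^*$ a boundary face whose side edges are $e_1,e_2$ and whose top side is the interior edge $t$: in any configuration with $t \in \tree$, $e_1,e_2 \notin \tree$, and the $\tree$-component of $t$ reaching $\bdry\Vert$ elsewhere, the singleton $\set{v_1^*}$ is a component of the interior forest intersecting $\bdry\Vert^*$ exactly at $\set{v_1^*}$, yet neither $\pathfromto{e_1}{e_2}$ nor $\pathfromto{e_2}{e_1}$ holds. Running Wilson's algorithm first from $e_1^\circ$ and letting the first walk go two steps along the row above the boundary and then down shows that such configurations have probability at least $(1/4)^3$ uniformly in the mesh size, so they are not negligible; in particular, the event as literally stated has probability of order one, not $O(\delta^{3N})$ as demanded by Theorem~\ref{thm: scaling limit of boundary touching subtrees in FUST}.

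Consequently the event as you state it (and, to be fair, as the theorem and the paper's own sketch literally state it) is strictly larger than the disjoint union of the $2^N$ oriented connectivity events, and the claimed equality of events fails in one direction. The identity $\PR[\,\cdot\,] = 2^N \, Z_{\unnested_N}(e_1,\ldots,e_{2N})$ does hold for the smaller, ``anchored'' event in which, in addition, exactly one of the two boundary edges flanking each gap belongs to $\tree$ --- equivalently, in dual terms, the component $\tau^*_m$ is attached to the rest of $\tree^*$ by exactly one boundary dual edge at each face $v_i^*$; this is the reading under which the counting works and which the scaling-limit statement requires. With that reading, your per-gap dichotomy is exhaustive, the $2^N$ cases are disjoint, and each has probability $Z_{\unnested_N}$ by Lemma~\ref{lem: well definedness of the partition function for connectivity alpha}. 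Note, finally, that the reverse inclusion --- that every oriented connectivity configuration really places all the marked faces into a \emph{single} interior-forest component containing no other boundary face --- also requires an argument: a short edge count using connectivity and acyclicity of $\tree^*$ shows that the restriction of $\tree^*$ to the region enclosed by the $N$ sealing branches is connected. Your proposal, like the paper's sketch, leaves this implicit, but unlike the exhaustiveness step it is actually true as stated.
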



The key to the proof is illustrated in Figure~\ref{fig: key fUST figure}. 
By duality, the components of the interior forest
are separated by boundary-to-boundary branches of the primal wired UST.
The event that one component contains exactly
the boundary faces $v_1^*, \ldots, v_N^*$ is a completely unnested connectivity event of
the boundary edges $e_1 , \ldots, e_{2N} \in \bdry \Edg$ chosen as in the statement of the theorem.
There are $2^N$ possible orientations of the $N$ branches, each contributing equally.

\begin{figure}
\includegraphics[width = 0.5\textwidth]{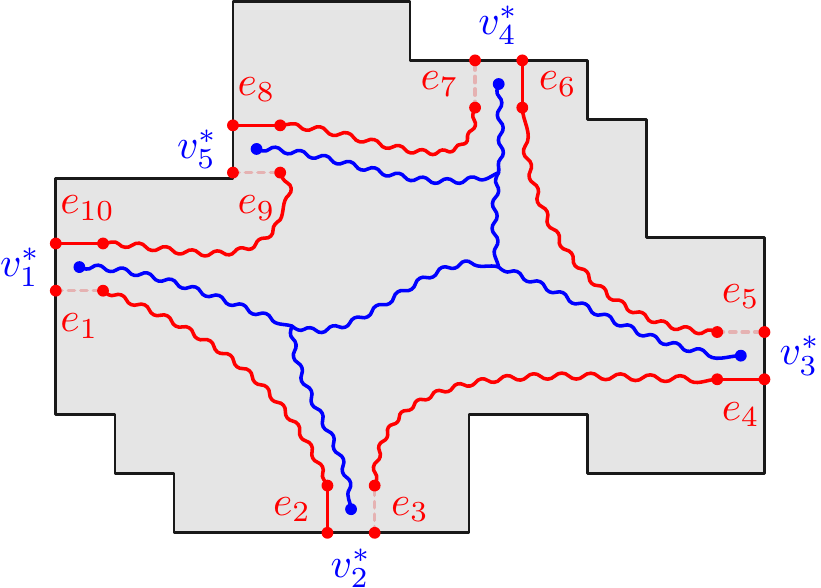}
\caption{\label{fig: key fUST figure}
Schematic illustration of a
boundary touching subtree in the interior forest of the free~UST $\tree^*$ (blue),
and the corresponding completely unnested connectivity in the wired~UST $\tree$
on the primal graph (red).
}
\end{figure}

Among the boundary edges $e_1 , \ldots, e_{2N} \in \bdry \Edg$ as above, there are $N$
pairs separated by just one lattice unit,
and none of these pairs is connected in the completely unnested pattern $\unnested_N$. 
The scaling limit of connectivity probabilities in such a setup is
treated in Section~\ref{appendix: boundary visits}. The following
scaling limit result for boundary touching subtrees of the free UST can be straightforwardly
inferred.

\begin{thm} \label{thm: scaling limit of boundary touching subtrees in FUST}
Fix a domain $\domain$, and let
$\hat{p}_1, \ldots, \hat{p}_N \in \bdry \domain$ be distinct boundary points on
horizontal or vertical boundary segments. Let
$v_1^*, \ldots, v_N^* \in \bdry \Vert^*$ be boundary faces closest to
$\hat{p}_1, \ldots, \hat{p}_N$, respectively. Then in the scaling limit as $\delta \to 0$,
the probability that some component $\tau^*_m$ of the interior forest of $\tree^*$ intersects
the boundary $\partial \Vert^*$ exactly at the faces $v_1^*, \ldots, v_N^*$ is
given by
\[ \frac{1}{\delta^{3N}} \;\PR \Big[ \exists \, m \text{ such that }
            \tau^*_m \cap \bdry \Vert^* = \{ v_1^* , \ldots, v_N^* \} \Big]
    \longrightarrow F^\domain (\hat{\boldsymbol{p}}) , \]
where $F^\domain$ is a conformally covariant function of $N$ boundary points of $\domain$,
which satisfies $N$ third order PDEs of conformal field theory, of the form given
in Section~\ref{sec: third order PDEs}.
\end{thm}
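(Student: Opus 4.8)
The plan is to transfer the degenerate scaling-limit analysis of Section~\ref{appendix: boundary visits} to this setting, using the exact discrete identity already in hand. By Theorem~\ref{thm: boundary touching subtrees in FUST}, the probability in question equals $2^N \, Z_{\unnested_N}(e_1^\delta,\ldots,e_{2N}^\delta)$. Here the $2N$ boundary edges split into $N$ pairs at mutual lattice distance $\delta$, each pair collapsing as $\delta \to 0$ to one of the marked boundary points $\hat p_s$, and --- crucially --- each such collapsing pair $(j,j+1)$ is left \emph{unlinked} by the completely unnested pattern $\unnested_N$. Thus it suffices to show that $\delta^{-3N} Z_{\unnested_N}(e_1^\delta,\ldots,e_{2N}^\delta)$ converges to a conformally covariant limit obeying the asserted third order PDEs. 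This is precisely the type of degenerate configuration treated in the boundary visit analysis; in fact it is a particularly clean instance, with $N$ symmetric collapsing pairs and no conditioning on a distinguished through-going branch.

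First I would rewrite $Z_{\unnested_N}$ as the inverse Fomin type sum $\mathfrak{Z}^{\ExcK}_{\unnested_N} = \sum_{\beta \succeq \unnested_N} \#\CItilingsof(\unnested_N/\beta)\,\LPdet{\beta}{\ExcK}(e_1,\ldots,e_{2N})$ of Theorem~\ref{thm: the formula for UST partition functions} and Section~\ref{subsec: inverse Fomin sums}. Since each collapsing pair $(j,j+1)$ satisfies $\upwedgeat{j}\notin\unnested_N$, the Zero-replacing Rule applies: the coefficient of the singular entry $\ExcK(e_j,e_{j+1})$ --- which for adjacent boundary edges does not vanish to leading order --- is identically zero, so these entries may be set to zero without changing the sum (Proposition~\ref{prop: Zero-replacing Rule}(c)), while Proposition~\ref{prop: Zero-replacing Rule}(a) renders $\mathfrak{Z}^{\ExcK}_{\unnested_N}$ antisymmetric under the exchange of the kernel data at $j$ and $j+1$. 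With the singular couplings removed, the remaining kernel entries obey $\ExcK^\delta \sim \delta^2 \ExcKdom$ by Lemma~\ref{lem: scaling limit of discrete excursion kernels}.

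Next I would invoke the determinant Taylor expansion machinery of Section~\ref{sec: determinant Taylor expansions}. The antisymmetry produced by the Zero-replacing Rule forces the leading contributions of the $\beta$-sum to cancel, lowering the scaling at each collapsing point from the naive $\delta^2$ down to $\delta^3$, so that the total order is exactly $\delta^{3N}$ and, in the limit, each collapsing point contributes a third order tangential derivative of the Brownian excursion kernel --- consistently with the conformal weight $h_{1,3}=3$. The same machinery furnishes an explicit formula for the limit in which the $\delta\to 0$ limit may be interchanged with the remaining summations. Conformal covariance of the resulting $F^\domain(\hat{\boldsymbol p})$, with weight $|\confmap'(\hat p_s)|^3$ at each point, then follows directly from the conformal covariance~\eqref{eq: Brownian excursion kernel in domain} of $\ExcKdom$, and the $N$ third order PDEs are obtained by the fusion argument of Dub\'edat recalled in Section~\ref{sec: third order PDEs}, exactly as in the proof of Theorem~\ref{thm: scaling limit of LERW bdry visits} for the boundary visit amplitudes $\Ampl_\omega$.

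I expect the main obstacle to be establishing the sharp order $\delta^{3N}$, i.e.\ verifying that all contributions of orders between $\delta^{2N}$ and $\delta^{3N-1}$ cancel after summing against the cover-inclusive Dyck tiling coefficients $\#\CItilingsof(\unnested_N/\beta)$. A naive term-by-term Taylor expansion of the determinants $\LPdet{\beta}{\ExcK}$ would miss these cancellations and suggest a spurious lower-order divergence; it is the interplay between the antisymmetry of the Zero-replacing Rule and the combinatorics of cover-inclusive Dyck tilings that organizes them, together with the uniform control needed to justify the interchange of limits. Once the correct order and the interchange-of-limits formula are secured, conformal covariance and the PDEs follow as comparatively routine consequences of the facts already established for the boundary visit case in Section~\ref{appendix: boundary visits}.
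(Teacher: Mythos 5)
Your proposal is correct and follows essentially the same route the paper intends: the paper obtains this theorem precisely by combining the discrete identity of Theorem~\ref{thm: boundary touching subtrees in FUST} with the degenerate scaling-limit machinery of Section~\ref{appendix: boundary visits} (the replacing algorithms, Theorem~\ref{thm: discrete boundary visit probabilities} with $N'=N$ collapsing pairs so that the exponent is $N'+2N=3N$, conformal covariance via Equation~\eqref{eq: conformal covariance for general limit}, and the fusion Lemma~\ref{lem: fusion for third order PDEs} for the third order PDEs).

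One detail in your reduction does need repair, because as written your key hypothesis check is inconsistent. With the enumeration of Theorem~\ref{thm: boundary touching subtrees in FUST}, for which the connectivity pattern is literally $\unnested_N$, the $N$ collapsing pairs (the two edges adjacent to a common face $v_s^*$) are $(2,3),(4,5),\ldots,(2N-2,2N-1)$ together with the \emph{cyclic} pair $(2N,1)$. Since $\unnested_N$ links $\link{1}{2}, \link{3}{4},\ldots$, it leaves only $N-1$ consecutive-index pairs unlinked; hence it is impossible for all $N$ collapsing pairs to be of the form $(j,j+1)$ and unlinked by $\unnested_N$, and Theorem~\ref{thm: discrete boundary visit probabilities} does not literally apply to the pair $(2N,1)$. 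The fix is a rotation of labels (equivalently, a choice of conformal map sending a point strictly between two marked faces to $\infty$): starting the counterclockwise enumeration from the other edge adjacent to $v_1^*$, the same connectivity event is encoded by the link pattern $\set{\link{1}{2N},\link{2}{3},\link{4}{5},\ldots,\link{2N-2}{2N-1}}$, and the collapsing pairs become $(1,2),(3,4),\ldots,(2N-1,2N)$, none of which is a link of this pattern, i.e., $\upwedgeat{j_s}\notin\alpha$ for every pair; Theorem~\ref{thm: discrete boundary visit probabilities} and the fusion argument of Section~\ref{sec: third order PDEs} then apply verbatim. Finally, a small descriptive slip: the replacing algorithm~\ref{alg: discrete replacing} inserts a \emph{first-order} (discrete) tangential derivative of the excursion kernel at each collapsing point, not a third order derivative; the weight $3=h_{1,3}$ arises as two powers of $|\confmap'(\hat{p}_s)|$ from the two kernel arguments plus one from the renormalizing factor $|p_{j_s+1}-p_{j_s}|^{-1}$. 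Neither point changes the architecture of your argument, which is the paper's own.
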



\bigskip{}

\section{\label{sec: applications to SLEs}Relation to multiple $\SLE$s}


Schramm-Loewner Evolutions ($\SLE$)
are random curves in planar domains, whose laws in any two conformally equivalent domains are
related to each other via a push-forward by a conformal map.
$\SLE$ type random curves were originally introduced in \cite{Schramm-LERW_and_UST},
motivated in particular by the scaling limits of loop-erased random walks and uniform spanning trees.
A number of variants of $\SLE$s exists, each relevant for a slightly different setup,
but the most important characteristics of any $\SLE$ type curve is
captured by one parameter, $\kappa>0$. For instance, the
scaling limits of LERWs and branches in the UST are $\SLE$s with $\kappa=2$, see
\cite{LSW-LERW_and_UST, Zhan-scaling_limits_of_planar_LERW, 
YY-Loop-erased_random_walk_and_Poisson_kernel_on_planar_graphs}.
The main new results in this section pertain to that particular value, $\kappa=2$,
and variants of $\SLE$s known as multiple $\SLE$s.

For the purposes of this section, 
we assume at least superficial familiarity with
the most standard $\SLE$ variant, the chordal $\SLEk$, which is a random curve in a simply
connected domain $\domain \subset \bC$ between two boundary points $\pin, \pout \in \bdry \domain$.
The reader can find the definition, basic properties, and applications of chordal $\SLEk$
in, e.g., \cite{KN-guide_to_SLE, RS-basic_properties_of_SLE,
Lawler-conformally_invariant_processes_in_the_plane}.
We briefly describe the definition of multiple $\SLE$s relying on the chordal $\SLE$,
but for the details we again
refer to the literature
\cite{BBK-multiple_SLEs, Dubedat-Euler_integrals, Dubedat-commutation, 
KP-pure_partition_functions_of_multiple_SLEs}.

The description of multiple $\SLE$s is given in
Section~\ref{sec: multiple SLEs}, with particular emphasis on
their local definition using partition functions. The main result
of this section, Theorem~\ref{thm: pure partition functions at kappa equals 2},
states that the scaling limits of UST 
connectivity probabilities given in Theorem~\ref{thm: scaling limit of partition functions}
are the so-called multiple $\SLE$ pure partition functions at $\kappa=2$.
As a consequence, we obtain 
in Theorem~\ref{thm: existence of local multiple SLE2s} the existence and extremality
of the corresponding local multiple $\SLE$ processes at $\kappa=2$,
which for $N$ curves are indexed by link patterns $\alpha \in \LP_N$ of $N$ links.
The key ingredients are second order partial differential
equations~\eqref{eq: PDE for multiple SLEs at kappa equals 2},
M\"obius covariance~\eqref{eq: COV for multiple SLEs at kappa equals 2}, 
and boundary conditions~\eqref{eq: ASY for multiple SLEs at kappa equals 2}
for the functions $\PartF_\alpha$,
whose derivations are given in Section~\ref{sec: proof of pure partition function properties}.
We remark that the second order PDEs 
will also be needed as an intermediate step in 
the derivation of the third order PDEs in Section~\ref{appendix: boundary visits}.

%

For notational consistency, we introduce the following parameters depending on $\kappa$:
\begin{align*}
h_{1,2} = & \; h_{1,2}(\kappa) = \frac{6-\kappa}{2 \kappa} &
\Delta =  & \; \Delta(\kappa) = - 2 h_{1,2}(\kappa) = 1-\frac{6}{\kappa} \\
h_{1,3} = & \; h_{1,3}(\kappa) = \frac{8-\kappa}{ \kappa} &
\Delta' = & \; \Delta'(\kappa) = h_{1,3}(\kappa) - 2 h_{1,2}(\kappa) = \frac{2}{\kappa} .
\end{align*}
For the case of our primary interest, $\kappa=2$,
these parameters are just the following constants:
\begin{align*}
h_{1,2} = 1 , \quad
h_{1,3} = 3 , \quad
\Delta =  -2 , \quad
\Delta' =  1 .
\end{align*}

\subsection{\label{sec: multiple SLEs}Multiple SLEs}

Multiple $\SLE$s are processes of several interacting random curves.
A multiple $\SLEk$ in a simply connected planar domain 
consists of $N$ random curves connecting $2N$ distinct points 
$p_1 , \ldots, p_{2N}$ on the boundary 
pairwise without crossing. For example, the joint law of several boundary touching branches 
of the UST (with wired boundary conditions) should converge in the scaling limit
to such  a process with $\kappa = 2$, as stated in more detail below in
Conjecture~\ref{conj: multiple branches of UST converge to multiple SLE2}.

The success of the $\SLE$ theory relies largely on the growth process description
of curves that employs the Loewner chain technique from complex analysis.
The growth process description does
not directly give a global definition of the curves, but provides a construction
of their initial segments. For this
reason, we will consider so-called local multiple $\SLE$s in the sense of
\cite{BBK-multiple_SLEs, Dubedat-Euler_integrals, Dubedat-commutation, 
KP-pure_partition_functions_of_multiple_SLEs}.
A local multiple $\SLE$ is constructed by 
a growth process encoded in a Loewner chain, and the construction relies on a partition function.
The definition of local multiple $\SLE$s is given in Section~\ref{sub: local multiple SLEs},
after some relevant preliminaries about the partition functions
in Section~\ref{sub: multiple SLE partition functions}.

\subsubsection{\label{sub: multiple SLE partition functions}\textbf{Partition functions of multiple SLEs}}

%
The construction of a local multiple $\SLEk$ describing $N$ curves
from $2N$ marked boundary points, 
employs a \textit{multiple SLE partition function}
\begin{align*}
\PartF \colon \chamber_{2N} \to \bRpos,
\end{align*}
a positive function defined on the chamber
\begin{align*}
\chamber_{2N} = \set{ (x_1 , \ldots, x_{2N}) \; \Big| \; x_1 < \cdots < x_{2N} } ,
\end{align*}
satisfying the following $2N$ partial differential equations of second order:
\begin{align}
\label{eq: PDE for multiple SLEs} \tag{PDE} 
& \left[ \frac{\kappa}{2} \pdder{x_j}
    + \sum_{i \neq j} \Big( \frac{2}{x_i-x_j} \pder{x_i} - \frac{2h_{1,2}}{(x_i-x_j)^2} \Big) \right] \PartF (x_1 , \ldots, x_{2N}) = 0 \qquad \text{for all } j=1,\ldots,2N,
\end{align}
and the covariance under M\"obius transformations:
\begin{align}
\label{eq: COV for multiple SLEs} \tag{COV} 
& \PartF(x_1 , \ldots, x_{2N}) = 
    \prod_{j=1}^{2 N} \Mob'(x_j)^{h_{1,2}} \times \PartF(\Mob(x_1) , \ldots, \Mob(x_{2N}))  \\
\nonumber
& \text{for all } \Mob(z) = \frac{a z + b}{c z + d}, \; \text{ with } a,b,c,d \in \bR, \; ad-bc > 0, 
 \text{ such that } \Mob(x_1) < \cdots < \Mob(x_{2N}).
\end{align}


The solution space of the system
\eqref{eq: PDE for multiple SLEs}--\eqref{eq: COV for multiple SLEs} has dimension $\Catalan_N = \frac{1}{N + 1} \binom{2N}{N}$
(when solutions with at most power-law growth are considered),
by the work in~\cite{FK-solution_space_for_a_system_of_null_state_PDEs_1,
FK-solution_space_for_a_system_of_null_state_PDEs_2,
FK-solution_space_for_a_system_of_null_state_PDEs_3}.
In \cite{KP-pure_partition_functions_of_multiple_SLEs}, 
a set of $\Catalan_N$ distinguished linearly independent solutions, the pure partition functions
$(\PartF_\alpha^{(\kappa)})_{\alpha \in \LP_N}$, were found for generic $\kappa$.
These were argued to correspond to the extremal multiple $\SLEk$ probability measures that
cannot be written as non-trivial convex combinations in the set of all 
multiple $\SLEk$ probability measures.
Besides \eqref{eq: PDE for multiple SLEs} and \eqref{eq: COV for multiple SLEs},
these distinguished functions satisfy the following specific asymptotics properties 
on the pairwise diagonals 
(the codimension one boundary of the chamber domain $\chamber_{2N}$):
\begin{align}
\label{eq: ASY for multiple SLEs} \tag{ASY} 
& \lim_{x_j , x_{j+1} \to \xi} 
\frac{\PartF_\alpha^{(\kappa)} (x_1 , \ldots, x_{2N})}{(x_{j+1} - x_j)^{\Delta}}
= \begin{cases}
    \PartF_{\alpha \removeupwedge{j} }^{(\kappa)} (x_1, \ldots, x_{j-1} , x_{j+2} , \ldots, x_{2N}) & \text{ if } \upwedgeat{j} \in \alpha \\
    0 & \text{ if } \upwedgeat{j} \notin \alpha
    \end{cases} 
\end{align}
for all $j \in \set{1, \ldots, 2N - 1}$
and any $\xi \in (x_{j-1} , x_{j+2})$, where 
$\PartF_\emptyset \equiv 1$ by convention,
and the combinatorial notations are as defined in 
Section~\ref{subsec: Wedges, slopes, and link removals}.
This asymptotic boundary condition is illustrated in 
Figure~\ref{fig: cascade property for SLE}.

\begin{figure}
\begin{displaymath}
\xymatrixcolsep{3.5pc}
\xymatrixrowsep{2.5pc}
\xymatrix{
        & \begin{minipage}{6cm} \begin{center} \includegraphics[width=\textwidth]{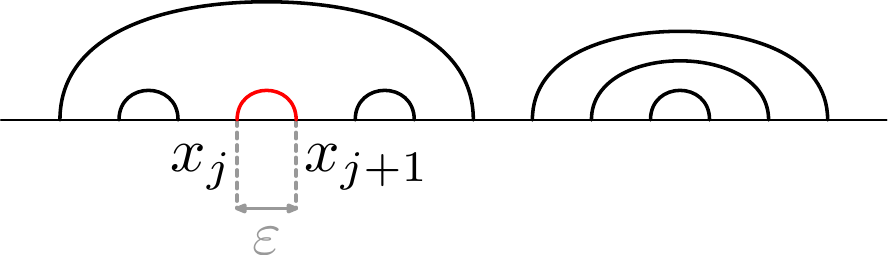} \end{center} \end{minipage}\ar[d] \\
        & \hspace{-0mm}\begin{minipage}{6cm} \begin{center}  \includegraphics[width=\textwidth]{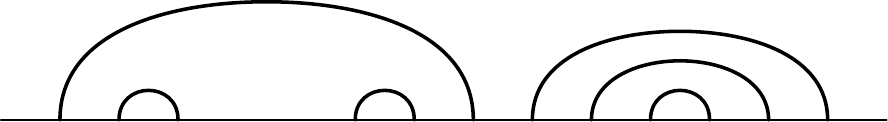} \end{center} \end{minipage}  & 
        }
\end{displaymath}
\caption{\label{fig: cascade property for SLE}
Schematic illustration of the link patterns in the
cascade property \eqref{eq: ASY for multiple SLEs},
where the distance $\varepsilon$ of the two boundary points $x_j$ and $x_{j+1}$ is taken to zero.
The corresponding combinatorial operation is depicted in Figure~\ref{fig: link removal}.}
\end{figure}

In~\cite{KP-pure_partition_functions_of_multiple_SLEs},
the solutions $\PartF_\alpha^{(\kappa)}$ were constructed for 
$\kappa \in (0,8) \setminus \bQ$.
In the present article, 
we show that the scaling limits $\PartF_\alpha$ of
uniform spanning tree connectivity probabilities~\eqref{eq: partition function for connectivity alpha}
satisfy the defining requirements \eqref{eq: PDE for multiple SLEs},
\eqref{eq: COV for multiple SLEs} and \eqref{eq: ASY for multiple SLEs}
for pure partition functions $\PartF_\alpha^{(\kappa)}$ 
at $\kappa = 2$. At $\kappa=2$
we can also prove their positivity 
and conclude the existence of the corresponding local multiple $\SLE$ processes,
whereas for generic $\kappa$, the positivity of the pure partition functions $\PartF^{(\kappa)}_\alpha$
has remained conjectural\footnote{The recent
works~\cite{PH-Global_multiple_SLEs_and_pure_partition_functions, Wu-HyperSLE} 
show the positivity for $\kappa \leq 6$.}.

\begin{thm}\label{thm: pure partition functions at kappa equals 2}
For any link pattern $\alpha \in \LP_N$, the function
$\PartF_\alpha \colon \chamber_{2N} \to \bR$ 
given by~\eqref{eq: SLE partition function as a sum of LPdets},
\begin{align*}
\PartF_\alpha (x_1 , \ldots , x_{2N})
= \; & \sum_{\beta \succeq \alpha} \Minv_{\alpha,\beta} \, \LPdet{\beta}{\ExcKH} (x_1 , \ldots, x_{2N}) ,
\end{align*}
satisfies \eqref{eq: PDE for multiple SLEs},
\eqref{eq: COV for multiple SLEs} and \eqref{eq: ASY for multiple SLEs}
at $\kappa = 2$, i.e., we have
\begin{align}
\label{eq: PDE for multiple SLEs at kappa equals 2} \tag{PDE2} 
& \left[ \pdder{x_j}
    + \sum_{i \neq j} \Big( \frac{2}{x_i-x_j} \pder{x_i} - \frac{2}{(x_i-x_j)^2} \Big) \right]
  \PartF_\alpha (x_1 , \ldots , x_{2N}) = 0 \qquad \text{for all } j=1,\ldots,2N \quad \\
\label{eq: COV for multiple SLEs at kappa equals 2} \tag{COV2} 
& \PartF_\alpha(x_1 , \ldots, x_{2N}) = 
    \prod_{j=1}^{2 N} \Mob'(x_j) \times \PartF_\alpha(\Mob(x_1) , \ldots, \Mob(x_{2N}))  \\
\nonumber
& \text{for all } \Mob(z) = \frac{a z + b}{c z + d}, \; \text{ with } a,b,c,d \in \bR, \; ad-bc > 0, 
 \text{ such that } \Mob(x_1) < \cdots < \Mob(x_{2N})  \\
 \label{eq: ASY for multiple SLEs at kappa equals 2} \tag{ASY2} 
& \lim_{x_j , x_{j+1} \to \xi} \frac{\PartF_\alpha (x_1 , \ldots, x_{2N})}{(x_{j+1} - x_j)^{-2}}
= \begin{cases}
    \PartF_{\alpha \removeupwedge{j} } (x_1, \ldots, x_{j-1} , x_{j+2} , \ldots, x_{2N}) & \text{ if } \upwedgeat{j} \in \alpha \\
    0 & \text{ if } \upwedgeat{j} \notin \alpha
    \end{cases} \\    
& 
\text{for all $j \in \set{1, \ldots, 2N - 1}$ and any $\xi \in (x_{j-1} , x_{j+2})$} .
\nonumber
\end{align}
Moreover, 
the collection $( \PartF_\alpha )_{\alpha \in \LP_N}$
of functions is linearly independent, and each function $\PartF_\alpha$ is positive:
we have $\PartF_\alpha (x_1 , \ldots , x_{2N}) > 0$
for all $(x_1,\ldots, x_{2N}) \in \chamber_{2N}$.
\end{thm}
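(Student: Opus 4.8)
The plan is to establish the five assertions --- the $2N$ second-order PDEs, Möbius covariance, the asymptotics, linear independence, and positivity --- by reducing the first four to the determinantal building blocks $\LPdet{\beta}{\ExcKH}$ and the combinatorics of Section~\ref{sec: combinatorics}, and handling the last by a maximum-principle argument. For covariance, a direct computation with $\Mob(z)=(az+b)/(cz+d)$, $ad-bc>0$, gives $\Mob(y)-\Mob(x)=(ad-bc)(y-x)/\bigl((cx+d)(cy+d)\bigr)$ and $\Mob'(x)=(ad-bc)/(cx+d)^2$, whence $\ExcKH(\Mob(x),\Mob(y))=\Mob'(x)^{-1}\Mob'(y)^{-1}\ExcKH(x,y)$. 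Since in $\LPdet{\beta}{\ExcKH}$ the row labels $a_1,\dots,a_N$ and column labels $b_1,\dots,b_N$ together exhaust $\{1,\dots,2N\}$ exactly once, pulling these scalar factors out of the determinant yields $\LPdet{\beta}{\ExcKH}(\Mob(x_1),\dots,\Mob(x_{2N}))=\prod_{j=1}^{2N}\Mob'(x_j)^{-1}\,\LPdet{\beta}{\ExcKH}(x_1,\dots,x_{2N})$; summing against $\Minv_{\alpha,\beta}$ gives \eqref{eq: COV for multiple SLEs at kappa equals 2}, with $h_{1,2}=1$. For the PDEs, by linearity it suffices that each $\LPdet{\beta}{\ExcKH}$ is annihilated by $\mathcal{D}_j:=\pdder{x_j}+\sum_{i\neq j}\bigl(\tfrac{2}{x_i-x_j}\pder{x_i}-\tfrac{2}{(x_i-x_j)^2}\bigr)$. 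Fixing $j$, the variable $x_j$ appears (in the left-to-right orientation) either as an entrance, occupying a single row of the matrix, or as an exit, occupying a single column; the two cases are symmetric. Expanding the determinant along that row (resp.\ column) reduces the claim, after collecting terms, to the single identity
\[
\pdder{x_j}\ExcKH(x_j,y) + \frac{2}{y-x_j}\pder{y}\ExcKH(x_j,y) - \frac{2}{(y-x_j)^2}\,\ExcKH(x_j,y) = 0 ,
\]
which one checks directly, together with the cancellation of the remaining first-order contributions coming from the cofactors. This determinant computation is the main calculational burden; conceptually it expresses that each $\LPdet{\beta}{\ExcKH}$ is a correlation function of level-two degenerate fields of weight $h_{1,2}(2)=1$ at central charge $c=-2$.

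For the asymptotics, recall that $\PartF_\alpha=\mathfrak{Z}^{\ExcKH}_\alpha$ is the inverse Fomin type sum. Viewed as a polynomial in the off-diagonal kernel entries, $\PartF_\alpha$ is linear in $\ExcKH(x_j,x_{j+1})=(x_{j+1}-x_j)^{-2}$, and by the discussion preceding Proposition~\ref{prop: Zero-replacing Rule} the coefficient $[\mathfrak{Z}^{\ExcKH}_\alpha]_{j,j+1}$ involves no kernel entry carrying the index $j$ or $j+1$. As $x_j,x_{j+1}\to\xi$ the entry $\ExcKH(x_j,x_{j+1})$ diverges like $(x_{j+1}-x_j)^{-2}$ while every other entry stays bounded, so $\PartF_\alpha/(x_{j+1}-x_j)^{-2}\to[\mathfrak{Z}^{\ExcKH}_\alpha]_{j,j+1}$. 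If $\upwedgeat{j}\in\alpha$, Proposition~\ref{prop: inverse Fomin cascade} identifies this coefficient with $\mathfrak{Z}^{\ExcKH\removewedge{j}}_{\alpha\removeupwedge{j}}$, i.e.\ with $\PartF_{\alpha\removeupwedge{j}}(x_1,\dots,x_{j-1},x_{j+2},\dots,x_{2N})$; if $\upwedgeat{j}\notin\alpha$, Proposition~\ref{prop: Zero-replacing Rule}(c) gives $[\mathfrak{Z}^{\ExcKH}_\alpha]_{j,j+1}=0$. This is exactly \eqref{eq: ASY for multiple SLEs at kappa equals 2}.

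Linear independence follows from \eqref{eq: ASY for multiple SLEs at kappa equals 2} by induction on $N$, the base case $\PartF_\emptyset\equiv1$ being clear. Assume the claim for $N-1$ and suppose $\sum_\alpha c_\alpha\PartF_\alpha\equiv0$. Every $\alpha_0\in\LP_N$ contains at least one link $\{j,j+1\}$ joining consecutive indices; for such $j$, dividing by $(x_{j+1}-x_j)^{-2}$ and letting $x_j,x_{j+1}\to\xi$ gives $\sum_{\alpha:\,\upwedgeat{j}\in\alpha}c_\alpha\PartF_{\alpha\removeupwedge{j}}\equiv0$. Since $\alpha\mapsto\alpha\removeupwedge{j}$ is a bijection onto $\LP_{N-1}$, the induction hypothesis forces $c_\alpha=0$ for every $\alpha$ with $\upwedgeat{j}\in\alpha$, in particular $c_{\alpha_0}=0$; as $\alpha_0$ was arbitrary, all coefficients vanish. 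Non-negativity of $\PartF_\alpha$ is then immediate from Theorem~\ref{thm: scaling limit of partition functions}: since $\Minv_{\alpha,\beta}\ge0$ and each discrete connectivity probability $Z^{\Gr^\delta}_\alpha$ is non-negative, that theorem exhibits $\PartF_\alpha(\confmap(p_1),\dots,\confmap(p_{2N}))$ as $\pi^{N}\prod_{j}|\confmap'(p_j)|^{-1}$ times a limit of non-negative quantities, and as $\confmap$ realizes every ordered tuple we get $\PartF_\alpha\ge0$ on all of $\chamber_{2N}$. Moreover $\PartF_\alpha\not\equiv0$, since iterating \eqref{eq: ASY for multiple SLEs at kappa equals 2} along the links of $\alpha$ terminates at $\PartF_\emptyset=1$.

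The remaining upgrade to strict positivity requires a genuinely different idea, and is the subtlest point. No individual operator $\mathcal{D}_j$ is elliptic, but summing the $2N$ equations \eqref{eq: PDE for multiple SLEs at kappa equals 2} yields $L\PartF_\alpha=0$, where $L=\sum_{j}\pdder{x_j}+\sum_{j}\sum_{i\neq j}\tfrac{2}{x_i-x_j}\pder{x_i}-2\sum_{i\neq j}\tfrac{1}{(x_i-x_j)^2}$ has the Laplacian as principal part (hence is elliptic), has coefficients smooth on $\chamber_{2N}$, and has strictly negative zeroth-order coefficient $c(x)=-2\sum_{i\neq j}(x_i-x_j)^{-2}<0$. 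Since $\chamber_{2N}$ is connected and $\PartF_\alpha$ is real-analytic there, the strong minimum principle for $L$ (valid because $c\le0$) forbids a non-negative, non-trivial solution from vanishing at an interior point, giving $\PartF_\alpha>0$ throughout $\chamber_{2N}$. The existence of the local multiple $\SLE_2$ with partition function $\PartF_\alpha$ (Theorem~\ref{thm: existence of local multiple SLE2s}) then follows formally from positivity together with \eqref{eq: PDE for multiple SLEs at kappa equals 2} and \eqref{eq: COV for multiple SLEs at kappa equals 2}. I expect the per-determinant PDE verification to be the main calculational obstacle, while the positivity is the step whose key idea --- passing from the non-elliptic operators $\mathcal{D}_j$ to their elliptic sum with non-positive potential --- is the least routine.
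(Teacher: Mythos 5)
Your proposal is correct, and for most of the theorem it runs along the same lines as the paper's own proof: \eqref{eq: COV for multiple SLEs at kappa equals 2} via the factorization of $\Mob(y)-\Mob(x)$ and multilinearity of the determinant (Lemma~\ref{lem: Mobius covariance for determinants}); \eqref{eq: PDE for multiple SLEs at kappa equals 2} by direct computation on each $\LPdet{\beta}{\ExcKH}$ (Lemma~\ref{lem: PDEs for determinants}) --- with one caveat: the ``cancellation of the remaining first-order contributions coming from the cofactors'' that you assert is not term-by-term, but occurs (as in the paper) only after resumming over permutations, when the leftover terms assemble into determinants with two identical rows; \eqref{eq: ASY for multiple SLEs at kappa equals 2} by reading off the coefficient of $\mathfrak{K}(j,j+1)$ via Proposition~\ref{prop: Zero-replacing Rule}(c) and Proposition~\ref{prop: inverse Fomin cascade}, which is exactly the paper's Proposition~\ref{prop: Vanishing-link asymptotics of the part fcns}; and non-negativity from Theorem~\ref{thm: scaling limit of partition functions}. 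Your linear-independence induction (wedge removal is a bijection onto $\LP_{N-1}$, so collapsing a link of $\alpha_0$ kills its coefficient) is a self-contained rendering of the argument the paper delegates to \cite{KP-pure_partition_functions_of_multiple_SLEs}. The genuine divergence is the final upgrade to strict positivity. The paper argues probabilistically: by Wilson's algorithm, the discrete connectivity probability is bounded below by a product of LERW connection probabilities in disjoint subdomains $\domain_\ell$ joining $p_{a_\ell}$ to $p_{b_\ell}$, each of which, divided by $\delta^2$, has a positive limit, giving a uniform-in-$\delta$ lower bound. You instead carry out the ``purely analytical option'' that the paper mentions but does not execute: summing the $2N$ equations \eqref{eq: PDE for multiple SLEs at kappa equals 2} produces an elliptic operator with Laplacian principal part and zeroth-order coefficient $-4\sum_{i<j}(x_i-x_j)^{-2}\leq 0$, so the strong minimum principle applies on the connected (indeed convex) chamber $\chamber_{2N}$, and a non-negative solution vanishing at an interior point must vanish identically; combined with $\PartF_\alpha\not\equiv 0$ (iterate \eqref{eq: ASY for multiple SLEs at kappa equals 2} down to $\PartF_\emptyset=1$, or invoke linear independence), this yields $\PartF_\alpha>0$. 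This route is valid and arguably cleaner, in that it requires no input from the discrete model beyond non-negativity; the paper's probabilistic route stays elementary and, as a by-product, gives quantitative lower bounds on the discrete probabilities themselves.
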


The proof of this theorem
will be given in Section~\ref{sec: proof of pure partition function properties}.

We remark that in \cite{Dubedat-Euler_integrals}, Dub\'edat
also stated without explicit proof that the determinants $\LPdet{\beta}{\ExcKH}$
satisfy the partial differential equations~\eqref{eq: PDE for multiple SLEs at kappa equals 2},
and suggested the problem of finding the appropriate linear combination
of the determinants which satisfies the specific boundary
conditions~\eqref{eq: ASY for multiple SLEs at kappa equals 2}.
Theorem~\ref{thm: pure partition functions at kappa equals 2} above
settles this problem.


\subsubsection{\label{sub: local multiple SLEs}\textbf{Local multiple SLEs}}

In general, for $\kappa>0$, a local multiple $\SLEk$ is defined in
\cite[Appendix~A]{KP-pure_partition_functions_of_multiple_SLEs} by requiring
conformal invariance, a domain Markov property,
and absolute continuity of initial segments with respect to the chordal $\SLEk$.
By \cite{Dubedat-commutation} and
\cite[Theorem~A.4]{KP-pure_partition_functions_of_multiple_SLEs}, 
such local $N$-$\SLEk$ processes are classified by multiple $\SLE$
partition functions $\PartF \colon \chamber_{2N} \to \bRpos$ (modulo multiplicative constant),
that is, positive solutions to \eqref{eq: PDE for multiple SLEs}--\eqref{eq: COV for multiple SLEs}.
For concreteness, we therefore describe below only the local multiple $\SLEk$ associated with
a given partition function $\PartF$.

\begin{figure}
 \includegraphics[width=0.5\textwidth]{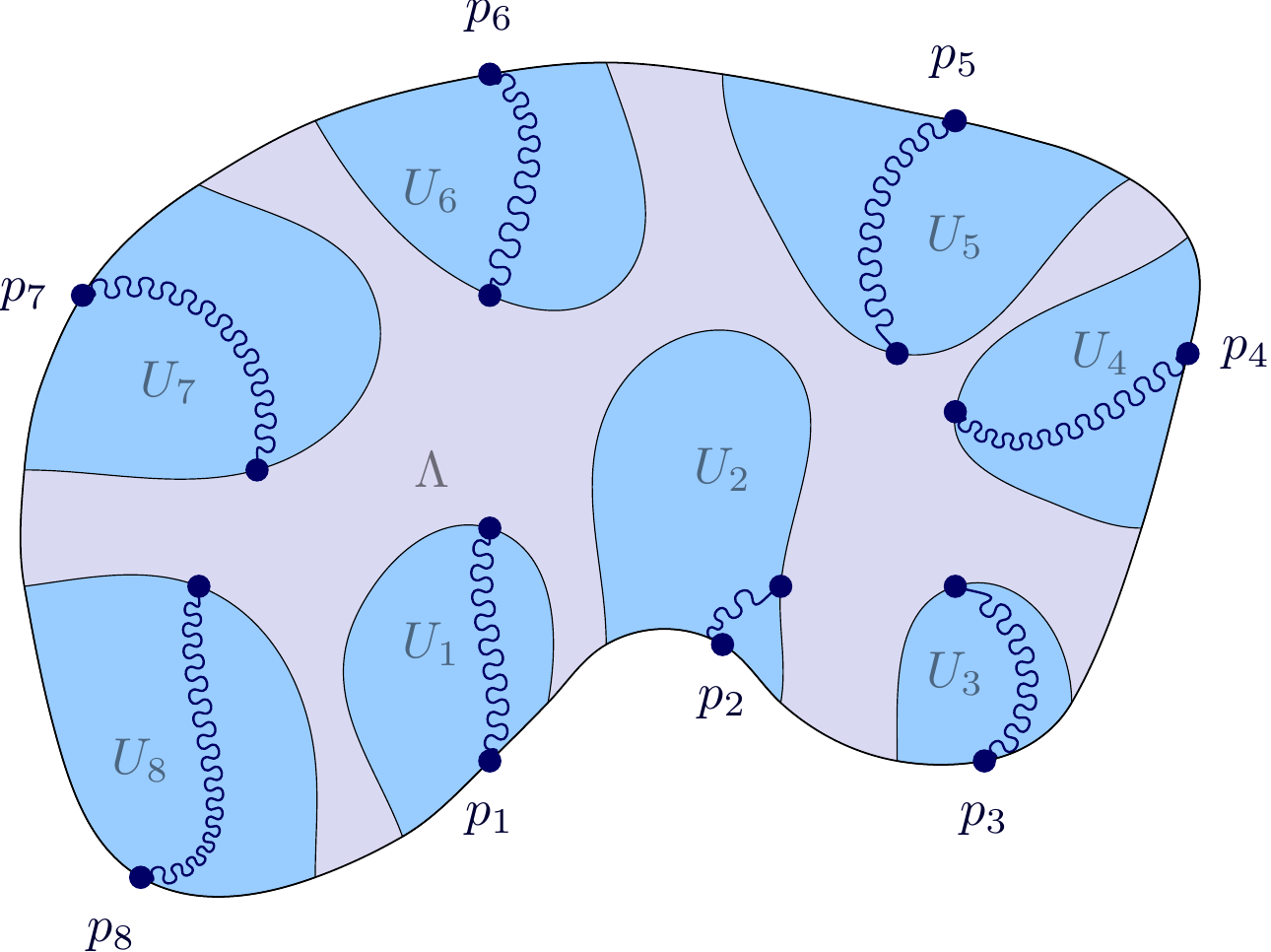} 
\caption{\label{fig: NSLE}
Schematic illustration of a local multiple $\SLE$.
}
\end{figure}

A local $N$-$\SLEk$ associates a probability measure on 
initial segments of $2N$ random curves to any domain and localization data of the following kind.
The domain and localization data consists of a simply connected domain 
(where the curves live) with $2N$ marked boundary points (the starting points 
of the curves), and disjoint localization neighborhoods of 
the marked points (where the initial segments of the curves 
lie) --- see Figure \ref{fig: NSLE} for an illustration.
More precisely, for any $\kappa > 0$, the local $N$-$\SLEk$ is a collection
of probability measures indexed by the domain $\domain$, marked points 
$p_1,\ldots,p_{2N} \in \bdry \domain$, and their closed localization neighborhoods
$U_1,\ldots,U_{2N} \subset \cl{\domain}$, such that $p_j \in U_j$ for all $j$, 
the complements $\domain \setminus U_j$ are simply connected, 
and $U_j \cap U_k = \emptyset$ for $j \neq k$. 
The probability measures are supported on the set of $2N$-tuples 
$(\gamma^{(1)} , \ldots, \gamma^{(2N)})$ of oriented non-self-crossing unparametrized curves $\gamma^{(j)}$, traversing from the marked points 
$p_j$ to the boundary of their localization neighborhoods $U_j$.

For each $j = 1,\ldots,2N$,
the partition function $\PartF$ 
determines the marginal law $\SLEmeasure_{\gamma^{(j)}}$ 
of the $j$:th curve $\gamma^{(j)}$ started from $p_j$, traversing  
the localization neighborhood $U_j$, as follows.
The curve $\gamma^{(j)}$ is absolutely continuous with respect to the law
$\SLEmeasure^{(\domain;p_j,p)}$ of  an initial segment of the single chordal
$\SLEk$ curve in $\domain$ from $p_j$ to any target point $p \in \bdry\domain$ 
outside $U_j$. 
The Radon-Nikodym derivative is 
\begin{align*}
\frac{\ud \SLEmeasure_{\gamma^{(j)}}}{\ud \SLEmeasure^{(\domain;p_j,p)}}  
= \; & \prod_{i\neq j} g'(x_i)^{h_{1,2}} 
\times \frac{\PartF \big(g(x_1),\ldots,g(x_{j-1}),g(\gamma^{(j)}(\tau)),g(x_{j+1}),\ldots, g(x_{2N})\big)}{\PartF(x_1,\ldots,x_{2N})},
\end{align*}
where 
\begin{itemize}
\item $x_j = \confmap(p_j)$ for all $j$, where $\confmap \colon \domain \to \bH$ is a conformal map such that $\confmap(p) = \infty$,
\item $g\colon H_\tau^{(j)} \rightarrow \bH$ is the unique conformal isomorphism from 
the unbounded component $H_\tau^{(j)}$ of the complement 
$\bH \setminus \confmap \left(\gamma^{(j)}[0,\tau]\right)$ 
of the conformal image of the curve $\gamma^{(j)}$
to the upper-half plane $\bH$, normalized
so that $g(z)=z+\oo(1)$ as $z\to\infty$,
\item $\tau$ is the hitting time of the curve
$\gamma^{(j)}$ to the boundary of its localization neighborhood $U_j$.
\end{itemize}
Together with the domain Markov property, these marginals on the individual
curve segments $\gamma^{(j)}$ in fact determine the joint probability measure of all $2N$
curve segments, see \cite[Appendix~A]{KP-pure_partition_functions_of_multiple_SLEs}.

It is also very natural to form convex combinations of probability measures, such as 
the localizations of a local multiple $\SLE$.
Appropriately accounting for conformal transformation
properties, one gets a convex structure on the space of local multiple $\SLE$s,
see~\cite[Theorem~A.4(c)]{KP-pure_partition_functions_of_multiple_SLEs} for details.

In the case $\kappa = 2$, our partition functions $\PartF_\alpha$
can be used to construct local multiple $\SLE_2$ processes:

\begin{thm}\label{thm: existence of local multiple SLE2s}
For each link pattern $\alpha \in \LP_N$, there exists a local $N$-$\SLEk$ 
with $\kappa=2$, associated with the partition function $\PartF_\alpha$ as in
Theorem~\ref{thm: pure partition functions at kappa equals 2}
and Equation~\eqref{eq: SLE partition function as a sum of LPdets}.
Moreover, the convex hull of the local $N$-$\SLE$s corresponding to $\PartF_\alpha$ 
for $\alpha \in \LP_N$ is of dimension $\Catalan_N-1$ and the $\Catalan_N$
local $N$-$\SLE$s corresponding to $\PartF_\alpha$ are the extremal points of this convex set.
\end{thm}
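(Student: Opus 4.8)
The plan is to separate the claim into the existence of the local $N$-$\SLEk$ at $\kappa = 2$ and the description of the convex structure of the resulting family, and to reduce everything to inputs already secured in Theorem~\ref{thm: pure partition functions at kappa equals 2} together with the classification and convexity statements of \cite{KP-pure_partition_functions_of_multiple_SLEs, Dubedat-commutation}. For existence I would simply observe that Theorem~\ref{thm: pure partition functions at kappa equals 2} exhibits each $\PartF_\alpha$, $\alpha \in \LP_N$, as a \emph{positive} function on $\chamber_{2N}$ satisfying \eqref{eq: PDE for multiple SLEs} and \eqref{eq: COV for multiple SLEs} at $\kappa = 2$; hence $\PartF_\alpha \colon \chamber_{2N} \to \bRpos$ is a bona fide multiple $\SLE$ partition function in the sense of Section~\ref{sub: multiple SLE partition functions}. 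By the classification of local multiple $\SLE$s (\cite[Appendix~A]{KP-pure_partition_functions_of_multiple_SLEs}, \cite{Dubedat-commutation}), the construction recalled in Section~\ref{sub: local multiple SLEs} then produces a local $N$-$\SLEk$ with $\kappa = 2$ whose marginals are the Radon--Nikodym derivatives there with $\PartF = \PartF_\alpha$. This settles the first assertion, the only inputs being the positivity and the two systems \eqref{eq: PDE for multiple SLEs}--\eqref{eq: COV for multiple SLEs}, all granted by Theorem~\ref{thm: pure partition functions at kappa equals 2}.

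For the convex structure I would first recall the convex combination operation on local multiple $\SLE$s from \cite[Theorem~A.4(c)]{KP-pure_partition_functions_of_multiple_SLEs}. The key elementary observations are that the Radon--Nikodym derivative displayed in Section~\ref{sub: local multiple SLEs} is homogeneous of degree zero in $\PartF$, so that the local $\SLE$ depends on its partition function only up to a positive multiplicative constant, and that for a sum $\PartF = \PartF' + \PartF''$ of partition functions this derivative decomposes as a genuine convex combination of the derivatives for $\PartF'$ and $\PartF''$ with the position-dependent weights $\PartF'(\boldsymbol{x})/\PartF(\boldsymbol{x})$ and $\PartF''(\boldsymbol{x})/\PartF(\boldsymbol{x})$. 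Consequently the assignment $\PartF \mapsto \SLEmeasure_{\PartF}$ intertwines nonnegative linear combinations of partition functions with convex combinations of the associated local $\SLE$s, and any positive combination $\sum_\alpha c_\alpha \PartF_\alpha$ (with $c_\alpha \ge 0$, not all zero) is again positive and solves \eqref{eq: PDE for multiple SLEs}--\eqref{eq: COV for multiple SLEs}, hence corresponds to a local $\SLE$ lying in the convex hull.

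It then remains to read off the dimension and extremality from the \emph{linear independence} of $(\PartF_\alpha)_{\alpha \in \LP_N}$, again granted by Theorem~\ref{thm: pure partition functions at kappa equals 2}. I would prove injectivity of the parametrization by convex weights: if two convex combinations $\sum_\alpha c_\alpha \PartF_\alpha$ and $\sum_\alpha c'_\alpha \PartF_\alpha$, with $\sum_\alpha c_\alpha = \sum_\alpha c'_\alpha = 1$, yield the same local $\SLE$, then by degree-zero homogeneity they are proportional, $\sum_\alpha c_\alpha \PartF_\alpha = \lambda \sum_\alpha c'_\alpha \PartF_\alpha$; linear independence forces $c_\alpha = \lambda c'_\alpha$ for every $\alpha$, and summing over $\alpha$ gives $\lambda = 1$, whence $c = c'$. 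Thus distinct convex weights give distinct local $\SLE$s, the $\PartF_\alpha$ are affinely independent, and their convex hull is a $(\Catalan_N - 1)$-simplex whose $\Catalan_N$ vertices are exactly the local $\SLE_2$s associated with $\PartF_\alpha$ and are therefore its extremal points.

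The main obstacle I anticipate is purely the bookkeeping of this last translation: one has to match the position-dependent convex structure on the space of local $\SLE$ measures, where partition functions are only defined up to scaling, with the honest linear-algebraic independence of the functions $\PartF_\alpha$ in the solution space of \eqref{eq: PDE for multiple SLEs}--\eqref{eq: COV for multiple SLEs}. Once the homogeneity and the decomposition of the Radon--Nikodym derivative are in place, the argument collapses to the one-line computation above; the substantive analytic content (positivity, the PDEs, the covariance, and the linear independence recorded in \eqref{eq: SLE partition function as a sum of LPdets}) has already been absorbed into Theorem~\ref{thm: pure partition functions at kappa equals 2}.
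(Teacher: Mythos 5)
Your proof is correct for the statement as written, and its architecture is the same as the paper's: existence is obtained exactly as in the paper, from the positivity, \eqref{eq: PDE for multiple SLEs at kappa equals 2} and \eqref{eq: COV for multiple SLEs at kappa equals 2} of Theorem~\ref{thm: pure partition functions at kappa equals 2} combined with the classification \cite[Theorem~A.4(a)]{KP-pure_partition_functions_of_multiple_SLEs}, and the convex structure is handled through \cite[Theorem~A.4(c)]{KP-pure_partition_functions_of_multiple_SLEs}. The one genuine difference is the extremality step. You derive the simplex structure from linear independence alone, via injectivity of the weight parametrization; this does give dimension $\Catalan_N-1$ and shows that the $\Catalan_N$ measures are extreme points \emph{of their convex hull}, which is the literal claim. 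The paper uses an additional ingredient: a linear combination $\sum_\alpha c_\alpha \PartF_\alpha$ is non-negative on $\chamber_{2N}$ only if every $c_\alpha \geq 0$, proved by recovering each coefficient $c_\alpha$ as a suitable iterated limit of the combination via the asymptotics \eqref{eq: ASY for multiple SLEs at kappa equals 2}, as in \cite[Proposition~4.2]{KP-pure_partition_functions_of_multiple_SLEs}. What that buys is a strictly stronger extremality: any local $N$-$\SLE_2$ whose partition function lies in the linear span of $(\PartF_\alpha)_{\alpha \in \LP_N}$ then automatically has non-negative coefficients and hence lies in the hull; since that span is the full $\Catalan_N$-dimensional solution space of \eqref{eq: PDE for multiple SLEs}--\eqref{eq: COV for multiple SLEs} with power-law growth \cite{FK-solution_space_for_a_system_of_null_state_PDEs_1, FK-solution_space_for_a_system_of_null_state_PDEs_3}, the hull exhausts all such local $N$-$\SLE_2$s, and the pure ones are extremal within that full set --- the notion of ``purity'' that motivates the theorem and the discussion in the introduction. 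So if you intend extremality only relative to the hull, your argument is complete; to capture the stronger (and intended) meaning, you would need to add the cone-positivity fact, whose proof via coefficient extraction costs only a few lines given the asymptotics already established.
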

\begin{proof}
The first assertion follows from the properties of the functions $\PartF_\alpha$ established in
Theorem~\ref{thm: pure partition functions at kappa equals 2},
combined with the classification of local multiple $\SLE$s
\cite[Theorem~A.4(a)]{KP-pure_partition_functions_of_multiple_SLEs} at $\kappa = 2$.
By \cite[Theorem~A.4(c)]{KP-pure_partition_functions_of_multiple_SLEs},
convex combinations of local $N$-$\SLE$s correspond to positive linear combinations of their
partition functions modulo multiplicative constants. The second assertion then follows
from the linear independence of $(\PartF_\alpha)_{\alpha \in \LP_N}$ and the fact that
a linear combination $\PartF = \sum_{\alpha} c_\alpha \PartF_\alpha$ is 
non-negative only if $c_\alpha \geq 0$ for all $\alpha$, as can be shown by expressing each $c_\alpha$ 
as a suitable limit of the partition function~$\PartF$
as in~\cite[Proposition~4.2]{KP-pure_partition_functions_of_multiple_SLEs}.
\end{proof}

The partition functions $\PartF_\alpha$ are the scaling limits of connectivity
probabilities of branches in the uniform spanning tree,
by Theorem~\ref{thm: scaling limit of partition functions}.
The local multiple $\SLE$ curves determined by them
should of course be closely related to the branches of the UST.
Let $\Gr^\delta$ be the square grid approximation of the simply connected 
domain $\domain \subset \bC$ with mesh size $\delta > 0$, as in 
Section~\ref{sec: finer mesh graphs}. For $2N$ boundary points
$p_1 , \ldots, p_{2N} \in \bdry \domain$,
let $e_1^\delta,\ldots,e_{2N}^\delta \in \bdry \Edg$
be boundary edges of 
the square grid graph $\Gr^\delta$ nearest to them. Assume that the points appear 
in counterclockwise order along the boundary.
\begin{conj}\label{conj: multiple branches of UST converge to multiple SLE2}
Let $\alpha \in \LP_N$ be a link pattern with any orientation
$\left( (a_\ell,b_\ell) \right)_{\ell=1}^N$.
In the uniform spanning tree on $\Gr^\delta$ with wired boundary conditions,
conditioned on the connectivity event
$\bigcap_{\ell=1}^N \set{e_{a_\ell}^\delta \rightsquigarrow e_{b_\ell}^\delta}$,
the law of the $N$ boundary branches 
$(\gamma_{e_{a_1}^\delta}, \ldots, \gamma_{e_{a_N}^\delta})$
converges in the scaling limit as $\delta \to 0$ to a process of $N$ curves,
whose localization in any non-overlapping closed neighborhoods $(U_1,\ldots,U_{2N})$ of the points 
$(p_1,\ldots,p_{2N})$
is the local $N$-$\SLE_{2}$ determined by the partition function $\PartF_\alpha$.
\end{conj}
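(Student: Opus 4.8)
The plan is to upgrade the scaling-limit results already established for connectivity \emph{probabilities} to a scaling-limit statement about the conditioned \emph{curves}, via the standard martingale-observable route to $\SLE$ convergence, with the partition functions $\PartF_\alpha$ serving as the observable. By the classification of local multiple $\SLEk$ in terms of partition functions (the characterization recalled in Section~\ref{sub: local multiple SLEs}, cf. \cite[Theorem~A.4]{KP-pure_partition_functions_of_multiple_SLEs}), it suffices to identify, for each $j$, the marginal law of the single localized curve $\gamma^{(j)}$ as the one prescribed by the Radon--Nikodym derivative against chordal $\SLE_2$ involving $\PartF_\alpha$, together with the domain Markov property. On the discrete side the latter is automatic: conditioned on the connectivity event $\mathcal{E}_\alpha = \bigcap_\ell \set{\pathfromto{e_{a_\ell}}{e_{b_\ell}}}$, the UST retains its spatial Markov property, so one may reduce to the marginal of a single branch and then induct on $N$.

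First I would set up the discrete martingale. Fix the branch $\gamma = \gamma_{e_{a_1}^\delta}$ and grow it from $e_{a_1}^\circ$ one edge at a time (Wilson's algorithm). Writing $\Gr^\delta_t$ for the graph obtained by adding the initial segment $\gamma[0,t]$ to the wired boundary, the spatial Markov property combined with Theorem~\ref{thm: the formula for UST partition functions} shows that the conditional probability of $\mathcal{E}_\alpha$ given $\gamma[0,t]$ equals, up to the fixed normalization $Z_\alpha^{\Gr^\delta}(e_1^\delta,\ldots,e_{2N}^\delta)$, the connectivity partition function of $\alpha$ in the slit domain $\Gr^\delta_t$, with the tip of $\gamma$ replacing $e_{a_1}^\delta$ as a marked boundary point. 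Hence
\[
M_t^\delta := \frac{Z_{\alpha}^{\Gr^\delta_t}\big(\,\cdots\,\big)}{Z_\alpha^{\Gr^\delta}(e_1^\delta,\ldots,e_{2N}^\delta)}
\]
is a bounded martingale for the conditioned measure, whose associated Loewner driving data will govern the limiting law of $\gamma$.

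Next I would pass to the limit in the observable and identify the curve. Applying Theorem~\ref{thm: scaling limit of partition functions} to numerator and denominator, and using the covariance~\eqref{eq: COV for multiple SLEs at kappa equals 2} and asymptotics~\eqref{eq: ASY for multiple SLEs at kappa equals 2} of $\PartF_\alpha$ from Theorem~\ref{thm: pure partition functions at kappa equals 2}, the limit of $M_t^\delta$ should be exactly the ratio of $\PartF_\alpha$-values appearing in the Radon--Nikodym derivative of Section~\ref{sub: local multiple SLEs}, with the conformal image of the tip as the moving argument. Granting tightness of $\gamma^\delta$ — which follows from the convergence of a single LERW to chordal $\SLE_2$ \cite{LSW-LERW_and_UST, Zhan-scaling_limits_of_planar_LERW}, since on the localization neighborhood the conditioned law is absolutely continuous with respect to the chordal one with uniformly bounded density — any subsequential limit is a Loewner chain, and the fact that the limit of $M_t^\delta$ is a local martingale forces, via It\^o/Girsanov together with the second-order PDE~\eqref{eq: PDE for multiple SLEs at kappa equals 2}, the driving function to solve the $\SLE$ equation weighted by $\partial \log \PartF_\alpha$. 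This identifies the marginal as the local multiple $\SLE_2$ marginal determined by $\PartF_\alpha$. Conditioning on the completed first branch and invoking the ASY cascade~\eqref{eq: ASY for multiple SLEs at kappa equals 2}, which reduces $\PartF_\alpha$ to $\PartF_{\alpha \removeupwedge{\cdot}}$ as the tip merges with $e_{b_1}$, then closes the induction on $N$.

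The hard part will be the convergence of the observable $M_t^\delta$ \emph{uniformly up to the moving tip} of the slit domain $\Gr^\delta_t$. The scaling-limit Theorem~\ref{thm: scaling limit of partition functions} is proved only for fixed domains whose boundary is straight near each marked point, whereas here the slit boundary is the rough random curve and the decisive marked point is its fractal tip. Making the martingale-observable argument rigorous therefore requires a domain-uniform strengthening of the excursion-kernel convergence of Lemma~\ref{lem: scaling limit of discrete excursion kernels}, with explicit control near the tip, together with regularity and tightness estimates for conditioned LERWs of the kind used in single-curve convergence proofs. This analytic input, rather than the algebraic and combinatorial structure already developed, is precisely what is missing, which is why the statement is recorded only as a conjecture.
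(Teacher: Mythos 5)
You were asked to prove Conjecture~\ref{conj: multiple branches of UST converge to multiple SLE2}, and the paper itself offers no proof of it: the statement is explicitly recorded as open, precisely because the analytic inputs you identify at the end of your proposal are unavailable. So there is no proof of record to compare against; the question is whether your argument would close the conjecture, and — as you yourself correctly diagnose — it would not. The skeleton is the standard and sensible one: the discrete observable $M_t^\delta$ built from the slit-domain connectivity partition function is a genuine bounded martingale under the conditioned measure (Wilson's algorithm, the spatial Markov property of the UST, and the tower property give this, with Theorem~\ref{thm: the formula for UST partition functions} providing the determinantal form in the slit graph), and \emph{if} one could pass to the limit in this observable and had a priori regularity of a subsequential limit curve, It\^o calculus together with the PDEs~\eqref{eq: PDE for multiple SLEs at kappa equals 2} would identify the driving process as the one of the local multiple $\SLE_2$ with partition function $\PartF_\alpha$.

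Beyond the tip-convergence obstruction you name (Lemma~\ref{lem: scaling limit of discrete excursion kernels} and Theorem~\ref{thm: scaling limit of partition functions} require straight boundary near each marked point, whereas the decisive marked point is the fractal tip of the growing slit), two further steps in your sketch would fail as written. First, your tightness claim — that the conditioned law is absolutely continuous with respect to chordal $\SLE_2$ ``with uniformly bounded density'' — begs the question: that density \emph{is} the partition-function ratio $M_t^\delta$, and $\delta$-uniform upper and lower bounds on it up to the exit time of the localization neighborhood are part of what must be proven, not something supplied by the single-curve convergence of~\cite{LSW-LERW_and_UST}; moreover, precompactness in a topology strong enough to guarantee that subsequential limits are Loewner chains with continuous driving functions requires annulus-crossing estimates of Kemppainen--Smirnov type for the \emph{conditioned} branches, which absolute continuity alone does not provide. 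Second, your closing induction is mismatched with the local framework: a local multiple $\SLE$ only prescribes initial segments inside the localization neighborhoods, so ``conditioning on the completed first branch'' and invoking the cascade~\eqref{eq: ASY for multiple SLEs at kappa equals 2} presupposes convergence of the full global branch — which is again essentially the content of the conjecture itself. In sum: you correctly identified the program, the role of $\PartF_\alpha$ via the classification of Section~\ref{sub: local multiple SLEs}, and the principal analytic obstruction, but the proposal is a research program rather than a proof, consistent with the paper stating the result only as a conjecture.
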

\begin{upd*}
\emph{%
In light of some recent developments,
several approaches to establish this conjecture are possible.
For the tightness of the boundary branches, 
see~\cite{Karrila_inprep}. For the identification,
using the convergence of a single UST branch to
chordal $\SLE_2$~\cite{Zhan-scaling_limits_of_planar_LERW} and
recent classification results of global multiple 
$\SLE$s~\cite{MillerSheffieldIG2, BeffaraPeltolaWuUniqueness},
the scaling limit of multiple 
UST branches can be identified as a global multiple SLE,
whose localizations are local multiple 
SLEs~\cite{PH-Global_multiple_SLEs_and_pure_partition_functions}.
Alternatively, requiring 
some more investigation of the discrete model but only basic SLE theory, 
one can use the results of the present article to
generalize the observable approach used 
in the 
single UST branch convergence proofs \cite{LSW-LERW_and_UST, 
Zhan-scaling_limits_of_planar_LERW}; see~\cite{Karrila_inprep}.
}
\end{upd*}

\subsection{\label{sec: proof of pure partition function properties}Proof of Theorem~\ref{thm: pure partition functions at kappa equals 2}}

We prove the asserted properties of the functions $\PartF_\alpha$
separately, and conclude the proof of Theorem~\ref{thm: pure partition functions at kappa equals 2}
in Section~\ref{subsub: proof}.
The M\"obius covariance 
\eqref{eq: COV for multiple SLEs at kappa equals 2}
and the partial differential equations 
\eqref{eq: PDE for multiple SLEs at kappa equals 2}
are linear conditions, so 
it suffices to establish them for the determinant functions
$\LPdet{\alpha}{\ExcKH}$ 
appearing in the formula~\eqref{eq: SLE partition function as a sum of LPdets}
which defines the functions $\PartF_\alpha$.
The proof of property 
\eqref{eq: ASY for multiple SLEs at kappa equals 2} concerning the asymptotics
of the functions requires in addition specific combinatorial tools from
Section~\ref{subsec: inverse Fomin sums}.



\subsubsection{\label{sec: Mobius covariance for determinants}\textbf{M\"obius covariance}}

We first check the M\"obius covariance property
\eqref{eq: COV for multiple SLEs at kappa equals 2} for the determinants
$\LPdet{\alpha}{\ExcKH}$ given by Equation~\eqref{eq: definition of LPdet in the continuum}.
\begin{lem}\label{lem: Mobius covariance for determinants}
Let $\Mob(z) = \frac{a z + b}{c z + d}$ be as 
in \eqref{eq: COV for multiple SLEs at kappa equals 2}. 
Then, for any $(x_1,\ldots, x_{2N}) \in \chamber_{2N}$, we have
\[ \LPdet{\alpha}{\ExcKH} (x_1 , \ldots, x_{2N})
= \prod_{j=1}^{2 N} \Mob'(x_j) \times \LPdet{\alpha}{\ExcKH} (\Mob(x_1) , \ldots, \Mob(x_{2N})) . \]
\end{lem}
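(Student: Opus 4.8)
The plan is to reduce the Möbius covariance of the $N \times N$ determinant to a single pointwise covariance identity for the scalar kernel $\ExcKH(x,y) = (y-x)^{-2}$, and then to exploit the multilinearity of the determinant together with the fact that an increasing Möbius map preserves the left-to-right orientation of $\alpha$.

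First I would establish the pointwise identity: for $\Mob(z) = \frac{az+b}{cz+d}$ with $ad - bc > 0$ as in \eqref{eq: COV for multiple SLEs at kappa equals 2}, and for any $x \neq y$ at which $\Mob$ is defined,
\begin{align*}
\ExcKH(x,y) = \Mob'(x)\, \Mob'(y)\, \ExcKH\big(\Mob(x), \Mob(y)\big).
\end{align*}
This follows from the two elementary formulas $\Mob(y) - \Mob(x) = \frac{(ad-bc)(y-x)}{(cx+d)(cy+d)}$ and $\Mob'(x) = \frac{ad-bc}{(cx+d)^2}$: substituting them into the right-hand side, the factors $(cx+d)$, $(cy+d)$, and $(ad-bc)$ all cancel, leaving $(y-x)^{-2}$. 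Note that $\Mob' > 0$ throughout since $ad - bc > 0$, so no absolute values are needed. In effect this records that $\ExcKH$ carries conformal weight $h_{1,2}=1$ at each argument, consistently with the covariance of the Brownian excursion kernel in~\eqref{eq: Brownian excursion kernel in domain}.

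Next I would apply this identity entrywise. Writing $((a_\ell, b_\ell))_{\ell=1}^N$ for the left-to-right orientation of $\alpha$, the $(k,\ell)$ entry $\ExcKH(x_{a_k}, x_{b_\ell})$ becomes $\Mob'(x_{a_k})\,\Mob'(x_{b_\ell})\,\ExcKH(\Mob(x_{a_k}), \Mob(x_{b_\ell}))$. The crucial bookkeeping point is that $\Mob'(x_{a_k})$ depends only on the row index $k$ and $\Mob'(x_{b_\ell})$ only on the column index $\ell$; hence by multilinearity of the determinant they can be extracted as the row product $\prod_{k=1}^N \Mob'(x_{a_k})$ and the column product $\prod_{\ell=1}^N \Mob'(x_{b_\ell})$. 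Since the entrances $\{a_1, \ldots, a_N\}$ and exits $\{b_1, \ldots, b_N\}$ partition $\{1, \ldots, 2N\}$, these combine into $\prod_{j=1}^{2N} \Mob'(x_j)$, exactly the asserted covariance factor.

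It then remains to recognize the remaining determinant $\det\big(\ExcKH(\Mob(x_{a_k}), \Mob(x_{b_\ell}))\big)_{k,\ell}$ as $\LPdet{\alpha}{\ExcKH}(\Mob(x_1), \ldots, \Mob(x_{2N}))$. Here I would use that $\Mob$ is strictly increasing on $\bR$ and, under the standing hypothesis $\Mob(x_1) < \cdots < \Mob(x_{2N})$, preserves the order of the marked points; consequently the left-to-right orientation of $\alpha$ for the image points is again $((a_\ell, b_\ell))_{\ell=1}^N$, so the transformed matrix is precisely the matrix defining $\LPdet{\alpha}{\ExcKH}$ at the images. I do not expect any genuine obstacle: the computation is routine, and the only point requiring care is this orientation-preservation remark, which is what guarantees that the determinant appearing on the right-hand side is the one indexed by the same link pattern $\alpha$.
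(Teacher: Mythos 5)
Your proposal is correct and follows essentially the same route as the paper: establish the pointwise kernel identity $\ExcKH(x,y) = \Mob'(x)\,\Mob'(y)\,\ExcKH(\Mob(x),\Mob(y))$ (the paper phrases it via $\frac{\Mob(z)-\Mob(w)}{z-w} = \sqrt{\Mob'(z)}\sqrt{\Mob'(w)}$, you via the explicit formulas for $\Mob(y)-\Mob(x)$ and $\Mob'$), then extract the row and column factors by multilinearity of the determinant. Your additional remark that the hypothesis $\Mob(x_1) < \cdots < \Mob(x_{2N})$ guarantees the left-to-right orientation of $\alpha$ is unchanged for the image points is a worthwhile explicit justification of a step the paper leaves implicit.
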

\begin{proof}
It is straightforward to check that 
$\frac{\Mob(z)-\Mob(w)}{z-w} = \sqrt{\Mob'(z)} \sqrt{\Mob'(w)}$
for any $z,w \in \mathbb{C}$, see e.g. \cite[Lemma~4.7]{KP-pure_partition_functions_of_multiple_SLEs}.
This identity can be used in the matrix elements of the determinant function $\LPdet{\alpha}{\ExcKH}$,
%
%
\begin{align*}
\frac{1}{(x_{a_k} - x_{b_\ell})^2} 
= \; & \frac{\Mob'(x_{a_k}) \, \Mob'(x_{b_\ell})}{(\Mob(x_{a_k}) - \Mob(x_{b_\ell}))^2} .
\end{align*} 
The asserted covariance property follows by multilinearity of the determinant.
\end{proof}

\subsubsection{\label{sec: PDEs for determinants}\textbf{Partial differential equations}}

We next check the partial differential equations 
\eqref{eq: PDE for multiple SLEs at kappa equals 2} 
for the determinants $\LPdet{\alpha}{\ExcKH}$.

\begin{lem}\label{lem: PDEs for determinants}
The function $\LPdet{\alpha}{\ExcKH} (x_1 , \ldots, x_{2N})$ satisfies the
partial differential 
equations~\eqref{eq: PDE for multiple SLEs at kappa equals 2}.
\end{lem}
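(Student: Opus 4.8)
Fix the link pattern $\alpha$ with its left-to-right orientation $((a_\ell,b_\ell))_{\ell=1}^N$, and write $D = \LPdet{\alpha}{\ExcKH} = \det A$ with $A_{k\ell} = \ExcKH(x_{a_k},x_{b_\ell}) = (x_{b_\ell}-x_{a_k})^{-2}$. Fix $j \in \{1,\ldots,2N\}$. The index $j$ is exactly one of $a_1,\ldots,a_N,b_1,\ldots,b_N$; since $\ExcKH$ is symmetric and $\det A = \det A^{\top}$, the case where $x_j$ is an exit follows from the case where it is an entrance by transposing $A$, so I assume $j = a_{k_0}$. Then $x_j$ occurs in $A$ only through row $k_0$, so $\pder{x_j}A$ and $\pdder{x_j}A$ are supported on that single row, while for each $i\neq j$ the matrix $\pder{x_i}A$ is supported on the single row or column carrying $x_i$. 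In particular all these derivative matrices have rank one, and exploiting this is the heart of the plan.

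First I would pass to the logarithmic derivative. By Jacobi's formula $\pder{x_i}\log D = \mathrm{tr}(A^{-1}\pder{x_i}A)$, and one further differentiation gives
\[
\frac{\pdder{x_j}D}{D} = \mathrm{tr}\big(A^{-1}\pdder{x_j}A\big) - \mathrm{tr}\big(A^{-1}\pder{x_j}A\,A^{-1}\pder{x_j}A\big) + \big(\mathrm{tr}(A^{-1}\pder{x_j}A)\big)^2 .
\]
Because $\pder{x_j}A$ is a rank-one matrix supported on row $k_0$, one has $\mathrm{tr}(A^{-1}\pder{x_j}A\,A^{-1}\pder{x_j}A) = (\mathrm{tr}(A^{-1}\pder{x_j}A))^2$, so the last two (nonlinear) terms cancel. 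Hence the asserted equation~\eqref{eq: PDE for multiple SLEs at kappa equals 2} is equivalent to the purely linear trace identity
\[
\mathrm{tr}\Big(A^{-1}\Big[\pdder{x_j}A + \sum_{i\neq j}\frac{2}{x_i-x_j}\,\pder{x_i}A\Big]\Big) = \sum_{i\neq j}\frac{2}{(x_i-x_j)^2} .
\]

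Next I would compute the bracketed matrix $\Phi$ entrywise, using translation invariance $\pder{x}\ExcKH(x,y) = -\pder{y}\ExcKH(x,y)$ and a one-line partial fraction. For a row $k\neq k_0$ only the derivatives in $x_{a_k}$ and $x_{b_r}$ reach the entry $(k,r)$, and they combine to $\Phi_{kr} = \frac{4}{(x_{a_k}-x_j)(x_{b_r}-x_j)}A_{kr}$; in the distinguished row one finds instead $\Phi_{k_0 r} = \frac{2}{(x_{b_r}-x_j)^2}A_{k_0 r}$, the factor $6-4=2$ recording the interplay of $\pdder{x_j}A$ with the single surviving transport term there. Substituting these expressions into the trace and using $A^{-1}A = I$ reduces the identity to a finite combination of rational expressions in $x_j$.

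The main obstacle is exactly this final collapse. The trace produces sums such as $\sum_r (A^{-1})_{rk}\,A_{kr}\,(x_{b_r}-x_j)^{-1}$ weighted by simple poles, which do \emph{not} telescope directly through $A^{-1}A=I$; resolving them requires the partial-fraction decomposition of $(x_{a_k}-x_{b_r})^{-2}(x_{b_r}-x_j)^{-1}$ together with the precise numerology of $\kappa=2$ (equivalently $\tfrac{\kappa}{2}=1$ and $h_{1,2}=1$), which is what makes the residues cancel; this is the computational shadow of the weight-one fields being degenerate at level two at central charge $c=-2$. Since $A^{-1}$ itself depends on $x_j$, I would carry out the verification by comparing the two sides as rational functions of $x_j$, matching their principal parts at $x_j = x_{a_k}$ and $x_j = x_{b_r}$ and their decay as $x_j\to\infty$, and I would sanity-check the bookkeeping against $N=1$ (where the identity is the scalar relation $6-4-2=0$) and $N=2$. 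This also settles the equations stated without proof by Dub\'edat in~\cite{Dubedat-Euler_integrals}.
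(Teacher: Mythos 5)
Your opening moves are correct, and they form a genuinely different route from the paper's: the transposition reduction to the entrance case $j=a_{k_0}$, the second logarithmic-derivative formula, the rank-one cancellation (indeed, for $M=uv^{T}$ one has $\mathrm{tr}(A^{-1}MA^{-1}M)=(v^{T}A^{-1}u)^2=(\mathrm{tr}(A^{-1}M))^2$, so the nonlinear terms drop), and both entrywise formulas $\Phi_{kr}=\frac{4A_{kr}}{(x_{a_k}-x_j)(x_{b_r}-x_j)}$ for $k\neq k_0$ and $\Phi_{k_0r}=\frac{2A_{k_0r}}{(x_{b_r}-x_j)^2}$ check out. But the proof stops exactly where the content of the lemma lies. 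Since $\mathrm{tr}(A^{-1}\partial_{x_i}A)=\partial_{x_i}D/D$ and, by your cancellation, $\mathrm{tr}(A^{-1}\partial^2_{x_j}A)=\partial^2_{x_j}D/D$, the ``purely linear trace identity'' is literally the PDE divided by $D$: the Jacobi step reformulates the problem without reducing it, and the identity carrying all of the difficulty is never proved --- you only describe what you \emph{would} do. Moreover, the described verification is insufficient as stated: $\mathrm{tr}(A^{-1}\Phi)=\mathrm{tr}(\mathrm{adj}(A)\Phi)/\det A$ is a rational function of $x_j$ whose poles lie not only at $x_j=x_{a_k}$ and $x_j=x_{b_r}$ but also at the zeros of $\det A$ in the variable $x_j$, so matching principal parts at the marked points together with decay at infinity does not characterize it. If you repair this by clearing the factor $\det A$, you are back to verifying the original PDE, and the residue computations you then need are themselves nontrivial determinant identities: vanishing of the second- and first-order principal parts at $x_j=x_{a_k}$ amounts to $D\vert_{x_j=x_{a_k}}=0$ and $(\pder{x_j}+\pder{x_{a_k}})D\vert_{x_j=x_{a_k}}=0$ (the two-equal-rows phenomenon), and vanishing at order two at $x_j=x_{b_r}$ requires a first-order identity among the cofactors along row $k_0$. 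None of these appear in your proposal, and checking $N=1,2$ is not a substitute for them at general $N$.

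For comparison, the paper never inverts $A$: it expands $\LPdet{\alpha}{\ExcKH}$ as the signed sum over $\sigma\in\SymmGrp_N$ of products $\prod_m(A_m-B_{\sigma(m)})^{-2}$, applies the operator to each product, and after two partial-fraction identities finds that each term is sent to $-2\sum_{k\neq k_0}(x_{a_k}-x_j)^{-2}$ times the same product with $x_{a_k}$ replaced by $x_j$ in the $k$:th factor; re-summing over $\sigma$ then yields, for each $k$, the determinant of a matrix whose $k_0$:th and $k$:th rows coincide, hence zero. That two-equal-rows mechanism is precisely the missing ingredient in your sketch; the most economical way to close your argument is to import it (e.g.\ expand $D$ along row $k_0$, so that $D=\sum_r c_r(x_{b_r}-x_j)^{-2}$ with $c_r$ independent of $x_j$ and of $x_{b_r}$, and prove the needed residue identities from the permutation expansion of the cofactors), at which point you have essentially reproduced the paper's computation.
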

\begin{proof}
The proof is a direct computation.
To simplify notation, we denote $x_{a_k} = A_k$ and $x_{b_\ell} = B_\ell$, and write
\begin{align}\label{eq: LP determinant in entering pts a and exiting pts b} 
\LPdet{\alpha}{\ExcKH} (x_1 , \ldots, x_{2N}) 
= \det \left( \frac{1}{(A_k - B_\ell)^2}\right)_{k,\ell=1}^N  
= \sum_{\sigma \in \SymmGrp_N} \sgn(\sigma) 
\prod_{m=1}^N \frac{1}{(A_m - B_{\sigma(m)})^2}.
\end{align}
It suffices to consider 
the partial differential equation in~\eqref{eq: PDE for multiple SLEs at kappa equals 2} with $j = 1$.
We write the differential operator as
\begin{align}
\label{eq: Da1 in entering pts a and exiting pts b}
\sD_1 = \; & \pdder{x_1}
    + \sum_{i \neq 1} \Big( \frac{2}{x_i-x_1} \pder{x_i} - \frac{2}{(x_i-x_1)^2} \Big) \\
    \nonumber
       = \; & \pdder{A_1} + \frac{2}{B_{\sigma(1)}- A_1} \pder{B_{\sigma(1)}} - \frac{2}{(B_{\sigma(1)}-A_1)^2} \\
   & \qquad + \sum_{k = 2}^N \Big( \frac{2}{A_k- A_1} \pder{A_k} - \frac{2}{(A_k-A_1)^2} + \frac{2}{B_{\sigma(k)}- A_1} \pder{B_{\sigma(k)}} - \frac{2}{(B_{\sigma(k)}-A_1)^2} \Big) \nonumber
\end{align}
where we re-arranged the terms containing the variable 
$B_\ell = x_{b_\ell}$ by a permutation $\sigma \in \SymmGrp_N$.

By linearity, let us first study the action of the partial differential 
operator~\eqref{eq: Da1 in entering pts a and exiting pts b} on one term in the 
determinant~\eqref{eq: LP determinant in entering pts a and exiting pts b}. 
Straightforward differentiation yields
\begin{align*}
& \sD_1 \Bigg( \prod_{m=1}^N \frac{1}{(A_m - B_{\sigma(m)})^2} \Bigg) \\
= \; & \Bigg( \prod_{m=1}^N \frac{1}{(A_m - B_{\sigma(m)})^2} \Bigg) \Bigg[ \frac{6}{(A_1 - B_{\sigma(1)})^2} - \frac{4}{(A_1 - B_{\sigma(1)})^2} - \frac{2}{(A_1 - B_{\sigma(1)})^2} \\
& \qquad + \sum_{k=2}^N \left( \frac{-4}{(A_k - A_1)(A_k - B_{\sigma(k)})}- \frac{2}{(A_k - A_1)^2} + \frac{4}{(B_{\sigma(k)} - A_1)(A_k - B_{\sigma(k)})}- \frac{2}{(B_{\sigma(k)} - A_1)^2} \right) \Bigg].
\end{align*}
Using the identities
\begin{align*}
\frac{-4}{(A_k - A_1)(A_k - B_{\sigma(k)})} 
    + \frac{4}{(B_{\sigma(k)} - A_1)(A_k - B_{\sigma(k)})} = \; & \frac{4}{(A_k - A_1) (B_{\sigma(k)} - A_1)}  \\
\text{and } \qquad\qquad
-2 \left(\frac{1}{ (B_{\sigma(k)} - A_1)} - \frac{1}{(A_k - A_1)} \right)^2  
= \; & - \frac{2(B_{\sigma(k)} - A_k)^2}{(A_k - A_1)^2 (B_{\sigma(k)} - A_1)^2} ,
\end{align*}
we simplify
\begin{align*}
 \sD_1 \Bigg( \prod_{m=1}^N \frac{1}{(A_m - B_{\sigma(m)})^2} \Bigg)
&= -2  \sum_{k=2}^N \frac{1}{(A_k - A_1)^2}  \left(\prod_{ m \neq k} \frac{1}{(A_m - B_{\sigma(m)})^2}  \right) \frac{1}{(B_{\sigma(k)} - A_1)^2}.
\end{align*}

By linearity, $\sD_1$ now acts on the determinant 
\eqref{eq: LP determinant in entering pts a and exiting pts b} by
\begin{align*}
& \sD_1 \  \LPdet{\alpha}{\ExcKH} (x_1 , \ldots, x_{2N})  
= \sum_{\sigma \in \SymmGrp_N} \sgn(\sigma) \times 
\sD_1  \Bigg( \prod_{m=1}^N \frac{1}{(A_m - B_{\sigma(m)})^2} \Bigg) \\
&= -2  \sum_{k=2}^N \frac{1}{(A_k - A_1)^2} \left[ \sum_{\sigma \in \SymmGrp_N} \sgn(\sigma) \left(\prod_{m \neq k} \frac{1}{(A_m - B_{\sigma(m)})^2}  \right) \frac{1}{(B_{\sigma(k)} - A_1)^2} \right].
\end{align*}
Now, the sum in the square brackets is the determinant of the matrix
\begin{equation*}
\left( \frac{1}{(\widetilde{A}_m - B_n)^2}  \right)_{m,n=1}^N,
\end{equation*}
where $\widetilde{A}_m = A_m$ if $m \neq k$ and $\widetilde{A}_k = A_1$. The first 
and $k$:th rows of this matrix are hence identical, so the determinant is zero, and we obtain the desired property
$\sD_1 \, \LPdet{\alpha}{\ExcKH} (x_1 , \ldots, x_{2N}) = 0$.
\end{proof}

\subsubsection{\label{subsub: partition function asymptotics}\textbf{Asymptotics}}
To obtain the asymptotics of the partition functions $\PartF_\alpha$,
we rely on results from Section~\ref{subsec: inverse Fomin sums}.
The asymptotics of the partition functions can be derived by considering the asymptotics of
the determinants~$\LPdet{\alpha}{\ExcKH}$, which are somewhat simpler. Moreover,
the asymptotics of~$\LPdet{\alpha}{\ExcKH}$ are closely analogous to the defining
properties of the conformal block functions in CFT,
see~\cite{KKP-companion}. 

\begin{prop}\label{prop: vanishing-link asymptotics of excursion kernel determinants}
%
For all $j=1,\ldots,2N-1$, 
the determinant function 
$\LPdet{\alpha}{\ExcKH} \colon \chamber_{2N} \to \bR$ satisfies
\begin{align} \label{eq: det limits} \tag{$\FominDet$-ASY} 
(x_{j+1} - x_j)^{2} \LPdet{\alpha}{\ExcKH} (x_1 , \ldots, x_{2N}) \longrightarrow \begin{cases}
0 & \text{if } \slopeat{j} \in \alpha \\
\LPdet{ \alpha \removeupwedge{j} }{\ExcKH} (x_1 , \ldots, x_{j-1}, x_{j+2}, \ldots, x_{2N})  & \text{if } \upwedgeat{j} \in \alpha \\
-\LPdet{\alpha \removedownwedge{j} }{\ExcKH} (x_1 , \ldots, x_{j-1}, x_{j+2}, \ldots, x_{2N})  & \text{if } \downwedgeat{j} \in \alpha,
\end{cases}
\end{align}
as $x_j , x_{j+1} \to \xi$, for any $\xi \in (x_{j-1}, x_{j+2})$,
where $\LPdet{\emptyset}{\ExcKH} \equiv 1$ by convention,
and the combinatorial notations are as defined in 
Section \ref{subsec: Wedges, slopes, and link removals}.
\end{prop}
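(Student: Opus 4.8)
The plan is to exploit the linearity of the determinant in the single kernel entry that becomes singular in the limit. As discussed in Section~\ref{subsec: inverse Fomin sums}, the determinant $\LPdet{\alpha}{\ExcKH}$ is a degree-one polynomial in each kernel entry, and in particular it is linear in the entry $\ExcKH(x_j,x_{j+1})=(x_{j+1}-x_j)^{-2}$, which is the only matrix entry that blows up as $x_j,x_{j+1}\to\xi$. Writing this dependence out,
\[
\LPdet{\alpha}{\ExcKH}
= \big[\LPdet{\alpha}{\ExcKH}\big]_{j,j+1}\,\frac{1}{(x_{j+1}-x_j)^2} + R ,
\]
where $R$ collects all remaining monomials. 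By the very definition of the coefficient $[\LPdet{\alpha}{\ExcKH}]_{j,j+1}$ recalled in Section~\ref{subsec: inverse Fomin sums}, it is a polynomial in the kernel entries involving neither index $j$ nor index $j+1$; hence it does not depend on $x_j$ or $x_{j+1}$ at all, while $R$ is a polynomial in the entries $\ExcKH(x_p,x_q)$ having at least one of $p,q$ outside $\set{j,j+1}$.

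First I would multiply through by $(x_{j+1}-x_j)^2$ to obtain
\[
(x_{j+1}-x_j)^2\,\LPdet{\alpha}{\ExcKH}
= \big[\LPdet{\alpha}{\ExcKH}\big]_{j,j+1} + (x_{j+1}-x_j)^2\,R .
\]
The first summand is constant in $x_j,x_{j+1}$ and so survives the limit unchanged. For the second summand I would check that $R$ stays bounded: the only entries in $R$ that depend on $x_j$ or $x_{j+1}$, namely $\ExcKH(x_j,x_q)=(x_q-x_j)^{-2}$ and $\ExcKH(x_{j+1},x_q)=(x_q-x_{j+1})^{-2}$ for $q\notin\set{j,j+1}$, converge to the finite values $(x_q-\xi)^{-2}$, since $\xi\in(x_{j-1},x_{j+2})$ is separated from all the other ordered marked points. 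Consequently $(x_{j+1}-x_j)^2 R \to 0$, and the limit equals $\big[\LPdet{\alpha}{\ExcKH}\big]_{j,j+1}$.

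It then remains only to identify this coefficient, which is exactly the content of Lemma~\ref{lem: coefficients in LP determinants}: it is $0$ when $\slopeat{j}\in\alpha$, equals $\LPdet{\alpha\removeupwedge{j}}{\ExcKH\removewedge{j}}$ when $\upwedgeat{j}\in\alpha$, and equals $-\LPdet{\alpha\removedownwedge{j}}{\ExcKH\removewedge{j}}$ when $\downwedgeat{j}\in\alpha$. Since $\ExcKH\removewedge{j}$ is by definition the kernel obtained by deleting rows and columns $j$ and $j+1$, its entries are precisely $\ExcKH(x_p,x_q)=(x_q-x_p)^{-2}$ for $p,q\notin\set{j,j+1}$, so $\LPdet{\alpha\removeupwedge{j}}{\ExcKH\removewedge{j}}=\LPdet{\alpha\removeupwedge{j}}{\ExcKH}(x_1,\ldots,x_{j-1},x_{j+2},\ldots,x_{2N})$, and similarly for the down-wedge; this is exactly the asserted limit~\eqref{eq: det limits}. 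I do not expect a genuine obstacle: the whole statement reduces to Lemma~\ref{lem: coefficients in LP determinants} together with the elementary boundedness of $R$, and the only point requiring a little care is verifying that no surviving entry of $R$ becomes singular, which is guaranteed by the separation of $\xi$ from the remaining marked points.
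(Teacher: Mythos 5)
Your proposal is correct and follows essentially the same route as the paper's own proof: both isolate the single divergent kernel entry $\ExcKH(x_j,x_{j+1})$, use linearity of the determinant in that entry so that the renormalized limit picks out the coefficient $\big[\LPdet{\alpha}{\ExcKH}\big]_{j,j+1}$, and then identify that coefficient via Lemma~\ref{lem: coefficients in LP determinants}. The only difference is that you spell out explicitly the decomposition into coefficient plus remainder $R$ and the boundedness of $R$, which the paper leaves implicit.
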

\begin{proof}
Notice that $\LPdet{\alpha}{\ExcKH} $ is an example of a link pattern
determinant $\LPdet{\alpha}{\mathfrak{K}} $ studied in
Section~\ref{subsec: inverse Fomin sums}, with the kernel
$\mathfrak{K}(i, j) = \ExcKH(x_i, x_j) = (x_{i} - x_j)^{-2}$. 
In the limit $x_j , x_{j+1} \to \xi$,
all kernel entries except $\mathfrak{K}(j, j+1)$ remain bounded, so the renormalized
limit~\eqref{eq: det limits} picks the coefficient $[\LPdet{\alpha}{\mathfrak{K}}]_{j,j+1}$
of $\mathfrak{K}(j, j+1)$ in the determinant~$\LPdet{\alpha}{\mathfrak{K}}$.
This coefficient is given in Lemma~\ref{lem: coefficients in LP determinants}.
The asserted limits follow.
\end{proof}

\begin{prop}\label{prop: Vanishing-link asymptotics of the part fcns}
The function $\PartF_\alpha \colon \chamber_{2N} \to \bR$ defined in
\eqref{eq: SLE partition function as a sum of LPdets} satisfies
the asymptotics~\eqref{eq: ASY for multiple SLEs at kappa equals 2}.
\end{prop}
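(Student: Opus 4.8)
The goal is to establish (ASY2): when $x_j, x_{j+1} \to \xi$, the renormalized limit of $\PartF_\alpha$ divided by $(x_{j+1}-x_j)^{-2}$ equals $\PartF_{\alpha\removeupwedge{j}}$ if $\upwedgeat{j}\in\alpha$, and vanishes otherwise. Since $\PartF_\alpha = \sum_{\beta\succeq\alpha} \Minv_{\alpha,\beta}\,\LPdet{\beta}{\ExcKH}$ is a linear combination of determinants, and since the renormalized limit picks out the coefficient of $\ExcKH(x_j,x_{j+1})=(x_{j+1}-x_j)^{-2}$ in each summand (all other kernel entries stay bounded), the limit of $(x_{j+1}-x_j)^2\PartF_\alpha$ is exactly the coefficient $[\mathfrak{Z}^{\ExcKH}_\alpha]_{j,j+1}$ of $\ExcKH(j,j+1)$ in the inverse Fomin type sum $\mathfrak{Z}^{\ExcKH}_\alpha = \PartF_\alpha$. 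This is the crucial observation that reduces the analytic limit to a purely combinatorial identity about coefficients.

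\textbf{Plan.} I would split into the two cases of (ASY2) according to whether $\upwedgeat{j}\in\alpha$. In the case $\upwedgeat{j}\not\in\alpha$, Proposition~\ref{prop: Zero-replacing Rule}(c) directly gives $[\mathfrak{Z}^{\ExcKH}_\alpha]_{j,j+1}=0$, so the limit of $(x_{j+1}-x_j)^2\PartF_\alpha$ vanishes; this means $\PartF_\alpha$ grows strictly slower than $(x_{j+1}-x_j)^{-2}$, hence $\PartF_\alpha/(x_{j+1}-x_j)^{-2}\to 0$, which is exactly the ``$0$'' branch of (ASY2). In the complementary case $\upwedgeat{j}\in\alpha$ (the indices $j,j+1$ form a link of $\alpha$), Proposition~\ref{prop: inverse Fomin cascade} identifies the coefficient of $\ExcKH(j,j+1)$ in $\mathfrak{Z}^{\ExcKH}_\alpha$ as $\mathfrak{Z}^{\ExcKH\removewedge{j}}_{\alpha\removeupwedge{j}}$. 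The remaining task is to recognize this as $\PartF_{\alpha\removeupwedge{j}}$ evaluated at the remaining points $x_1,\dots,x_{j-1},x_{j+2},\dots,x_{2N}$.

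\textbf{The identification step.} To finish the $\upwedgeat{j}\in\alpha$ case, I would observe that $\ExcKH\removewedge{j}$ is the kernel obtained by deleting rows and columns $j,j+1$, i.e.\ it is just $\ExcKH(x_i,x_k)=(x_k-x_i)^{-2}$ restricted to the surviving indices $\{1,\dots,2N\}\setminus\{j,j+1\}$. Under the natural relabeling of these $2(N-1)$ points, $\mathfrak{Z}^{\ExcKH\removewedge{j}}_{\alpha\removeupwedge{j}}$ is precisely $\PartF_{\alpha\removeupwedge{j}}(x_1,\dots,x_{j-1},x_{j+2},\dots,x_{2N})$ by definition~\eqref{eq: SLE partition function as a sum of LPdets} combined with~\eqref{eq: inverse Fomin-type sum}. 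This yields the nonzero branch of (ASY2). The main obstacle is not any single hard estimate but rather the bookkeeping: one must verify carefully that the analytic operation ``extract the coefficient of $(x_{j+1}-x_j)^{-2}$ in the $x_j,x_{j+1}\to\xi$ limit'' corresponds exactly to the algebraic coefficient-extraction $[\,\cdot\,]_{j,j+1}$ used in Section~\ref{subsec: inverse Fomin sums}, which requires knowing that the coefficient polynomial $[\mathfrak{Z}^{\ExcKH}_\alpha]_{j,j+1}$ is itself continuous and that its value at the collision limit $x_j=x_{j+1}=\xi$ is the stated determinant sum. Since $[\mathfrak{Z}^{\ExcKH}_\alpha]_{j,j+1}$ does not involve any $\ExcKH(j,\cdot)$ or $\ExcKH(\cdot,j+1)$ variables, it depends continuously on $\xi$ through the surviving kernel entries alone, so evaluating at $x_j=x_{j+1}=\xi$ is unproblematic. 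With these pieces assembled, the proof is complete by citing Propositions~\ref{prop: Zero-replacing Rule} and~\ref{prop: inverse Fomin cascade}.
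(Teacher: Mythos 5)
Your proposal is correct and follows essentially the same route as the paper's own proof: identify $\PartF_\alpha$ as an inverse Fomin type sum with kernel $\mathfrak{K}(i,k)=\ExcKH(x_i,x_k)$, observe that the renormalized limit extracts the coefficient $[\mathfrak{Z}^{\mathfrak{K}}_\alpha]_{j,j+1}$, and then invoke Proposition~\ref{prop: Zero-replacing Rule}(c) and Proposition~\ref{prop: inverse Fomin cascade} for the two cases. Your additional bookkeeping (the explicit identification of $\mathfrak{Z}^{\mathfrak{K}\removewedge{j}}_{\alpha\removeupwedge{j}}$ with $\PartF_{\alpha\removeupwedge{j}}$, and the remark that the coefficient polynomial involves no $x_j$, $x_{j+1}$ dependence so the collision limit is unproblematic) merely spells out details the paper leaves implicit.
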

\begin{proof}
The partition function $\PartF_\alpha$ is an inverse Fomin type
sum~$\mathfrak{Z}_{\alpha}^{\mathfrak{K}} $ studied in Section~\ref{subsec: inverse Fomin sums},
with the kernel $\mathfrak{K}(i, j) = \ExcKH(x_i, x_j) = (x_{i} - x_j)^{-2}$. 
The renormalized
limit in~\eqref{eq: ASY for multiple SLEs at kappa equals 2}
picks the coefficient $[\mathfrak{Z}_{\alpha}^{\mathfrak{K}}]_{j,j+1}$
of $\mathfrak{K}(j, j+1)$ in $\mathfrak{Z}_{\alpha}^{\mathfrak{K}}$. This coefficient is given in
Proposition~\ref{prop: Zero-replacing Rule}(c) and Proposition~\ref{prop: inverse Fomin cascade}.
\end{proof}


\subsubsection{\label{subsub: proof}\textbf{Finishing the proof of Theorem~\ref{thm: pure partition functions at kappa equals 2}}}
We now collect the results of
Sections~\ref{sec: Mobius covariance for determinants}--\ref{subsub: partition function asymptotics}
to prove Theorem~\ref{thm: pure partition functions at kappa equals 2}.
By Lemma~\ref{lem: Mobius covariance for determinants} and linearity,
it follows that $\sZ_\alpha$ satisfies the M\"obius covariance
condition~\eqref{eq: COV for multiple SLEs at kappa equals 2}.
Similarly, by Lemma~\ref{lem: PDEs for determinants} and linearity,
it follows that $\sZ_\alpha$ satisfies the partial differential
equations~\eqref{eq: PDE for multiple SLEs at kappa equals 2}.
Proposition~\ref{prop: Vanishing-link asymptotics of the part fcns}
contains the asserted asymptotics~\eqref{eq: ASY for multiple SLEs at kappa equals 2}.
Linear independence of the collection $(\sZ_\alpha)_{\alpha \in \LP_N}$
can be deduced from the asymptotics by arguments presented, e.g., in
\cite[Proposition~4.2]{KP-pure_partition_functions_of_multiple_SLEs}
or \cite{FK-solution_space_for_a_system_of_null_state_PDEs_3}.
It remains only to verify the positivity of the functions~$\PartF_\alpha$.

By Theorem~\ref{thm: scaling limit of partition functions},
$\sZ_\alpha(x_1 , \ldots, x_{2N})$ equals a positive constant times
the limit of the non-negative quantities $\delta^{-2N} \, Z^{\Gr^\delta}_\alpha(e_1^\delta , \ldots, e_{2N}^\delta)$
as $\delta \to 0$, and as such, $\sZ_\alpha(x_1 , \ldots, x_{2N}) \geq 0$.
There are many ways to promote the non-negativity to positivity.
One, purely analytical option, is to use the ellipticity of the PDEs and a maximum principle.
A probabilistic option that uses results from Section~\ref{sec: application to UST}
is the following: we can argue that for fixed $x_1 < \cdots < x_{2N}$ the quantities
$\delta^{-2N} \, Z^{\Gr^\delta}_\alpha(e_1^\delta , \ldots, e_{2N}^\delta)$
are uniformly lower bounded in $\delta$, which implies positivity: 
$\sZ_\alpha(x_1 , \ldots, x_{2N}) > 0$.
The uniform lower bound only essentially relies on the observation that for one UST branch 
(i.e., a LERW), $\delta^{-2}$ times the probability to connect two given boundary
points in a given domain tends to a positive limit as $\delta \to 0$,
see Equation~\eqref{eq: excursin kernel is the one branch connectivity proba}
and Lemma~\ref{lem: scaling limit of discrete excursion kernels}.
Now fix a domain $\domain$, marked points $p_1 , \ldots, p_{2N}$, and a
connectivity $\alpha = \set{\{a_1 , b_1\} , \ldots , \{a_N , b_N\} } \in \LP_{N}$.
Then, for $\ell = 1 , \ldots, N$, choose non-overlapping subdomains 
$\domain_\ell \subset \domain$ such that
$\domain_\ell$ contains neighborhoods of the boundary points $p_{a_\ell}$ and $p_{b_\ell}$ ---
such subdomains, illustrated in Figure~\ref{fig: square grid approx 3}, 
exist since the connectivity $\alpha$ is planar.
Construct the UST branches from $e^\delta_{a_\ell}$ by Wilson's algorithm 
as LERWs. A lower bound for the probability 
of $\bigcap_{\ell=1}^N \set{\pathfromto{e^\delta_{a_\ell}}{e^\delta_{b_\ell}}}$ 
is the product over $\ell$ of the LERW connectivity probabilities in 
the subdomains $\domain_\ell$ from $e^\delta_{a_\ell}$ to $e^\delta_{b_\ell}$. 
These, in turn, when divided by $\delta^2$ each, are 
lower bounded by positive quantities, since they have positive limits as $\delta \to 0$.
This proves positivity: $\sZ_\alpha(x_1 , \ldots, x_{2N}) > 0$.
{ \ }$\hfill \qed$

\begin{figure}
\includegraphics[width=.6\textwidth]{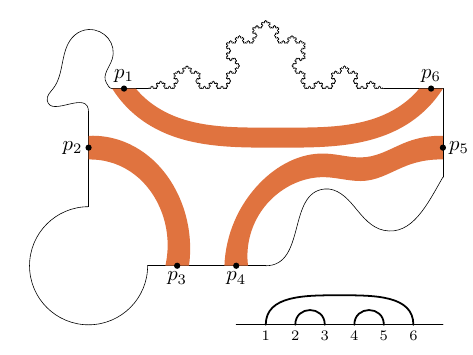}
\caption{\label{fig: square grid approx 3}
Illustration of the choice of non-overlapping subdomains 
$\domain_\ell \subset \domain$ in the proof of positivity of $\PartF_\alpha$.
}
\end{figure}

\bigskip{}

\section{\label{appendix: boundary visits}Boundary visit probabilities and third order PDEs}


This section addresses the convergence in the scaling limit of the loop-erased
random walk boundary visit probabilities, as well as the partial differential equations and
boundary conditions satisfied by the limit functions.
The main purpose is thus to finish the proof of Theorem~\ref{thm: scaling limit of LERW bdry visits}
about the boundary visit probabilities.

As in Section~\ref{sec: boundary visits from partition functions},
let ${ \omega \in \set{+,-}^{N'} }$ be a specification of an order of boundary visits for a chordal curve
and let $\alpha (\omega)$ be the link pattern obtained by 
opening up the curve at each boundary visit,
as 
illustrated in Figure~\ref{fig: alpha omega}.
Recall from Corollary~\ref{cor: LERW boundary visit probabilities with partition functions}
that the boundary visit probability of a LERW in the order~$\omega$
can be expressed in terms of a connectivity probability $Z_{\alpha(\omega)}$
for multiple branches of the UST.
In this section, we study the scaling limit $\delta \to 0$ of this 
probability.
Note that the proof of our previous scaling limit result, 
Theorem~\ref{thm: scaling limit of partition functions}
for the connectivity probabilities 
$Z^{\Gr^\delta}_\alpha (e_1^\delta , \ldots , e_{2N}^\delta)$,
does not directly apply to the case of the boundary visit probability,
since now some pairs of consecutive boundary
edges $e_j^\delta, e_{j+1}^\delta$ tend to the same point in the scaling limit.
Therefore, significantly more care is needed to establish 
Theorem~\ref{thm: scaling limit of LERW bdry visits}.

We begin in Section~\ref{sec: determinant Taylor expansions}
with a refined treatment of the connectivity probabilities 
$Z^{\Gr^\delta}_\alpha(e_1^\delta , \ldots , e_{2N}^\delta)$
in the scaling limit,
applicable also when some boundary edges $e_{j}^\delta, e_{j+1}^\delta$
are at one lattice unit away from each other.
The combinatorial techniques of Section~\ref{subsec: inverse Fomin sums}
are used to derive
Theorems~\ref{thm: continuous boundary visit probabilities}
and~\ref{thm: discrete boundary visit probabilities}, which show that
the renormalized discrete boundary visit probability has scaling limit $\Ampl_\omega$,
which can moreover be expressed as a limit of the continuum partition function $\PartF_{\alpha(\omega)}$.

Section~\ref{sec: third order PDEs} takes care of
the remaining claim in Theorem~\ref{thm: scaling limit of LERW bdry visits}:
the second and third order partial differential equations for $\Ampl_\omega$,
predicted by conformal field theory.
Once we have been able to exchange the
order of the scaling limit with another limit 
by Theorem~\ref{thm: discrete boundary visit probabilities},
these PDEs can be proved by a 
fusion argument parallel to one in~\cite{Dubedat-SLE_and_Virasoro_representations_fusion},
see Lemma~\ref{lem: fusion for third order PDEs}.


The proof of Theorem~\ref{thm: scaling limit of LERW bdry visits}
is then summarized in Section~\ref{sub: proof of bdry visit thm}.

Finally, Section~\ref{sub: asymptotics of boundary visits} is a
supplement to the content of Theorem~\ref{thm: scaling limit of LERW bdry visits}:
in Propositions~\ref{prop: asymptotics for xin} and \ref{prop: asymptotics for visit points}
we establish the predicted asymptotic behaviors for the scaling limit functions, which
together with the system of PDEs has been conjectured to uniquely determine
them~\cite{JJK-SLE_boundary_visits}.

\subsection{\label{sec: determinant Taylor expansions}Scaling limit of boundary visit probabilities}

Throughout this section, we adopt the following assumptions and notation.
A link pattern $\alpha \in \LP_N$ is fixed, and among $2N$ counterclockwise ordered boundary
edges $e_1^\delta , \ldots, e_{2N}^\delta$, 
some consecutive pairs are one lattice unit
apart. Let $N' \leq N$ denote the number of such pairs, and for $1 \leq s \leq N'$,
denote the indices of the pairs by $j_s, j_s + 1$.
The link pattern $\alpha$ must not have any links formed by these pairs, i.e.,
$\upwedgeat{{j_s}} \notin \alpha$.
As in Section~\ref{sec: boundary visits from partition functions},
in this setup there is then an edge $\hat{e}_s^\delta$ at unit distance from the boundary
which joins the interior vertices of the two boundary edges $e_{j_s}^\delta$
and $e_{j_s+1}^\delta$, see Figure~\ref{fig: edge at unit distance from the boundary} 
(the two boundary edges $e_{j_s}^\delta, e_{j_s+1}^\delta$
are then the edges $\hat{e}_{s;1}^\delta , \hat{e}_{s;2}^\delta$ in the figure). 

The graphs $\Gr^\delta$ form a square grid approximation
of a domain $\domain$, 
and the boundary $\bdry \domain$ is locally a vertical or horizontal line near 
the points
\begin{align*}
p_j = \lim_{\delta \to 0} e_j^\delta \qquad \qquad \text{ and } \qquad \qquad
\hat{p}_s = \lim_{\delta \to 0} \hat{e}_s^\delta
    = \lim_{\delta \to 0} e_{j_s}^\delta = \lim_{\delta \to 0} e_{j_s + 1}^\delta.
\end{align*}
The limit points $p_j$, for $j \notin \{ j_1 , j_1 + 1 , \ldots, j_{N'}, j_{N'} +1 \}$,
and $\hat{p}_s$, for $s \in \set{1, 2, \ldots, N'}$, are assumed to be distinct.
In the reference domain $\bH$, 
we denote the boundary points by $x_1 < x_2 < \cdots < x_{2N}$,
as in Section~\ref{sec: scaling limits of partition functions}.

\begin{thm}
\label{thm: continuous boundary visit probabilities}
Let $\alpha \in \LP_N$ and let $\big( (j_s , j_{s}+1) \big)_{s=1}^{N'}$
be a collection of $N'$ disjoint pairs such that $\upwedgeat{ { j_s }} \not \in \alpha$ holds for all $s$.
Let $\PartF_\alpha$ be the function defined in 
Equation~\eqref{eq: SLE partition function as a sum of LPdets}.
Then, the iterated limit 
\begin{align}
\label{eq: continuous boundary visit probability in H}
\lim_{x_{j_1}, x_{j_1 + 1} \to \hat{x}_1 }  \frac{1}{ x_{j_1 + 1} - x_{j_1} } \cdots \lim_{x_{j_{N'}}, x_{j_{N'} + 1} \to \hat{x}_{N'} }
    \frac{1}{ x_{j_{N'} + 1} - x_{j_{N'}}  }  \PartF_\alpha (x_1, \ldots, x_{2N}) 
\end{align}
exists, 
is finite, explicitly given by the
replacing algorithm~\ref{alg: continuous replacing} below, and 
independent of the order of the limits.
\end{thm}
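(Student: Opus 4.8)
The plan is to exploit the two structural properties of the inverse Fomin type sum established in Proposition~\ref{prop: Zero-replacing Rule}. The starting point is the observation, already used in Proposition~\ref{prop: Vanishing-link asymptotics of the part fcns}, that the function defined in~\eqref{eq: SLE partition function as a sum of LPdets} is an inverse Fomin type sum $\mathfrak{Z}^{\mathfrak{K}}_\alpha$ for the symmetric kernel $\mathfrak{K}(i,k) = \ExcKH(x_i,x_k) = (x_i - x_k)^{-2}$. Since $\upwedgeat{{j_s}} \notin \alpha$ for each $s$, this kernel falls under the hypotheses of Proposition~\ref{prop: Zero-replacing Rule} at every index $j_s$. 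First I would treat a single pair $(j,j+1)$ with $\upwedgeat{j} \notin \alpha$, holding all other arguments fixed at distinct values.

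For a single pair the argument runs as follows. The only kernel entry that is singular as $x_j, x_{j+1} \to \hat x$ is $\mathfrak{K}(j,j+1) = (x_{j+1}-x_j)^{-2}$; every other entry $\mathfrak{K}(i,j)$, $\mathfrak{K}(i,j+1)$ remains bounded and tends to $\mathfrak{K}(i,\hat x)$. By Proposition~\ref{prop: Zero-replacing Rule}(c) the coefficient of $\mathfrak{K}(j,j+1)$ in $\mathfrak{Z}^{\mathfrak{K}}_\alpha$ vanishes, so replacing this entry by zero leaves $\PartF_\alpha = \mathfrak{Z}^{\mathfrak{K}}_\alpha$ unchanged; consequently $\PartF_\alpha$, viewed as a function of $(x_j,x_{j+1})$, extends real-analytically across the diagonal $x_j = x_{j+1} = \hat x$. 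On the other hand, Proposition~\ref{prop: Zero-replacing Rule}(a) shows that $\PartF_\alpha$ is antisymmetric under the interchange $x_j \leftrightarrow x_{j+1}$. An analytic, antisymmetric function of $(x_j,x_{j+1})$ vanishes on the diagonal and is divisible by $(x_{j+1}-x_j)$ with analytic, symmetric quotient; hence $\frac{1}{x_{j+1}-x_j}\PartF_\alpha$ extends analytically across the diagonal, and the single limit exists, is finite, and equals the value of the quotient at $x_j = x_{j+1} = \hat x$.

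To handle all $N'$ pairs at once I would iterate these two facts. By Remark~\ref{rem: iterative zero replacement} the zero-replacement of Proposition~\ref{prop: Zero-replacing Rule}(c) may be applied successively at $j_1, \ldots, j_{N'}$, the modified kernels remaining symmetric and still satisfying $\upwedgeat{{j_s}} \notin \alpha$; since the pairs are disjoint and the limit points $\hat x_1, \ldots, \hat x_{N'}$ together with the remaining $p_j$ are distinct, setting every entry $\mathfrak{K}(j_s,j_s+1)$ to zero simultaneously removes all collisions occurring as one approaches the multidiagonal. Thus $\PartF_\alpha$ is jointly real-analytic in a neighborhood of the multidiagonal, and by Proposition~\ref{prop: Zero-replacing Rule}(a) it is antisymmetric in each pair $(x_{j_s},x_{j_s+1})$ separately. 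Divisibility in one variable pair does not disturb analyticity or antisymmetry in the disjoint others, so iterating the single-pair factorization yields $\PartF_\alpha = \prod_{s=1}^{N'}(x_{j_s+1}-x_{j_s}) \, G$ with $G$ jointly analytic near the multidiagonal. Every iterated limit in~\eqref{eq: continuous boundary visit probability in H} therefore equals the value of $G$ at the multidiagonal, which proves at once existence, finiteness, and independence of the order of the limits.

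It remains to identify this common value with the output of Algorithm~\ref{alg: continuous replacing}. Because each singular entry $\mathfrak{K}(j_s,j_s+1)$ drops out, the value of $G$ on the multidiagonal is the joint first-order Taylor coefficient of $\PartF_\alpha$ in the differences $x_{j_s+1}-x_{j_s}$; by the chain rule this coefficient is obtained from $\mathfrak{Z}^{\mathfrak{K}}_\alpha$ by replacing, for each $s$, the two collapsing indices $j_s, j_s+1$ by a single visit index at $\hat x_s$ equipped with a tangential derivative of the kernel, which is precisely the prescription of the replacing algorithm. I expect the main obstacle to be the bookkeeping in this last identification --- tracking the signs and the derivative kernel entries produced by each collapsing pair and matching them to the successive steps of the algorithm --- whereas the conceptual heart, namely the simultaneous analyticity-plus-antisymmetry factorization that delivers finiteness and order-independence in one stroke, follows cleanly from Proposition~\ref{prop: Zero-replacing Rule} and Remark~\ref{rem: iterative zero replacement}.
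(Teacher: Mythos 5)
Your proposal is correct, and it rests on the same foundation as the paper's proof --- the identification of $\PartF_\alpha$ as an inverse Fomin type sum and the zero-replacement rule of Proposition~\ref{prop: Zero-replacing Rule}(c) --- but the mechanism by which you extract the limit is genuinely different. The paper proves Algorithm~\ref{alg: continuous replacing} by Taylor-expanding the kernel entries involving $x_{j_s+1}$ around $x_{j_s}$, splitting the sum by multilinearity, and killing the zeroth-order term with Proposition~\ref{prop: Zero-replacing Rule}(b) (identical collections give a vanishing sum); order-independence is then read off from the symmetry of the resulting algorithm. You instead combine the analytic extension across the diagonals (from part (c)) with the antisymmetry of part (a) under $x_{j_s} \leftrightarrow x_{j_s+1}$ to obtain the factorization $\PartF_\alpha = \prod_{s}(x_{j_s+1}-x_{j_s})\,G$ with $G$ jointly analytic near the multidiagonal, which delivers existence, finiteness and order-independence in one structural step --- and in fact the stronger conclusion that the \emph{joint} limit exists, not merely the iterated ones. (Since part (b) is itself a consequence of part (a), the two mechanisms are close cousins, but yours isolates the divisibility statement that the paper leaves implicit.) What the paper's route buys is the explicit formula: the expansion directly produces the replacing algorithm, which is what the fusion argument of Section~\ref{sec: third order PDEs} and the asymptotics of Section~\ref{sub: asymptotics of boundary visits} consume; in your route this identification is deferred to the final Taylor-coefficient bookkeeping, which is precisely the paper's Step~2 computation (derivative of a determinant in a variable occupying a single row or column), so you cannot ultimately avoid it. Two small points to tighten: the intermediate limits in~\eqref{eq: continuous boundary visit probability in H} are taken with the outer pairs at positions not necessarily near the multidiagonal, so you should either note that the single-pair argument holds with the other variables at arbitrary distinct positions (it does), or apply your simultaneous zero-replacement to each subset of pairs in turn; and you should say explicitly that swapping $x_{j_s} \leftrightarrow x_{j_s+1}$ realizes the kernel interchange of Proposition~\ref{prop: Zero-replacing Rule}(a) because the diagonal entry $(x_{j_s+1}-x_{j_s})^{-2}$ (and its zero replacement) is invariant under the swap.
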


In particular, in the case where $\alpha = \alpha(\omega)$ for 
some boundary visit order $\omega$, this limit function is denoted by 
$\Ampl_\omega(\xin ; \hat{x}_1 , \ldots , \hat{x}_{N'}; \xout )$,
where $\xin = x_i$ and $\xout = x_j$ for the only indices $i,j$ 
not in the pairs $(j_s , j_s +1)$
--- see Equation~\eqref{eq: iterated limit definition of boundary visit amplitude}
and Figure~\ref{fig: boundary visit amplitude as a limit}.
In Proposition~\ref{prop: scaling limit of boundary visits},
the function $\Ampl_\omega$ will be shown to be a positive
solution to the second and third order PDEs predicted for 
$\SLE$ boundary visit amplitudes in \cite{JJK-SLE_boundary_visits}.

\begin{figure}
\bigskip
\bigskip
\begin{displaymath}
\xymatrixcolsep{3.5pc}
\xymatrixrowsep{2.5pc}
\xymatrix{
        & \begin{minipage}{6cm} \begin{center} \includegraphics[width=\textwidth]{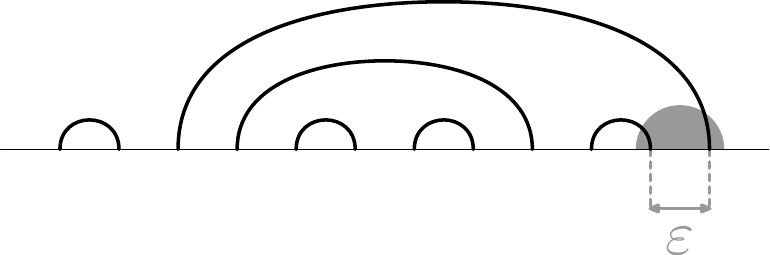} \end{center} \end{minipage}\ar[d] \\
                & \begin{minipage}{6cm} \begin{center} \includegraphics[width=\textwidth]{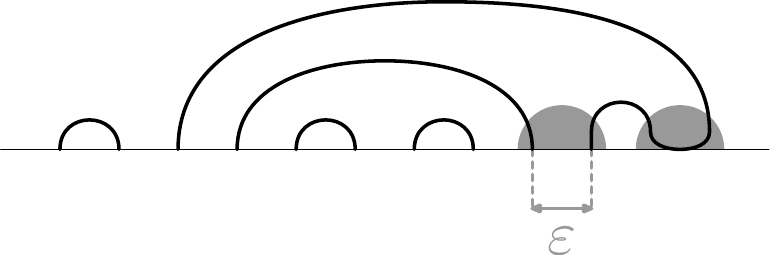} \end{center} \end{minipage}\ar[d] \\
                        & \begin{minipage}{6cm} \begin{center} \includegraphics[width=\textwidth]{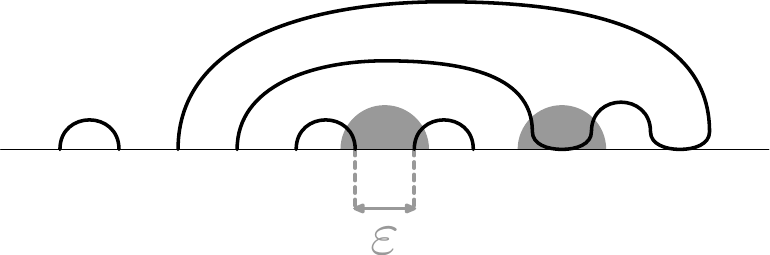} \end{center} \end{minipage}\ar[d] \\
                                & \begin{minipage}{6cm} \begin{center} \includegraphics[width=\textwidth]{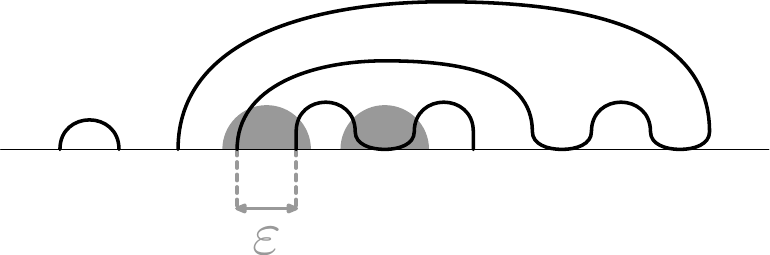} \end{center} \end{minipage}\ar[d] \\
                                        & \begin{minipage}{6cm} \begin{center} \includegraphics[width=\textwidth]{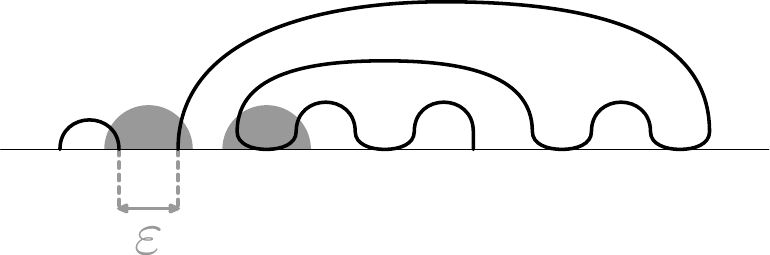} \end{center} \end{minipage}\ar[d] \\
        & \hspace{-0mm}\begin{minipage}{6cm} \begin{center}  \includegraphics[width=\textwidth]{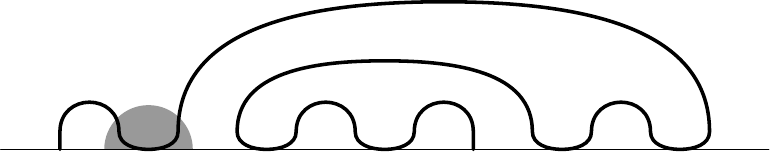} \end{center} \end{minipage}  & 
        }
\end{displaymath}
\bigskip
\bigskip
\caption{\label{fig: boundary visit amplitude as a limit}
Schematic illustration of the iterated limits \eqref{eq: continuous boundary visit probability in H}
in Theorem~\ref{thm: continuous boundary visit probabilities}.
}
\end{figure}

It is in fact natural to associate such functions to each domain $\domain$
by conformal covariance. Let $\confmap \colon \domain \to \bH$ be a conformal map.
We assume below that the boundary is a straight segment locally near all
boundary points where derivatives of $\confmap$ are needed.

First of all, the Brownian excursion kernel $\ExcKdom$ 
in the domain $\domain$ is given by
\begin{align}\label{eq: convariance rule of partition function}
\ExcKdom (p_1 , p_2) = 
     \frac{1}{\pi} \; |\confmap'(p_1)| \; |\confmap'(p_2)| \; 
     \ExcKH (\confmap(p_1) , \confmap(p_2)) ,
\end{align}
where $\ExcKH(x_1,x_2) = (x_2-x_1)^{-2}$. Second,
the limit of the connectivity probabilities of 
Theorem~\ref{thm: scaling limit of partition functions} in the domain $\domain$
is a determinant of the Brownian excursion kernels, given by
\[ \PartF_\alpha^{\domain}(p_1 , \ldots, p_{2N})
    := \frac{1}{\pi^{N}} \times \prod_{j=1}^{2N} |\confmap'(p_j)| \times \PartF_\alpha \big( \confmap(p_1) , \ldots, \confmap(p_{2N}) \big) .\]

Let $S$ be a set of indices $s$ specifying the pairs $(j_s , j_s + 1)$, and
let $J = \set{1, \ldots, 2N} \setminus \bigcup_{s \in S} \set{j_s , j_s +1 }$ 
be the set of all the other indices. Let $\boldsymbol{x} = (x_j)_{j \in J}$ 
and $\hat{\boldsymbol{x}} = (\hat{x}_s)_{s \in S}$ 
denote the corresponding real variables as in 
Theorem~\ref{thm: continuous boundary visit probabilities}.
The limit \eqref{eq: continuous boundary visit probability in H} 
in the statement of the theorem is a function
\[ F(\boldsymbol{x} ; \hat{\boldsymbol{x}}) . \]
In any other domain $\domain$, we can form a similar limit
\begin{align}\label{eq: limit in general domain}
F^\domain (\boldsymbol{p} ; \hat{\boldsymbol{p}}) :=
\lim_{p_{j_1}, p_{j_1 + 1} \to \hat{p}_1 }  \frac{1}{ |p_{j_1 + 1} - p_{j_1} | } \cdots \lim_{p_{j_{N'}}, p_{j_{N'} + 1} \to \hat{p}_{N'} }
    \frac{1}{ | p_{j_{N'} + 1} - p_{j_{N'}} | }  \PartF_\alpha^\domain (p_1, \ldots, p_{2N}) .
\end{align}
This limit can then be related to 
the limit~\eqref{eq: continuous boundary visit probability in H} by
the conformal covariance property
\begin{align}\label{eq: conformal covariance for general limit}
F^\domain (\boldsymbol{p} ; \hat{\boldsymbol{p}}) =
\frac{1}{\pi^{N}} \times \prod_{j \in J} |\confmap'(p_j)| \times \prod_{s \in S} |\confmap'(\hat{p}_s)|^3
    \times F\big( \confmap(\boldsymbol{p}) ; \confmap(\hat{\boldsymbol{p}}) \big) ,
\end{align}
where, for each point $\hat{p}_s$, two powers of $|\confmap'(\hat{p}_s)|$
originate from the factors $|\confmap'(p_{j_s})|$ and $|\confmap'(p_{j_s+1})|$ 
in the conformal covariance 
rule~\eqref{eq: convariance rule of partition function} 
of the functions $\PartF_\alpha^\domain$,
and one additional power comes from the factor 
$\frac{1}{ |p_{j_s + 1} - p_{j_s} |} \approx 
\frac{|\confmap'(\hat{p}_s)|}{| \confmap(p_{j_s + 1}) - \confmap(p_{j_s}) |} $.
The constant factors including $\pi$ are included for convenience in the domain 
specific notations $\ExcKdom$, $\PartF_\alpha^\domain$ and $F^\domain$, 
but are not included for simplicity in the functions $\ExcKH$, $\PartF_\alpha$, 
and $F$, defined on real variables. 

The following theorem is a key ingredient in the proof of 
Theorem~\ref{thm: scaling limit of LERW bdry visits}.
Informally, the content is that we can exchange the order of the scaling
limit $\delta \to 0$ with the limit~\eqref{eq: limit in general domain}
that collapses pairs of link endpoints to boundary visit locations as in
Figure~\ref{fig: boundary visit amplitude as a limit}.
The importance of this is twofold: first of all, it gives an explicit formula for the
scaling limit of boundary visit probabilities, and secondly,
for limits of type~\eqref{eq: limit in general domain} higher order partial differential
equations can be proved by a fusion argument.
\begin{thm}
\label{thm: discrete boundary visit probabilities}
Let $\alpha\in \LP_N$, and let $\big( (j_s , j_{s}+1) \big)_{s \in S}$ 
be a collection of disjoint pairs such that $\upwedgeat{j_s} \not \in \alpha$ 
holds for all $s \in S$. Let the boundary edges  
$e_1^\delta, \ldots , e_{2N}^\delta \in \bdry \Edg^\delta$ be as above.
Then, the UST connectivity probability 
\begin{align*}
Z^{\Gr^\delta}_{\alpha} (e_1^\delta, \ldots , e_{2N}^\delta)
\end{align*}
is, for each fixed $\delta$, explicitly given by the replacing 
algorithm \ref{alg: discrete replacing} below.
Its 
scaling limit as $\delta \to 0$ is 
given by Equation~\eqref{eq: limit in general domain}:
\begin{align*}
\lim_{\delta \to 0} \frac{1}{\delta^{N' + 2N}} 
Z^{\Gr^\delta}_{\alpha} (e_1^\delta, \ldots , e_{2N}^\delta)
= F^\domain (\boldsymbol{p} ; \hat{\boldsymbol{p}}) .
\end{align*}
\end{thm}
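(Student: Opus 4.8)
The plan is to recognize the discrete connectivity probability as an inverse Fomin type sum and then to extract the correct power of $\delta$ pair by pair using the antisymmetry established in Section~\ref{subsec: inverse Fomin sums}. By Theorem~\ref{thm: the formula for UST partition functions} and Example~\ref{ex: signed incidence matrix of KWleq}, we have
\[ Z^{\Gr^\delta}_{\alpha}(e_1^\delta, \ldots, e_{2N}^\delta) = \sum_{\beta \DPgeq \alpha} \# \CItilingsof(\alpha/\beta)\, \LPdet{\beta}{\ExcK^\delta}(e_1^\delta, \ldots, e_{2N}^\delta) = \mathfrak{Z}^{\ExcK^\delta}_{\alpha}, \]
the inverse Fomin type sum associated to $\alpha$ with kernel $\mathfrak{K} = \ExcK^\delta$. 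The only entries of this kernel that fail to scale like $\delta^2$ in the limit are the dangerous ones $\ExcK^\delta(e_{j_s}^\delta, e_{j_s+1}^\delta)$ between the two members of a collapsing pair, which stay of order $1$. Since $\upwedgeat{j_s} \not\in \alpha$ for every $s$, Proposition~\ref{prop: Zero-replacing Rule}(c), applied iteratively as in Remark~\ref{rem: iterative zero replacement}, lets us set all these entries to zero without changing $\mathfrak{Z}^{\ExcK^\delta}_{\alpha}$.

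I would first treat a single pair $(j_1, j_1+1)$. After zeroing its central entry, the sum $\mathfrak{Z}^{\ExcK^\delta}_{\alpha}$ is, by linearity of the determinants in the collections $\mathfrak{K}(j_1, \cdot)$ and $\mathfrak{K}(j_1+1, \cdot)$ together with the antisymmetry of Proposition~\ref{prop: Zero-replacing Rule}(a)--(b), a bilinear antisymmetric form $B(u,w)$ in the row vectors $u = \ExcK^\delta(e_{j_1}^\delta, \cdot)$ and $w = \ExcK^\delta(e_{j_1+1}^\delta, \cdot)$, all other arguments frozen. Writing $w = u + \delta\, v$, where $v$ is the discrete tangential derivative of the row $u$ (cf.\ Lemma~\ref{lem: scaling limit of discrete excursion kernels}), antisymmetry gives $B(u,u)=0$, whence
\[ \mathfrak{Z}^{\ExcK^\delta}_{\alpha} = B(u,w) = B(u,u+\delta\, v) = \delta\, B(u,v). \]
Thus one power of $\delta$ is freed and the row $\ExcK^\delta(e_{j_1+1}^\delta, \cdot)$ is effectively replaced by the discrete tangential derivative of the row $\ExcK^\delta(e_{j_1}^\delta, \cdot)$. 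Iterating over the $N'$ disjoint pairs $s \in S$ (disjointness ensuring the successive forms do not interfere) produces a prefactor $\delta^{N'}$ and writes $\mathfrak{Z}^{\ExcK^\delta}_{\alpha}$ as $\delta^{N'}$ times an inverse Fomin type sum whose matrix now carries one discrete tangential derivative for each collapsed row or column; this bookkeeping is exactly the discrete replacing algorithm~\ref{alg: discrete replacing}, giving the asserted explicit formula at fixed $\delta$.

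It then remains to let $\delta \to 0$. After the replacements each surviving matrix entry is $\ExcK^\delta$, a single discrete tangential derivative $D^\delta_{\tau;i}\ExcK^\delta$, or a mixed second derivative $D^\delta_{\tau;1}D^\delta_{\tau;2}\ExcK^\delta$ (the latter arising from entries joining two distinct collapsed points), and Lemma~\ref{lem: scaling limit of discrete excursion kernels} identifies each with $\delta^2$ times the corresponding continuum object $\ExcKdom$, $\partial_{\tau;i}\ExcKdom$, or $\partial_{\tau;1}\partial_{\tau;2}\ExcKdom$, up to $\oo(\delta^2)$. As $\mathfrak{Z}^{\ExcK^\delta}_{\alpha}$ is a fixed finite linear combination of products of $N$ such entries, this contributes a further factor $\delta^{2N}$ and gives convergence of $\delta^{-(N'+2N)} Z^{\Gr^\delta}_{\alpha}$ to the analogous continuum sum of determinants of $\ExcKdom$ and its tangential derivatives. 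Finally I would check that this continuum sum is precisely the output of the continuous replacing algorithm~\ref{alg: continuous replacing}: applying the same bilinear/antisymmetry manipulation to $\PartF_\alpha$ with kernel $\ExcKH$ shows that, in the iterated limit~\eqref{eq: continuous boundary visit probability in H}, each factor $\tfrac{1}{x_{j_s+1}-x_{j_s}}$ together with the vanishing of the diagonal form replaces the row $\ExcKH(x_{j_s+1}, \cdot)$ by $\partial_{\hat{x}_s}\ExcKH(\hat{x}_s, \cdot)$. Combined with the conformal covariance~\eqref{eq: conformal covariance for general limit} and Theorem~\ref{thm: continuous boundary visit probabilities}, this identifies the limit as $F^\domain(\boldsymbol{p};\hat{\boldsymbol{p}})$ defined in~\eqref{eq: limit in general domain}.

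The hard part will be twofold: first, verifying that the discrete and continuum extraction procedures produce literally the same value-row/derivative-row structure, so that the $\delta \to 0$ limit of the discrete computation coincides term by term with the continuum replacing algorithm (this is what makes the exchange of limits legitimate); and second, controlling the $\oo(\delta^2)$ errors in Lemma~\ref{lem: scaling limit of discrete excursion kernels} uniformly, so that they remain negligible after being multiplied across the $N$ factors of each determinant and summed over the finitely many $\beta \DPgeq \alpha$. Both are handled by the uniform-on-compacts convergence of the discrete harmonic measures and their derivatives near straight boundary segments, but the combinatorial matching of the two replacing algorithms is the genuinely new ingredient here.
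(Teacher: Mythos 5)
Your proposal is correct and takes essentially the same route as the paper: you identify $Z^{\Gr^\delta}_{\alpha}$ as the inverse Fomin type sum $\mathfrak{Z}^{\ExcK^\delta}_{\alpha}$, zero the collapsing-pair entries via Proposition~\ref{prop: Zero-replacing Rule}(c), extract one factor $\delta$ per pair from the exact discrete expansion $w = u + \delta v$ combined with the antisymmetry of Proposition~\ref{prop: Zero-replacing Rule}(a)--(b) (your bilinear antisymmetric form $B(u,u)=0$ is just a repackaging of the paper's multilinearity-plus-vanishing argument in the proof of Algorithm~\ref{alg: discrete replacing}), and then pass to the limit entrywise with Lemma~\ref{lem: scaling limit of discrete excursion kernels} and match the result against the continuous replacing algorithm~\ref{alg: continuous replacing}. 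This is exactly the paper's proof, down to the observation that the two replacing algorithms produce the same row/derivative-row structure, which is what legitimizes the exchange of limits.
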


The proofs the Theorems~\ref{thm: continuous boundary visit probabilities} and 
\ref{thm: discrete boundary visit probabilities} are based on the 
replacing algorithms~\ref{alg: continuous replacing} 
and \ref{alg: discrete replacing} below. 
Recall from Section \ref{sec: applications to UST} 
that we denote by $D^\delta_{\tau; i}$ the discrete tangential derivative
with respect to the $i$:th variable, and by $\partial_{ \tau; i}$ 
the usual tangential derivative with respect to the $i$:th variable, 
taken in the counterclockwise direction.
For a function $f \colon \Vert^\delta \to \bC$, 
the discrete tangential derivative is defined 
at $\hat{e}_s = \edgeof{e_{j_s}^\circ}{e_{j_s + 1}^\circ}$ by
\begin{align*}
( D^{\delta}_{\tau} f ) \big(\hat{e}_s\big) = 
( D^{\delta}_{\tau} f ) \big(\edgeof{e_{j_s}^\circ}{e_{j_s + 1}^\circ}\big)
:= \frac{f (e_{j_s + 1}^\circ) - f (e_{j_s }^\circ) }{\delta}.
\end{align*}

\begin{algorithm}[Replacing algorithm for 
Theorem~\ref{thm: continuous boundary visit probabilities}]
Under the notation and assumptions of 
Theorem~\ref{thm: continuous boundary visit probabilities},
the following procedure of replacements gives an explicit expression for 
the limit~\eqref{eq: continuous boundary visit probability in H}.
\label{alg: continuous replacing}
\begin{itemize}
\item[0.] Start from the expression
\eqref{eq: SLE partition function as a sum of LPdets} 
of the partition function,
\[ \PartF_\alpha(x_1 , \ldots, x_{2N}) 
= \sum_{\beta \DPgeq \alpha} \CItilingsof(\alpha / \beta) \, 
\LPdet{\beta}{\ExcKH} (x_1 , \ldots , x_{2N}) .\]
\item[1.] 
Replace all the kernel entries
$\ExcKH (x_{j_s}, x_{j_s + 1})$ and $\ExcKH (x_{j_s +1}, x_{j_s})$,
for $s = 1, \ldots, N'$, 
by zero in all the determinants 
$\LPdet{\beta}{\ExcKH}$ in the above sum.
\item[2.] For all $s = N',N'-1, \ldots, 2, 1$, in all the (now modified) determinants,
replace the (unique) row or column that is a function of $x_{j_s + 1}$ by its 
derivative at $\hat{x}_{s}$, and in the (unique) row or column that is a function
of $x_{j_s}$, replace all appearances of $x_{j_s}$ by $\hat{x}_{s}$.
\end{itemize}
\end{algorithm}
\begin{rem}\label{rem: replacing algorithm in general domain}
\textit{
A procedure analogous to the replacing algorithm~\ref{alg: continuous replacing}
can also be applied to the limit~\eqref{eq: limit in general domain}
in any domain $\domain$ with boundary points on 
straight boundary segments so that the conformal covariance 
formula~\eqref{eq: conformal covariance for general limit} is valid.
The only changes are replacing $\ExcKH$ by $\ExcKdom$, and
derivatives in step 2 by counterclockwise tangential derivatives.
}
\end{rem}

\begin{algorithm}[Replacing algorithm for 
Theorem~\ref{thm: discrete boundary visit probabilities}]
\label{alg: discrete replacing}
Under the notation and assumptions of 
Theorem~\ref{thm: discrete boundary visit probabilities},
the following procedure of replacements gives an explicit expression for
the UST connectivity probability $Z_{\alpha}^{\Gr^\delta}$. 
\begin{itemize}
\item[0.] Start from the expression
\eqref{eq: discrete partition function as a sum of LPdets} 
of the connectivity probability,
\[ Z_{\alpha}^{\Gr^\delta} (e_1^\delta, \ldots, e_{2N}^\delta)
= \sum_{\beta \DPgeq \alpha} \CItilingsof(\alpha / \beta) \, 
\LPdet{\beta}{\ExcK} (e_1^\delta , \ldots , e_{2N}^\delta) .\]
\item[1.] 
Replace all the kernel entries
$\ExcK (e_{j_s}^\delta, e_{j_s + 1}^\delta)$ 
and $\ExcK (e_{j_s + 1}^\delta, e_{j_s}^\delta)$
by zero in all the determinants 
$\LPdet{\beta}{\ExcK}$ in the above sum.
%

\item[2.] For all $s = N',N'-1, \ldots, 2, 1$, in all the (now modified) determinants, replace the (unique) row or column that is a function of 
$e_{j_s + 1}^\delta$ by its discrete tangential derivative at $\hat{e}_s^\delta$ 
and add a factor $\delta$ in front of the modified 
connectivity probability.  
\end{itemize}
\end{algorithm}

\begin{proof}[Proof for Algorithm~\ref{alg: continuous replacing}]
The proof is based on repeated application of the computation rules for inverse Fomin type sums 
listed in Proposition~\ref{prop: Zero-replacing Rule} of Section~\ref{subsec: inverse Fomin sums}.

Step~0 starts from the inverse Fomin type sum $\PartF_\alpha(x_1 , \ldots, x_{2N}) = \sum_{\beta \DPgeq \alpha}
\CItilingsof(\alpha / \beta) \, \LPdet{\beta}{\ExcKH} (x_1 , \ldots , x_{2N})$.

Step~1 applies 
Proposition~\ref{prop: Zero-replacing Rule}(c) to obtain another inverse Fomin type sum with the same value.

In Step~2, each value of $s$ takes care of one renormalized limit
of Equation \eqref{eq: continuous boundary visit probability in H}. For example, 
in the first step $s = N'$, we expand the kernel entries with index $j_{N'}+1$ as 
\begin{align}\label{eq: Taylor expansion}
\ExcKH(x_{j_{N'} + 1}, x_i )
= \ExcKH(x_{j_{N'}}, x_i ) + (x_{j_{N'} + 1} - x_{j_{N'}}) \,
\partial_{\tau; 1}  \ExcKH(x_{j_{N'}}, x_i )
+ \oo(\vert x_{j_{N'} + 1} - x_{j_{N'}} \vert) 
\end{align}
for $i \neq j_{N'}, j_{N'}+1$.
(Other values of $s$ are handled identically, except that derivatives of $\ExcKH$ might appear here.)
Then, by the multilinearity of determinants, we can split the inverse Fomin type sum into 
three different inverse Fomin type sums, with the kernel entries $\mathfrak{K}(j_{N' + 1}, i )$ 
given by one term of the above expansion. The first one, corresponding to the first term 
in~\eqref{eq: Taylor expansion}, 
has a kernel satisfying the property
denoted $\mathfrak{K}(j_{N'} , \; \cdot \,) = \mathfrak{K}(j_{N'}+1 , \; \cdot \,)$, i.e.,
\begin{align*}
\mathfrak{K}(j_{N'}, i ) = \ExcKH(x_{j_{N'}}, x_{i}) = \mathfrak{K} (j_{N'} +1 , i ) .
\end{align*}
This inverse Fomin type sum is zero by Proposition~\ref{prop: Zero-replacing Rule}(b). 
The inverse Fomin type sum corresponding to the third term of \eqref{eq: Taylor expansion} vanishes 
in the renormalized limit. The second term yields the 
asserted replacing algorithm.
\end{proof}

\begin{proof}[Proof for Algorithm \ref{alg: discrete replacing}]
The proof 
is practically identical to the previous one. The only difference between
the proofs is that when writing the discrete Taylor expansion, there is no
error term since the discrete tangential derivative is a difference:
for any function $f \colon \Vert^\delta \to \bC$, we have
\begin{align*}
f(e_{j_s + 1}^\circ) = f(e_{j_s}^\circ) + 
\delta \, ( D^\delta_{\tau} f ) \big(\hat{e}_s \big) .
\end{align*}
By multilinearity arguments identical to
the case of algorithm~\ref{alg: continuous replacing},
we cancel the first term of this expansion, 
and extract the coefficient $\delta$ from the determinant.
This leaves a determinant with the discrete tangential
derivatives, as desired.
\end{proof}
Using the replacing algorithms~\ref{alg: continuous replacing} and 
\ref{alg: discrete replacing}, we can now prove 
Theorems~\ref{thm: continuous boundary visit probabilities} and
\ref{thm: discrete boundary visit probabilities}.

\begin{proof}[Proof of Theorem \ref{thm: continuous boundary visit probabilities}]
Above we saw that the replacing algorithm~\ref{alg: continuous replacing} gives 
the limit~\eqref{eq: continuous boundary visit probability in H}.
The multiple limit is independent of the order of limits, 
since the replacing algorithm is. 
\end{proof}


\begin{proof}[Proof of Theorem~\ref{thm: discrete boundary visit probabilities}]
The replacing algorithm \ref{alg: discrete replacing} guarantees that 
we can omit the exponent $N'$
in the expression 
$\frac{1}{\delta^{N' + 2N}} Z_{\alpha}^{\Gr^\delta} (e_1^\delta, \ldots, e_{2N}^\delta)$,
and replace in the UST connectivity probability
\[ Z_{\alpha}^{\Gr^\delta} (e_1^\delta, \ldots, e_{2N}^\delta)
    = \sum_{\beta \DPgeq \alpha} \# \CItilingsof(\alpha / \beta) \, \LPdet{\beta}{\ExcK} (e_1^\delta , \ldots , e_{2N}^\delta) \]
any kernel entries
involving $e_{j_s + 1}^\delta$ 
by the discrete tangential derivatives at $\hat{e}_s^\delta$.
By Lemma~\ref{lem: scaling limit of discrete excursion kernels},
the random walk excursion kernels and their discrete derivatives, 
normalized by $\delta^{-2}$, converge to the Brownian
excursion kernels and their derivatives. Hence, by the similarity of the replacing
algorithms \ref{alg: continuous replacing} and \ref{alg: discrete replacing} 
and Remark~\ref{rem: replacing algorithm in general domain},
$\frac{1}{\delta^{N' + 2N}} Z_{\alpha} (e_1^\delta, \ldots, e_{2N}^\delta)$ tends
to the limit given by Equation~\eqref{eq: limit in general domain}. 
\end{proof}

\subsection{\label{sec: third order PDEs}Third order partial differential equations through fusion}

We now prove the third order PDEs predicted by CFT for the functions
$\Ampl_\omega$
of Theorem~\ref{thm: scaling limit of LERW bdry visits}.
The proof needs two crucial inputs, discussed below.

Fix 
a boundary visit order $\omega \in \set{+,-}^{N'}$, and
let $\alpha = \alpha(\omega)$ be the link pattern associated to this boundary
visit order as in Section~\ref{sec: boundary visits from partition functions}
and in Figure~\ref{fig: alpha omega}. In this case, we have
$N=N'+1$. As the first input to the proof, we use the iterated limit expression
for $\Ampl_\omega$ obtained from 
Theorems~\ref{thm: continuous boundary visit probabilities} 
and~\ref{thm: discrete boundary visit probabilities}:
\begin{align}\label{eq: iterated limit definition of boundary visit amplitude}
\Ampl_\omega(\xin ; \hat{x}_1 , \ldots , \hat{x}_{N'} ; \xout)
= \; &  \lim_{\substack{x_{j_1}, x_{j_1 + 1} \to \hat{x}_1 \\ \vdots \\ x_{j_{N'}}, x_{j_{N'} + 1} \to \hat{x}_{N'}} }
    \prod_{s=1}^{N'} \frac{1}{ x_{j_s + 1} - x_{j_s} } \times \PartF_{\alpha(\omega)} (x_1, \ldots, x_{2N}),
\end{align}
where the limits are taken in a specific order: 
first $x_{j_1}, x_{j_1 + 1} \to \hat{x}_1$,
then $x_{j_2}, x_{j_2 + 1} \to \hat{x}_2$, and so on. 
The order of the limits does not matter for the result, but we rely on taking 
the limits one at a time. 

As the second input, we use the $2N$ second order 
partial differential equations~\eqref{eq: PDE for multiple SLEs at kappa equals 2} 
for the partition functions $\PartF_\alpha$, which
were verified by an explicit calculation in Lemma~\ref{lem: PDEs for determinants}
of Section~\ref{sec: applications to SLEs}.

The strategy of the proof is as follows. We take the limits 
in~\eqref{eq: iterated limit definition of boundary visit amplitude}
one at a time, and check recursively that after taking
$r$ limits, the resulting function of $2N-r$ variables satisfies $2N-2r$
PDEs of second order and $r$ PDEs of third order. This recursion step is
proven with a fusion procedure 
described by Dub\'edat in \cite{Dubedat-SLE_and_Virasoro_representations_fusion}.
A version of this fusion procedure sufficient for our purposes is 
Lemma~\ref{lem: fusion for third order PDEs} below. Technical assumptions
in the lemma could be relaxed (see~\cite{Dubedat-SLE_and_Virasoro_representations_fusion}),
but since our explicit functions clearly admit well-behaved
Frobenius expansions, we prefer the approach that only invokes direct calculations.

To make a clear distinction,
the variables that have not yet been involved in limits will be denoted $x_j$,
and the variables that have been involved in some limit will be denoted $\hat{x}_s$,
with indices $j$ and $s$ in index sets that are appropriate subsets
$J \subset \set{1,\ldots,2N}$ and $S \subset \set{1, \ldots, N'}$ depending on how
many limits have been taken. We will thus consider functions
\[ F(\boldsymbol{x};\hat{\boldsymbol{x}}) \]
of variables $\boldsymbol{x} = (x_j)_{j \in J}$ and $\hat{\boldsymbol{x}} = (\hat{x}_s)_{s \in S}$.

Let us now write down the second and third order PDEs in the general form,
which applies likewise to the input function $\PartF_\alpha$, the final answer
$\Ampl_\omega$, and all the intermediate steps.
Without additional difficulty we first work with general $\kappa>0$,
and afterwards specialize to $\kappa=2$ to derive the main results of this section.
We thus use the conformal weight parameters $h = h_{1,2}(\kappa) = \frac{6 - \kappa}{2 \kappa}$ and
$\hat{h} = h_{1,3}(\kappa) = \frac{8-\kappa}{\kappa}$, which at $\kappa=2$ become
$h=1$ and $\hat{h} = 3$.
We express the differential equations in terms of the following
first order differential operators:
\begin{align*}
\sL_{-n}^{(j)} =
    \sum_{k \in J \setminus \set{j}} \Big( \frac{(n-1) h}{(x_k-x_j)^{n}} - \frac{1}{(x_k-x_j)^{n-1}} \pder{x_k} \Big)
    + \sum_{t \in S} \Big( \frac{(n-1) \hat{h}}{(\hat{x}_t-x_j)^{n}} - \frac{1}{(\hat{x}_t-x_j)^{n-1}} \pder{\hat{x}_t} \Big)
\end{align*}
for $n \in \bZpos$ and $j \in J$, and 
\begin{align*}
\widehat{\sL}_{-n}^{(s)} =
    \sum_{k \in J } \Big( \frac{(n-1) h}{(x_k-\hat{x}_s)^{n}} - \frac{1}{(x_k-\hat{x}_s)^{n-1}} \pder{x_k} \Big)
    + \sum_{t \in S \setminus \set{s}} \Big( \frac{(n-1) \hat{h}}{(\hat{x}_t-\hat{x}_s)^{n}} - \frac{1}{(\hat{x}_t-\hat{x}_s)^{n-1}} \pder{\hat{x}_t} \Big)
\end{align*}
for $n \in \bZpos$ and $s \in S$. Although we keep it implicit in the notation, note that
$\sL_{-n}^{(j)}$ and $\widehat{\sL}_{-n}^{(s)}$ depend on $J$ and $S$,
and they thus denote different operators at different intermediate stages.
The second order PDEs will always be of the form
\begin{align}\label{eq: general second order PDEs at generic kappa}
\sD^{(j)} F(\boldsymbol{x};\hat{\boldsymbol{x}}) = 0
\qquad \text{ where $j \in J$ and } \qquad
\sD^{(j)} = \pdder{x_j} - \frac{4}{\kappa} \sL^{(j)}_{-2}
\end{align}
and the third order PDEs will always be of the form
\begin{align}\label{eq: general third order PDEs at generic kappa}
\widehat{\sD}^{(s)} F(\boldsymbol{x};\hat{\boldsymbol{x}}) = 0
\qquad \text{ where $s \in S$ and } \qquad
\widehat{\sD}^{(s)} = \pddder{\hat{x}_s}
    - \frac{16}{\kappa} \widehat{\sL}^{(s)}_{-2} \pder{\hat{x}_s}
    + \frac{8(8-\kappa)}{\kappa^2} \widehat{\sL}^{(s)}_{-3} .
\end{align}
Both are special cases of the partial differential equations of conformal field theory introduced by 
Belavin, Polyakov, and Zamolodchikov
in~\cite{BPZ-infinite_conformal_symmetry_in_2D_QFT, BPZ-infinite_conformal_symmetry_of_critical_fluctiations},
and specifically within the family for which explicit formulas were given by
Benoit and Saint-Aubin in~\cite{BSA-degenerate_CFTs_and_explicit_expressions}.
Note that at $\kappa=2$ and $J=\set{1,\ldots,2N}$ and $S=\emptyset$,
Equation~\eqref{eq: general second order PDEs at generic kappa} is nothing
but the system~\eqref{eq: PDE for multiple SLEs at kappa equals 2}.


The next key result pertains to fusion in conformal field theory,
and it achieves the $r$:th recursive step of our procedure.
In the limit $x_{j_r}, x_{j_r + 1} \to \hat{x}_r$ of interest,
solutions to the PDEs~\eqref{eq: general second order PDEs at generic kappa}
above can have power law behavior with two possible exponents: the roots
$\Delta = -2h = \frac{\kappa-6}{\kappa}$ and $\Delta' = \hat{h} - 2 h = \frac{2}{\kappa}$
of the indicial equation of a Frobenius series
(see the proof of Lemma~\ref{lem: fusion for third order PDEs} below). 
The result specifically addresses the coefficient of the subleading power law behavior, with
exponent\footnote{For $\kappa<8$ we have $\Delta < \Delta'$, whence the terminology
leading and subleading for $\Delta$ and $\Delta'$, respectively.}
$\Delta'$.
The argument follows ideas in~\cite{Dubedat-SLE_and_Virasoro_representations_fusion}.
We include the proof, because this is a key step towards our main results.

\begin{lem}
\label{lem: fusion for third order PDEs}
Let $J \subset \set{1,\ldots, 2N}$ be a subset of size $\# J = 2N - 2 (r-1)$
and let $S = \set{1, \ldots, r-1}$. 
Let also $j_{r} , j_{r}+1 \in J$ be two consecutive indices,
and $J' = J \setminus \set{j_{r} , j_{r}+1}$ and $S' = \set{1, \ldots, r}$.
Denote
\[ \hat{x}_r = \frac{1}{2} \big( x_{j_r+1} + x_{j_r} \big) \qquad
\text{ and } \qquad
\eps = x_{j_r+1} - x_{j_r} .\]
Let $F(\boldsymbol{x};\hat{\boldsymbol{x}})$ be a function which satisfies
Equations~\eqref{eq: general second order PDEs at generic kappa}
for all $j \in J$ and Equations~\eqref{eq: general third order PDEs at generic kappa}
for all $s \in S$, and which for any $\boldsymbol{x}' = (x_j)_{j \in J'}$ and $\hat{\boldsymbol{x}}' = (\hat{x}_s)_{s \in S'}$
and small positive $\eps$ has the Frobenius series expansion
\begin{align}\label{eq: Frobenius series}
F(\boldsymbol{x};\hat{\boldsymbol{x}}) = \sum_{m=0}^\infty \eps^{\Delta'+m} \, F^{(m)} (\boldsymbol{x}';\hat{\boldsymbol{x}}') 
\end{align}
with exponent $\Delta' = \frac{2}{\kappa}$.
Assume that the coefficients $F^{(m)}$ are smooth functions of $(\boldsymbol{x}';\hat{\boldsymbol{x}}')$,
and that for any compact subset $K$ of the 
domain of their definition and for any multi-index~$\alpha$, 
there exist positive constants $r_{K;\alpha}, C_{K;\alpha}$ such that the 
bound on partial derivatives,
\[ \Big| \partial^\alpha F^{(m)} (\boldsymbol{x}';\hat{\boldsymbol{x}}') \Big| \leq C_{K;\alpha} \, r_{K;\alpha}^{-m} , \]
holds for all $m \in \bN$ and $(\boldsymbol{x}';\hat{\boldsymbol{x}}') \in K$.
Then the limit
\[ 
F_{\lim} (\boldsymbol{x}';\hat{\boldsymbol{x}}') := \lim_{x_{j_r}, x_{j_r + 1} \to \hat{x}_r}
    \frac{1}{(x_{j_r + 1} - x_{j_r})^{\Delta'}} F(\boldsymbol{x};\hat{\boldsymbol{x}})
\]
defines a function $F_{\lim} = F^{(0)}$ of
$\boldsymbol{x}' = (x_j)_{j \in J'}$ and 
$\hat{\boldsymbol{x}}' = (\hat{x}_s)_{s \in S'}$,
which satisfies all the second and third order
PDEs~\eqref{eq: general second order PDEs at generic kappa modified}
and~\eqref{eq: general third order PDEs at generic kappa modified}
associated with the
index sets $J'$ and $S'$ and the corresponding variables.
\end{lem}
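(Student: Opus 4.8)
The plan is to pass to the \emph{fusion coordinates} $\hat{x}_r = \half(x_{j_r}+x_{j_r+1})$ and $\eps = x_{j_r+1}-x_{j_r}$, in which $\pder{x_{j_r}} = \half\pder{\hat{x}_r} - \pder{\eps}$ and $\pder{x_{j_r+1}} = \half\pder{\hat{x}_r} + \pder{\eps}$, to insert the Frobenius series~\eqref{eq: Frobenius series} into the equations~\eqref{eq: general second order PDEs at generic kappa} for $j=j_r,j_r+1$ and into all the inherited equations, and to match coefficients of powers of $\eps$. Since the expansion begins at $\eps^{\Delta'}$, we have $F_{\lim}=F^{(0)}$ immediately, and the assumed bounds $|\partial^\alpha F^{(m)}|\le C_{K;\alpha}\,r_{K;\alpha}^{-m}$ guarantee that the differentiated series converges locally uniformly, so that every manipulation below may be carried out order by order in $\eps$ and the limit $\eps\to 0$ exchanged with the sum. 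Throughout I abbreviate $y=\hat{x}_r$ and write $R(z)$ for the operator obtained from $\widehat{\sL}^{(r)}_{-2}$ (built from the index sets $J'$, $S'$) by placing $\hat{x}_r$ at $z$; differentiating the coefficients in $z$ gives the two identities $R(y)=\widehat{\sL}^{(r)}_{-2}$ and $R'(y)=\widehat{\sL}^{(r)}_{-3}$, which I will use repeatedly.

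First I would dispose of the equations that merely survive the limit, namely \eqref{eq: general second order PDEs at generic kappa modified} for $j\in J'$ and \eqref{eq: general third order PDEs at generic kappa modified} for $s\in S$. In each such operator the two summands indexed by $k=j_r$ and $k=j_r+1$ must be grouped together: expanding their coefficients about $y$ and using $\pder{x_{j_r}}+\pder{x_{j_r+1}}=\pder{y}$, one checks that the a priori singular contributions of order $\eps^{\Delta'-1}$ cancel, while at order $\eps^{\Delta'}$ the position parts contribute $2h$ (with the appropriate index $n-1$) and the action of $\pder{\eps}$ on the prefactor $\eps^{\Delta'}$ contributes the extra $\Delta'$, so that the pair fuses into a single term of weight $2h+\Delta'=\hat{h}$ centred at $y$ --- precisely the $t=r$ summand of the modified operator. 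All remaining summands are regular as $\eps\to0$. Collecting the coefficient of $\eps^{\Delta'}$ then yields the corresponding modified equation for $F^{(0)}$.

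The new third order equation $\widehat{\sD}^{(r)}F^{(0)}=0$ at the fused point is the heart of the matter, and I would extract it from the sum and difference of the two second order equations at $j_r,j_r+1$. In the fusion coordinates they read
\begin{align*}
\sD^{(j_r)}+\sD^{(j_r+1)} &= \half\pdder{y} + 2\pdder{\eps}
    - \frac{4}{\kappa}\Big(\frac{2h}{\eps^2} - \frac{2}{\eps}\pder{\eps} + R^{(j_r)}+R^{(j_r+1)}\Big), \\
\sD^{(j_r+1)}-\sD^{(j_r)} &= 2\pddmix{y}{\eps}
    - \frac{4}{\kappa}\Big(\frac{1}{\eps}\pder{y} + R^{(j_r+1)}-R^{(j_r)}\Big),
\end{align*}
where $R^{(j_r)}=R(y-\eps/2)$ and $R^{(j_r+1)}=R(y+\eps/2)$ are the regular parts. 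Feeding in the series, the most singular order of the sum equation reproduces the indicial polynomial $P(\lambda)=2(\lambda-\Delta)(\lambda-\Delta')$ with $\Delta=-2h$ and $\Delta'=\frac{2}{\kappa}$, confirming the exponent, and the next order forces $F^{(1)}=0$. At order $\eps^{\Delta'}$ the sum equation expresses $F^{(2)}$ through $4(2+\hat{h})F^{(2)} = -\half\pdder{y}F^{(0)} + \frac{8}{\kappa}R(y)F^{(0)}$, while the difference equation at order $\eps^{\Delta'+1}$, using $R^{(j_r+1)}-R^{(j_r)}=\eps R'(y)+\OO(\eps^3)$, gives $\pder{y}F^{(2)} = \frac{1}{\kappa}R'(y)F^{(0)}$. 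Differentiating the first relation in $y$ and substituting the second eliminates $F^{(2)}$; after simplification (using $R(y)=\widehat{\sL}^{(r)}_{-2}$, $R'(y)=\widehat{\sL}^{(r)}_{-3}$ and $\tfrac{8\hat{h}}{\kappa}=\tfrac{8(8-\kappa)}{\kappa^2}$) one obtains exactly $\pddder{y}F^{(0)} - \frac{16}{\kappa}\widehat{\sL}^{(r)}_{-2}\pder{y}F^{(0)} + \frac{8(8-\kappa)}{\kappa^2}\widehat{\sL}^{(r)}_{-3}F^{(0)}=0$, which is~\eqref{eq: general third order PDEs at generic kappa modified} for $s=r$.

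The main obstacle is analytic rather than algebraic: one must legitimise the formal coefficient-matching above, i.e. justify differentiating the series three times in $y$ and twice in $\eps$, rearranging it, and commuting the limit $\eps\to0$ with all of this. This is exactly the role of the hypothesised derivative bounds, which yield local uniform convergence of every differentiated series together with the remainder estimates needed to isolate each power of $\eps$. The only other delicate point is the bookkeeping of the singular orders $\eps^{\Delta'-2}$ and $\eps^{\Delta'-1}$ appearing in the mutual $\eps^{-2}$ and $\eps^{-1}$ terms, where the a priori divergent contributions must cancel; this cancellation is forced by the indicial relation and by the pairing of the $k=j_r,j_r+1$ terms described above. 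Once these justifications are in place, assembling the surviving second order equations, the inherited third order equations, and the newly derived one completes the proof.
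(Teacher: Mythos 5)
Your proof is correct, and at its core it is the same Frobenius-series fusion argument as the paper's: you obtain the same indicial roots $\Delta,\Delta'$, the forced vanishing $F^{(1)}=0$, the expression of $F^{(2)}$ in terms of $F^{(0)}$ (your identity $4(2+\hat{h})F^{(2)} = -\half\pdder{\hat{x}_r}F^{(0)} + \frac{8}{\kappa}\widehat{\sL}^{(r)}_{-2}F^{(0)}$ is exactly the paper's Equation~\eqref{eq: F2 in terms of F0}), the relation $\pder{\hat{x}_r}F^{(2)} = \frac{1}{\kappa}\widehat{\sL}^{(r)}_{-3}F^{(0)}$, and the final elimination via the commutation identity $\pder{\hat{x}_r}\widehat{\sL}^{(r)}_{-2}-\widehat{\sL}^{(r)}_{-2}\pder{\hat{x}_r}=\widehat{\sL}^{(r)}_{-3}$. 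Where you genuinely differ is in the decomposition used to extract these identities: you work with the symmetric and antisymmetric combinations $\sD^{(j_r)}\pm\sD^{(j_r+1)}$ of the \emph{two} available second-order equations, so the singular parts split by parity in $\eps$ --- the sum carries $\frac{2h}{\eps^2}-\frac{2}{\eps}\pder{\eps}$ and yields the indicial equation, $F^{(1)}=0$, and the $F^{(2)}$ formula at orders $\eps^{\Delta'-2},\eps^{\Delta'-1},\eps^{\Delta'}$, while the difference carries $\frac{1}{\eps}\pder{\hat{x}_r}$ and yields $\pder{\hat{x}_r}F^{(2)}$ at order $\eps^{\Delta'+1}$. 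The paper instead expands the single equation $\sD^{(j_r)}F=0$ through order $\eps^{\Delta'+1}$ and reads off all four identities from that one expansion. Your parity split buys cleaner bookkeeping (each order of each combination contains fewer terms, and the cancellations at orders $\eps^{\Delta'-2}$, $\eps^{\Delta'-1}$ are structurally transparent); the paper's single-equation route shows the slightly sharper fact that either one of the two equations at $x_{j_r}, x_{j_r+1}$ already forces the new third-order PDE. Your treatment of the surviving equations --- pairing the $k=j_r, j_r+1$ summands so they fuse into a single term of weight $2h+\Delta'=\hat{h}$ --- coincides with the paper's Cases (i)--(ii), and both arguments invoke the assumed derivative bounds in the same way to justify the term-by-term manipulation of the series.
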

\begin{proof}
Paying attention to the dependence of the differential operators on the index sets,
the precise claim is that the limit function 
$F_{\lim} (\boldsymbol{x}';\hat{\boldsymbol{x}}') = F^{(0)}(\boldsymbol{x}';\hat{\boldsymbol{x}}')$
satisfies the differential equations
\begin{align}\label{eq: general second order PDEs at generic kappa modified}
\sD'^{(j)} F^{(0)}(\boldsymbol{x}';\hat{\boldsymbol{x}}') = 0
\qquad \text{ for all $j \in J'$ and } \qquad
\sD'^{(j)} = \pdder{x_j} - \frac{4}{\kappa} \sL'^{(j)}_{-2} ,
\end{align}
where
\begin{align*}
\sL_{-n}'^{(j)} =
    \sum_{k \in J' \setminus \set{j}} \Big( \frac{(n-1) h}{(x_k-x_j)^{n}} - \frac{1}{(x_k-x_j)^{n-1}} \pder{x_k} \Big)
    + \sum_{t \in S'} \Big( \frac{(n-1) \hat{h}}{(\hat{x}_t-x_j)^{n}} - \frac{1}{(\hat{x}_t-x_j)^{n-1}} \pder{\hat{x}_t} \Big)
\end{align*}
and
\begin{align}\label{eq: general third order PDEs at generic kappa modified}
\widehat{\sD}'^{(s)} F^{(0)}(\boldsymbol{x}';\hat{\boldsymbol{x}}') = 0
\qquad \text{ for all $s \in S'$ and } \qquad
\widehat{\sD}'^{(s)} = \pddder{\hat{x}_s} - \frac{16}{\kappa} \widehat{\sL}'^{(s)}_{-2} \pder{\hat{x}_s} + \frac{8(8-\kappa)}{\kappa^2} \widehat{\sL}'^{(s)}_{-3} ,
\end{align}
where
\begin{align*}
\widehat{\sL}'^{(s)}_{-n} =
    \sum_{k \in J'} \Big( \frac{(n-1) h}{(x_k-\hat{x}_s)^{n}} - \frac{1}{(x_k-\hat{x}_s)^{n-1}} \pder{x_k} \Big)
    + \sum_{t \in S' \setminus \set{s}} \Big( \frac{(n-1) \hat{h}}{(\hat{x}_t-\hat{x}_s)^{n}} - \frac{1}{(\hat{x}_t-\hat{x}_s)^{n-1}} \pder{\hat{x}_t} \Big) .
\end{align*}
The proof is divided to three separate cases, each establishing some of the asserted PDEs for $F^{(0)}$:
\begin{itemize}
\item[(i)] $\sD'^{(j)} F^{(0)}(\boldsymbol{x}';\hat{\boldsymbol{x}}') = 0$ for $j \in J' = J \setminus \set{j_r , j_r +1}$
\item[(ii)] $\widehat{\sD}'^{(s)} F^{(0)}(\boldsymbol{x};\hat{\boldsymbol{x}}) = 0$ for $s \neq r$
\item[(iii)] $\widehat{\sD}'^{(r)} F^{(0)}(\boldsymbol{x};\hat{\boldsymbol{x}}) = 0$.
\end{itemize}
The verifications of the PDEs (i) and (ii)
are in principle straightforward, although the number of terms renders the
calculations somewhat lengthy. The truly interesting part is (iii), where
we have the appearance of the new third order PDE
$\widehat{\sD}'^{(r)} F^{(0)}(\boldsymbol{x};\hat{\boldsymbol{x}}) = 0$ via fusion from
two second order PDEs.

{\bf Case (i), second order PDEs in the limit:}
We begin by verifying the straightforward second order PDEs 
$\sD'^{(j)} F^{(0)}(\boldsymbol{x}';\hat{\boldsymbol{x}}') = 0$ for $j \in J'$.
From the original variables $(\boldsymbol{x};\hat{\boldsymbol{x}})$
we change to new variables $(\boldsymbol{x}';\hat{\boldsymbol{x}}';\eps)$
as in the statement of the lemma.
The assumed original PDE
$\sD^{(j)} F(\boldsymbol{x};\hat{\boldsymbol{x}}) = 0$ becomes, after changing
the order of differentiation and summation in the Frobenius series~\eqref{eq: Frobenius series},
\begin{align} \label{eq: trivial second order PDE expansion}
\sum_{m=0}^\infty \sD^{(j)} \Big( \eps^{\Delta'+m} \, F^{(m)} (\boldsymbol{x}';\hat{\boldsymbol{x}}') \Big) = 0 .
\end{align}
The exchange of the order of differentiation and summation is justified by the
assumed locally uniform bounds on the series coefficients and their partial derivatives.

Let us consider the various terms appearing in
$\sD^{(j)} \big( \eps^{\Delta'+m} \, F^{(m)} (\boldsymbol{x}';\hat{\boldsymbol{x}}') \big)$.
We split the differential operator $\sD^{(j)}$ to the following parts:

\begin{description}
\item[(a)] $\pdder{x_j}$
\item[(b)] $-\frac{4}{\kappa} \frac{h}{(x_k-x_j)^{2}} + \frac{4}{\kappa} \frac{1}{x_k-x_j} \pder{x_k}$ for $k \in J' = J \setminus \set{j_r , j_r +1}$
\item[(c)] $-\frac{4}{\kappa} \frac{\hat{h}}{(\hat{x}_t-x_j)^{2}} + \frac{4}{\kappa} \frac{1}{\hat{x}_t-x_j} \pder{\hat{x}_t}$ for $t \in S$
\item[(d)] $-\frac{4}{\kappa} \frac{h}{(x_{j_r}-x_j)^{2}} + \frac{4}{\kappa} \frac{1}{x_{j_r}-x_j} \pder{x_{j_r}}$
\item[(e)] $-\frac{4}{\kappa} \frac{h}{(x_{j_r+1}-x_j)^{2}} + \frac{4}{\kappa} \frac{1}{x_{j_r+1}-x_j} \pder{x_{j_r+1}}$.
\end{description}

The chain rule expresses the derivatives in the old variables appearing in $\sD^{(j)}$
in terms of the new variables as
$ \pder{x_{j_r}} = -\pder{\eps} + \frac{1}{2} \pder{\hat{x}_{r}} $
and $\pder{x_{j_r+1}} = +\pder{\eps} + \frac{1}{2} \pder{\hat{x}_{r}}$.
The variables $x_j$, $x_k$ for $k \in J'$, and $\hat{x}_t$ for $t \in S$ are
not affected by the change of variables, so the terms (a), (b), and (c) simply become
\begin{align}
\label{eq: trivial term 1 in trivial second order PDE}
\tag{\ref{eq: trivial second order PDE expansion}a}
& \eps^{\Delta'+m} \; \pdder{x_j} F^{(m)} (\boldsymbol{x}';\hat{\boldsymbol{x}}') , \\
\label{eq: trivial term 2 in trivial second order PDE}
\tag{\ref{eq: trivial second order PDE expansion}b}
& \eps^{\Delta'+m} \; \frac{4}{\kappa} \,
      \bigg( \frac{1}{x_k-x_j} \pder{x_k} F^{(m)} (\boldsymbol{x}';\hat{\boldsymbol{x}}')
              - \frac{h}{(x_k-x_j)^2} F^{(m)} (\boldsymbol{x}';\hat{\boldsymbol{x}}') \bigg) , \\
\label{eq: trivial term 3 in trivial second order PDE}
\tag{\ref{eq: trivial second order PDE expansion}c}
& \eps^{\Delta'+m} \; \frac{4}{\kappa} \,
      \bigg( \frac{1}{\hat{x}_t-x_j} \pder{\hat{x}_t} F^{(m)} (\boldsymbol{x}';\hat{\boldsymbol{x}}')
              - \frac{\hat{h}}{(\hat{x}_t-x_j)^2} F^{(m)} (\boldsymbol{x}';\hat{\boldsymbol{x}}') \bigg) .
\end{align}

The remaining terms (d) and (e) involve one of the original variables $x_{j_r}$ or $x_{j_r+1}$, which
we express in terms of $\hat{x}_r$ and $\eps$. Noticing
$x_{j_r} - x_j = \hat{x}_{r} - x_j - \frac{\eps}{2}$, we get the expansions
\[ \frac{1}{x_{j_r} - x_j} 
= \frac{1}{\hat{x}_{r} - x_j} \; \sum_{\ell=0}^\infty \frac{\eps^\ell}{2^\ell \, (\hat{x}_{r} - x_j)^\ell}
\qquad \text{ and } \qquad 
\frac{h}{(x_{j_r} - x_j)^2} 
= \frac{h}{(\hat{x}_{r} - x_j)^2} \; \sum_{\ell=0}^\infty \frac{(\ell+1) \; \eps^\ell}{2^\ell \, (\hat{x}_{r} - x_j)^\ell} . \]
Now use the chain rule to write the terms in (d) as
\begin{align*}\label{eq: nontrivial term d in trivial second order PDE}
\tag{\ref{eq: trivial second order PDE expansion}d}
\eps^{\Delta'+m} \; \frac{4}{\kappa} \, 
      \Bigg( \Big( \frac{1}{2} \pder{\hat{x}_r} F^{(m)} (\boldsymbol{x}';\hat{\boldsymbol{x}}')
           - \frac{\Delta'+m}{\eps} \, F^{(m)} (\boldsymbol{x}';\hat{\boldsymbol{x}}') \Big) \; \frac{1}{\hat{x}_r-x_j} \sum_{\ell=0}^\infty \frac{\eps^\ell}{2^\ell \, (\hat{x}_{r} - x_j)^\ell} &  \\
\nonumber
      - F^{(m)} (\boldsymbol{x}';\hat{\boldsymbol{x}}') \; \frac{h}{(\hat{x}_r-x_j)^2} \sum_{\ell=0}^\infty \frac{(\ell+1) \; \eps^\ell}{2^\ell \, (\hat{x}_{r} - x_j)^\ell} \Bigg) & \; .
\end{align*}
Proceeding similarly, the terms in (e) are written as
\begin{align*}\label{eq: nontrivial term e in trivial second order PDE}
\tag{\ref{eq: trivial second order PDE expansion}e}
\eps^{\Delta'+m} \; \frac{4}{\kappa} \, 
      \Bigg( \Big( \frac{1}{2} \pder{\hat{x}_r} F^{(m)} (\boldsymbol{x}';\hat{\boldsymbol{x}}')
           + \frac{\Delta'+m}{\eps} \, F^{(m)} (\boldsymbol{x}';\hat{\boldsymbol{x}}') \Big) \; \frac{1}{\hat{x}_r-x_j} \sum_{\ell=0}^\infty \frac{(-\eps)^\ell}{2^\ell \, (\hat{x}_{r} - x_j)^\ell} & \\
\nonumber
      - F^{(m)} (\boldsymbol{x}';\hat{\boldsymbol{x}}') \; \frac{h}{(\hat{x}_r-x_j)^2} \sum_{\ell=0}^\infty \frac{(\ell+1) \; (-\eps)^\ell}{2^\ell \, (\hat{x}_{r} - x_j)^\ell} \Bigg) & \; .
\end{align*}
From expressions \eqref{eq: trivial term 1 in trivial second order PDE},
\eqref{eq: trivial term 2 in trivial second order PDE}, 
\eqref{eq: trivial term 3 in trivial second order PDE},
\eqref{eq: nontrivial term d in trivial second order PDE}, and
\eqref{eq: nontrivial term e in trivial second order PDE},
we can read the $\eps$-expansion of
$\sD^{(j)} F(\boldsymbol{x};\hat{\boldsymbol{x}})$. Note that terms
of order $\eps^{\Delta'-1}$ cancel in \eqref{eq: nontrivial term d in trivial second order PDE} and
\eqref{eq: nontrivial term e in trivial second order PDE}, and the leading order 
of $\sD^{(j)} F(\boldsymbol{x};\hat{\boldsymbol{x}})$ is
\begin{align*}
\eps^{\Delta'} \; \Bigg( \pdder{x_j} F^{(0)}(\boldsymbol{x}';\hat{\boldsymbol{x}}') & \,
    + \frac{4}{\kappa} \sum_{k \in J'} \frac{1}{x_k-x_j} \pder{x_k} F^{(0)} (\boldsymbol{x}';\hat{\boldsymbol{x}}')
    - \frac{4}{\kappa} \sum_{k \in J'} \frac{h}{(x_k-x_j)^2} F^{(0)} (\boldsymbol{x}';\hat{\boldsymbol{x}}') \\
& \, + \frac{4}{\kappa} \sum_{t \in S} \frac{1}{\hat{x}_t-x_j} \pder{\hat{x}_t} F^{(0)} (\boldsymbol{x}';\hat{\boldsymbol{x}}')
    - \frac{4}{\kappa} \sum_{t \in S} \frac{\hat{h}}{(\hat{x}_t-x_j)^2} F^{(0)} (\boldsymbol{x}';\hat{\boldsymbol{x}}') \\
& \, + \frac{4}{\kappa} \frac{1}{\hat{x}_r-x_j} \pder{\hat{x}_r} F^{(0)} (\boldsymbol{x}';\hat{\boldsymbol{x}}')
    - \frac{4}{\kappa} \frac{\Delta' + 2 h}{(\hat{x}_r-x_j)^2} F^{(0)} (\boldsymbol{x}';\hat{\boldsymbol{x}}') 
    \Bigg) .
\end{align*}
By the assumed original PDE $\sD^{(j)} F(\boldsymbol{x};\hat{\boldsymbol{x}}) = 0$,
the above expression must vanish. Taking into account the relation $\Delta'+2h = \hat{h}$,
this implies the asserted second order PDE for the limit function,
\[ \sD'^{(j)} F^{(0)}(\boldsymbol{x}';\hat{\boldsymbol{x}}') =
\Big( \pdder{x_j} - \frac{4}{\kappa} \sL'^{(j)}_{-2} \Big) F^{(0)}(\boldsymbol{x}';\hat{\boldsymbol{x}}') = 0 .
\]

{\bf Case (ii), third order PDEs in the limit:}
We leave it for the reader to verify these third order PDEs
for the limit function. No new ideas are needed, and the calculations are similar 
to case (i).

{\bf Case (iii), new third order PDEs via fusion:}
Let us finally turn to how the recursion makes new third order PDEs appear.
The starting point is either of the two second order PDEs
$\sD^{(j_r)} F(\boldsymbol{x};\hat{\boldsymbol{x}}) = 0$
or $\sD^{(j_r+1)} F(\boldsymbol{x};\hat{\boldsymbol{x}}) = 0$
for the original function.
We again differentiate the Frobenius series~\eqref{eq: Frobenius series} term by term,
and after some straightforward calculations, we obtain
\begin{align}
\nonumber
\sD^{(j_r)} F(\boldsymbol{x};\hat{\boldsymbol{x}}) 
= \quad \; & \eps^{\Delta'-2} \Big( \Delta' (\Delta' - 1) + \frac{4}{\kappa} (\Delta' - h) \Big)
        \; F^{(0)}(\boldsymbol{x}';\hat{\boldsymbol{x}}') \\
\label{eq: Djr expansion}
+ \; & \eps^{\Delta'-1} \Big( \big(\frac{2}{\kappa}-\Delta'\big) \pder{\hat{x}_r} F^{(0)}(\boldsymbol{x}';\hat{\boldsymbol{x}}')
     + \big( (\Delta' + 1) \Delta' + \frac{4}{\kappa} (\Delta'+1-h) \big) F^{(1)}(\boldsymbol{x}';\hat{\boldsymbol{x}}') \Big) \\
\nonumber
+ \; & \mathcal{O}(\eps^{\Delta'}) .
\end{align}
We will in fact below need to calculate two further coefficients of this expansion, but they can be
significantly simplified by conclusions drawn from the above.

The coefficient of $\eps^{\Delta'-2}$ in~\eqref{eq: Djr expansion} 
vanishes simply because $\Delta' = \hat{h} - 2 h = \frac{2}{\kappa}$ is a
solution of the indicial equation
\[ \Delta' (\Delta' - 1) + \frac{4}{\kappa} (\Delta' - \frac{6-\kappa}{2 \kappa}) = 0 \]
of the Frobenius series (the other solution is $\Delta = - 2h = \frac{\kappa-6}{\kappa}$).
In the higher order terms we can also perform the related simplification
$(\Delta'+m) (\Delta' + m - 1) + \frac{4}{\kappa} (\Delta' + m - \frac{6-\kappa}{2 \kappa})
=  m^2 - m + 2 m \Delta' + m \frac{4}{\kappa} 
=  m^2 + m \frac{8-\kappa}{\kappa} $.

Next consider the coefficient of $\eps^{\Delta'-1}$ in~\eqref{eq: Djr expansion},
which must also vanish because of the assumed original partial differential equation
$\sD^{(j_r)} F(\boldsymbol{x};\hat{\boldsymbol{x}})=0$.
Since $\frac{2}{\kappa}-\Delta' = 0$ and
$(\Delta'+1) \Delta'  + \frac{4}{\kappa} (\Delta' + 1 - \frac{6-\kappa}{2 \kappa}) \neq 0$,
the vanishing of this coefficient 
is equivalent to $F^{(1)}(\boldsymbol{x}';\hat{\boldsymbol{x}}') = 0$.
In other words, the next-to-leading order coefficient in the Frobenius series~\eqref{eq: Frobenius series} is necessarily zero.

These observations can be used to simplify the result of the calculation of
$\sD^{(j_r)} F(\boldsymbol{x};\hat{\boldsymbol{x}})$ up to order $\eps^{\Delta'}$ to the following form:
\begin{align}
\nonumber
\sD^{(j_r)} F(\boldsymbol{x};\hat{\boldsymbol{x}}) 
= \quad \; & 0 \times \eps^{\Delta'-2} + 0 \times \eps^{\Delta'-1} \\
\nonumber
+ \; & \eps^{\Delta'} \Bigg( \frac{1}{4} \pdder{\hat{x}_r} F^{(0)}(\boldsymbol{x}';\hat{\boldsymbol{x}}')
            + \big( 4 + 2 \frac{8-\kappa}{\kappa} \big) F^{(2)}(\boldsymbol{x}';\hat{\boldsymbol{x}}') \\
\tag{\ref{eq: Djr expansion}'}\label{eq: Djr expansion 2}
& \qquad    + \frac{4}{\kappa} \sum_{k \in J'} \frac{1}{x_k - \hat{x}_r} \pder{x_k} F^{(0)}(\boldsymbol{x}';\hat{\boldsymbol{x}}')
            + \frac{4}{\kappa} \sum_{t \in S} \frac{1}{\hat{x}_t - \hat{x}_r} \pder{\hat{x}_t} F^{(0)}(\boldsymbol{x}';\hat{\boldsymbol{x}}') \\
\nonumber
& \qquad    - \frac{4}{\kappa} \sum_{k \in J'} \frac{h}{(x_k-\hat{x}_r)^2} \, F^{(0)}(\boldsymbol{x}';\hat{\boldsymbol{x}}') 
            - \frac{4}{\kappa} \sum_{t \in S} \frac{\hat{h}}{(\hat{x}_t-\hat{x}_r)^2} \, F^{(0)}(\boldsymbol{x}';\hat{\boldsymbol{x}}') \Bigg) \\
\nonumber
+ \; & \mathcal{O}(\eps^{\Delta'+1}) .
\end{align}

Using the assumption $\sD^{(j_r)} F(\boldsymbol{x};\hat{\boldsymbol{x}}) = 0$,
the coefficient of $\eps^{\Delta'}$ in \eqref{eq: Djr expansion 2} must vanish, and we
can solve for the next term $F^{(2)}(\boldsymbol{x}';\hat{\boldsymbol{x}}')$
in the Frobenius series~\eqref{eq: Frobenius series} in terms of the leading term
$F^{(0)}(\boldsymbol{x}';\hat{\boldsymbol{x}}')$:
\begin{align}
\label{eq: F2 in terms of F0}
F^{(2)}(\boldsymbol{x}';\hat{\boldsymbol{x}}') = \frac{\kappa}{8 + \kappa}
    \Big( \frac{2}{\kappa} \widehat{\sL}_{-2}^{(r)} F^{(0)}(\boldsymbol{x}';\hat{\boldsymbol{x}}')
          - \frac{1}{8} \pdder{\hat{x}_r} F^{(0)}(\boldsymbol{x}';\hat{\boldsymbol{x}}') \Big) .
\end{align}

It remains to inspect
$\sD^{(j_r)} F(\boldsymbol{x};\hat{\boldsymbol{x}})$ up to order $\eps^{\Delta'+1}$, 
\begin{align}
\nonumber
\sD^{(j_r)} F(\boldsymbol{x};\hat{\boldsymbol{x}}) 
= \quad \; & 0 \times \eps^{\Delta'-2} + 0 \times \eps^{\Delta'-1} + 0 \times \eps^{\Delta'} \\
\nonumber
+ \; & \eps^{\Delta'+1} \Bigg( -2 \pder{\hat{x}_r} F^{(2)}(\boldsymbol{x}';\hat{\boldsymbol{x}}') \\
\tag{\ref{eq: Djr expansion}''}\label{eq: Djr expansion 3}
& \qquad    - \frac{2}{\kappa} \sum_{k \in J'} \frac{1}{(x_k - \hat{x}_r)^2} \pder{x_k} F^{(0)}(\boldsymbol{x}';\hat{\boldsymbol{x}}')
            - \frac{2}{\kappa} \sum_{t \in S} \frac{1}{(\hat{x}_t - \hat{x}_r)^2} \pder{\hat{x}_t} F^{(0)}(\boldsymbol{x}';\hat{\boldsymbol{x}}') \\
\nonumber
& \qquad    + \frac{4}{\kappa} \sum_{k \in J'} \frac{h}{(x_k-\hat{x}_r)^3} \, F^{(0)}(\boldsymbol{x}';\hat{\boldsymbol{x}}') 
            + \frac{4}{\kappa} \sum_{t \in S} \frac{\hat{h}}{(\hat{x}_t-\hat{x}_r)^3} \, F^{(0)}(\boldsymbol{x}';\hat{\boldsymbol{x}}') \Bigg) \\
\nonumber
+ \; & \mathcal{O}(\eps^{\Delta'+2}) .
\end{align}

From the vanishing of the coefficient of $\eps^{\Delta'+1}$ in \eqref{eq: Djr expansion 3},
we solve
\[ \pder{\hat{x}_r} F^{(2)}(\boldsymbol{x}';\hat{\boldsymbol{x}}')
    = \frac{1}{\kappa} \, \widehat{\sL}_{-3}^{(r)} F^{(0)}(\boldsymbol{x}';\hat{\boldsymbol{x}}') . \]
Substituting here the expression~\eqref{eq: F2 in terms of F0}
for $F^{(2)}(\boldsymbol{x}';\hat{\boldsymbol{x}}')$,
we arrive at the third order PDE 
\[ - \frac{\kappa}{8(8+\kappa)} \pddder{\hat{x}_r} F^{(0)}(\boldsymbol{x}';\hat{\boldsymbol{x}}')
    + \frac{2}{8+\kappa} \pder{\hat{x}_r} \widehat{\sL}_{-2}^{(r)} F^{(0)}(\boldsymbol{x}';\hat{\boldsymbol{x}}')
= \frac{1}{\kappa} \, \widehat{\sL}_{-3}^{(r)} F^{(0)}(\boldsymbol{x}';\hat{\boldsymbol{x}}') \]
for the leading coefficient $F^{(0)}(\boldsymbol{x}';\hat{\boldsymbol{x}}')$ of
the Frobenius series~\eqref{eq: Frobenius series}.
Finally, using the commutation relation
$\pder{\hat{x}_r} \widehat{\sL}_{-2}^{(r)} - \widehat{\sL}_{-2}^{(r)} \pder{\hat{x}_r} = \widehat{\sL}_{-3}^{(r)}$,
we get the nontrivial third order PDE that we wanted to establish:
\[ 0 = \pddder{\hat{x}_r} F^{(0)}(\boldsymbol{x}';\hat{\boldsymbol{x}}')
       - \frac{16}{\kappa} \widehat{\sL}_{-2}^{(r)} \pder{\hat{x}_r} F^{(0)}(\boldsymbol{x}';\hat{\boldsymbol{x}}')
       + \frac{8(8-\kappa)}{\kappa^2} \, \widehat{\sL}_{-3}^{(r)} F^{(0)}(\boldsymbol{x}';\hat{\boldsymbol{x}}') .\]
\end{proof}

Using Lemma~\ref{lem: fusion for third order PDEs} 
recursively, we now prove the third order partial differential equations for 
the function $\Ampl_\omega$ given by the iterated
limit~\eqref{eq: iterated limit definition of boundary visit amplitude}.
\begin{prop}\label{prop: scaling limit of boundary visits}
The function $\Ampl_\omega(x_1 ; \hat{x}_1 , \ldots , \hat{x}_{N'} ; x_2)$ satisfies
the two second order PDEs, for ${j \in J = \set{1,2}}$,
\begin{align}\label{eq: second order PDEs at kappa equals two}
\sD^{(j)} \Ampl_\omega(x_1 ; \hat{x}_1 , \ldots , \hat{x}_{N'} ; x_2) = 0 ,
\qquad \text{ where } \qquad
\sD^{(j)} = \pdder{x_j} - 2 \, \sL^{(j)}_{-2}
\end{align}
and the $N'$ third order PDEs, for $s \in S = \set{1, \ldots, N'}$,
\begin{align}\label{eq: third order PDEs at kappa equals two}
\widehat{\sD}^{(s)} \Ampl_\omega(x_1 ; \hat{x}_1 , \ldots , \hat{x}_{N'} ; x_2) = 0 ,
\qquad \text{ where } \qquad
\widehat{\sD}^{(s)} = \pddder{\hat{x}_s}
    - 8 \, \widehat{\sL}^{(s)}_{-2} \pder{\hat{x}_s}
    + 12 \, \widehat{\sL}^{(s)}_{-3} .
\end{align}
Moreover, the function $\Ampl_\omega(x_1 ; \hat{x}_1 , \ldots , \hat{x}_{N'} ; x_2)$ is positive.
\end{prop}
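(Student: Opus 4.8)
The plan is to produce all $N'+2$ partial differential equations by applying the fusion Lemma~\ref{lem: fusion for third order PDEs} recursively to the iterated-limit representation~\eqref{eq: iterated limit definition of boundary visit amplitude} of $\Ampl_\omega$, and then to deduce positivity from the probabilistic meaning of the limit. Throughout I specialize to $\kappa=2$, where the indicial roots are $\Delta=-2h=-2$ and $\Delta'=\tfrac{2}{\kappa}=1$, and $h=1$, $\hat{h}=3$; then the normalization $\tfrac{1}{x_{j_s+1}-x_{j_s}}$ in~\eqref{eq: iterated limit definition of boundary visit amplitude} is exactly $\eps^{-\Delta'}$, and the operators $\sD^{(j)}$, $\widehat{\sD}^{(s)}$ collapse to those in~\eqref{eq: second order PDEs at kappa equals two} and~\eqref{eq: third order PDEs at kappa equals two} (since $\tfrac{4}{\kappa}=2$, $\tfrac{16}{\kappa}=8$, and $\tfrac{8(8-\kappa)}{\kappa^2}=12$). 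For the base case I start from $\PartF_{\alpha(\omega)}(x_1,\ldots,x_{2N})$ with $N=N'+1$, which by Lemma~\ref{lem: PDEs for determinants} (equivalently Theorem~\ref{thm: pure partition functions at kappa equals 2}) satisfies the $2N$ second order equations~\eqref{eq: PDE for multiple SLEs at kappa equals 2} and, vacuously, no third order equations.

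I then take the $N'$ limits $x_{j_r},x_{j_r+1}\to\hat{x}_r$ one at a time, proving by induction on $r$ that after the $r$-th limit the resulting function of the surviving index sets $J$ (with $\#J=2N-2r$) and $S=\set{1,\ldots,r}$ satisfies the $\#J$ second order and $r$ third order PDEs of the general forms appearing in Lemma~\ref{lem: fusion for third order PDEs}. The inductive step is precisely one application of that lemma: its case (i) preserves the second order equations for the un-collapsed ordinary variables, its case (ii) preserves the previously produced third order equations, and its case (iii) fuses the two second order equations indexed by $j_r$ and $j_r+1$ into the new third order equation at $\hat{x}_r$. Since the indices $1,2$ (the points $\xin,\xout$) are never collapsed, after $N'$ steps exactly the two second order PDEs~\eqref{eq: second order PDEs at kappa equals two} for $j\in\set{1,2}$ and the $N'$ third order PDEs~\eqref{eq: third order PDEs at kappa equals two} for $s\in\set{1,\ldots,N'}$ remain.

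The point to verify before each invocation of Lemma~\ref{lem: fusion for third order PDEs} is its Frobenius hypothesis: that the function admits the expansion~\eqref{eq: Frobenius series} with leading exponent $\Delta'=1$ rather than the smaller root $\Delta=-2$, and with smooth coefficients obeying the stated locally uniform derivative bounds. The leading-exponent claim is supplied by Theorem~\ref{thm: continuous boundary visit probabilities}, which guarantees that each successive renormalized limit $\tfrac{1}{\eps}\PartF_\alpha$ exists and is finite; the crucial ingredient there is Proposition~\ref{prop: Zero-replacing Rule}(c), whose hypothesis $\upwedgeat{j_r}\notin\alpha(\omega)$ forces the coefficient of the singular kernel entry $\ExcKH(x_{j_r},x_{j_r+1})=\eps^{-2}$ to vanish, removing the $\eps^{\Delta}$ contribution. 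As each function in the recursion is an explicit rational function of the marked points (a determinant, possibly with some rows or columns replaced by tangential derivatives, per the replacing algorithm), it is real-analytic away from diagonal collisions; this excludes any logarithmic admixture from the smaller root and provides the analytic coefficient bounds on compacta. This verification is the least automatic part of the PDE argument, but it is routine.

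For positivity I argue in two steps. Nonnegativity is immediate: by Theorem~\ref{thm: discrete boundary visit probabilities} together with the covariance~\eqref{eq: conformal covariance for general limit}, $\Ampl_\omega$ equals a strictly positive conformal prefactor times the $\delta\to0$ limit of the nonnegative connectivity probabilities $\delta^{-(2N+N')}Z^{\Gr^\delta}_{\alpha(\omega)}$ (Corollary~\ref{cor: LERW boundary visit probabilities with partition functions}), so $\Ampl_\omega\ge0$. To upgrade this to strict positivity in the geometrically realizable case I mirror the lower-bound construction from the positivity part of Theorem~\ref{thm: pure partition functions at kappa equals 2}: using Wilson's algorithm, force the loop-erased walk from $\ein^\circ$ to $\eout$ to make, in disjoint localization windows around $\hat{p}_1,\ldots,\hat{p}_{N'}$ and in the order prescribed by $\omega$, an excursion traversing each edge $\hat{e}^\delta_s$ at unit distance from the boundary. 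By the domain Markov property each such forced boundary visit contributes a factor of order $\delta^3$ with constant bounded away from $0$ uniformly in $\delta$, and the connecting excursions contribute order-one factors, so $\delta^{-3N'}\PR_{\ein,\eout}\big[\lambda\text{ uses }\hat{e}_1,\ldots,\hat{e}_{N'}\text{ in this order}\big]=Z_{\alpha(\omega)}/\big(\delta^{3N'}Z(\ein,\eout)\big)$ stays uniformly bounded below; since $Z(\ein,\eout)>0$, letting $\delta\to0$ yields $\Ampl_\omega>0$. This uniform lower bound is the genuine obstacle in the positivity claim, the remainder following directly from the scaling-limit identification and the already-established fusion lemma.
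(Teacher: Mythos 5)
Your derivation of the PDEs is exactly the paper's argument: set $F_0=\PartF_{\alpha(\omega)}$, note that the second order equations hold by Lemma~\ref{lem: PDEs for determinants}, take the limits one pair at a time, and at each step invoke Lemma~\ref{lem: fusion for third order PDEs}, with the Frobenius hypothesis supplied by the replacing algorithm~\ref{alg: continuous replacing} (each $F_{r}$ is a rational function, so its expansion in $\eps$ is a genuine Frobenius series with the required coefficient bounds). That part, as well as the nonnegativity of $\Ampl_\omega$ via the discrete approximation, is correct and matches the paper.

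The gap is in the strict positivity. You propose to ``mirror'' the disjoint-subdomain construction used for the positivity of $\PartF_\alpha$ in Theorem~\ref{thm: pure partition functions at kappa equals 2}, but that construction does not extend to the boundary-visit pattern $\alpha(\omega)$: at each visit point the two marked edges $e_{j_s}^\delta$ and $e_{j_s+1}^\delta$ are one lattice unit apart and are linked by $\alpha(\omega)$ to two \emph{different} far-away edges, so one cannot choose disjoint subdomains each containing neighborhoods of both endpoints of its link --- two of them would have to contain a neighborhood of the same point $\hat{p}_s$. This degeneracy is precisely the source of the extra factor $\delta$ per visit (cost $\delta^3$ rather than $\delta^2$), and lower-bounding it is the hard part. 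Your statement that ``by the domain Markov property each forced boundary visit contributes a factor of order $\delta^3$ with constant bounded away from $0$ uniformly in $\delta$'' is a restatement of the estimate to be proven, not a proof: in Wilson's algorithm, after the first branch is laid down ending at $e_{j_s}$, the next walk starts one lattice unit from the tip of a macroscopic LERW and must escape it to reach a distant boundary edge; bounding this conditional probability below by $c\,\delta$ with $c$ uniform in $\delta$ requires controlling the geometry of the LERW near its endpoint at all scales (a Beurling/wedge-type estimate for a random slit), which is of comparable difficulty to the LERW boundary Green's function estimates and is established nowhere in the paper.

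The paper avoids this entirely with a soft argument that you could substitute: positivity propagates through the same recursion that gives the PDEs. Each $F_r$ is nonnegative as a limit of positive functions; and if $F_r$ vanished at some point, then in the Frobenius expansion of $F_{r-1}$ at that point \emph{both} indicial coefficients would vanish --- the $\eps^{\Delta}$ coefficient vanishes by construction of the renormalized limit, and the $\eps^{\Delta'}$ coefficient is $F_r$ itself. Since all higher coefficients are recursively determined by these two (as in the proof of Lemma~\ref{lem: fusion for third order PDEs}, where $F^{(1)}=0$ is forced and $F^{(2)}$ is expressed in terms of $F^{(0)}$), the function $F_{r-1}$ would vanish identically along the collapse, contradicting the strict positivity of $F_{r-1}$ known from the previous step of the induction, whose base case $\PartF_{\alpha(\omega)}>0$ is Theorem~\ref{thm: pure partition functions at kappa equals 2}.
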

\begin{proof}
Start from the function
\[ F_0(x_1 , \ldots , x_{2N}) := \PartF_\alpha (x_1 , \ldots , x_{2N}) \]
with $\alpha = \alpha(\omega)$. At $\kappa=2$,
the system of second order PDEs \eqref{eq: general second order PDEs at generic kappa}
for $F_0$ is nothing but the system \eqref{eq: PDE for multiple SLEs at kappa equals 2} for $\PartF_\alpha$, which is
satisfied by virtue of Lemma~\ref{lem: PDEs for determinants} (and at this stage there are no third order PDEs).
By~iterated limits, recursively define new functions
for $r=1,2,\ldots,N'$,
\[ F_r (\boldsymbol{x}';\hat{\boldsymbol{x}}')  
:= \lim_{x_{j_r}, x_{j_r + 1} \to \hat{x}_r}
    \frac{1}{(x_{j_r + 1} - x_{j_r})^{\Delta'}} F_{r-1}(\boldsymbol{x};\hat{\boldsymbol{x}}) .
\]
By the replacing algorithm~\ref{alg: continuous replacing},
each function $F_{r-1}$ is a linear combination of certain determinants, whose Taylor
series expansion is a Frobenius series of the required form.
Therefore, by Lemma~\ref{lem: fusion for third order PDEs}, each $F_r$ satisfies
the corresponding joint system of second and third order PDEs.
Finally, with $r= N'$ we obtain the function
\[ F_{N'}(x_1 , \hat{x}_1 , \ldots , \hat{x}_{N'} , x_2) 
= \Ampl_\omega(x_1 , \hat{x}_1 , \ldots , \hat{x}_{N'} , x_2) , \] 
according to the expression
\eqref{eq: iterated limit definition of boundary visit amplitude}. 
The asserted partial differential equations thus follow.

Positivity is shown by the same recursive argument.
First of all, $F_0 = \PartF_\alpha$ is positive by
Theorem~\ref{thm: pure partition functions at kappa equals 2}.
Each iterated limit $F_r$ is therefore non-negative.
It remains to show that $F_r$ is pointwise non-vanishing.
The value of $F_r(\boldsymbol{x}';\hat{\boldsymbol{x}}')$ is
the coefficient of $(x_{j_r +1} - x_{j_r})^{-\Delta'}$ in the
Frobenius series expansion of $F_{r-1}(\boldsymbol{x};\hat{\boldsymbol{x}})$.
The coefficient of the leading term $(x_{j_r +1} - x_{j_r})^{-\Delta}$ is
vanishing, so if also $F_r(\boldsymbol{x}';\hat{\boldsymbol{x}}')$ were vanishing,
the function $F_{r-1}(\boldsymbol{x};\hat{\boldsymbol{x}})$ would have to be zero.
Non-vanishingness then follows recursively.
\end{proof}

\subsection{\label{sub: proof of bdry visit thm}Proof of Theorem~\ref{thm: scaling limit of LERW bdry visits}}

Let us now conclude the proof of one of our main results,
Theorem~\ref{thm: scaling limit of LERW bdry visits},
about the scaling limit of boundary visit probabilities of the
loop-erased random walk. The starting point is the formula of
Corollary~\ref{cor: LERW boundary visit probabilities with partition functions}
for the probability that a LERW $\lambda^\delta$ on $\Gr^\delta$
from $\ein^\delta$ to $\eout^\delta$ uses edges $\hat{e}_1^\delta , \ldots , \hat{e}_{N'}^\delta$
at unit distance from the boundary in an order specified by $\omega$,
\[
\PR_{\ein^\delta,\eout^\delta} \big[ \lambda^\delta \text{ uses $\hat{e}^\delta_1 , \ldots , \hat{e}^\delta_{N'}$ in the order $\omega$} \big]
= \frac{Z^{\Gr^\delta}_{\alpha(\omega)}(e_1^\delta , \ldots , e_{2N}^\delta)}{Z^{\Gr^\delta}(\ein^\delta , \eout^\delta)} .
\]
In the setup of Theorem~\ref{thm: scaling limit of LERW bdry visits},
$\Gr^\delta$ is a square grid approximation of a domain $\domain$,
and the edges $\ein^\delta , \eout^\delta , \hat{e}_1^\delta , \ldots , \hat{e}_{N'}^\delta$
are the nearest to the
marked points $\pin, \pout, \hat{p}_1 , \ldots, \hat{p}_{N'}$,
which lie on horizontal or vertical boundary segments.
Then Theorem~\ref{thm: discrete boundary visit probabilities} applied to the numerator
and Lemma~\ref{lem: scaling limit of discrete excursion kernels} to the denominator
give the existence of the limit
\[
\lim_{\delta \to 0} \frac{\PR_{\ein^\delta,\eout^\delta} \big[ \lambda^\delta
        \text{ uses $\hat{e}^\delta_1 , \ldots , \hat{e}^\delta_{N'}$ in the order $\omega$} \big]
    }{\delta^{3 N'}} \; .
\]
Moreover, with
Equation~\eqref{eq: conformal covariance for general limit}, 
this limit is explicitly expressed as
\[ \pi^{-N'} \times \prod_{s=1}^{N'} |\confmap'(\hat{p}_s)|^3 \times
    \frac{ \Ampl_\omega \big( \confmap(\pin),\confmap(\pout);\confmap(\hat{\boldsymbol{p}}) \big)}{\ExcKH \big( \confmap(\pin),\confmap(\pout) \big)},
\]
where $\Ampl_\omega$ is given by
Equation~\eqref{eq: iterated limit definition of boundary visit amplitude}
and $\ExcKH(\xin,\xout) = \big( \xout - \xin \big)^{-2}$.
Finally, in Proposition~\ref{prop: scaling limit of boundary visits}, we 
verified the asserted
two second order and $N'$ third order partial differential equations for $\Ampl_\omega$. 
This concludes the proof.
$\hfill \qed$

\subsection{\label{sub: asymptotics of boundary visits}Asymptotics of the scaling limits of the boundary visit probabilities}
To finish this section, we prove
asymptotics properties for the boundary visit probabilities,
predicted in~\cite{JJK-SLE_boundary_visits}.
\begin{prop}\label{prop: asymptotics for xin}
Let $\omega = (\omega_1 , \ldots , \omega_{N'}) \in \set{\pm 1}^{N'}$ be an order of boundary visits,
and let $\hat{x}_s$ be the boundary visit point closest to $\xin$ on either side of $\xin$.
Then, as $\hat{x}_s , \xin \to \xin'$, we have the following asymptotics of $\Ampl_\omega$:
\begin{align}
\label{eq: bdary visit amplitude cascade for xin}
& \lim_{\hat{x}_s , \xin \to \xin'} | \hat{x}_s - \xin |^3 \; \Ampl_\omega(\xin ; \hat{x}_1 , \hat{x}_2 , \ldots , \hat{x}_{N'} ; \xout ) \\
\nonumber
= \; & \begin{cases} 0 \qquad
        & \text{if $\hat{x}_s$ is not the first boundary visit, i.e., $\omega_1 \neq \sgn(\hat{x}_s - \xin)$} \\
 2 \, \Ampl_{\omega'}(\xin' ; \hat{x}_2 , \ldots , \hat{x}_{N'} ; \xout ) \;
         & \text{if $\hat{x}_s$ is the first boundary visit, i.e., $\omega_1 = \sgn(\hat{x}_s - \xin)$,} 
 \end{cases}
\end{align}
where $\omega' = (\omega_2 , \ldots , \omega_{N'}) \in \set{\pm 1}^{N'-1}$ is obtained
from $\omega$ by omitting the  first boundary visit, as illustrated in
Figure~\ref{fig: merging 1st bdary visit to xin}.
\end{prop}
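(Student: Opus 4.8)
The plan is to work directly from the explicit finite expression for $\Ampl_\omega$ provided by the replacing Algorithm~\ref{alg: continuous replacing} (whose validity is Theorem~\ref{thm: continuous boundary visit probabilities}). Writing $\alpha = \alpha(\omega)$, this expresses
\[ \Ampl_\omega(\xin;\hat{x}_1,\ldots,\hat{x}_{N'};\xout) = \sum_{\beta \DPgeq \alpha} \# \CItilingsof(\alpha / \beta)\, \widetilde{D}_\beta , \]
where each $\widetilde{D}_\beta$ is the determinant obtained from $\LPdet{\beta}{\ExcKH}$ by zeroing the visit-pair entries $\ExcKH(x_{j_t},x_{j_t+1})$ and, for each $t$, evaluating the $x_{j_t}$ row/column at $\hat{x}_t$ and tangentially differentiating the $x_{j_t+1}$ row/column at $\hat{x}_t$. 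Since this is a \emph{finite} sum of explicit rational functions (and their derivatives), the limit $\hat{x}_s\to\xin$ may be taken termwise, so there is no interchange-of-limits subtlety to worry about here; all the content is in identifying the surviving terms and reorganizing the combinatorial sum.

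The first key step is that, as $\hat{x}_s\to\xin$, the only singular matrix entries in any $\widetilde{D}_\beta$ are those coupling the index $\hat{x}_s$ to the index $\xin$. The undifferentiated coupling is $\ExcKH(\hat{x}_s,\xin)=(\hat{x}_s-\xin)^{-2}=O(|\hat{x}_s-\xin|^{-2})$, which is killed by the renormalization $|\hat{x}_s-\xin|^{3}$, whereas the tangentially differentiated coupling is $\partial_{\tau}\ExcKH(\hat{x}_s,\xin)=\mp 2\,(\hat{x}_s-\xin)^{-3}=O(|\hat{x}_s-\xin|^{-3})$ and \emph{survives}, contributing precisely the factor $2$ claimed in the statement. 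I would therefore expand each $\widetilde{D}_\beta$ by cofactors along the differentiated $\hat{x}_s$ row/column: after multiplying by $|\hat{x}_s-\xin|^{3}$ and letting $\hat{x}_s\to\xin$, only the cofactor in which this row/column meets the $\xin$ row/column survives, and it reduces $|\hat{x}_s-\xin|^{3}\,\widetilde{D}_\beta$ to $\pm 2$ times a strictly smaller determinant in which both $\xin$ and the $\hat{x}_s$-pair have been deleted and the remaining variables evaluated at coincidence. These smaller determinants are exactly of the type assembled by Algorithm~\ref{alg: continuous replacing} into $\Ampl_{\omega'}$. (I have checked the smallest case $N'=1$, $\omega=(+)$ by hand: the $\beta=\unnested_2$ term dies under renormalization while the $\beta=\nested_2$ term tends to $2\,\ExcKH(\xin,\xout)=2\,\Ampl_{()}$, with the signs combining to $+2$; this both confirms the mechanism and fixes the constant.)

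The second, and genuinely substantive, step is to assemble these cofactor limits over all $\beta\DPgeq\alpha$ with their weights $\#\CItilingsof(\alpha/\beta)$ and to show the weighted sum reorganizes into the inverse Fomin type sum defining $\Ampl_{\omega'}$. Here I would use the link/wedge-removal machinery of Sections~\ref{subsec: Wedges, slopes, and link removals}--\ref{subsec: inverse Fomin sums}: the removal of $\xin$ together with the $\hat{x}_s$-pair corresponds combinatorially to removing the first link of $\alpha(\omega)$, i.e.\ to the passage $\alpha(\omega)\mapsto\alpha(\omega)\removeupwedge{\,\cdot\,}$, and the tiling coefficients transform accordingly by Lemma~\ref{lem: main wedge lemma}, Lemma~\ref{lem: numofC} and the cascade Proposition~\ref{prop: inverse Fomin cascade}. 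The dichotomy of the statement is precisely the dichotomy $\upwedgeat{\,\cdot\,}\in\alpha$ versus $\upwedgeat{\,\cdot\,}\notin\alpha$ governing the asymptotics of $\PartF_\alpha$ (compare Proposition~\ref{prop: Vanishing-link asymptotics of the part fcns}): $\hat{x}_s$ is the first boundary visit iff $\omega_1=\sgn(\hat{x}_s-\xin)$ iff, in $\alpha(\omega)$, the point $\xin$ is linked across the first arc to the near side of the $\hat{x}_s$-pair. In that case the telescoping of the weighted cofactors yields exactly $2\,\Ampl_{\omega'}(\xin';\hat{x}_2,\ldots,\hat{x}_{N'};\xout)$; in the opposite case no such link is present, and I expect the signed minors to cancel in the weighted sum, so that the coefficient of $(\hat{x}_s-\xin)^{-3}$ in $\Ampl_\omega$ vanishes and the renormalized limit is $0$. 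This cancellation I would derive from the antisymmetry and vanishing rules of Proposition~\ref{prop: Zero-replacing Rule}(b)--(c), which already encode precisely the ``two identical kernel collections force a vanishing inverse Fomin sum'' phenomenon.

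The main obstacle will be this last combinatorial reorganization, namely matching $\#\CItilingsof(\alpha(\omega)/\beta)$ with $\#\CItilingsof(\alpha(\omega')/\hat\beta)$ under the first-link-removal bijection while carrying along the single differentiated row/column, and proving the cancellation in the non-first-visit case. The analytic parts — termwise limits, isolation of the singular coupling, the cofactor expansion, and the extraction of the constant $2$ from $\partial_\tau(\hat{x}_s-\xin)^{-2}$ — are routine. I would also record that the overall sign works out to $+2$, consistent with the positivity of $\Ampl_\omega$ established in Proposition~\ref{prop: scaling limit of boundary visits}, and that the analogous statement for the visit point $\hat{x}_s$ closest to $\xout$ follows by the $\xin\leftrightarrow\xout$ symmetry.
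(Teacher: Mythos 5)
Your proposal is correct and takes essentially the same route as the paper: both start from the replacing algorithm's expression of $\Ampl_\omega$ as an inverse Fomin type sum, observe that after renormalization by $|\hat{x}_s - \xin|^3$ only the differentiated kernel entry $\mathfrak{K}(j,j+1) = 2\,|\xin - \hat{x}_s|^{-3}$ survives (yielding the factor $2$), and then identify its coefficient by Proposition~\ref{prop: Zero-replacing Rule}(c) in the non-first-visit case and Proposition~\ref{prop: inverse Fomin cascade} in the first-visit case. The combinatorial reassembly you flag as the main obstacle is in fact already fully delivered by those two propositions, combined with the observation that $\mathfrak{Z}^{\mathfrak{K}\removewedge{j}}_{\alpha(\omega)\removeupwedge{j}} = \Ampl_{\omega'}$ by the replacing algorithm, so no separate per-determinant cofactor bookkeeping or tiling bijection is needed.
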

\begin{figure}
\begin{displaymath}
\xymatrixcolsep{1.5pc}
\xymatrixrowsep{2.5pc}
\xymatrix{
        \begin{minipage}{6cm} \begin{center} \includegraphics[width=\textwidth]{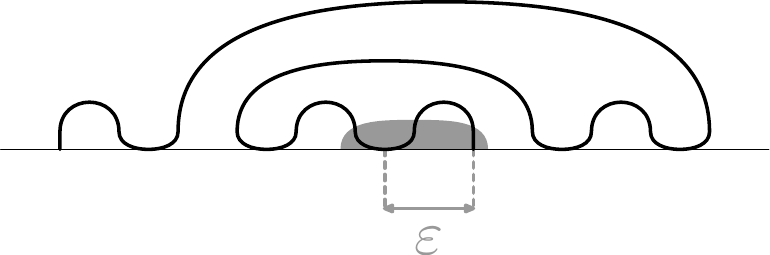} \end{center} \end{minipage} \ar[d]  & 
        \begin{minipage}{6cm} \begin{center} \includegraphics[width=\textwidth]{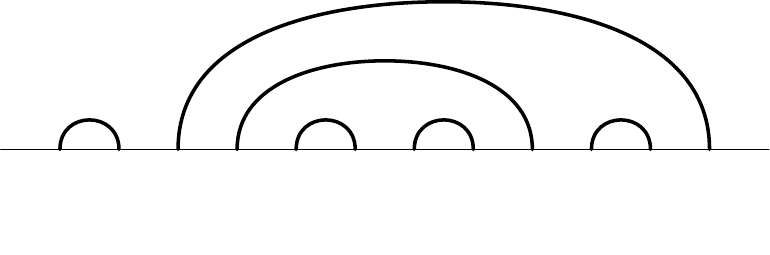} \end{center} \end{minipage}\\
         \hspace{-0mm}\begin{minipage}{6cm} \begin{center} \includegraphics[width=\textwidth]{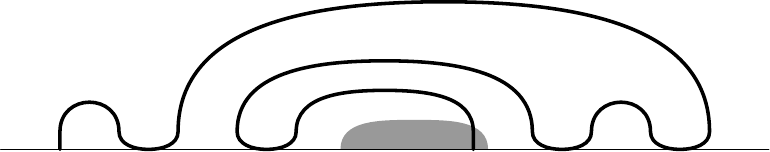} \end{center} \end{minipage}  &  \begin{minipage}{6cm} \begin{center} \includegraphics[width=\textwidth]{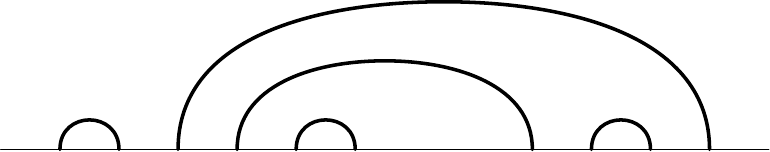} \end{center} \end{minipage}
        }
\end{displaymath}
\caption{\label{fig: merging 1st bdary visit to xin}
Illustration of the asymptotics property in Proposition~\ref{prop: asymptotics for xin}:
collapsing the first visited point with the starting point. The boundary visit orders $\omega$
and $\omega'$ are shown on the top left and bottom left, and
the corresponding link patterns $\alpha(\omega)$ and $\alpha(\omega')$ on the top right and bottom right, respectively.
}
\end{figure}

\begin{prop}\label{prop: asymptotics for visit points}
Let $\omega \in \set{\pm 1}^{N'}$ be an order of boundary visits,
and let $\hat{x}_s$ and $\hat{x}_{s+1}$ be two consecutive boundary visit points.
Then, as $\hat{x}_s, \hat{x}_{s+1} \to \hat{x}'$, we have the following asymptotics of $\Ampl_\omega$:
\begin{align}
\label{eq: bdary visit amplitude cascade for two visits}
& \lim_{ \hat{x}_s, \hat{x}_{s+1} \to \hat{x}'} | \hat{x}_{s+1} - \hat{x}_{s} |^3
    \; \Ampl_\omega(\xin ; \ldots , \hat{x}_{s-1}, \hat{x}_{s} , \hat{x}_{s+1} , \hat{x}_{s+2}, \ldots ; \xout ) \\
\nonumber
= \; & \begin{cases} 0 \qquad
    & \text{if $\hat{x}_s$ and $\hat{x}_{s+1}$ are not successively visited,} \\
10 \, \Ampl_{\omega'}(\xin ; \ldots , \hat{x}_{s-1}, \hat{x}' , \hat{x}_{s+2}, \ldots ; \xout )  \;
    & \text{if $\hat{x}_s$ and $\hat{x}_{s+1}$ are successively visited,}
\end{cases}
\end{align}
where $\omega' \in \set{\pm 1}^{N'-1}$ is obtained from $\omega \in \set{\pm 1}^{N'}$
by omitting the boundary visits corresponding to $\hat{x}_{s+1}$,
as illustrated in Figure~\ref{fig: merging subsequent bdary visits together}.
\end{prop}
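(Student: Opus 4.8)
The plan is to follow the strategy already used for the parallel statement Proposition~\ref{prop: asymptotics for xin}, now fusing two boundary-visit points of equal conformal weight $h_{1,3}=3$ rather than a visit point with the starting point. The starting point is the iterated-limit representation~\eqref{eq: iterated limit definition of boundary visit amplitude} of $\Ampl_\omega$ together with its explicit expansion into determinants produced by the replacing algorithm~\ref{alg: continuous replacing}. By the order-independence of the iterated limit (Theorem~\ref{thm: continuous boundary visit probabilities}), I would first perform the collapses of all visits $t\neq s,s+1$, leaving a function of the four underlying points $x_{j_s},x_{j_s+1}$ and $x_{j_{s+1}},x_{j_{s+1}+1}$ (with the remaining $\hat{x}_t$, $\xin$, $\xout$ frozen) that is still a finite linear combination of determinants $\LPdet{\beta}{\ExcKH}$ of the excursion kernel. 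The limit $\hat{x}_s,\hat{x}_{s+1}\to\hat{x}'$ is then the simultaneous collapse of these four points, and the renormalization $|\hat{x}_{s+1}-\hat{x}_s|^3$ combines with the two weight-one factors of~\eqref{eq: iterated limit definition of boundary visit amplitude} into the total scaling of a single weight-$3$ field at $\hat{x}'$.

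Having reduced to a four-point merge inside a fixed linear combination of determinants, the core is a direct Taylor expansion in $\eps=\hat{x}_{s+1}-\hat{x}_s$, exactly as in Section~\ref{sec: determinant Taylor expansions}: the rows and columns attached to the merging group become nearly linearly dependent, and multilinearity of the determinant isolates the singular coefficients. I would organize the bookkeeping using the tools of Section~\ref{subsec: inverse Fomin sums}. In the \emph{not successively visited} case the merging configuration contains a pair of link endpoints that are parallel but not joined, so that after the zero-replacing step of Proposition~\ref{prop: Zero-replacing Rule} two rows (equivalently columns) of the leading matrix coincide; the antisymmetry recorded in Proposition~\ref{prop: Zero-replacing Rule}(a)--(b) then forces the surviving coefficient, and hence the renormalized limit, to vanish. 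In the \emph{successively visited} case the connecting link between the two groups survives the collapse and produces the smaller link pattern $\alpha(\omega')$, so that the surviving determinantal sum is precisely $\Ampl_{\omega'}$ up to a universal numerical factor.

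To confirm that the renormalization exponent is the correct one and that the limit again solves the expected equations, I would run the Frobenius/fusion argument of Lemma~\ref{lem: fusion for third order PDEs}, now with the two third-order operators $\widehat{\sD}^{(s)}$ and $\widehat{\sD}^{(s+1)}$ colliding. The indicial equation for the merge of two weight-$h_{1,3}$ points has roots $\Delta=h_f-2h_{1,3}$ for the three fusion channels $f\in\set{(1,1),(1,3),(1,5)}$, that is $\set{-6,-3,4}$ at $\kappa=2$; multiplying by $|\hat{x}_{s+1}-\hat{x}_s|^3$ extracts the middle, $\phi_{1,3}$, channel. Since $\Ampl_\omega$ is a concrete linear combination of determinants it admits the well-behaved Frobenius expansion required by the hypotheses of Lemma~\ref{lem: fusion for third order PDEs}, and the finiteness of the renormalized limit reduces to showing that the most singular coefficient, that of $\eps^{-6}$ (the $\phi_{1,1}$ channel), vanishes; by the integer-step structure of the expansion this automatically removes the $\eps^{-5}$ and $\eps^{-4}$ terms as well. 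The surviving $\eps^{-3}$ coefficient then satisfies the reduced third-order PDE at $\hat{x}'$ by the same recursion as in Lemma~\ref{lem: fusion for third order PDEs}, identifying it as a multiple of $\Ampl_{\omega'}$, whose positivity is already guaranteed by Proposition~\ref{prop: scaling limit of boundary visits}.

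The main obstacle is quantitative: pinning down the universal constant and verifying the vanishing of the most singular coefficient, both of which require the explicit determinant data and not merely the PDEs. Concretely, I expect the constant $10$ to emerge either from the Frobenius recursion or, equivalently, from the numerical factors produced by Taylor-expanding the two tangential-derivative rows against the two kernel rows of the merging groups, in direct analogy with the constant $2$ of Proposition~\ref{prop: asymptotics for xin}; at $\kappa=2$ it is suggestive that $10=h_{1,5}$, reflecting the role of the $\phi_{1,5}$ channel in stabilizing the expansion. The cleanest way to fix the value is to compute it once in the minimal case $N'=2$ of two successive visits, where $\Ampl_\omega$ and $\Ampl_{\omega'}$ are fully explicit, and then to propagate it through the recursion that builds up general $\omega$.
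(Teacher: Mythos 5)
Your qualitative outline (iterated-limit representation, reduction to a four-point collapse, and the use of the Section~\ref{subsec: inverse Fomin sums} machinery for the vanishing case) matches the paper's strategy, but the quantitative core of the proposition is missing, and the two places where you defer to other arguments are exactly the places where the proof has real content. First, the constant $10$ is never derived: your fallback of computing the minimal case $N'=2$ and ``propagating through the recursion'' presupposes that the constant is universal in $\omega$ and $s$, and establishing that universality \emph{is} the general computation you are trying to avoid. The paper obtains $10$ by identifying precisely which kernel entries of the inverse Fomin type sum $\Ampl_\omega = \mathfrak{Z}_{\alpha(\omega)}$ diverge, namely
$\mathfrak{K}(j+1,j+2) = 2\,\eps^{-3}$, $\mathfrak{K}(j,j+3) = -2\,\eps^{-3}$, $\mathfrak{K}(j+1,j+3) = -6\,\eps^{-4}$, $\mathfrak{K}(j,j+2)=\eps^{-2}$ (with $\eps = |\hat{x}_{s+1}-\hat{x}_s|$), and by relating the coefficient of each to $\Ampl_{\omega'}$ through Proposition~\ref{prop: inverse Fomin cascade} and repeated use of the antisymmetry in Proposition~\ref{prop: Zero-replacing Rule}(a); the contributions $2+2+6=10$ are universal because these relations never refer to the rest of the link pattern.

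Second, your finiteness argument has a gap that the determinant data makes visible: the entry $\mathfrak{K}(j+1,j+3)\sim -6\,\eps^{-4}$ diverges \emph{faster} than the renormalization $\eps^{3}$ compensates. The paper must (i) show that no \emph{products} of divergent entries occur in the sum --- this uses the combinatorial fact that the link $\link{j}{j+3}$ cannot belong to $\alpha(\omega)$, so that $\upwedgeat{j}\notin \alpha(\omega)\removeupwedge{j+1}$ and Proposition~\ref{prop: Zero-replacing Rule}(c) applies --- and (ii) show that the coefficient of $\mathfrak{K}(j+1,j+3)$ itself vanishes linearly in $\eps$, by an extra differentiation argument identical to the proof of the replacing algorithm~\ref{alg: continuous replacing}; only then does this entry contribute the finite amount $+6\,\Ampl_{\omega'}$. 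Your claim that the vanishing of the $\eps^{-6}$ coefficient ``automatically removes'' the $\eps^{-5}$ and $\eps^{-4}$ terms by integer-step Frobenius structure begs this question: at $c=-2$ the exponents $-6$ and $-3$ of the $(1,1)$ and $(1,3)$ channels differ by an integer, so abstract Frobenius theory permits logarithmic mixing, and ruling it out requires exactly the explicit expansion you are trying to bypass. Relatedly, Lemma~\ref{lem: fusion for third order PDEs} is stated for the collision of two points carrying \emph{second}-order PDEs (weight $h_{1,2}$); invoking it for two colliding third-order operators $\widehat{\sD}^{(s)}$, $\widehat{\sD}^{(s+1)}$ would require a genuinely new fusion lemma. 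That part of your plan is aimed at PDEs for the limit, which the proposition does not assert, so its failure is not fatal --- but it cannot be used, as you do, to certify the exponent or the finiteness of the limit.
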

\begin{figure}
\begin{displaymath}
\xymatrixcolsep{1.5pc}
\xymatrixrowsep{2.5pc}
\xymatrix{
        \begin{minipage}{6cm} \begin{center} \includegraphics[width=\textwidth]{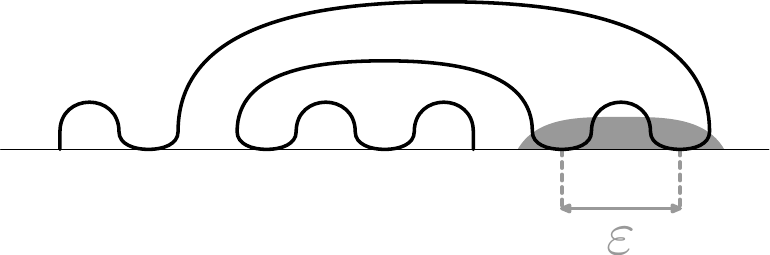} \end{center} \end{minipage} \ar[d] & 
        \begin{minipage}{6cm} \begin{center} \includegraphics[width=\textwidth]{bdarycascade_w_nbhds-3.pdf} \end{center} \end{minipage}\\
         \hspace{-0mm}\begin{minipage}{6cm} \begin{center} \includegraphics[width=\textwidth]{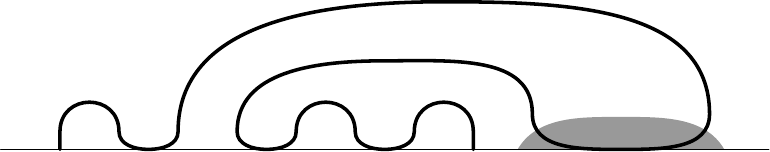} \end{center} \end{minipage}  &  \begin{minipage}{6cm} \begin{center} \includegraphics[width=\textwidth]{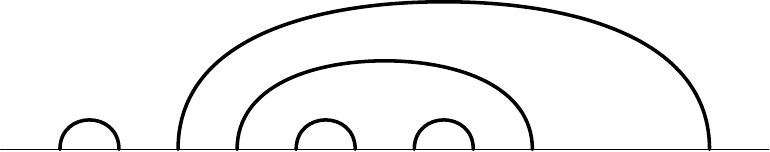} \end{center} \end{minipage}
        }
\end{displaymath}
\caption{\label{fig: merging subsequent bdary visits together}
Illustration of the asymptotics property in Proposition~\ref{prop: asymptotics for visit points}:
collapsing two successively visited points.
The boundary visit orders $\omega$ and $\omega'$ are shown on the top left and bottom left, and
the corresponding link patterns $\alpha(\omega)$ and $\alpha(\omega')$ on the top right and bottom right, respectively.
}
\end{figure}

\begin{rem}
\textit{
The above asymptotics of $\Ampl_\omega$, including their multiplicative constants,
coincide with what one gets from the case of generic $\kappa$ treated in \cite{KP-pure_partition_functions_of_multiple_SLEs}
by letting $\kappa \to 2$.
Namely, if $\zeta_\omega^{(\kappa)}$ is defined as the limit of
$\prod_{s=1}^{N'} |x_{j_s+1}-x_{j_s}|^{-2/\kappa} \times \PartF^{(\kappa)}_{\alpha(\omega)}$,
generalizing our definition~\eqref{eq: iterated limit definition of boundary visit amplitude},
then we have the following analogues of the above propositions.
The non-zero asymptotics of Proposition~\ref{prop: asymptotics for xin}
gets replaced by $\Ampl_\omega^{(\kappa)} \sim C_1 \, |\hat{x}_s - \xin|^{\frac{\kappa-8}{\kappa}} \, \Ampl_{\omega'}^{(\kappa)}$
with $C_1 = \frac{4 \cos^2(\frac{4}{\kappa}\pi)}{1 + 2 \cos(\frac{8}{\kappa}\pi)} \, \frac{\Gamma(1-\frac{8}{\kappa}) \, \Gamma(2-\frac{8}{\kappa})}{\Gamma(1-\frac{4}{\kappa}) \, \Gamma(2-\frac{12}{\kappa})}$.
In the limit $\kappa \to 2$, we have $C_1 \to 2$, in accordance with Equation~\eqref{eq: bdary visit amplitude cascade for xin}. 
Likewise, the non-zero asymptotics of Proposition~\ref{prop: asymptotics for visit points} gets replaced by $\Ampl_\omega^{(\kappa)} \sim C_2 \, |\hat{x}_{s+1} - \hat{x}_s|^{\frac{\kappa-8}{\kappa}} \, \Ampl_{\omega'}^{(\kappa)}$
with $C_2 = \frac{2 \cos^2(\frac{4}{\kappa}\pi)}{\cos(\frac{8}{\kappa}\pi)} \, \frac{\Gamma(1-\frac{8}{\kappa})^2 \, \Gamma(2-\frac{8}{\kappa})}{\Gamma(1-\frac{4}{\kappa})^2 \, \Gamma(2-\frac{16}{\kappa})}$. In the limit $\kappa \to 2$, we have $C_2 \to 10$, in accordance
with Equation~\eqref{eq: bdary visit amplitude cascade for two visits}. 
}
\end{rem}

In proving the above propositions, we make use of the combinatorial
properties of inverse Fomin type sums listed
in Propositions~\ref{prop: Zero-replacing Rule} and~\ref{prop: inverse Fomin cascade}
in Section~\ref{subsec: inverse Fomin sums}.
Note that the replacing algorithm \ref{alg: continuous replacing} yields an expression
for $\Ampl_\omega$ as an inverse Fomin type sum.
\begin{proof}[Proof of Proposition~\ref{prop: asymptotics for xin}]
Assume for definiteness that $\hat{x}_s$ is on the left side of $\xin$. 
Now, 
in the link pattern $\alpha(\omega)$ associated to the boundary visit order $\omega$,
the point $\hat{x}_s$ corresponds to some consecutive indices $j-1, j$  and $\xin$ to $j+1$.
As illustrated in Figure~\ref{fig: wedge in first bdary visit},
$\hat{x}_s$ is the first boundary visit of $\omega$ if and only if there is a
link $\link{j}{j+1}$ in the link pattern $\alpha(\omega)$, that is,
$\upwedgeat{j} \in \alpha(\omega)$.
By the replacing
algorithm~\ref{alg: continuous replacing}, the kernel entries of the inverse
Fomin type sum giving $\Ampl_\omega = \mathfrak{Z}_{\alpha(\omega)}$ are for these indices
\begin{align*}
\mathfrak{K} (j-1, j) = 0 ,
\qquad
\mathfrak{K} (j-1, j + 1) = \frac{1}{\vert \xin - \hat{x}_s \vert^2} ,
\qquad
\text{ and } \qquad
\mathfrak{K} (j, j + 1) = \frac{2}{\vert \xin - \hat{x}_s \vert^3} .
\end{align*}
All other kernel entries remain bounded in the limit $|\xin - \hat{x}_s| \to 0$.
Hence, the desired asymptotics~\eqref{eq: bdary visit amplitude cascade for xin}
is twice the coefficient of $\mathfrak{K}(j, j + 1)$ in the inverse Fomin
type sum $\Ampl_\omega = \mathfrak{Z}_{\alpha(\omega)}$, because the product
$\mathfrak{K} (j, j + 1) \, \mathfrak{K} (j-1, j + 1)$ does not
appear. 
By Proposition~\ref{prop: Zero-replacing Rule}(c), this coefficient is zero if
$\upwedgeat{j} \not \in \alpha(\omega)$ and
by Proposition~\ref{prop: inverse Fomin cascade}, it
equals $\mathfrak{Z}^{\mathfrak{K}\removewedge{j} }_{\alpha(\omega) \removeupwedge{j}}$ if $\upwedgeat{j} \in \alpha(\omega)$.
These two possibilities correspond to the two cases in the
assertion~\eqref{eq: bdary visit amplitude cascade for xin}.
It remains to observe that $\mathfrak{Z}^{\mathfrak{K}\removewedge{j} }_{\alpha(\omega) \removeupwedge{j}} = \Ampl_{\omega'}$,
by the replacing algorithm~\ref{alg: continuous replacing}.
\end{proof}
\begin{figure}
\includegraphics[scale=1.0]{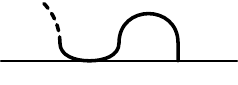} \qquad \qquad
\includegraphics[scale=1.0]{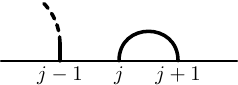}
\caption{\label{fig: wedge in first bdary visit}
In the setup of Proposition~\ref{prop: asymptotics for xin},
the first boundary visit is on the left of $\xin$
if and only if there is a link $\link{j}{j+1}$ in the link pattern $\alpha(\omega)$
and $j+1$ is the index corresponding to the point $\xin$.
}
\end{figure}
\begin{figure}
\includegraphics[scale=1.0]{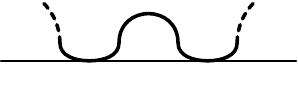} \qquad \qquad
\includegraphics[scale=1.0]{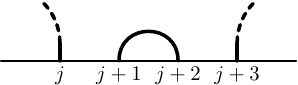}
\caption{\label{fig: wedge in consecutive bdary visits} 
In the setup of Proposition~\ref{prop: asymptotics for visit points},
the visits to $\hat{x}_s$ and $\hat{x}_{s+1}$ are successive
if and only if the link pattern $\alpha(\omega)$ contains the link
$\link{j+1}{j+2}$.
}
\end{figure}
\begin{figure}
\includegraphics[scale=1.0]{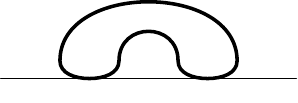} \qquad \qquad
\includegraphics[scale=1.0]{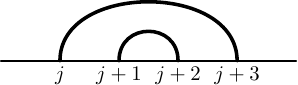}
\caption{\label{fig: bonus figure for careful reader} 
This cannot happen in the setup of Proposition~\ref{prop: asymptotics for visit points}.
}
\end{figure}
\begin{proof}[Proof of Proposition~\ref{prop: asymptotics for visit points}]
In the link pattern $\alpha(\omega)$ associated to the boundary visit order $\omega$,
the consecutive points $\hat{x}_s$ and $\hat{x}_{s+1}$ correspond to some pairs of consecutive indices
$j, j + 1$ and $j + 2, j + 3$, respectively.
As illustrated in Figure \ref{fig: wedge in consecutive bdary visits},
the boundary visits are successive if and only if there is a
link $\link{j+1}{j+2}$ in $\alpha(\omega)$, that is, $\upwedgeat{{j + 1}} \in \alpha$.
We again list the kernel entries at the indices of interest for the inverse Fomin type
sum $\Ampl_\omega = \mathfrak{Z}_{\alpha(\omega)}$:
\begin{align}
\nonumber
\mathfrak{K} (j , j+1) = \; & 0     &
\mathfrak{K} (j+1 , j+2) = \; & \frac{2}{\vert \hat{x}_{s+1} - \hat{x}_s \vert^3} \\
\label{eq: a list of kernels}
\mathfrak{K} (j , j+2) = \; & \frac{1}{\vert \hat{x}_{s+1} - \hat{x}_s \vert^2}     &
\mathfrak{K} (j+1 , j+3) = \; & \frac{-6}{\vert \hat{x}_{s+1} - \hat{x}_s \vert^4} \\
\nonumber
\mathfrak{K} (j , j+3) = \; & \frac{-2}{\vert \hat{x}_{s+1} - \hat{x}_s \vert^3}     &
\mathfrak{K} (j+2 , j+3) = \; & 0.
\end{align}
All other kernel entries 
remain bounded in the limit $|\hat{x}_{s+1}-\hat{x}_{s}| \to 0$.
Hence, as for 
Proposition~\ref{prop: asymptotics for xin},
the asymptotics~\eqref{eq: bdary visit amplitude cascade for two visits} can be read off from
the coefficients of the four diverging kernels above.

Assume first that the boundary visits are not successive,
or equivalently, $\upwedgeat{{j+1}} \not \in \alpha$.
Then, by Proposition~\ref{prop: Zero-replacing Rule}(c), the coefficient of
$\mathfrak{K} (j+1, j+2) $ is zero in $\Ampl_\omega = \mathfrak{Z}_{\alpha(\omega)}$.
By construction, in the link pattern $\alpha(\omega)$
there are no links $\link{j}{j+1}$ or $\link{j+2}{j+3}$,
that is, $\upwedgeat{j}, \upwedgeat{{j + 2}} \not \in \alpha$.
In such a situation, we can apply the interchange operation 
of Proposition~\ref{prop: Zero-replacing Rule}(a)
to verify that the coefficients of $\mathfrak{K}(j, j+2) $, $\mathfrak{K} (j+1, j+3)$,
and $\mathfrak{K} (j, j+3)$ are also zero.
Indeed, the first of these coefficients is obtained from the coefficient of $\mathfrak{K} (j+1, j+2) $
by interchanging the pair of indices $j, j+1$, the second by interchanging the pair
$j+2, j+3$, and the third by interchanging both pairs.
We conclude that all the kernel entries in~\eqref{eq: a list of kernels} that diverge in the
limit $|\hat{x}_{s+1}-\hat{x}_{s}| \to 0$ cancel out in the inverse Fomin type
sum $\Ampl_\omega = \mathfrak{Z}_{\alpha(\omega)}$.
This proves the first case in Equation~\eqref{eq: bdary visit amplitude cascade for two visits}.

Assume then that the boundary visits are successive, or equivalently,
$\upwedgeat{{j + 1}} \in \alpha$. We first claim that no product of the non-zero kernel
entries listed in~\eqref{eq: a list of kernels} appears in the inverse Fomin type
sum $\Ampl_\omega =  \mathfrak{Z}_{\alpha(\omega)}$. To see this, by
Proposition~\ref{prop: inverse Fomin cascade}, the coefficient of $\mathfrak{K}(j+1,j+2)$
is $\mathfrak{Z}^{\mathfrak{K}\removewedge{{j+1}} }_{\alpha(\omega) \removeupwedge{ {j+1} }}$.
Next, as illustrated in Figure \ref{fig: bonus figure for careful reader},
the link $\link{j}{j+3}$ cannot belong to $\alpha(\omega)$, which implies
$\upwedgeat{j} \not \in \alpha(\omega) \removeupwedge{{j+1}}$.
Then, Proposition~\ref{prop: Zero-replacing Rule}(c) implies that the
coefficient of $\mathfrak{K}(j,j+3)$ is zero in $\mathfrak{Z}^{\mathfrak{K}\removewedge{{j+1}} }_{\alpha(\omega) \removeupwedge{{j+1}}}$.
Thus, the product $\mathfrak{K}(j+1, j+2) \, \mathfrak{K} (j,j+3)$ does not appear in
the inverse Fomin type sum $\Ampl_\omega =  \mathfrak{Z}_{\alpha(\omega)}$.
Other similar products are then excluded by repeated application of the
interchange operation of Proposition~\ref{prop: Zero-replacing Rule}(a)
with the pairs of indices $j, j+1$ or $j+2, j+3$.

Now we can compute the limit of interest~\eqref{eq: bdary visit amplitude cascade for two visits}
simply by adding up the contributions of the individual divergent kernel
entries in Equation~\eqref{eq: a list of kernels}.
The entry $\mathfrak{K}(j, j+2)$ has too mild a divergence to contribute,
so there are in fact only three contributions to consider.
We already saw that the coefficient of ${\mathfrak{K}(j+1, j+2)}$
is $\mathfrak{Z}^{\mathfrak{K}\removewedge{{j+1}} }_{\alpha(\omega) \removeupwedge{{j+1}} } $, and by
the replacing algorithm~\ref{alg: continuous replacing}, we have
$\mathfrak{Z}^{\mathfrak{K}\removewedge{{j+1}} }_{ \alpha(\omega) \removeupwedge{{j+1}} } = \Ampl_{\omega'}$.
The entry $\mathfrak{K}(j+1, j+2)$ thus contributes
\begin{align}
\label{eq: first kernel contribution}
2 \; \Ampl_{\omega'} (\xin ; \ldots , \hat{x}_{s-1}, \hat{x}', \hat{x}_{s+2}, \ldots ; \xout ) .
\end{align}

Next, applying twice the antisymmetry under the interchange
operation of Proposition~\ref{prop: Zero-replacing Rule}(a) gives the coefficient
of $\mathfrak{K}(j,j+3)$ in terms of the previous one. 
Note, in addition, that comparing the
coefficient of $\mathfrak{K}(j,j+3)$ to the replacing algorithm~\ref{alg: continuous replacing}
to yield $\Ampl_{\omega'}$, we have instead of $x_{j+2}$
differentiated with respect to $x_{j + 1}$, i.e., the 
smaller of the two real points $x_{j + 1}$ and $x_{j + 2}$ that
now correspond to the collapsed boundary visit in $\omega'$.
Thus, we use Proposition~\ref{prop: Zero-replacing Rule}(a) a third time to interchange
${j + 1}$ and ${j + 2}$, and obtain $\Ampl_{\omega'}$.
Thus, the entry $\mathfrak{K}(j, j+3)$ contributes
\begin{align}
\label{eq: second kernel contribution}
(-1)^3 \times (-2) \; \Ampl_{\omega'} (\xin ; \ldots , \hat{x}_{s-1}, \hat{x}', \hat{x}_{s+2}, \ldots ; \xout ) .
\end{align}

We are only left with finding the coefficient of $\mathfrak{K}(j+1, j+3)$.
Proposition~\ref{prop: Zero-replacing Rule}(a) relates this to the coefficient
of $\mathfrak{K}(j+1, j+2)$, which is $\Ampl_{\omega'}$, and we see
that the coefficient of $\mathfrak{K}(j+1, j+3)$ is an inverse Fomin type
sum otherwise similar to $\Ampl_{\omega'}$, except that we have not differentiated
with respect to either one of the boundary points $x_j$ and $x_{j+2}$ related
to the collapsed boundary visit in $\omega'$. An argument identical to the proof of
the replacing algorithm~\ref{alg: continuous replacing} allows us to differentiate
with respect to $x_{j+2}$ and cancel one power of $\vert \hat{x}_{s+1} - \hat{x}_s \vert$.
The contribution of $\mathfrak{K}(j+1, j+3)$ to the limit~\eqref{eq: bdary visit amplitude cascade for two visits}
is thus
\begin{align}
\label{eq: third kernel contribution}
-1 \times (-6) \; \Ampl_{\omega'} (\xin ; \ldots , \hat{x}_{s-1}, \hat{x}', \hat{x}_{s+2}, \ldots ; \xout ) .
\end{align}
Summing the contributions 
given in Equations~\eqref{eq: first kernel contribution}, \eqref{eq: second kernel contribution},
and \eqref{eq: third kernel contribution}, 
we obtain the asserted
limit~\eqref{eq: bdary visit amplitude cascade for two visits}. This concludes the proof.
\end{proof}

\appendix

\section{Example boundary visit formulas}

In this appendix, we provide a few examples of the boundary visit amplitudes $\Ampl$ and 
the order-refined amplitudes $\Ampl_\omega$
with small numbers of marked points.

\subsection{One boundary visit}

For one boundary visit, $N' = 1$, there is only one possible order (so we omit $\omega$ from the notation).
The corresponding amplitude is obtained using replacing algorithm~\ref{alg: continuous replacing}:
\begin{align*}
\Ampl(\xin ; \hat{x}; \xout ) 
= \; & \quad \vcenter{\hbox{\includegraphics[scale=0.5]{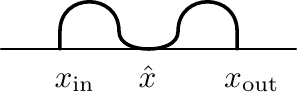}}} \quad
= \; \frac{2 (\xout-\xin)}{(\hat{x}-\xin )^3 (\xout-\hat{x})^3} .
\end{align*}
One can directly check that this function solves the PDE system in Theorem~\ref{thm: scaling limit of LERW bdry visits} 
and asymptotics properties~\eqref{eq: bdary visit amplitude cascade for xin}.

%

\subsection{Two boundary visits}

For two boundary visits, $N' = 2$, there are two essentially different cases:
boundary visits on the same side or boundary visits on the opposite sides.
As representative examples, we consider the orders $\omega = (-,-)$ and $\omega = (-,+)$.

For the visits on the same side, using replacing algorithm~\ref{alg: continuous replacing}, we find the solution
\begin{align*}
\Ampl_{- -}(\xin ; \hat{x}_1, \hat{x}_2 ; \xout )
= \; & \qquad \qquad \quad  \vcenter{\hbox{\includegraphics[scale=0.5]{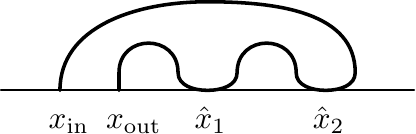}}} \\[1em]
= \; & \hphantom{+} \det \left( \begin{array}{c c c}
\frac{-2}{(\hat{x}_2-\xin)^3} & \frac{-2}{(\hat{x}_2-\xout)^3} & \frac{-6}{(\hat{x}_2-\hat{x}_1)^4} \\[.225cm]
\frac{1}{(\hat{x}_1-\xin)^2} & \frac{1}{(\hat{x}_1-\xout)^2} & 0 \\[.225cm]
\frac{1}{(\hat{x}_2-\xin)^2} & \frac{1}{(\hat{x}_2-\xout)^2} & \frac{2}{(\hat{x}_2-\hat{x}_1)^3}
\end{array}
 \right) \\[.5cm]
  \; &  + \det \left( \begin{array}{c c c}
\frac{-2}{(\hat{x}_2-\xin)^3} & \frac{-2}{(\hat{x}_2-\xout)^3} & \frac{-2}{(\hat{x}_2-\hat{x}_1)^3} \\[.225cm]
\frac{1}{(\hat{x}_2-\xin)^2} & \frac{1}{(\hat{x}_2-\xout)^2} & \frac{1}{(\hat{x}_2-\hat{x}_1)^2} \\[.225cm]
\frac{-2}{(\hat{x}_1-\xin)^3} & \frac{-2}{(\hat{x}_1-\xout)^3} & 0
\end{array}
 \right) \\[.5cm]
  = \; & \frac{
   4\hat{x}_1+2\hat{x}_2-6
   \xin}{(\hat{x}_1-\hat{x}_2)^4(\hat{x}_1-\xout)^2
   (\hat{x}_2-\xin)^3}-
   \frac{2\hat{x}_1+4
   \hat{x}_2-6
   \xin}{(\hat{x}_1-\hat{x}_2)^4(\hat{x}_1-\xin)^3
   (\hat{x}_2-\xout)^2} \\
   \; & +
   \frac{4
   \hat{x}_1-4\xin}{(\hat{x}_1-\hat{x}_2)^3(\xout-\hat{x}_1)^3
   (\hat{x}_2-\xin)^3}+
   \frac{4
   \hat{x}_2-4\xin}{(\hat{x}_1-\hat{x}_2)^3(\hat{x}_1-\xin)^3
   (\xout-\hat{x}_2)^3} .
\end{align*}
Up to a coordinate change and an overall multiplicative constant, this formula coincides with the specialization to $\kappa=2$
of the Schramm-Zhou two-point boundary proximity function of the chordal $\SLEk$~\cite{Schramm-Zhou:Boundary_proximity_of_SLE},
as well as with an $\SLEk$ boundary zig-zag amplitude in~\cite{JJK-SLE_boundary_visits}.

For the visits on opposite sides, using replacing algorithm~\ref{alg: continuous replacing}, we find the solution
\begin{align*}
\Ampl_{- +}(\xin ; \hat{x}_1, \hat{x}_2 ; \xout )
= \; & \qquad \qquad \quad  \vcenter{\hbox{\includegraphics[scale=0.5]{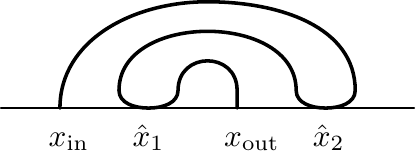}}} \\[1em]
= \; &\det \left( \begin{array}{c c c}
\frac{-2}{(\hat{x}_2-\xin)^3} & \frac{-2}{(\hat{x}_2-\hat{x}_1)^3} & \frac{-6}{(\hat{x}_2-\hat{x}_1)^4}  \\[.225cm]
\frac{1}{(\hat{x}_2-\xin)^2} & \frac{1}{(\hat{x}_2-\hat{x}_1)^2} & \frac{2}{(\hat{x}_2-\hat{x}_1)^3}  \\[.225cm]
\frac{1}{(\xout-\xin)^2} & \frac{1}{(\xout-\hat{x}_1)^2} & \frac{2}{(\xout-\hat{x}_1)^3}
\end{array}
 \right) \\[.5cm]
 = \; & 
  \frac{2
   (\hat{x}_1-\xin)^2
   (\hat{x}_2-\xout)^2
   ( \hat{x}_1\hat{x}_2-3\hat{x}_1
   \xin+2 \hat{x}_1\xout+2
   \hat{x}_2 \xin-3
   \hat{x}_2
   \xout+\xin
   \xout)}{(\hat{x}_1-\hat{x}_2)^6
   (\hat{x}_1-\xout)^3
   (\hat{x}_2-\xin)^3
   (\xin-\xout)^2} .
\end{align*}
Up to a coordinate change and an overall multiplicative constant, this formula coincides with a specialization to $\kappa=2$
of an $\SLEk$ boundary zig-zag amplitude in~\cite{JJK-SLE_boundary_visits}.
%
%

Again, one can check that these functions solve the PDE system in Theorem~\ref{thm: scaling limit of LERW bdry visits} 
and asymptotics properties~\eqref{eq: bdary visit amplitude cascade for xin} hold.



%


\bibliographystyle{annotate}

\newcommand{\etalchar}[1]{$^{#1}$}

\end{document}